\renewcommand{\section}{\@startsection
  {section}%
  {1}%
  {0mm}%
  {-1\baselineskip}%
  {0.5\baselineskip}%
  {\normalfont\large\bfseries}%
}
\renewcommand{\subsection}{\@startsection
  {subsection}%
  {2}%
  {0mm}%
  {-1\baselineskip}%
  {0.5\baselineskip}%
  {\normalfont\large\itshape}%
}
\renewcommand{\subsubsection}{\@startsection
  {subsubsection}%
  {3}%
  {0mm}%
  {-1\baselineskip}%
  {0.5\baselineskip}%
  {\normalfont\itshape}%
}
\newsavebox{\tempbox}
\renewcommand{\@makecaption}[2]{
  \vspace{10pt}
  \sbox{\tempbox}{\textbf{#1.} #2}
  \ifthenelse{\lengthtest{\wd\tempbox > \linewidth}}{
    \textbf{#1.} #2\par
  }{
    \begin{center}
      \textbf{#1.} #2
    \end{center}
  }
}
\numberwithin{equation}{section}
\newtheoremstyle{mythm}%
  {}%
  {}%
  {\itshape}%
  {}%
  {\bfseries}%
  {.}%
  {.5em}%
  {\thmname{#1}~\thmnumber{#2}\ifthenelse{\equal{\thmnote{#3}}{}}{}{~(\thmnote{#3})}}%
\newtheoremstyle{mydefn}%
  {}%
  {}%
  {\upshape}%
  {}%
  {\bfseries}%
  {.}%
  {.5em}%
  {\thmname{#1}~\thmnumber{#2}\ifthenelse{\equal{\thmnote{#3}}{}}{}{~(\thmnote{#3})}}%
\newtheoremstyle{myremark}%
  {}%
  {}%
  {\upshape}%
  {}%
  {\itshape}%
  {.}%
  {.5em}%
  {\thmname{#1}~\thmnumber{#2}\ifthenelse{\equal{\thmnote{#3}}{}}{}{~(\thmnote{#3})}}%
\theoremstyle{mythm}
\newtheorem{theo}{Theorem}[section]
\newtheorem{lem}[theo]{Lemma}
\newtheorem{cor}[theo]{Corollary}
\theoremstyle{mydefn}
\newtheorem{ass}[theo]{Assumption}
\newtheorem{asss}[theo]{Assumptions}
\theoremstyle{myremark}
\newtheorem{rem}[theo]{Remark}
\theoremstyle{mythm}
\newcommand{\uend}{\hfill$\lrcorner$}
\newcommand{\uende}{\eqno\lrcorner}
\newcounter{claimcounter}
\newenvironment{claim}[1][]{
  \renewcommand{\proof}{\smallskip\par\noindent\textit{Proof. }}
  \medskip\par\noindent%
  \ifthenelse{\equal{#1}{}}{%
    \setcounter{claimcounter}{0}\refstepcounter{claimcounter}\textit{Claim~\arabic{claimcounter}.}
  }{%
    \ifthenelse{\equal{#1}{resume}}{%
      \refstepcounter{claimcounter}\textit{Claim~\arabic{claimcounter}.}
    }{%
      \textit{Claim~#1.}
    }
  }
}{
  \par\medskip
}
\newcommand{\case}[1]{\par\medskip\noindent\textit{Case #1: }}
\newenvironment{cs}{
  \begin{description}
    \renewcommand{\case}[1]{\item[\itshape\mdseries Case ##1:]}
  }{
  \end{description}
}
\newlist{caselist}{description}{10}
\setlist[caselist]{font=\itshape\mdseries}
\newlist{eroman}{enumerate}{2}
\setlist[eroman,1]{label=(\roman*)}
\setlist[eroman,2]{label=(\alph*)}
\newlist{ealph}{enumerate}{1}
\setlist[ealph]{label=(\Alph*)}
\newcounter{nlistcounter}
\newenvironment{nlist}[1]{
  \renewcommand{\thenlistcounter}{\upshape(#1.\arabic{nlistcounter})}
  \begin{list}{\bfseries\thenlistcounter}{%
      \usecounter{nlistcounter}
      \setlength{\labelwidth}{1.5em}%
      \setlength{\leftmargin}{\labelwidth}%
      \addtolength{\leftmargin}{\labelsep}%
      \setlength{\listparindent}{0em}%
      \setlength{\topsep}{5pt}%
      \setlength{\itemsep}{5pt}%
      \setlength{\parsep}{0pt}%
    }
  }{
  \end{list}
}
\renewcommand{\phi}{\varphi}
\newcommand{\bigmid}{\;\big|\;}
\renewcommand{\mathbf}[1]{\textit{\bfseries #1}}
\renewcommand{\tilde}{\widetilde}
\renewcommand{\hat}{\widehat}
\renewcommand{\bar}{\overline}
\renewcommand{\vec}{\overrightarrow}
\newcommand{\angles}[1]{\langle#1\rangle}
\newcommand{\NN}{{\mathbb N}}
\newcommand{\RR}{{\mathbb R}}
\newcommand{\CD}{{\mathcal D}}
\newcommand{\CI}{{\mathcal I}}
\newcommand{\CJ}{{\mathcal J}}
\newcommand{\CL}{{\mathcal L}}
\newcommand{\CM}{{\mathcal M}}
\newcommand{\CP}{{\mathcal P}}
\newcommand{\CT}{{\mathcal T}}
\newcommand{\CX}{{\mathcal X}}
\newcommand{\CY}{{\mathcal Y}}
\newcommand{\CZ}{{\mathcal Z}}
\newcommand{\KT}{{\mathfrak T}}
\definecolor{gruen}{rgb}{0,0.6,0.2}
\newcounter{rbcounter}
\newcommand{\ord}{\operatorname{ord}}
\newcommand{\bw}{\operatorname{bw}}
\newcommand{\rk}{\operatorname{rk}}
\newcommand\contract{\mathord{\downarrow}}
\newcommand\expand{\mathord{\uparrow}}
\newcommand{\width}{\operatorname{wd}}
\newcommand{\dagle}{\trianglelefteq}
\newcommand{\dagri}{\trianglerighteq}
\DeclareMathOperator{\sep}{\textsc{Sep}}
\DeclareMathOperator{\order}{\textsc{Order}}
\DeclareMathOperator{\tangorder}{\textsc{TangOrd}}
\DeclareMathOperator{\size}{\textsc{Size}}
\DeclareMathOperator{\trunc}{\textsc{Trunc}}
\DeclareMathOperator{\find}{\textsc{Find}}
\newcommand{\Mat}{P}
\DeclareMathOperator{\Ext}{Ext}
\newcommand{\Type}{W}
\DeclareMathOperator{\ISO}{Iso}
\DeclareMathOperator{\Iso}{Iso}
\DeclareMathOperator{\Aut}{Aut}
\newcommand{\Sym}{\operatorname{Sym}}
\newcommand{\Act}{A\contract}
\newcommand{\kct}{\kappa\contract}
\begin{document}
\title{Isomorphism Testing for Graphs of Bounded Rank Width}
\author{Martin Grohe and Pascal Schweitzer\\RWTH Aachen
  University\\\normalsize\{grohe,schweitzer\}@informatik.rwth-aachen.de}
\date{}
\maketitle

\begin{abstract}
  We give an algorithm that, for every fixed $k$, decides isomorphism
  of graphs of rank width at most $k$ in polynomial time. As the
  clique width of a graph is bounded in terms of its rank width, we
  also obtain a polynomial time isomorphism test for graph classes of
  bounded clique width.
\end{abstract}

\section{Introduction}
Rank width, introduced by Oum and Seymour~\cite{oumsey06}, is a graph
invariant that measures how well a graph can be recursively decomposed
along ``simple separations''. In this sense, it resembles tree width,
but it fundamentally differs from tree width in how the ``simplicity''
of a separation is measured: for rank width, the idea is to take
the row rank (over the field $\mathbb F_2$) of the matrix that records the
adjacencies between the two parts of a separation, whereas for tree
width one simply counts how many vertices the two parts have in
common. Rank width is bounded in terms of tree
width, but not vice versa. For example, the complete graph $K_n$ has
rank width $1$ and tree width $n-1$. This also shows that graphs of
bounded rank width are not necessarily sparse (as opposed to graphs of
bounded tree width). An interesting aspect of rank width when dealing
with problems like graph isomorphism testing (or various problems related to logical
definability) that make no real distinction between the edge relation
and the ``non-edge relation'' of a graph is that the rank width of a graph and
its complement differ by at most one. Another well-known graph
invariant is clique width \cite{couola00}; it
measures how many labels are needed to generate a graph in a certain
grammar. Rank width is equivalent to
clique width, in the sense that each of the two invariants is bounded in terms of
the other \cite{oumsey06}. As for bounded tree width, many hard algorithmic problems can
be solved in polynomial time (often cubic time) on graph classes of
bounded rank width, or equivalently, bounded clique width (e.g.~\cite{coumakrot01,espegurwan01,fismakrav08,kobrot03}). However,
until now it was open whether the isomorphism problem is among them.

We give an algorithm that, for every fixed $k$, decides isomorphism
of graphs of rank width at most $k$ in polynomial time. Many of the
best known graph classes where the isomorphism problem is known to be
in polynomial time are classes of sparse graphs \cite{hoptar72,filmay80,mil80,luk82,pon88,bod90,gromar15}, among them
planar graphs, graphs of bounded degree, and
graphs of bounded tree width. Less is known for dense
graphs; among the known results are polynomial time isomorphism tests
for classes with bounded eigenvalue multiplicities \cite{babgrimou82} and various
hereditary graph classes, specifically classes intersection graphs \cite{CurLinMcC13,KoblerKV13}, among them interval graphs~\cite{LuekerB79},
and classes defined by excluding specific induced subgraphs
\cite{boothcolbourn,KratschS12,Schweitzer15}. Our result substantially extends the realm of hereditary
graph classes with a tractable isomorphism problem. While it subsumes
several known results \cite{bod90,Corneil1981163,LimouzyMR07,Schweitzer15}, for the classes of clique width at most~$k$ a polynomial time isomorphism algorithm was only known for the case~$k \leq  2$ (\cite{LimouzyMR07}).

Technically, we found the isomorphism problem for bounded rank width
graphs much harder than anticipated. The overall proof strategy
is generic: first compute a canonical decomposition of a graph, or if
that is impossible, a canonical family of decompositions with a compact representation, and
then use dynamic programming to solve the isomorphism problem. Indeed,
this is the strategy taken for bounded tree width graphs in
\cite{bod90,lokpilpil+14}. However, for graphs of bounded rank width,
both steps of this general strategy turned out to be difficult to
implement. To compute canonical decompositions, we heavily rely on the
general theory of connectivity functions, branch decompositions, and
tangles \cite{gm10,geegerwhi09}, and in particular on computational aspects of the theory
recently developed in \cite{grosch15}. Our starting point is an
algorithm for canonically decomposing a connectivity function into highly
connected regions described by maximal tangles \cite{grosch15}. The
technical core of the first part of this paper is a decomposition of these highly
connected regions into pieces of bounded width
(Lemma~\ref{lem:nodedec}). It has been slightly disturbing to find
that even with a canonical decomposition given, the isomorphism
problem is still nontrivial and requires a complicated (though
elementary) group
theoretic machinery. The intuitive reason for this can be explained by
a comparison with bounded tree width. In a bounded-width tree
decomposition of a graph, we have low order vertex separations of the graph, and after
removing the separating vertices (a bounded number) we can deal with
the two
parts of a separation independently. In
a bounded-rank-width decomposition, we have partitions of the graph into
two parts such that the adjacency matrix between these parts has low
rank. For such a partition, removing a bounded number of vertices
shows no effect. Instead, we need to fix a bounded number of rows and
columns in the matrix, but even then there is a nontrivial interaction
between the two parts, which fortunately we can capture group theoretically.

The paper is organised as follows: after reviewing the necessary
background in Section~\ref{sec:tangles}, in the short
Section~\ref{sec:cover}, we show that all tangles of a connectivity
function have ``triple covers'' of bounded size, providing another
technical tool for dealing with tangles (which may be of independent
interest).
In
Section~\ref{sec:treelike}, we introduce treelike decompositions of
connectivity functions, which may be viewed as compact
representations of families of tree decompositions. Sections~\ref{sec:1tan}--\ref{sec:ctl} are devoted to a
proof of the canonical decomposition theorem
(Theorem~\ref{theo:candec}). In Section~\ref{sec:part:rank}, we describe the
situation at a single node of our decomposition and its children in
matrix form and introduce the notion of partition rank of the matrix
to capture the width of the decomposition at this node.
Finally, in Sections~\ref{sec:iso} we
develop the group theoretic machinery and give the actual isomorphism
algorithm.

Throughout this paper, we often speak of ``canonical'' constructions. The
precise technical meaning depends on the context, but in general a
construction (or algorithm) is \emph{canonical} if every isomorphism between its input objects commutes with an isomorphism between the output objects.

\section{Connectivity Functions, Tangles, and Branch Decompositions}
\label{sec:tangles}

A \emph{connectivity function} on a finite set $A$ is a symmetric and
submodular function $\kappa\colon 2^A\to\NN$ with
$\kappa(\emptyset)=0$. \emph{Symmetric} means that
$\kappa(X)=\kappa(\bar X)$ for all $X\subseteq A$; here and whenever
the ground set $A$ is clear from the context we write $\bar X$ to
denote $A\setminus X$, the complement of~$X$. \emph{Submodular} means
that $\kappa(X)+\kappa(Y)\ge\kappa(X\cap Y)+\kappa(X\cup Y)$ for all
$X,Y\subseteq A$. Observe that a symmetric and submodular set function
is also \emph{posimodular}, that is, it satisfies
$\kappa(X)+\kappa(Y)\ge\kappa(X\setminus Y)+\kappa(Y\setminus X)$
(apply submodularity to $X$ and $\bar Y$).

The only connectivity function that we consider in this paper is the
\emph{cut rank} function $\rho_G$ of a graph $G$. For all subsets
$X,Y\subseteq V(G)$, we let $M_{X,Y}$ be the $X\times Y$-matrix over
the 2-element field $\mathbb F_2$ with entries
$m_{xy}=1\iff xy\in E(G)$. We define $\rho_G:2^{V(G)}\to\NN$ by
letting $ \rho_G(X) $ be the row rank of the matrix $M_{X,Y}$ over
$\mathbb F_2$. It is not hard to prove that $\rho_G$ is indeed a
connectivity function.

For the rest of this section, let $\kappa$ be a connectivity function
on a finite set $A$.
We often think of a subset $Z\subseteq A$ as a \emph{separation} of $A$
into $Z$ and $\bar Z$ and of $\kappa(Z)$ as the \emph{order} of this
separation; consequently, we also refer to $\kappa(Z)$ as the
\emph{order of $Z$}. For disjoint sets $X,Y\subseteq A$, an
\emph{$(X,Y)$-separation} is a set $Z\subseteq A$ such that
$X\subseteq Z\subseteq \bar Y$. Such a separation $Z$ is minimum if
its order is minimum, that is, if $\kappa(Z)\le\kappa(Z')$ for all
$(X,Y)$-separations $Z'$. It is an easy consequence of the
submodularity of $\kappa$ that there is a unique minimum
$(X,Y)$-separation $Z$ such that $Z\subseteq Z'$ for all other minimum
$(X,Y)$-separations $Z'$. We call $Z$ the \emph{leftmost minimum
  $(X,Y)$-separation}. There is also a unique \emph{rightmost minimum
  $(X,Y)$-separation}, which is easily seen to be the complement of
the leftmost minimum $(Y,X)$-separation.

A \emph{$\kappa$-tangle} of order $k\ge0$ is a set $\CT\subseteq 2^A$
satisfying the following conditions.
  \begin{nlist}{T}
  \setcounter{nlistcounter}{-1}
  \item\label{li:t0}
    $\kappa(X)<k$ for all $X\in\CT$, 
  \item\label{li:t1}
    For all $X\subseteq A$ with $\kappa(X)<k$, either $X\in\CT$ or
    $\bar X\in\CT$.
  \item\label{li:t2}
    $X_1\cap X_2\cap X_3\neq\emptyset$ for all $X_1,X_2,X_3\in\CT$.
  \item\label{li:t3}
    $\CT$ does not contain any singletons, that is, $\{a\}\not\in\CT$ for all $a\in A$.
\end{nlist}
We denote the order of a $\kappa$-tangle $\CT$ by $\ord(\CT)$.\footnote{There is a small technical issue that one needs to be aware of, but
that never causes any real problems: if we view tangles as families of
sets, then their order is not always well-defined. Indeed, if there is
no set $X$ of order $\kappa(X)=k-1$, then every tangle of order $k$ is
equal to its truncation to order $k-1$. In such a situation, we
have to explicitly annotate a tangle with its order, formally viewing a
tangle as a pair $(\CT,k)$ where $\CT\subseteq 2^A$ and $k\ge 0$.}

Let
$\CT,\CT'$ be $\kappa$-tangles. If $\CT'\subseteq\CT$, we say that
$\CT$ is an \emph{extension} of $\CT'$. The tangles $\CT$ and $\CT'$
are \emph{incomparable} (we write $\CT\bot\CT'$) if neither is an
extension of the other. 
The \emph{truncation} 
of $\CT$ to order $k\le\ord(\CT)$ is the set
$
\{X\in\CT\mid\kappa(X)<k\},
$
which is obviously a tangle of order $k$. Observe that if $\CT$ is
an extension of $\CT'$, then $\ord(\CT')\le\ord(\CT)$, and $\CT'$ is
the truncation of $\CT$ to order $\ord(\CT')$. 

A $\kappa$-tangle $\CT$ is \emph{maximal} if there is no
$\kappa$-tangle $\CT'\subseteq\CT$ with $\ord(\CT')>\ord(\CT)$.
A $\kappa$-tangle $\CT$ is \emph{$\ell$-maximal}, for some $\ell\ge 0$,
if either $\ord(\CT)=\ell$ or $\CT$ is maximal.

A \emph{$(\CT,\CT')$-separation} is a set $Z\subseteq A$ such that
$Z\in\CT$ and $\bar Z\in\CT'$. Obviously, if $Z$ is a
$(\CT,\CT')$-separation then $\bar Z$ is a
$(\CT',\CT)$-separation. Observe that there is a
$(\CT,\CT')$-separation if and only if $\CT$ and $\CT'$ are
incomparable. The \emph{order} of a $(\CT,\CT')$-separation $Z$ is
$\kappa(Z)$. A $(\CT,\CT')$-separation $Z$ is \emph{minimum} if its
order is minimum. It can be shown \cite{grosch15} that if
$\CT\bot\CT'$ then there is a unique minimum
  $(\CT,\CT')$-separation $Z$ such that
  $Z\subseteq Z'$ for all minimum $(\CT,\CT')$-separations
  $Z'$. We call $Z$ the \emph{leftmost minimum $(\CT,\CT')$-separation}.
Of course there is also a  \emph{rightmost minimum
  $(\CT,\CT')$-separation}, which is the complement of the leftmost
minimum $(\CT',\CT)$-separation.

Now that we have defined
$(X,Y)$-separations for sets $X,Y$ and $(\CT,\CT')$-separations for
tangles $\CT,\CT'$, we also need to define combinations of both. For a
$\kappa$-tangle $\CT$ and a set $X\subseteq A$ such that
$X\not\in\CT$, a \emph{$(\CT,X)$-separation} is a set $Z\in\CT$ such
that $Z\subseteq\bar X$. A $(\CT,X)$-separation is \emph{minimum} if
its order is minimum, and again it can be proved that if there is a
$(\CT,X)$-separation, then there is a
unique \emph{leftmost minimum $(\CT,X)$-separation} and a
\emph{rightmost minimum $(\CT,X)$-separation}. Analogously, we define (leftmost, rightmost minimum) $(X,\CT)$-separations.

\begin{lem}\label{lem:tanglesep3}
  Let $\CT,\CT'$ be $\kappa$-tangles and $X,Y\subseteq A$ such that neither
  $Y\subseteq X$ nor $\bar Y\subseteq X$.
  \begin{enumerate}
  \item If $X$ is a minimum $(\CT,\CT')$-separation, then 
    $\kappa(X\cap Y)\le\kappa(Y)$ or
    $\kappa(X\cap \bar Y)\le\kappa(Y)$.
  \item If $X$ a rightmost minimum
  $(\CT,\CT')$-separation, then 
  $\kappa(X\cap Y)<\kappa(Y)$ or $\kappa(X\cap \bar Y)<\kappa(Y)$.

  \end{enumerate}
\end{lem}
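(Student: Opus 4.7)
My plan is to prove both parts in parallel, by contradiction. Suppose both $\kappa(X\cap Y)$ and $\kappa(X\cap\bar Y)$ exceed the claimed threshold---strictly greater than $\kappa(Y)$ for (1), or at least $\kappa(Y)$ for (2). Applying submodularity to the pairs $\{X,Y\}$ and $\{X,\bar Y\}$ and using symmetry (so that $\kappa(Y)=\kappa(\bar Y)$, $\kappa(X\cup Y)=\kappa(\bar X\cap\bar Y)$, and $\kappa(X\cup\bar Y)=\kappa(\bar X\cap Y)$), I obtain
\[\kappa(X\cup Y)\le\kappa(X)+\kappa(Y)-\kappa(X\cap Y),\qquad \kappa(X\cup\bar Y)\le\kappa(X)+\kappa(Y)-\kappa(X\cap\bar Y).\]
Under the assumption this yields $\kappa(X\cup Y),\kappa(X\cup\bar Y)\le\kappa(X)$, with strict inequality throughout in case (1).

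Next I would locate $X\cup Y$ and $X\cup\bar Y$ inside the tangle $\CT$. The sets $\bar X\cap\bar Y$ and $\bar X\cap Y$ are both disjoint from $X\in\CT$, so axiom T2 (noting that $A\in\CT$, and more generally that two disjoint sets cannot both lie in a tangle) rules out $\bar X\cap\bar Y,\bar X\cap Y\in\CT$. Their $\kappa$-values are bounded by $\kappa(X)<\ord(\CT)$, so axiom T1 places the complements into $\CT$, giving $X\cup Y,\,X\cup\bar Y\in\CT$.

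The argument then turns to $\CT'$. For each of $X\cup Y$ and $X\cup\bar Y$, axiom T1 (using $\kappa(X)<\ord(\CT')$, since $\bar X\in\CT'$) says that either the set itself or its complement lies in $\CT'$. Suppose $\bar X\cap\bar Y\in\CT'$: then $X\cup Y$ is a $(\CT,\CT')$-separation of order at most $\kappa(X)$. In case (1) this violates the minimality of $X$ outright, while in case (2) equality must hold, so $X\cup Y$ is itself a minimum $(\CT,\CT')$-separation; the hypothesis $Y\not\subseteq X$ ensures $X\subsetneq X\cup Y$, which contradicts the rightmost property of $X$. A symmetric argument, invoking $\bar Y\not\subseteq X$, handles $\bar X\cap Y\in\CT'$. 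The only remaining possibility is that $X\cup Y$ and $X\cup\bar Y$ both lie in $\CT'$; then axiom T2 applied to $\bar X,X\cup Y,X\cup\bar Y\in\CT'$ gives $\bar X\cap(X\cup Y)\cap(X\cup\bar Y)=\bar X\cap X=\emptyset$, the final contradiction.

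The main obstacle is juggling the two regimes (strict versus non-strict) in a single argument: submodularity by itself only produces non-strict bounds on the orders of $X\cup Y$ and $X\cup\bar Y$, so in case (2) the missing strictness must come from elsewhere. It is precisely the rightmost property of $X$, combined with the proper containments $X\subsetneq X\cup Y$ and $X\subsetneq X\cup\bar Y$ guaranteed by the two exclusions $Y\not\subseteq X$ and $\bar Y\not\subseteq X$, that converts ``order not strictly smaller than $\kappa(X)$'' into a genuine contradiction.
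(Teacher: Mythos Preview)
Your proof is correct and follows essentially the same approach as the paper's: assume both inequalities fail, use submodularity to bound $\kappa(X\cup Y)$ and $\kappa(X\cup\bar Y)$ by $\kappa(X)$, place both sets in $\CT$, then use axiom (T.2) in $\CT'$ together with $\bar X\cap(X\cup Y)\cap(X\cup\bar Y)=\emptyset$ to force one of $\overline{X\cup Y},\overline{X\cup\bar Y}$ into $\CT'$, contradicting minimality (respectively rightmost-ness). The only cosmetic differences are that the paper obtains $X\cup Y,X\cup\bar Y\in\CT$ directly from the ``superset of a tangle element'' observation rather than via disjointness, and it applies (T.2) up front to eliminate the case $X\cup Y,X\cup\bar Y\in\CT'$ rather than treating it last.
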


\begin{proof}
  Part (1) is Lemma~4.9 of \cite{grosch15}.
We only prove (2). (The proof of (1) is similar.)

  Suppose that $\kappa(X\cap Y)\ge\kappa(Y)$ and
  $\kappa(X\cap\bar Y)\ge\kappa(Y)$. By submodularity,
  $\kappa(X\cup Y)\le\kappa(X)$ and $\kappa(X\cup\bar
  Y)\le\kappa(X)$. Then $X\cup Y,X\cup\bar Y\in\CT$, because both sets
  are supersets of $X$. Since $\bar X\cap (X\cup Y)\cap(X\cup\bar
  Y)=\emptyset$, either $\bar{X\cup Y}\in\CT'$ or $\bar{X\cup \bar
    Y}\in\CT'$. If $\bar{X\cup Y}\in\CT'$, then $X\cup Y$ is a
  $(\CT,\CT')$-separation whose order is at most the order of $X$, and as
  $X$ is a rightmost minimum $(\CT,\CT')$-separation, it follows that
  $X\cup Y\subseteq X$ and thus $Y\subseteq X$. Similarly, if
  $\bar{X\cup Y}\in\CT'$, then $\bar Y\subseteq X$.
\end{proof}

The last concept we need to define is that of branch decompositions
and branch width of a connectivity function. A \emph{cubic tree} is a
tree where every node that is not a leaf has degree~$3$. An
\emph{oriented edge} of a tree $T$ is a pair $(s,t)$, where $st\in
E(T)$. We denote the set of all oriented edges of $T$ by $\vec E(T)$
and the set of leaves of $T$ by $L(T)$.  A \emph{branch
  decomposition} of $\kappa$ is a pair $(T,\xi)$, where $T$ is a cubic
tree and $\xi\colon L(T)\to A$ is a bijective mapping. For every
oriented edge
$(s,t)\in\vec E(T)$, we let $\tilde\xi(s,t)\subseteq A$ be the set of
all $\xi(u)$ where $u$ is a leaf in the component of $T-\{st\}$ that
contains $t$ (so the oriented edge $(s,t)$ points towards
$u$). Observe that $\tilde\xi(s,t)=\bar{\tilde\xi(t,s)}$. We
define the \emph{width} of the branch decomposition $(T,\xi)$ to be 
\[
\width(T,\xi):=\max\{\kappa(\tilde\xi(s,t))\mid(s,t)\in\vec E(s,t)\}.
\]
The \emph{branch width} $\bw(\kappa)$ of $\kappa$ is the minimum of
the width of all branch decompositions of $\kappa$. The \emph{rank
  width} of a graph $G$ is defined to be the branch width of the cut
rank function $\rho_G$.

\begin{theo}[Duality Theorem~\cite{gm10}]\label{theo:duality}
  The branch width of $\kappa$ is exactly the maximum order of a $\kappa$-tangle.
\end{theo}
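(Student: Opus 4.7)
The plan is to prove the two inequalities separately: the easy inequality $\ord(\CT) \le \bw(\kappa)$ for every $\kappa$-tangle $\CT$ is a direct orientation argument on an arbitrary branch decomposition, while the converse $\bw(\kappa) \le \max_{\CT}\ord(\CT)$ is the substantive direction and would follow the general approach in \cite{gm10,geegerwhi09}.

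For the easy direction, I would fix a tangle $\CT$ of order $k$ and an arbitrary branch decomposition $(T,\xi)$ of $\kappa$, and aim to exhibit an oriented edge $(s,t) \in \vec E(T)$ with $\kappa(\tilde\xi(s,t)) \ge k$. Suppose otherwise; then by (T1) exactly one of $\tilde\xi(s,t), \tilde\xi(t,s)$ lies in $\CT$ for every edge $st \in E(T)$, inducing an orientation of $T$ toward the ``tangle side''. Axiom (T3) rules out the singleton $\{\xi(\ell)\}$ for every leaf $\ell$, so this orientation must point every leaf-edge away from its leaf. Following arrows through the finite acyclic tree must terminate, and the sink cannot be a leaf, so some internal node $t$ with three neighbours $s_1,s_2,s_3$ has all three incident edges oriented toward it. The sets $\tilde\xi(s_i,t)$ then all lie in $\CT$, but their triple intersection is empty since the components of $T-t$ partition the leaves and a leaf in the $s_i$-component is excluded from $\tilde\xi(s_i,t)$. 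This contradicts (T2).

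For the hard direction, assume there is no $\kappa$-tangle of order $k+1$; the goal is to construct a branch decomposition of $\kappa$ of width at most $k$. I would build such a decomposition by iterative refinement: maintain a cubic tree whose leaves are labelled by a partition $\CP$ of $A$, with the invariant that every displayed separation has $\kappa$-value at most $k$, and at each step split some non-singleton part of $\CP$ into two nonempty subparts while preserving the invariant. Submodularity and posimodularity of $\kappa$, together with leftmost/rightmost minimum separations in the spirit of Lemma~\ref{lem:tanglesep3}, supply the uncrossing tools needed to produce canonical splits.

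The main obstacle is to prove that a valid refinement always exists under the no-tangle hypothesis. The standard contrapositive route is: if some cluster $C$ admits no legal split, assemble the ``bad'' separations witnessing this obstruction into a family, close under taking supersets of order $\le k$, intersections, and leftmost/rightmost minimum reductions, and verify that the resulting family satisfies (T0)--(T3); this produces a tangle of order $k+1$, contradicting the hypothesis. The delicate step is (T2), which requires that any three bad separations share a common element, and is where iterated submodularity together with Lemma~\ref{lem:tanglesep3}-type arguments does the essential combinatorial work. Since this direction is the content of an established duality theorem, I would simply invoke \cite{gm10} rather than redo the full combinatorial argument.
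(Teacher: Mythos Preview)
The paper does not prove this theorem at all; it is stated as a known result and attributed to \cite{gm10} without any argument. Your proposal therefore goes well beyond what the paper does: your sketch of the easy direction (every tangle has order at most the branch width) is correct and is the standard orientation-and-sink argument, and for the hard direction you, like the paper, ultimately defer to the cited reference.
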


For disjoint sets~$X,Y\subseteq A$ we define~$
\kappa_{\min}(X,Y):=\min\{\kappa(Z)\mid X\subseteq Z\subseteq
\overline{Y}\}$. Note that for all $X,Y$ the two functions
$X'\mapsto\kappa_{\min}(X',Y)$ and $Y'\mapsto\kappa_{\min}(X,Y')$ are
monotone and submodular.

For sets $Y\subseteq X$, we say that a set $Y$ is \emph{free} in $X$ if
$\kappa_{\min}(Y,\bar X) = \kappa(X)$ and~$|Y| \leq  \kappa(X)$. It
can be shown that for every~$X\subseteq A$ there is a set~$Y$ that is
free in $X$ \cite{oumsey07,grosch15}.

\subsection{Computing with Tangles}
Algorithms expecting a set function $\kappa:2^A\to\NN$ as input are
given the ground set $A$ as actual input (say, as a list of objects),
and they are given an oracle that returns for~$X\subseteq A$ the value
of~$\kappa(X)$. The running time of such algorithms is measured in
terms of the size $|A|$ of the ground set. We assume this computation
model for all algorithms dealing with abstract connectivity functions $\kappa$. Of course, if $\kappa=\rho_G$ is the cut rank function
of a graph $G$, then we assume a standard computation model (without
oracles), where the graph $G$ is given as input; we can use $G$ to
simulate oracle access to $\rho_G$.
 
An important fact underlying most of our algorithms is that, under this model of
computation, submodular functions can be efficiently minimised
\cite{iwaflefuj01,schrijver00}.

 In \cite{grosch15}, we introduced a data structure for representing
all tangles of a graph up to a certain order. A \emph{comprehensive tangle data structure} of order~$k$ for a connectivity function~$\kappa$ over a set~$A$ is a data structure~$\mathcal{D}$ with functions~$\order_\mathcal{D}$, $\size_\mathcal{D}$, $\mathcal{T}_\mathcal{D}$, $\tangorder_\mathcal{D}$, $\trunc_\mathcal{D}$, $\sep_\mathcal{D}$, and~$\find_\mathcal{D}$ that provide the following functionalities.
\begin{enumerate}
\item The function~$\order_\mathcal{D}()$ returns the fixed integer~$k$.
\item For~$\ell\in[k]$ the function~$\size_\mathcal{D}(\ell)$ returns the number of~$\kappa$-tangles of
  order at most~$\ell$. We denote the number of~$\kappa$-tangles of
    order at most~$k$ by~$|\mathcal{D}|$.
\item For each~$i\in \big[|\CD|\big]$ the
  function~$\mathcal{T}_\mathcal{D}(i,\cdot)\colon 2^A \rightarrow
  \{0,1\}$ is a tangle~$\mathcal{T}_i$ of order
  at most~$k$, (i.e., the function call~$\mathcal{T}_\mathcal{D}(i,X)$
  determines whether~$X\in  \mathcal{T}_i$).

  We call $i$ the \emph{index} of the tangle $\CT_i$ within the data
  structure.
\item For~$i\in \big[|\CD|\big]$ the call~$\tangorder_\mathcal{D}(i)$ returns~$\ord(\mathcal{T}_i)$. %
\item For~$i\in \big[|\CD|\big]$ and~$\ell \leq \ord(\CT_i)$ the call~$\trunc_\mathcal{D}(i,\ell)$ returns an integer~$j$ such that~$\mathcal{T}_j$ is the truncation of~$\mathcal{T}_i$ to order~$\ell$. If~$\ell >\ord(\CT_i)$ the function returns~$i$.
\item For distinct~$i, j \in \big[|\CD|\big]$ the call~$\sep_\mathcal{D}(i,j)$ outputs a set~$X\subseteq A$
  such that~$X$ is the leftmost
  minimum~$(\mathcal{T}_i,\mathcal{T}_j)$-separation
or states that no such set exists (in which case one of the tangles is a truncation of the other).
\item Given~$\ell \in\{0,\ldots,k\}$ and a tangle~$\mathcal{T}'$ of order~$\ell$ (via a
  membership oracle) the function~$\find_\mathcal{D}(\ell,\mathcal{T'})$,
  returns the index of $\CT'$, that is, the unique integer~$i \in \big[|\mathcal{D}|\big]$ such that~$\ord(\mathcal{T}_i) = \ell$ and~$\mathcal{T}' = \mathcal{T}_i$.
\end{enumerate}

\begin{theo}[\cite{grosch15}]\label{theo:ds}
For every constant~$k$ there is a polynomial time algorithm that,
given oracle access to a connectivity function $\kappa$, computes an efficient
comprehensive tangle data  structure of order~$k$.
\end{theo}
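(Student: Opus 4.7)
The plan is to construct the data structure by enumerating all $\kappa$-tangles of order at most $k$ in increasing order of their $\ord$-value. The base case is the (unique) empty tangle of order $0$. In the inductive step, passing from tangles of order $\ell-1$ to tangles of order $\ell$ uses that the truncation of any order-$\ell$ tangle to order $\ell-1$ is itself a tangle. Hence the order-$\ell$ tangles partition into groups indexed by their truncations, and it suffices, for each fixed $\kappa$-tangle $\CT'$ of order $\ell-1$, to enumerate all its extensions to order $\ell$. Each such extension is determined by how it orients the sets $X$ with $\kappa(X) = \ell-1$ that are not already decided by $\CT'$.

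Two ingredients drive the analysis. First, for each fixed $\ell$ the total number of $\kappa$-tangles of order at most $\ell$ is polynomially bounded in $|A|$, so the list $\CD$ has polynomial size for fixed $k$; this follows from the fact that any two incomparable tangles admit a canonical (leftmost) minimum separation, and the resulting canonical separators form a laminar-like family of controlled size. Second, submodular functions over $2^A$ can be minimised in polynomial time \cite{iwaflefuj01,schrijver00}, so oracle access to $\kappa$ yields polynomial-time access to minimum $(X,Y)$-separations and to $\kappa_{\min}(X,Y)$; the tangle axioms (T0)--(T3) are then checkable by polynomially many such minimisations. Using these tools the enumeration step lists all candidate extensions of $\CT'$ --- each parametrised by a bounded-size witness built from a set that is free in a corresponding minimum separation --- and verifies the axioms for each candidate in polynomial time, retaining exactly the valid tangles. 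Once $\CD$ is populated, the lookup routines $\order_\CD$, $\size_\CD$, $\tangorder_\CD$, $\trunc_\CD$, and $\find_\CD$ reduce to precomputed tables keyed by indices, while membership $\CT_\CD(i, X)$ is answered by locating $X$ relative to the witness separations stored for $\CT_i$.

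The most delicate routine is $\sep_\CD(i,j)$, which must return the \emph{leftmost} minimum $(\CT_i,\CT_j)$-separation. One first computes some minimum $(\CT_i,\CT_j)$-separation $X$ by submodular minimisation, and then pushes it to the left by iteratively replacing $X$ with $X \cap \bar Y$ for suitable witnesses $Y$ supplied by Lemma~\ref{lem:tanglesep3}; uniqueness of the leftmost minimum separation guarantees termination in polynomially many rounds. The main obstacle I anticipate is bounding the number of extensions of each fixed $\CT'$: the undecided sets at level $\ell-1$ can be exponentially many, so without a canonical parametrisation the per-$\CT'$ enumeration could blow up. The fix, developed in \cite{grosch15}, is to index each extension by a canonical separator obtained from a small free set in a minimum separation, which keeps the per-$\CT'$ enumeration polynomial and amortises against the global polynomial bound on $|\CD|$.
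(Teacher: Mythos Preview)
The paper does not contain a proof of this theorem; it is stated with a citation to \cite{grosch15} and used as a black box. There is therefore nothing in the present paper to compare your proposal against.

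As to the proposal itself: it is a plausible high-level outline, and indeed the overall shape --- build tangles level by level, extending each tangle of order $\ell-1$ to its order-$\ell$ refinements, and lean on submodular function minimisation for the separation subroutines --- matches the strategy of \cite{grosch15}. But several of the steps you describe are not proofs but pointers to where the real work lies. In particular: (i) the polynomial bound on the number of tangles of bounded order is a nontrivial structural fact (it goes through the tree of tangles, not merely ``laminar-like'' separators); (ii) the enumeration of extensions of a given $\CT'$ is precisely the hard part, and your last paragraph correctly identifies it as such before deferring to \cite{grosch15}; (iii) your description of computing the leftmost minimum $(\CT_i,\CT_j)$-separation by ``iteratively replacing $X$ with $X\cap\bar Y$'' via Lemma~\ref{lem:tanglesep3} is not how it is done --- that lemma is about the interaction of a fixed separation with an arbitrary $Y$, not an algorithm for pushing leftward. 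The actual computation in \cite{grosch15} goes through free sets and submodular minimisation directly. So your sketch is directionally right but does not constitute a proof; since the present paper also offers none, the honest statement is simply that the result is imported from \cite{grosch15}.
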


Using a comprehensive tangle data structure, we can design polynomial
time algorithms
for other computational problems related to tangles. 

\begin{lem}
  Let $k\ge 0$.
  \begin{enumerate}
  \item There is a polynomial time algorithm that, given a set
    $X\subseteq A$ and a tangle $\CT$ of order $k$ (via its index in a comprehensive
    tangle data structure), computes the leftmost minimum
    $(\CT,X)$-separation if it exists or reports that there is no $(\CT,X)$-separation.
  \item There is a polynomial time algorithm that, given a tangle $\CT$
    of order $k$
    (via its index in a comprehensive tangle data structure), computes
    a list of all inclusionwise minimal elements of $\CT$.
  \end{enumerate}
\end{lem}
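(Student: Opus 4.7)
For Part~(1), the plan is to combine submodular minimisation with oracle queries to the comprehensive tangle data structure from Theorem~\ref{theo:ds}. First I would dispose of the trivial case: if $\kappa(X)<k$ and the data structure reports $X\in\CT$, then by~T2 no $(\CT,X)$-separation can exist and the algorithm returns ``none''. Otherwise, I enumerate all $Y\subseteq\bar X$ with $|Y|\le k$ (of which there are $|A|^{O(k)}$); for each such $Y$, I compute by one submodular-minimisation call the rightmost minimum $(Y,X)$-separation $R_Y$, and, whenever $\kappa(R_Y)<k$, I use the membership oracle for $\CT$ supplied by the data structure to test whether $R_Y\in\CT$. Set $\ell^*$ to be the smallest $\kappa(R_Y)$ observed among those $R_Y$ that do lie in $\CT$; if no such $Y$ succeeds, the algorithm reports that no $(\CT,X)$-separation exists. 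Finally, I extract the leftmost minimum $(\CT,X)$-separation $Z^*$ elementwise: for each $a\in\bar X$, rerun the preceding steps on $(\CT,X\cup\{a\})$, and declare $a\in Z^*$ exactly when the resulting minimum order is strictly larger than $\ell^*$ (or no separation exists there).

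The correctness of this procedure hinges on the following claim: if $Z^*$ exists with order $\ell^*$ and $Y^*\subseteq Z^*$ is a free subset of size at most $\ell^*<k$ (which exists by the Oum--Seymour result cited in the excerpt), then $\kappa_{\min}(Y^*,X)=\ell^*$ and $R_{Y^*}\in\CT$. For the equality, suppose for contradiction that some $W\supseteq Y^*$ with $W\cap X=\emptyset$ satisfies $\kappa(W)<\ell^*$. Freeness of $Y^*$ in $Z^*$ forces $\kappa(W\cap Z^*)\ge\ell^*$ (since $Y^*\subseteq W\cap Z^*\subseteq Z^*$), and submodularity then yields $\kappa(W\cup Z^*)<\ell^*$. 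By~T1, either $W\cup Z^*\in\CT$ -- a $(\CT,X)$-separation of order below $\ell^*$, contradicting minimality -- or its complement lies in $\CT$, which forces $\overline{W\cup Z^*}\cap Z^*=\emptyset$ and violates~T2. For the membership statement, $Z^*\subseteq R_{Y^*}$ because $R_{Y^*}$ is rightmost, so $\overline{R_{Y^*}}\in\CT$ would give $\overline{R_{Y^*}}\cap Z^*=\emptyset$, once again violating~T2. Together these imply that the enumeration finds $\ell^*$ (or detects non-existence), and the elementwise test then recovers $Z^*=\bigcap\{W\mid W\text{ is a minimum }(\CT,X)\text{-separation}\}$.

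For Part~(2), the plan is to apply Part~(1) as a subroutine on a polynomial family of candidate target sets and combine the outputs with the pairwise separations available from the data structure, then filter by inclusion-minimality. Concretely, I form the candidate set as the union of (a) the leftmost minimum $(\CT,X)$-separations returned by Part~(1) for every $X\subseteq A$ with $|X|\le k$; (b) the pairwise separations $\sep_\CD(i,j)$, where $i$ is the index of $\CT$ and $j$ ranges over the other tangle indices in the data structure; and (c) the distinguished sets $A$ and $A\setminus\{a\}$ for each $a\in A$. For each candidate $Z$ in this list, I then verify inclusion-minimality by checking, for every $a\in Z$, that $Z\setminus\{a\}\notin\CT$ -- certified either because $\kappa(Z\setminus\{a\})\ge k$ or by one oracle call to the data structure.

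Completeness of this candidate list is the main obstacle. The guiding intuition is that source~(b) captures every minimal $Z\in\CT$ whose complement $\bar Z$ lies in some tangle $\CT_j$ of the data structure, because $Z$ is then itself a $(\CT,\CT_j)$-separation and the leftmost such separation is contained in $Z$ and lies in $\CT$, so by minimality of $Z$ it must equal $Z$. Source~(c) handles the degenerate cases $|\bar Z|\le 1$, for which~T3 forbids any tangle from containing $\bar Z$. The remaining minimal elements -- those with $|\bar Z|\ge 2$ but whose complements nonetheless lie in no tangle of the data structure -- should be captured by source~(a) using an $X\subseteq\bar Z$ built from a free subset of $\bar Z$. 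Making this last step rigorous -- in particular, ruling out a competing $W\in\CT$ of order strictly below $\kappa(Z)$ with $W\cap X=\emptyset$ that is not contained in $Z$ (so that the submodularity/T2 argument of Part~(1) does not directly contradict minimality of $Z$) -- will be the technical core of the proof, and I expect it to proceed by induction on $k$, with the certifying $X$ built from a free subset of $\bar Z$ augmented by representatives drawn from the already-enumerated minimal elements of $\CT$ of strictly smaller order.
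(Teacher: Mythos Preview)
Your Part~(1) is correct and in fact more explicit than the paper, which simply cites Lemma~2.20 of \cite{grosch15}; the free-set/submodularity reasoning you give is the standard way to prove that lemma.

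For Part~(2), your source~(a) together with a filtering step is exactly the paper's approach, and sources~(b) and~(c) are unnecessary. The completeness of~(a) is much easier than you fear---no induction is needed. If $Z$ is inclusionwise minimal in $\CT$ and $Y\subseteq\bar Z$ is free in $\bar Z$ with $|Y|\le\kappa(Z)<k$, then for \emph{any} $(\CT,Y)$-separation $W$ one has $Y\subseteq\bar Z\cap\bar W\subseteq\bar Z$, so freeness gives $\kappa(Z\cup W)=\kappa(\bar Z\cap\bar W)\ge\kappa(Z)$; submodularity then yields $\kappa(Z\cap W)\le\kappa(W)<k$. Now $Z\cap W\in\CT$, since otherwise $\overline{Z\cap W}\in\CT$ and $Z\cap W\cap\overline{Z\cap W}=\emptyset$ would violate~\ref{li:t2} (taking $Z,W,\overline{Z\cap W}$). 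Inclusion-minimality of $Z$ forces $Z=Z\cap W\subseteq W$, and hence also $\kappa(Z)=\kappa(Z\cap W)\le\kappa(W)$. Thus $Z$ is the leftmost minimum $(\CT,Y)$-separation, and the ``competing $W$ of strictly smaller order not contained in $Z$'' that you worry about simply cannot exist.

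There is, however, a real gap in your minimality filter. Checking $Z\setminus\{a\}\notin\CT$ for every $a\in Z$ does \emph{not} certify inclusion-minimality: a proper subset $Z'\subsetneq Z$ with $Z'\in\CT$ may well have $\kappa(Z\setminus\{a\})\ge k$ for every $a\in Z\setminus Z'$, so none of the single-element deletions are even eligible for membership in $\CT$, yet $Z$ is not minimal. The paper's filter instead checks that no small $Y'$ yields a leftmost minimum $(\CT,Y')$-separation strictly inside $Z$---correct precisely because every minimal element arises this way. Equivalently, once you know the list from~(a) contains all minimal elements, you can just discard any candidate that properly contains another candidate on the list.
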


\begin{proof}
  Assertion (1) follows from Lemma~2.20 of \cite{grosch15}.

  To prove (2), we claim that a set $X\in A$ is an inclusionwise
  minimal element of a tangle $\CT$ of order $k$ if and only if the
  following two conditions are satisfied.
  \begin{eroman}
  \item There is a set $Y\subseteq \bar X$ of size $|Y|\le k$ such that
    $X$ is the leftmost minimum $(\CT,Y)$-separation.
  \item There is a no set $Z\subseteq A$ of size $|Z|\le k$ such that
    the leftmost minimum $(\CT,Z)$-separation is a proper subset of
    $X$.
  \end{eroman}
  To see this, we simply observe that if $X$ is an inclusionwise
  minimal element of $\CT$, then it trivially satisfies (ii), and it
  satisfies (i), because we can let $Y$ be a set that is free in $\bar
  X$. Conversely, if $X$ satisfies (i) then it is an element of $\CT$,
  and (ii) makes sure that it is inclusionwise minimal.

  There are at most $\binom{|A|}{k}$ sets $X$ satisfying (i), and
  using (1) we can list these in polynomial time. Then, using (1)
  again, for each of these sets we can check whether they satisfy (ii).
\end{proof}

\subsection{Contractions}
\label{sec:contraction}

Contractions give a  way to construct new connectivity functions from
given ones. To define a
\emph{contraction}, we take one or several disjoint subsets of the
ground set and ``contract'' these sets to single points. In the new
decomposition, these new points represent the sets of the original
decomposition

For the formal treatment, let $\kappa$ be a connectivity
function on a set $A$.

 Let $C_1,\ldots,C_m\subseteq A$ be mutually disjoint subsets of
 $A$. Let $B:=A\setminus(C_1\cup\ldots\cup C_m)$, and let
 $c_1,\ldots,c_m$ be fresh elements (mutually distinct, and distinct
 from all elements of $B$). We define
\[
A\contract_{C_1,\ldots,C_m}:=B\cup\{c_1,\ldots,c_m\}.
\]
To simplify the notation, here and in the following we omit the index
${}_{C_1,\ldots,C_m}$ if the sets $C_i$ are clear from the context.
For every subset $X\subseteq A\contract$, we define its
\emph{expansion} to be the set 
\[
X\expand:=X\expand_{C_1,\ldots,C_m}:=(X\cap B)\cup\bigcup_{\substack{i\in[m]\\c_i\in
    X}}C_i.
\]
The \emph{$C_1,\ldots,C_m$-contraction} of $\kappa$ is the
function $\kappa\contract$, or $\kappa\contract _{C_1,\ldots,C_m}$, on
$2^{A\contract}$ defined by
\[
\kappa\contract(X):=\kappa(X\expand).
\]
It is easy to verify that $\kappa\contract$ is indeed a connectivity
function.

\begin{rem}
  A different view on contractions is to maintain the 
  ground set, but define the connectivity function on a
  sublattice of the power set lattice. That is, not all separations of the ground set get an
  order, but only some of them. 

  Formally, we let $\CL:=\CL(A\contract C_1,\ldots,C_m)$ be the sublattice
  of $\CP(A):=(2^A,\cap,\cup)$ consisting of all sets $X\subseteq A$ such that
  for all $i\in[m]$ either $C_i\subseteq X$ or $C_i\subseteq\bar
  X$. Obviously, $\CL$ is closed under
  intersection and union and thus indeed a sublattice. Observe that
  every $X\in \CL$ has a natural
  contraction
  \[
  X\contract:=(X\cap B)\cup\{c_i\mid i\in[m]\text{ with }C_i\subseteq
  X\},
  \]
  and we have $X\contract\expand=X$. As we also have $X'\expand\in
  \CL$ for all $X'\subseteq A\contract$,
  the contraction mapping is a bijection between $\CL(A\contract
  C_1,\ldots,C_m)$ and $A\contract$. It follows immediately from the
  definition of $\kappa\contract$ that for all $X\in \CL(A\contract
  C_1,\ldots,C_m)$ we have
  \[
  \kappa(X)=\kappa\contract(X\contract).
  \]
  Thus the contraction mapping is an isomorphism from the
  \emph{connectivity system} $\big(\CL,\kappa|_{\CL})$, where
  $\kappa|_{\CL}$ denotes the restriction of $\kappa$ to $\CL$, and
  the connectivity system $(2^{A\contract},\kappa\contract)$.

  The view of a contraction of $\kappa$ as a restriction to a
  sublattice will be useful when dealing with contractions of the
  cut-rank function of a graph in Section~\ref{sec:nwl2}.
  \uend
\end{rem}

Let $\CT$ be a $\kappa$-tangle of order $k$. We define 
\[
\CT\contract:=\CT\contract_{C_1,\ldots,C_m}:=\{X\subseteq
A\contract\mid X\expand\in\CT\}.
\]
Note that $\CT\contract$ is not necessarily a
$\kappa\contract$-tangle: if $C_i\in\CT$ for some $i\in[m]$, then
$\{c_i\}\in\CT\contract$, and thus $\CT\contract$ violates \ref{li:t3}.
However, it is straightforward to verify that $\CT\contract$ is a
$\kappa\contract$-tangle (of the same order $k$) if and only if
$C_1,\ldots,C_m\not\in\CT$. 

\section{Triple Covers}
\label{sec:cover}

A \emph{cover} of a $\kappa$-tangle $\CT$ is a set $C\subseteq A$ such
that $C\cap Y\neq\emptyset$ for all $Y\in\CT$. It is not hard to prove
that every $\kappa$-tangle of order $k$ has a cover of size at most
$k$. 

A \emph{triple cover} of a  $\CT$ is a set $Q\subseteq A$ such
that $Q\cap Y_1\cap Y_2\cap Y_3\neq\emptyset$ for all $Y_1,Y_2,Y_3\in\CT$. We
shall prove that every tangle of order $k$ has a triple cover of size
bounded in terms of $k$.

Observe that we can test in polynomial time whether a given set $Q$ is
a triple cover for a $\kappa$-tangle $\CT$, given by its index in a
comprehensive tangle data structure: using the data structure, we
produce a list of all inclusionwise minimal elements of $\CT$, and
then we check if any three of them have a nonempty intersection with
$Q$.

Let $\theta:\NN\to\NN$ be defined by $\theta(0):=0$ and 
\[
\theta(i+1):=\theta(i)+3^{\theta(i)}.
\]

\begin{lem}\label{lem:triple-cover}
  Let $\CT$ be a $\kappa$-tangle of order $k$. Then $\CT$ has a triple
  cover of size at most $\theta(3k-2)$.
\end{lem}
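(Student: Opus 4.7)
The plan is to construct the triple cover by iterated augmentation, exploiting tangle axiom~\ref{li:t2} to extract witnesses and relating the number of rounds to the bounded supply of $\kappa$-complexity in the tangle.

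\emph{Iteration.} Set $Q_0 := \emptyset$. Given $Q_i \subseteq A$, call an ordered tripartition $Q_i = P_1 \sqcup P_2 \sqcup P_3$ \emph{bad} if there exist $Y_1, Y_2, Y_3 \in \CT$ with $Y_j \cap P_j = \emptyset$ for $j=1,2,3$. The set $Q_i$ fails to be a triple cover of $\CT$ precisely when some tripartition is bad: given $Y_1, Y_2, Y_3 \in \CT$ with $Q_i \cap Y_1 \cap Y_2 \cap Y_3 = \emptyset$, assigning each $q \in Q_i$ to any $P_j$ with $q \notin Y_j$ produces a bad tripartition, and the converse is immediate. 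For each bad tripartition $P = (P_1, P_2, P_3)$ of $Q_i$, choose a canonical witness triple $(Y_1^P, Y_2^P, Y_3^P)$, e.g., with $Y_j^P$ the leftmost minimum $(\CT, P_j)$-separation (which exists by badness of $P$), and pick an element $a_P \in Y_1^P \cap Y_2^P \cap Y_3^P$, nonempty by~\ref{li:t2}. Let $Q_{i+1} := Q_i \cup \{a_P : P \text{ bad}\}$. Since $Q_i$ admits $3^{|Q_i|}$ ordered tripartitions, $|Q_{i+1}| \le |Q_i| + 3^{|Q_i|}$, and a straightforward induction on $i$ yields $|Q_i| \le \theta(i)$.

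\emph{Termination in $3k-2$ rounds.} It remains to show that $Q_{3k-2}$ is already a triple cover. The plan is to define a non-negative integer monovariant $M(Q_i)$ upper-bounding the ``canonical complexity'' of any remaining bad tripartition, e.g., $M(Q_i) := \max_P \sum_{j=1}^{3} \kappa(Y_j^P)$ taken over bad $P$ (with $M(Q_i) := -1$ if no bad $P$ exists). Since each $\kappa(Y_j^P) \le k-1$, we have $M(Q_0) \le 3(k-1) = 3k-3$, so a strict per-round descent $M(Q_{i+1}) < M(Q_i)$ forces $M(Q_{3k-2}) = -1$, giving the desired triple cover of size at most $\theta(3k-2)$.

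\emph{Main obstacle.} The strict descent of $M$ is the heart of the proof. Consider a tripartition $P'$ that is bad for $Q_{i+1}$; its restriction $P$ to $Q_i$ is also bad, because any witness triple for $P'$ is automatically a witness triple for $P$. The augmentation added $a_P \in Y_1^P \cap Y_2^P \cap Y_3^P$ which must lie in some part $P_{j_0}'$, so the new canonical witness $Y_{j_0}'$ (disjoint from $P_{j_0}'$) cannot contain $a_P$, whereas $Y_{j_0}^P$ does. The crucial step is to pin down a canonical choice forcing $\kappa(Y_{j_0}') < \kappa(Y_{j_0}^P)$ while the two other coordinates do not grow. A naive leftmost minimum $(\CT, P_j)$-separation is monotone in $P_j$ but its order may in fact increase, so a lexicographic refinement of the canonical choice is needed, combined with Lemma~\ref{lem:tanglesep3}(2), which for rightmost minimum separations extracts a \emph{strict} inequality after intersecting with an arbitrary set of no greater order; submodularity of $\kappa$ is then used to channel the drop into exactly one coordinate per round. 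This coupling of evolving canonical witnesses to $\kappa$-orders is where the technical core of the proof lies.
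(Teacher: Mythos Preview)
Your iterative construction and size bound $|Q_i|\le\theta(i)$ are correct, and your canonical witnesses $Y_j^P$ (leftmost minimum $(\CT,P_j)$-separations) are exactly the right objects---they coincide with the sets $Y_j$ the paper singles out via conditions (i)--(iii). The gap is in the monovariant: you track $M(Q_i)=\max_P\sum_j\kappa(Y_j^P)$ and claim it strictly \emph{decreases}. This is backwards. When you pass from $P_j$ to $P_j'\supseteq P_j$, the constraint ``disjoint from $P_j$'' becomes stricter, so the minimum order $\kappa(Y_j^{P'})$ can only go \emph{up}. In particular your desired inequality $\kappa(Y_{j_0}')<\kappa(Y_{j_0}^P)$ is in the wrong direction, and already at $Q_0$ the only tripartition $(\emptyset,\emptyset,\emptyset)$ gives $M(Q_0)=0$ (each $Y_j^P$ has order $0$), so one step of ``descent'' would finish---absurd.

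The correct invariant (and this is what the paper proves directly, without phrasing it as a monovariant) is that the \emph{minimum} of $\sum_j\kappa(Z_j)$ over all bad triples $(Z_1,Z_2,Z_3)$ is at least $i$ after round $i$. The argument is the one you almost wrote, with the inequality flipped: if $(Z_1,Z_2,Z_3)$ is bad for $Q_{i+1}$, restrict to a tripartition of $Q_i$; by minimality of $\kappa(Y_j^P)$ we have $\kappa(Z_j)\ge\kappa(Y_j^P)$ for all $j$, and since $a_P\in Y_{j_0}^P\setminus Z_{j_0}$ forces $Y_{j_0}^P\not\subseteq Z_{j_0}$, inclusion-minimality of $Y_{j_0}^P$ gives the \emph{strict} inequality $\kappa(Z_{j_0})>\kappa(Y_{j_0}^P)$. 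Hence $\sum\kappa(Z_j)>\sum\kappa(Y_j^P)\ge i$, so $\ge i+1$. Since any bad triple has $\sum\kappa(Z_j)\le 3(k-1)$, no bad triple survives round $3k-2$. Your appeal to Lemma~\ref{lem:tanglesep3}(2) and a ``lexicographic refinement'' is unnecessary and off-track; the leftmost-minimum choice you already made is exactly what delivers the strict increase.
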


\begin{proof}
  By induction on $i\ge 0$ we construct sets $Q_i$ such that for all
  $Y_1,Y_2,Y_3\in\CT$, if $Q_i\cap Y_1\cap Y_2\cap Y_3=\emptyset$ then
  $\kappa(Y_1)+\kappa(Y_2)+\kappa(Y_3)\ge i$. Then $Q:=Q_{3k-2}$ is a triple cover
  of $\CT$.

  We let $Q_0:=\emptyset$.

  For the inductive step $i\to i+1$, suppose that $Q_i$ is
  defined. Let $\CX_i$ be the set of all partitions $(X_1,X_2,X_3)$
  oif $Q_i$ into three possibly empty sets. Then $|\CX_i|\le 3^{\theta(i)}$.

  For $X=(X_1,X_2,X_3)\in\CX_i$ we shall define an element
  $y_{X}\in A$ such
  that for all $Y_1,Y_2,Y_3\in\CT$, if $Q_i\cap Y_j\subseteq X_j$ for
  $j=1,2,3$ then either $y_X\in Y_1\cap Y_2\cap Y_3$
  or $\kappa(Y_1)+\kappa(Y_2)+\kappa(Y_3)\ge i+1$. Then we let
  \[
  Q_{i+1}:=Q_i\cup\{y_X\mid X\in\CX_i\}.
  \]
  Clearly, $|Q_{i+1}|\le |Q_i|+|\CX_i|\le \theta(i+1)$, and if $Y_1,Y_2,Y_3\in\CT$ with
  $Q_{i+1}\cap Y_1\cap Y_2\cap Y_3=\emptyset$, then
  $\kappa(Y_1)+\kappa(Y_2)+\kappa(Y_3)\ge i+1$, because otherwise
  \[
  y_{((Q_i\cap Y_1),(Q_i\cap Y_2),(Q_i\cap Y_3))}\in Y_1\cap Y_2\cap
  Y_3.
  \]

  Observe that for every $X\subseteq Q_i$, either there is no $Y\in\CT$
  such that $Q_i\cap Y\subseteq X$ or there is a (unique) $Y\in\CT$
  such that
  \begin{eroman}
  \item $Q_i\cap Y\subseteq X$;
  \item subject to (i), $\kappa(Y)$ is minimum;
  \item $Y\subseteq Y'$ for all $Y'\in\CT$ satisfying (i) and (ii).
  \end{eroman}
  This can be proved by a standard submodularity argument.

  Now let $X=(X_1,X_2,X_3)\in\CX_i$. If for some $j\in\{1,2,3\}$,
  there is no $Y_j\in\CT$ such that $Q_i\cap Y_j\subseteq X_j$, then
  there is nothing to do, and we can choose $y_{X}$
  arbitrarily. Otherwise, for $j=1,2,3$ we let $Y_j$ be the unique set
  in $\CT$ satisfying (i)--(iii) with respect to $X_j$. Then $Q_i\cap
  Y_1\cap Y_2\cap Y_3\subseteq X_1\cap X_2\cap X_3=\emptyset$ and
  thus, by the induction hypothesis,
  $\kappa(Y_1)+\kappa(Y_2)+\kappa(Y_3)\ge i$.

  Let $y=y_{X}\in
  Y_1\cap Y_2\cap Y_3$; such a $y$ exists by \ref{li:t2}.
  Let $Z_1,Z_2,Z_3\in\CT$ such that $Q_i\cap Z_j\subseteq X_j$ and
  $y\not\in Z_1\cap Z_2\cap Z_3$; if no such $Z_j$ exist there is nothing to prove. We claim that 
  \begin{equation}\label{eq:triple-cover}
  \kappa(Z_1)+\kappa(Z_2)+\kappa(Z_3)>\kappa(Y_1)+\kappa(Y_2)+\kappa(Y_3)\ge i.
  \end{equation}
  We observe first that for $j=1,2,3$ we have
  $\kappa(Z_j)\ge\kappa(Y_j)$ by (ii), and if
  $\kappa(Z_j)=\kappa(Y_j)$ then $Y_j\subseteq Z_j$ by (iii).
  
  Without loss of generality we may assume that $y\not\in Z_1$. Then
  $Y_1\not\subseteq Z_1$ and thus $\kappa(Z_1)>\kappa(Y_1)$. This proves
  \eqref{eq:triple-cover}.  
\end{proof}

\section{Treelike Decompositions}
\label{sec:treelike}

In a directed graph $D$, by $N^D_+(t)$ or just $N_+(t)$ if $D$ is clear from the
context, we denote the
set out-neighbours of a node $t$. By $\dagle^D$ or just $\dagle$ we
denote the reflexive transitive closure of $E(D)$, which is a partial
order if $D$ is acyclic. A \emph{directed tree} is a directed graph
$T$ where for all nodes $t$ the set $\{s\mid s\dagle t\}$ is linearly
ordered by $\dagle$.

Let $A$ be a set. A
\emph{directed decomposition} of $A$ is a pair $(D,\gamma)$,
where $D$ is a directed graph and $\gamma:V(D)\to 2^{A}$. If $\kappa$
is a connectivity function on $A$, we also say that $(D,\gamma)$ is a
directed decomposition of $\kappa$.
For every
node $t\in V(D)$, we let 
\begin{equation}
  \label{eq:1}
  \beta(t):=\gamma(t)\setminus\bigcup_{u\in N_+^D(t)}\gamma(u).
\end{equation}
We call $\beta(t)$ the \emph{bag} and $\gamma(t)$ the \emph{cone} at
$t$. 
We always denote the bag function of a directed decomposition
$(D,\gamma)$ by $\beta$, and we use implicit naming conventions by
which, for example, we denote the bag function of $(D',\gamma')$ by $\beta'$. 

A directed decomposition $(D,\gamma)$ of $A$ is
\emph{treelike}, or a \emph{treelike decomposition}, if it satisfies the following axioms.
\begin{nlist}{TL}
  \item\label{li:tl1} $D$ is a acyclic.
  \item\label{li:tl2} For all $(t,u)\in E(D)$,
    \[
    \gamma(t)\supseteq\gamma(u).
    \]
  \item\label{li:tl3} For all $t\in V(T)$ and $u_1,u_2\in N_+^D(t)$,
    \[
    \gamma(u_1)=\gamma(u_2)\quad\text{or}\quad
    \gamma(u_1)\cap\gamma(u_2)=\emptyset.
    \]
  \item\label{li:tl4} There is a $t\in V(D)$ such that $\gamma(t)=A$.
\end{nlist}
If $(D,\gamma)$ only satisfies (TL.1)--(TL.3), we call it a
\emph{partial treelike decomposition}. The treelike decompositions of
connectivity functions 
introduced here are adaptations of treelike decompositions of
graphs introduced in \cite{gro12+a,gro08a}.

In the following, let $(D,\gamma)$ be a partial treelike decomposition of $A$.
Observe
that for all $t\in V(D)$,
\begin{equation}
  \label{eq:2}
  \gamma(t)=\bigcup_{u\dagri t}\beta(u).
\end{equation}
$(D,\gamma)$ is a \emph{(partial) directed tree decomposition}\footnote{Deviating from previous work~\cite{geegerwhi09,grosch15}, we view the trees in tree
decompositions as being directed.} if $D$ is a
directed tree
and for all $t\in V(T)$ and all distinct $u_1,u_2\in N_+^D(t)$,
\[
\gamma(u_1)\cap\gamma(u_2)=\emptyset.
\]
Observe that $(D,\gamma)$ is a directed tree decomposition if and only if $D$
is a directed tree and the bags
$\beta(t)$ for $t\in V(D)$ are mutually disjoint and have union $A$
(that is, they form a partition of $A$ with possibly empty parts).

Now assume that $\kappa$ is a connectivity function on $A$ and
$(D,\gamma)$ a (partial) treelike decomposition of $\kappa$. The \emph{width}
of a node $t\in V(D)$ in $(D,\gamma)$ is 
\[
\width(D,\gamma,t):=\max_{\substack{X\subseteq\beta(t)\\U\subseteq N_+^D(t)}}\kappa\left(
  X\cup\bigcup_{u\in U}\gamma(u)\right).
\]
The \emph{width} $\width(D,\gamma)$ of the decomposition is the
maximum of the widths of its nodes.

It is sometimes convenient to normalise treelike decompositions. The
\emph{roots} of a directed acyclic graph $D$ are the nodes $r$ of in-degree
$0$. The
\emph{leaves} of a directed acyclic graph $D$ are the nodes $t$ of out-degree
$0$; we denote the set of all leaves of $D$ by $L(D)$. All non-leaf
nodes are called \emph{inner nodes}. We
say that a treelike decomposition $(D,\gamma)$ of $\kappa$ is
\emph{normal} if (in addition to the axioms
\ref{li:tl1}--\ref{li:tl4}) it satisfies the following conditions.
\begin{nlist}{NTL}
\item\label{li:ntl1} For all inner nodes $t\in V(D)\setminus L(D)$ it holds that
  $\beta(t)=\emptyset$.
\item\label{li:ntl2} For all leaves $t\in L(D)$ it holds that $|\beta(t)|=1$.
\item\label{li:ntl3} For all nodes $t\in V(D)$, either $\gamma(u_1)=\gamma(u_2)$ for
  all $u_1,u_2\in N_+(t)$ or $\gamma(u_1)\cap\gamma(u_2)=\emptyset$ for
  all distinct $u_1,u_2\in N_+(t)$.
\item\label{li:ntl4} $D$ has a unique root.
\end{nlist}

\begin{lem}\label{lem:normal}
  Let $(D,\gamma)$ be a treelike decomposition of $\kappa$. Then there
  is a normal treelike decomposition $(D',\gamma')$ of $\kappa$ such that
  $\width(D,\gamma)=\width(D',\gamma')$.
  If $D$ is a tree, then $D'$ is a tree as well, and if $(D,\gamma)$
  is a tree decomposition then $(D',\gamma')$
  is a tree decomposition as well.

  Furthermore, the construction of $(D',\gamma')$ from $(D,\gamma)$
  is canonical and can be carried out in polynomial time.
\end{lem}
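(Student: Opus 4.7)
The plan is to produce $(D',\gamma')$ from $(D,\gamma)$ by three local, width-preserving, deterministic edits---one for (NTL.4), one for (NTL.1) and (NTL.2) together, and one for (NTL.3)---applied in that order so that no edit disturbs an axiom established earlier. For (NTL.4), I first discard every node $u$ that is not a $\dagle$-descendant of some \emph{full root} (a root $r$ of $D$ with $\gamma(r)=A$); by (TL.4) at least one full root exists, and the restriction to the kept nodes is still a treelike decomposition. If more than one full root survives, I introduce a fresh super-root $r^*$ with $\gamma(r^*):=A$ and edges to each surviving root; all children of $r^*$ then have cone $A$, which takes care of (TL.3) and the ``all equal'' case of (NTL.3) at $r^*$, and its width contribution is $\kappa(A)=0$. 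If $D$ is a directed tree then attaching $r^*$ above the remaining root forest again yields a directed tree; and if $(D,\gamma)$ is a tree decomposition, disjointness of bags together with (TL.4) forces a unique full root, so no super-root is added and the tree-decomposition property is preserved.

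For (NTL.1) and (NTL.2) together, for every node $t$ with $\beta(t)\neq\emptyset$ I attach to $t$, for each $a\in\beta(t)$, a fresh leaf child $\ell_{t,a}$ with $\gamma(\ell_{t,a})=\beta(\ell_{t,a}):=\{a\}$. This empties $\beta(t)$ without changing $\gamma(t)$; any leaf $\ell$ with $|\beta(\ell)|\geq 2$ becomes an internal node with singleton-bag leaf children, and residual empty-bag leaves (which have empty cones) are then deleted. Since $a\in\beta(t)$ implies $a\notin\gamma(u)$ for every old child $u$, the new singleton cones at $t$ are mutually disjoint and disjoint from every old child's cone, so (TL.3) is preserved. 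The width at $t$ is unchanged because selecting the new children indexed by $a\in X$ reproduces the effect of the original choice $X\subseteq\beta(t)$ in the width expression; the new leaves themselves have width $\kappa(\{a\})$, a value already achieved in the original width at $t$. The same disjointness argument shows the step preserves the tree-decomposition property when the input is a tree decomposition.

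For (NTL.3), at every node $t$ I partition $N_+(t)$ into its cone-equality classes, which by (TL.3) have pairwise disjoint cones. For every class $C$ of size at least two, unless $C$ is the only class at $t$, I insert a fresh bundling node $t_C$ with $\beta(t_C):=\emptyset$ and $\gamma(t_C)$ equal to the common cone of $C$, rerouting each edge $(t,u)$ with $u\in C$ through $t_C$. Then the children of $t$ are either all of a single shared cone (if $t$ had a single class) or pairwise cone-disjoint (otherwise), and each $t_C$'s children share a cone by construction, so (NTL.3) holds everywhere. The width at $t_C$ equals $\kappa$ of the common cone, already witnessed at $t$, and the width at $t$ itself is unchanged because bundling preserves the collection of cone-unions accessible from $t$. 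In a tree decomposition every class is already a singleton, so this step is vacuous.

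Each edit is a deterministic function of isomorphism-invariant data (full roots, non-empty bags with their elements, cone-equivalence classes), so the composite construction is canonical, and each step is clearly polynomial. The main delicate point is the \emph{ordering} of the steps: the second edit can break (NTL.3) at nodes that originally had a single shared cone-class, since the newly attached singleton leaves then introduce incompatible cones, so the uniformity step must come last; meanwhile both later steps rely on (TL.3), which the earlier steps were careful to preserve.
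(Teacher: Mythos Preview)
Your proof is correct and follows essentially the same approach as the paper: three local, width-preserving edits to enforce (NTL.1)+(NTL.2), (NTL.3), and (NTL.4), just applied in a different order (the paper handles (NTL.4) last rather than first, and bundles \emph{all} cone-equality classes rather than only the non-singleton ones). Your version is arguably a bit more careful in explicitly deleting residual empty-bag leaves, a case the paper's proof does not address.
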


\begin{proof}
  To satisfy \ref{li:ntl1} and \ref{li:ntl2}, we extend our
  decomposition as follows.
  Let $t$ be an inner node with
  $\beta(t)\neq\emptyset$ or a leaf $t\in L(D)$ with $|\beta(t)|>1$. For each
  $x\in \beta(t)$, we add a fresh node $t_x$ and an edge from $t$ to
  $t_x$. We leave $\gamma(v)$ unchanged for all old
  nodes $v$ and set $\gamma(t_x):=\{x\}$. We obtain a new treelike decomposition that
  satisfies \ref{li:ntl1} and \ref{li:ntl2}. This new decomposition has the same
  width as the old one.
 
  To satisfy \ref{li:ntl3}, we modify our decomposition as follows.
  For every node $t\in V(D)$ that has distinct children
  $u_1,u_2,u_3\in N_+(t)$
  such that $\gamma(u_1)=\gamma(u_2)\neq\gamma(u_3)$, we partition
  $N_+(t)$ into sets $U_1,\ldots,U_m$ such that all nodes in $U_i$ have the
  same cone and the nodes in distinct $U_i,U_j$ have disjoint cones. We
  delete all edges from $t$ to its children. Then we add $m$
  fresh nodes $t_1,\ldots,t_m$ and edges from $t$ to $t_i$ and from $t_i$
  to all nodes in $U_i$. We leave $\gamma(v)$ unchanged for all old
  nodes $v$ and set $\gamma(t_i):=\gamma(u_i)$ for some (and hence
  all) $u_i\in U_i$. We obtain a new treelike decomposition that
  satisfies \ref{li:ntl3}. This new decomposition has the same
  width as the old one, and it also leaves \ref{li:ntl1} and
  \ref{li:ntl2} intact.

  Suppose now that $(D,\gamma)$
  satisfies \ref{li:ntl1}--\ref{li:ntl3}; it remains to satisfy
  \ref{li:ntl4}. We repeatedly remove roots $t$
  with $\gamma(t)\neq A$ until we are left with a graph $D'$
  where all roots we have $\gamma(t)=A$. Then we add a new node $r$ an
  edges from $r$ to all roots of $D'$; the resulting graph
  $D''$ has $r$ as its only root. We define $\gamma'': V(D'')\to 2^A$
  by $\gamma''(r):=A$ and $\gamma''(t):=\gamma(t)$ for all $t\in
  V(D')\subseteq V(D)$. It is easy to see that $(D'',\gamma'')$
  satisfies \ref{li:tl1}--\ref{li:tl4} and thus is a treelike
  decomposition. It satisfies \ref{li:ntl4} by construction.
  We have $\beta''(r)=\emptyset$ and
  $\beta''(t)=\beta(t)$ for all $t\in V(D')$. Thus the decomposition
  $(D'',\gamma'')$ satisfies \ref{li:ntl1} and \ref{li:ntl2}. It
  also satisfies \ref{li:ntl3}, because all children of the root $r$
  have the same cone $A$ and all nodes $t\in V(D')$ have the same
  children as in $D$. Finally, $(D'',\gamma'')$ has the same width as
  $(D,\gamma)$, because the width at the root $r$ is $0$ and the width
  at all other nodes $t$ is the same as in $(D,\gamma)$.
\end{proof}

The following lemma shows that our definition of width is fairly robust.

\begin{lem}\label{lem:width}
  For every $k\in\NN$, the following statements are equivalent.
  \begin{eroman}
    \item $\bw(\kappa)\le k$.
   \item $\kappa$ has a directed tree decomposition of width at most $k$.
   \item $\kappa$ has a treelike decomposition of width at most $k$.
  \end{eroman}
\end{lem}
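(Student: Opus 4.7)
The plan is to prove the chain (i)$\Rightarrow$(ii)$\Rightarrow$(iii)$\Rightarrow$(i). For (i)$\Rightarrow$(ii), I would start with a branch decomposition $(T,\xi)$ of $\kappa$ of width at most $k$, subdivide an arbitrary edge of $T$ with a fresh node $r$, and orient the resulting tree away from $r$. Setting $\gamma(r):=A$ and, for every non-root $t$, $\gamma(t):=\{\xi(u):u\in L(T),\ u\text{ is a descendant of }t\}$ gives a directed tree decomposition: inner nodes have empty bags, every leaf has a singleton bag, and every edge cuts the leaves exactly as the corresponding edge of $T$, so $\kappa(\gamma(t))$ coincides with the branch-decomposition order of that edge and is at most $k$. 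A short case analysis of the width formula at each type of node (root, non-root inner, leaf) then yields $\width(D,\gamma)\le k$.

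The implication (ii)$\Rightarrow$(iii) is immediate from the definitions: a directed tree is acyclic (\ref{li:tl1}); distinct children of a node in a directed tree decomposition have disjoint cones (\ref{li:tl3}); cones decrease along edges (\ref{li:tl2}); and the root's cone equals the union of all bags, namely $A$ (\ref{li:tl4}).

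The real work is (iii)$\Rightarrow$(i). I would first apply Lemma~\ref{lem:normal} to pass to a normal treelike decomposition $(D,\gamma)$ of the same width, with unique root $r$. For every node $t$ whose children all share a common cone (the first alternative in \ref{li:ntl3}), I would delete all but one of the edges from $t$ to those children, and then discard nodes no longer reachable from $r$, obtaining a pruned sub-DAG $D'$ in which every node's children have pairwise disjoint cones. A lowest-common-ancestor argument shows that $D'$ is in fact a tree: if some $v\in V(D')$ had two distinct parents $p_1,p_2$, then any two paths in $D'$ from $r$ to $p_1,p_2$ would either diverge at a common ancestor $a$ into distinct children $c_1,c_2$, forcing $\gamma(v)\subseteq\gamma(c_1)\cap\gamma(c_2)=\emptyset$, or one path would end at an ancestor $p_i$ of the other, producing two distinct disjoint-cone children of $p_i$ both containing $v$ as a descendant; either way $\gamma(v)=\emptyset$, which contradicts the existence of a leaf-descendant of $v$ with nonempty bag guaranteed by normality. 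I would then contract every inner node of $D'$ that now has a single child (whose cone coincides with its own) and binarize any node of out-degree at least three by arbitrary binary fan-ins; each intermediate node introduced has cone equal to a union of some of the parent's children, and its order is therefore bounded by $\width(D,\gamma,\cdot)$ at the parent, so this step does not increase width. The result is a rooted binary tree decomposition of width at most $k$ whose leaves are in bijection with $A$. Removing the root and joining its two children by a new edge finally yields an unrooted cubic tree $T'$ together with a bijection $\xi'\colon L(T')\to A$ such that every oriented edge $(s,t)$ of $T'$ corresponds to a non-root node $v$ with $\tilde\xi'(s,t)=\gamma(v)$, giving $\bw(\kappa)\le\width(T',\xi')\le k$.

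The main obstacle is the DAG-to-tree reduction above: the equal-cone pruning, combined with the LCA argument, is what ensures that the pruned $D'$ is a genuine tree decomposition without losing any element of $A$ or increasing the width. Once this is established, the subsequent contraction, binarization, and unrooting steps are routine width-preserving manipulations.
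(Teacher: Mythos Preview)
Your proposal is correct. The implications (i)$\Rightarrow$(ii) and (ii)$\Rightarrow$(iii) follow exactly the paper's route.

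For (iii)$\Rightarrow$(i) you take a genuinely different path. The paper first \emph{unfolds} the DAG into a forest by repeatedly duplicating subtrees (explicitly allowing an exponential blow-up), selects one tree whose root has cone $A$, and only then deletes all but one child among siblings with equal cones; from there it applies the separate (ii)$\Rightarrow$(i) construction. You instead stay inside the DAG: you normalise, delete all but one edge at every node whose children share a cone, and then argue directly via an LCA-type argument that the pruned, reachable subgraph $D'$ is already a tree. This is more economical---no exponential intermediate object---and the remaining contraction/binarisation/unrooting steps coincide with what the paper does in (ii)$\Rightarrow$(i).

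One comment on the LCA argument: your two cases (``the paths diverge at a common ancestor'' versus ``one path ends at an ancestor $p_i$ of the other'') are not obviously exhaustive as stated, since in a general DAG two $r$-paths can interleave several times. The clean way to phrase it is to take paths $P_1,P_2$ from $r$ to $v$ whose final edges are $(p_1,v)$ and $(p_2,v)$, let $j$ be the largest index with $P_1$ and $P_2$ agreeing on their first $j{+}1$ nodes, set $a$ to be this common node, and observe that the $(j{+}1)$st nodes $c_1,c_2$ are distinct children of $a$ in $D'$; since $a$ retained two children after your pruning, \ref{li:ntl3} forces $\gamma(c_1)\cap\gamma(c_2)=\emptyset$, whence $\gamma(v)=\emptyset$, contradicting the leaf-descendant bound from \ref{li:ntl2}. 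With this formulation no case distinction is needed.
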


\begin{proof}
  To prove (i)$\implies$(ii), let $(T,\xi)$ be a branch decomposition
  of $\kappa$. If $E(T)=\emptyset$, then $|A|=1$, and the claim is trivial. So we
  assume $E(T)\neq\emptyset$. Let $e_0=s_0t_0\in E(T)$. We define a
  directed tree $T'$ by subdividing the edge $e_0$, making the newly
  inserted node, say, $r$, the root of $T'$ and directing all edges
  away from $r$. We define $\gamma:V(T')\to 2^A$ by:
  \begin{itemize}
  \item
    $\gamma(r):=A$;
  \item $\gamma(s_0):=\tilde\xi(t_0,s_0)$ and
    $\gamma(t_0):=\tilde\xi(s_0,t_0)$;
  \item $\gamma(t):=\tilde\xi(s,t)$ for every node
    $t\in V(T')\setminus\{r,s_0,t_0\}$ with parent $s$.
  \end{itemize}
  Observe every internal node $t\in V(T')\setminus L(T')$ has
  precisely two children, because the tree $T$ is cubic. 

  Clearly, $(T',\gamma)$ is a directed tree decomposition of $\kappa$.

  \begin{claim}
    $\width(T',\gamma)\le\width(T,\xi)$.

    \proof
    Let $t\in V(T')$; we shall prove that $\width(T',\gamma,t)\le
    \width(T,\xi)$.
    
    Suppose first that $t\in L(T)$. Then
    $\gamma(t)=\beta(t)=\{\xi(t)\}$, and thus 
    \[
    \width(T,\gamma,t)=\max\{\kappa(\emptyset),\kappa(\{\xi(t)\})\}=\kappa(\{\xi(t)\})\le \width(T,\xi).
    \]
    Suppose next that $t$ has parent $s\neq r$ and children
    $u_1,u_2$. Then
  $\bar{\gamma(t)}=\tilde\xi(t,s)$, $\gamma(u_1)=\tilde\xi(t,u_1)$, and
  $\gamma(u_2)=\tilde\xi(t,u_2)$. As the union of these three sets is
  $A$, we have $\gamma(t)=\gamma(u_1)\cup\gamma(u_2)$ and thus
  $\beta(t)=\emptyset$. Thus
  \begin{align*}
    \width(T',\gamma,t)&=\max\{\kappa(\emptyset),\kappa(\gamma(u_1)),\kappa(\gamma(u_2)),\kappa(\gamma(t))\}\\
    &=\max\{0,\kappa(\tilde\xi(t,u_1)),\kappa(\tilde\xi(t,u_2)),\kappa(\tilde\xi(s,t))\}\le
    \width(T,\xi).
  \end{align*}
  If $t=s_0$ or $t_0$, we can argue completely analogously, using
  $t_0$ or $s_0$, respectively, in place of the parent $s$.

  Finally, suppose that $t=r$ is the root. Then $N_+(t)=\{s_0,t_0\}$,
  and thus    
    \[
    \beta(t)=\gamma(t)\setminus(\gamma(s_0)\cup\gamma(t_0))=A\setminus(\tilde\xi(t_0,s_0)\cup\tilde\xi(s_0,t_0))=\emptyset,
    \]
    because $\xi(s_0,t_0)=\bar{\xi(t_0,s_0)}$. It follows that
    \[
    \width(T,\gamma,t)=\max\{\kappa(\emptyset),\kappa(\tilde\xi(t_0,s_0)),\kappa(\tilde\xi(s_0,t_0)),\kappa(A)\}\le
    \width(T,\xi).\uende
    \]
  \end{claim}

  To prove (ii)$\implies$(i), let $(T,\gamma)$ be a directed tree decomposition
  of $\kappa$. Without loss of generality we may assume that it is normal.
  If $T$ is a binary tree (where each internal node has exactly
  two children), we can turn the decomposition into a branch
  decomposition simply by inverting the construction in the proof of 
  (i)$\implies$(ii).

  It remains to turn the tree $T$ into a binary tree. Suppose that
  $t\in V(T)$ is a node with at least three children. Let $U_1,U_2$ be
  a partition of $U:=N_+(t)$. We modify the tree by deleting all edges
  from $t$ to its children, inserting two new
  children $u_1,u_2$ for $t$, and making all nodes in $U_1$ children of
  $u_1$ and and all nodes in $U_2$ children of $u_2$. Let $T'$ be the
  resulting tree. We define $\gamma':V(T')\to 2^A$ by letting
  $\gamma'(u_i):=\bigcup_{u\in U_i}\gamma(u)$ for $i=1,2$ and
  $\gamma'(x):=\gamma(x)$ for all $x\in
  V(T')\setminus\{u_1,u_2\}=V(T)$. Observe that
  $\width(T',\gamma',u_i)\le \width(T,\gamma,t)$ for $i=1,2$ and 
  \[
  \width(T',\gamma',t)=\max\{\kappa(\emptyset),\kappa(\gamma(u_1))
  ,\kappa(\gamma(u_1)), \kappa(\gamma(t))\}\le \width(T,\gamma,t)
  \]
  and $\width(T',\gamma',x)\le \width(T,\gamma,x)$ for all $x\in
  V(T')\setminus\{u_1,u_2,t\}$. Thus
  $\width(T',\gamma')\le\width(T,\gamma)$. We repeat this construction
  until the tree is binary.

  \medskip
  The implication (ii)$\implies$(iii) is trivial, because every
  directed tree
  decomposition is a treelike decomposition. Thus it remains to prove
  (iii)$\implies$(ii). Let $(D,\gamma)$ be a treelike decomposition of
  $\kappa$. By repeatedly duplicating subtrees, starting from the
  leafs, we can turn $D$ into a forest $F$ (which may be exponentially
  larger than $D$). We can define a function $\gamma':V(F)\to
  2^A$ accordingly and obtain a treelike decomposition $(F,\gamma')$
  of the same width that is based on the forest $F$ instead of the
  directed acyclic graph $D$. We pick a root node $r$ of some tree $T$
  of $F$ with $\gamma'(r)=A$. Then we prune the tree as follows:
  whenever we have a node $t$ with a family $U\subseteq N_+(T)$ of
  children that all have the same $\gamma(u)$, we delete all but one
  of these children and the whole subtrees attached to them. The
  result is a tree decomposition.  
\end{proof}

\begin{rem}
  Note that the constructions in the proof of Lemma~\ref{lem:width},
  turning treelike decomposition into a tree decomposition and a tree
  decomposition into a branch decomposition, are not canonical. In
  fact, it is not difficult to see that there are no width-preserving canonical
  constructions for these tasks; this is
  why we introduced treelike decompositions and tree decompositions in
  the first place.
\end{rem}

Now let $\KT$ be a family of mutually incomparable $\kappa$-tangles. A
\emph{directed tree decomposition for $\KT$} is a triple
$(T,\gamma,\tau)$, where $(T,\gamma)$ is a directed tree decomposition
of $\kappa$ and $\tau:\KT\to V(T)$ a bijective mapping such that the
following two conditions are satisfied.
  \begin{nlist}{DTD}
  \item\label{li:dtd1} For all nodes $t,u\in V(T)$ with $u\not\dagle t$ there is a minimum $(\tau^{-1}(u),\tau^{-1}(t))$-separation
    $Y$ such that $\gamma(u)\subseteq Y$.
\item\label{li:dtd2} For all nodes $t\in V(T)$ except the root, there is a node
    $u\in V(T)$ such that $t\not\dagle u$ and $\gamma(t)$ is a
    leftmost minimum $(\tau^{-1}(t),\tau^{-1}(u))$-separation.
  \end{nlist}
Observe that \ref{li:dtd1} implies that for all nodes $t\in V(T)$ and children $u\in N_+(t)$ we have
    $\gamma(u)\not\in\tau^{-1}(t)$. Furthermore, \ref{li:dtd2} implies
    that $\gamma(t)\in\tau^{-1}(t)$.

Recall that a $\kappa$-tangle $\CT$ is \emph{$k$-maximal}, for some $k\ge 0$, if
either $\ord(\CT)=k$ or $\ord(\CT)<k$ and $\CT$ is an (inclusionwise)
maximal tangle. We denote the family of all $k$-maximal
$\kappa$-tangles by $\KT^{\le k}_{\max}$. Observe that for $k=\bw(\kappa)$ the $k$-maximal
$\kappa$-tangles are precisely the maximal
$\kappa$-tangles.

\begin{theo}[\cite{grosch15}]\label{theo:dcandec}
  Let $\ell\ge0$. Then there is a polynomial time algorithm that,
  given oracle access to a connectivity function $\kappa$ and a $\kappa$-tangle
  $\CT_{\textup{root}}\in \KT^{\le\ell}_{\max}$ (via a membership oracle or its index in a
  comprehensive tangle data structure for $\kappa$), computes a
  canonical directed tree decomposition $(T,\gamma,\tau)$ for the set
  $\KT^{\le\ell}_{\max}$ such
  that $\tau^{-1}(r)=\CT_{\textup{root}}$ for the root $r$ of $T$.
\end{theo}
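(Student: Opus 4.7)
The plan is to use the comprehensive tangle data structure from Theorem~\ref{theo:ds} to enumerate $\KT^{\le\ell}_{\max}$, attach to every non-root tangle a canonical cone given by a leftmost minimum separation, and then read off the tree structure from the laminar nesting of these cones. Concretely, for every $\CT \in \KT^{\le\ell}_{\max}$ with $\CT \neq \CT_{\textup{root}}$, I would call $\sep_{\CD}$ to compute $Z(\CT)$, the leftmost minimum $(\CT, \CT_{\textup{root}})$-separation; the output cone at $\tau(\CT)$ will be $\gamma(\tau(\CT)) := Z(\CT)$, with $\gamma(r) := A$ at the root $r$ corresponding to $\CT_{\textup{root}}$.

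The first substantial step is to show that the family $\CF := \{Z(\CT) \mid \CT \in \KT^{\le\ell}_{\max},\ \CT \neq \CT_{\textup{root}}\}$ is laminar. Suppose for contradiction that $Z_1 := Z(\CT_1)$ and $Z_2 := Z(\CT_2)$ cross. Applying Lemma~\ref{lem:tanglesep3}(1) first to $Z_1$ with $Y := Z_2$ and then symmetrically, one obtains replacements such as $Z_1 \cap Z_2$ or $Z_1 \cap \bar Z_2$ of order at most $\kappa(Z_2)$. Tangle axiom \ref{li:t2} forces some of these quadrants to lie in $\CT_1$, respectively $\CT_2$, while the leftmost choice of $Z_1$ and $Z_2$, combined with submodularity of $\kappa$, forces each such quadrant to coincide with or be contained in one of $Z_1, Z_2$, contradicting the crossing assumption.

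Given laminarity, I define $T$ by setting the parent of $\tau(\CT)$ to be $\tau(\CT')$, where $Z(\CT')$ is the inclusion-minimal strict superset of $Z(\CT)$ in $\CF$, or the root $r$ if no such $\CT'$ exists; distinct $\ell$-maximal tangles yield distinct $Z$-values (since a common $Z$ would be a leftmost minimum separation witnessing that both tangles lie on the same side of it, contradicting incomparability or $\ell$-maximality), so $\tau$ remains bijective. Axiom \ref{li:dtd2} is then immediate from the choice of $\gamma(\tau(\CT))$. For axiom \ref{li:dtd1}, fix $u \not\dagle t$, set $\CT_u := \tau^{-1}(u)$ and $\CT_t := \tau^{-1}(t)$, pick any minimum $(\CT_u, \CT_t)$-separation $Y$, and uncross it with $\gamma(u) = Z(\CT_u)$ using Lemma~\ref{lem:tanglesep3}(1): either $Y \cup \gamma(u)$, or the analogous union after replacing $Y$ by its complement, is still a minimum $(\CT_u, \CT_t)$-separation and contains $\gamma(u)$ as required.

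Canonicity follows because $Z(\CT)$ is canonically determined by $\CT$ together with $\CT_{\textup{root}}$, and $T$ is a combinatorial invariant of $\CF$; the total running time is polynomial because $\size_{\CD}(\ell)$ is polynomially bounded by Theorem~\ref{theo:ds} and each separation and nesting check can be carried out in polynomial time. The main obstacle I anticipate is the uncrossing argument for axiom \ref{li:dtd1}: one must simultaneously preserve the minimum order of the $(\CT_u, \CT_t)$-separation and the containment of $\gamma(u)$, and the correct choice of which side to take in the uncrossing depends carefully on which of the four quadrants $Y \cap \gamma(u),\ Y \cap \overline{\gamma(u)},\ \bar Y \cap \gamma(u),\ \bar Y \cap \overline{\gamma(u)}$ lie in which of the tangles $\CT_u$, $\CT_t$, $\CT_{\textup{root}}$.
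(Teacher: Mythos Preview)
This theorem is not proved in the paper; it is quoted from~\cite{grosch15}, so there is no in-paper argument to compare against. That said, your proposal has a genuine gap.

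Your claim that distinct $\ell$-maximal tangles yield distinct $Z(\CT)$-values is false, and the justification you give is incorrect. You write that a common $Z$ would ``witness that both tangles lie on the same side of it, contradicting incomparability or $\ell$-maximality.'' But $Z(\CT_1)=Z(\CT_2)=Z$ only says $Z\in\CT_1\cap\CT_2$ and $\bar Z\in\CT_{\textup{root}}$; this is perfectly compatible with $\CT_1\bot\CT_2$, which merely requires $\CT_1$ and $\CT_2$ to disagree on \emph{some} set, not on $Z$. Concretely: take a path-like connectivity function with $\CT_{\textup{root}}$ at one end and $\CT_1,\CT_2$ both living beyond the same low-order bottleneck, distinguished from one another only by a higher-order separation further along. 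The leftmost minimum $(\CT_i,\CT_{\textup{root}})$-separation is that same bottleneck for both $i=1,2$. With $Z(\CT_1)=Z(\CT_2)$ your tree would have two siblings with identical nonempty cones, violating the definition of a directed tree decomposition; and even ignoring that, \ref{li:dtd1} fails for the pair $\tau(\CT_1),\tau(\CT_2)$: any $(\CT_1,\CT_2)$-separation $Y\supseteq Z$ has $\kappa(Y)<\ord(\CT_2)$ and $Y\supseteq Z\in\CT_2$, hence $Y\in\CT_2$, contradicting $\bar Y\in\CT_2$.

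The fix is not a local patch: the construction in~\cite{grosch15} does \emph{not} always separate against $\CT_{\textup{root}}$. Axiom~\ref{li:dtd2} only asks that $\gamma(t)$ be a leftmost minimum $(\tau^{-1}(t),\tau^{-1}(u))$-separation for \emph{some} $u$ with $t\not\dagle u$, and the actual proof exploits this freedom, building the tree in stages and choosing for each tangle an appropriate nearby tangle to separate against so that the resulting cones are both nested and pairwise distinct. Your laminarity and \ref{li:dtd1} sketches, while invoking the right tools (submodularity, Lemma~\ref{lem:tanglesep3}, the tangle axioms), are tied to the ``always separate to the root'' choice and would have to be redone once the cones are chosen this way.
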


Here canonical means that if $\kappa':2^{A'}\to\NN$ is another
connectivity function and $\CT'_{\textup{root}}$ an $\ell$-maximal
$\kappa'$-tangle, and $(T',\gamma',\tau')$ is the decomposition
computed by our algorithm on input $(\kappa',\CT'_{\textup{root}})$,
then for every isomorphism $f$ from
$(\kappa,\CT_{\textup{root}})$ to $(\kappa',\CT'_{\textup{root}})$,
that is, bijective mapping $f:A\to A'$ with $\kappa(X)=\kappa'(f(X))$
and $X\in \CT_{\textup{root}}\iff f(X)\in \CT'_{\textup{root}}$ for
all $X\subseteq A$, there is an isomorphism $g$ from $T$ to $T'$ such
that that $f(\gamma(t))=\gamma'(g(t))$ for all $t\in V(T)$ and
$X\in\tau^{-1}(t)\iff f(X)\in(\tau')^{-1}(g(t))$ for all $X\subseteq
A,t\in V(T)$.

\section{Partitioning with Respect to a Maximal Tangle}
\label{sec:1tan}

Let $G$ be a graph of rank width at most $k$.
In this and the following two sections, we describe our construction of
a canonical treelike decomposition of $\rho_G$ of width at most $a(k)$ (for some
function $a$). Since large parts of the construction go through for
arbitrary connectivity functions, we find it convenient to let
$\kappa:=\rho_G$ and $A:=V(G)$.

We start from a directed tree decomposition $(T,\gamma,\tau)$ for
$\KT^{\le k}_{\max}$. The idea is to decompose the ``pieces'' of this
decomposition, corresponding to the nodes of $T$, further into
decompositions of bounded width and then merge all these
bounded-width decompositions into one big decomposition. The largest
part of the construction, resulting in Lemma~\ref{lem:nodedec}, deals
with a single node of $T$.

So we fix a node $t\in V(T)$. We let $\CT_0:=\tau^{-1}(t)$ be the
maximal tangle associated with $t$ and $k_0:=\ord(\CT_0)$. Let
$B:=\beta(t)$ and
$C_0:=\bar{\gamma(t)}$. Assuming that the children of $t$ in $T$ are
$u_1,\ldots,u_m$, we let $C_i:=\gamma(u_i)$ for $i\in[m]$ Observe that
the sets $B,C_0,\ldots,C_m$ form a partition of $A$ (the set $C_0$ may
be empty). Now we contract the sets $C_0,\ldots,C_m$. We shall
construct a bounded width decomposition of the resulting connectivity
function $\kct$ on the contracted set $\Act$.

We construct the decomposition recursively. At any time, we have a set
$X\subseteq\Act$ that still needs to be decomposed, and we will show how to
partition $X$ in a canonical way, at any time keeping control of the
width of the resulting decomposition. 

We initialise the construction by taking a triple cover $Q$ of the tangle $\CT_0$ of
    size $|Q|\le\theta(3k_0-2)$. We let $Q^{\vee}$ be the
    ``projection'' of $Q$ into $\Act$ (precise definitions follow). The set $Q^\vee\cup\{c_0\}$ will be the
bag at the    root of our decomposition, and the first set $X$ to be
    decomposed further is $\Act\setminus (Q^\vee\cup\{c_0\})$.

Now suppose we are in some decomposition step where we need to
decompose $X\subseteq \Act\setminus (Q^\vee\cup\{c_0\})$. 
Depending on $\kct(X)$, we do this in two completely different
ways. In this section (Section~\ref{sec:1tan}), we consider the case $\kct(X)<(3k+2)\cdot k$,
and in Section~\ref{sec:nwl} we shall consider the case $\kct(X)\ge (3k+2)\cdot k$.

\subsection{Assumptions}
Before we start the technical construction, we step back and collect
the assumptions we make in a slightly more abstract setting, which we
fix for the rest of the section.

\begin{asss}\label{ass:1}
  \begin{enumerate}
  \item $\kappa:2^A\to\NN$ is a connectivity function on a set $A$.
  \item $k:=\bw(\kappa)\ge 1$.
  \item $C_0,\ldots,C_m\subseteq A$ are mutually disjoint sets with
    $\kappa(C_i)<k$, and $C_1,\ldots,C_m$ are nonempty.
  \item For all $i\in[m]$ there are tangles $\CT_i,\CT_i'$ such that
    $C_i$ is a leftmost minimum
    $(\CT_i,\CT_i')$-separation. Furthermore, if $C_0\neq\emptyset$
    then there are tangles $\CT_0,\CT_0'$ such that such that
    $\bar C_0$ is a leftmost minimum
    $(\CT_0,\CT_0')$-separation.
   \item
    $B:=A\setminus(C_0\cup\ldots\cup C_m)$.
  \item $\Act:=A\contract_{C_0,\ldots,C_m}$, and $c_i$ is 
    the element of $\Act$ corresponding to the contracted set $C_i$,
    for $i=0,\ldots,m$.
 \item $\kct:=\kappa\contract_{C_0,\ldots,C_m}$.
  \end{enumerate}
\end{asss}

The assumption $\bw(\kappa)\ge 1$ is without loss of generality,
because if $\bw(\kappa)=0$ then $\kappa(\{x\})=0$ for all $x\in A$ and
thus $\kappa(X)=0$ for all $X\subseteq A$.

\begin{asss}\label{ass:2}
  \begin{enumerate}
\item
    There is a maximal $\kappa$-tangle $\CT_0$ such that $\bar
    C_0\in\CT_0$ and $C_i\not\in\CT_0$ for $i=1,\ldots,m$.\footnote{If
      $C_0\neq\emptyset$, then the tangles $\CT_0$ in
Assumption~\ref{ass:1}(4) and Assumption~\ref{ass:2}(1) are the same,
but this is irrelevant.}
  \item For every $\kappa$-tangle $\CT\bot\CT_0$, there is an
    $i\in\{0,\ldots,m\}$ and a set $Y\subseteq C_i$ such that $Y\in\CT$.
  \item $k_0:=\ord(\CT_0)$. (Note that $k_0\le k$.) 
  \end{enumerate}
\end{asss}

Observe that \[
\CT_0\contract:=\CT_0\contract_{C_0,\ldots,C_m}.
\]
is a $\kct$-tangle, because
$C_i\not\in\CT_0$ for $i=0,\ldots,m$ by Assumption~\ref{ass:2}(1).

\begin{asss}\label{ass:3}
  \begin{enumerate}
  \item $Q\subseteq A$ is a triple cover of the tangle $\CT_0$ of
    size $|Q|\le\theta(3k_0-2)$.
  \item $Q^{\vee}:=(B\cap Q)\cup\{c_i\mid 0\le i\le m, C_i\cap Q\neq\emptyset\}$.
  \end{enumerate}
\end{asss}
Observe that $Q^{\vee}$ is a triple cover for the 
$\kct$-tangle $\CT_0\contract$.

All algorithms we devise in this section will get $\kappa$ and
$C_0,\ldots,C_m$ and $Q$ as input, and possibly other objects. We
assume that we have constructed a comprehensive tangle data structure
for $\kappa$ and have determined the index of $\CT_0$ in this data
structure. Thus our algorithms also have access to $\CT_0$.

Whenever we refer to a construction in
this section as being \emph{canonical}, what we mean is that it is
canonical given $\kappa$ and $C_0,\ldots,C_m$ and $Q$. Note that $\CT_0$ is canonical
given $\kappa$ and $C_0,\ldots,C_m$, because $\CT_0$ is the unique maximal
$\kappa$-tangle with $\bar
    C_0\in\CT_0$ and $C_i\not\in\CT_0$ for $i=1,\ldots,m$. Thus we may depend on $\CT_0$ in canonical
constructions.

Our goal is to prove the following lemma.

\begin{lem}\label{lem:1tan}
  For every $k_1\in\NN$ there are $a_1=a_1({k},k_1)$, $b_1=b_1({k},k_1)$,
  and $f_1=f_1({k},k_1)>0$ such that for every
  $X\subseteq \Act\setminus (Q^{\vee}\cup\{c_0\})$ of order $\kct(X)=k_1$ and size
  $|X|\ge 2$, one of the
  following two conditions is satisfied.
  \begin{eroman}
  \item
    There is a canonical partition of $X$ into $b\le b_1$
    sets $X_1,\ldots,X_b$ such that $\kct(X_i)\le a_1$ and $|X_i|\le(1-1/f_1)|X|$ for
    all $i\in[b]$.
  \item
    There is a canonical partition of $X$ into sets $X_0,X_1,\ldots,X_n$ such
    that
    \begin{enumerate}[label=\alph*.]
    \item $\kct(X_0)\le k_1$,
    \item $\kct\left(\bigcup_{i\in I}X_i\right)\le 2k_1$ for every set $I\subseteq[n]$,
    \item $|X_i|\le(1-1/f_1)|X|$ for every $i\in[n]$.
    \end{enumerate}
  \end{eroman}
  Furthermore, given $X$ (in addition to $\kct$ and $C_0,\ldots,C_m$ and $Q$), the partition
  in (i) or (ii) can be computed in polynomial time (for fixed ${k},k_1$).
\end{lem}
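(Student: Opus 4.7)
The plan is to use the triple cover $P := Q^{\vee} \cup \{c_0\}$, whose size is bounded by $\theta(3k_0-2)+1$ and thus by a constant depending only on $k$, as a set of canonical anchor points in $\Act$, and to decompose $X$ canonically according to how minimum separations among these anchors slice through $X$.

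I would first observe that whenever $\kct(X) < k_0$, the triple-cover property of $Q^{\vee}$ applied with $Y_1 = Y_2 = Y_3 = X$, combined with $Q^\vee \cap X = \emptyset$, rules out $X \in \CT_0\contract$; tangle axiom \ref{li:t1} then forces $\bar X \in \CT_0\contract$. The same argument shows that every $W \subseteq X$ with $\kct(W) < k_0$ satisfies $\bar W \in \CT_0\contract$. This structural fact places $X$ on the ``outside'' of $\CT_0\contract$, and it is this ceiling that will later make the union bound in case (ii) attainable. In the complementary subcase $\kct(X) \geq k_0$ one does not invoke tangle membership of $X$ itself but relies purely on canonical $\kct$-separations among the anchors.

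For the construction proper, for each $p \in P$ I would compute the leftmost minimum $(\{p\}, P\setminus\{p\})$-separation $Z_p \subseteq \Act$, each of order uniformly bounded in $k$ (a separation such as $\bar X \cup \{p\}$ or similar serves as a witness of bounded order). The canonical \emph{signature} of $x\in X$ is $\sigma(x) := \{p\in P : x \in Z_p\}$, and elements of equal signature form canonical equivalence classes, yielding a canonical partition $X = C_1 \sqcup \ldots \sqcup C_N$ with $N \leq 2^{|P|}$. Each class $C_i$ is the intersection of $X$ with a Boolean combination of the $Z_p$'s, so its $\kct$-order is bounded by some $g(k,k_1)$ via iterated submodularity. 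Setting $f_1 := 2$, I then distinguish two subcases. If every $|C_i| \leq |X|/2$, I output case (i) with $b_1 := N$ and $a_1 := g(k,k_1)$. Otherwise, a unique ``dominating'' class $C^*$ satisfies $|C^*| > |X|/2$; I set $X_0 := C^*$ and let $X_1,\ldots,X_n$ be the remaining classes, which automatically satisfy $|X_i| < |X|/2 = (1-1/f_1)|X|$.

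The main obstacle will be verifying the union bound $\kct\bigl(\bigcup_{i\in I} X_i\bigr) \leq 2k_1$ in case (ii). Writing $Y_I := \bigcup_{i\in I}X_i$ and $Z := X_0 \cup \bigcup_{j\notin I} X_j$, so that $\overline{Y_I} = \bar X \cup Z$ with $\bar X \cap Z = \emptyset$, submodularity gives $\kct(Y_I) = \kct(\bar X \cup Z) \leq \kct(\bar X) + \kct(Z) = k_1 + \kct(Z)$. Hence the task reduces to proving $\kct(X_0 \cup \bigcup_{j\notin I}X_j) \leq k_1$ for every $I \subseteq [n]$, i.e., that every canonical ``extension'' of $X_0$ by a subfamily of classes inside $X$ has $\kct$-order at most $k_1$. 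This requires uncrossing the anchor separations $Z_p$ against the candidate extension via Lemma~\ref{lem:tanglesep3}; the crucial leverage in the subcase $\bar X \in \CT_0\contract$ is precisely this tangle ``ceiling'', which prevents the order of an extension from growing with $|P|$ and brings it back to $k_1$. Handling the complementary subcase $\kct(X) \geq k_0$ uniformly, and simultaneously verifying canonicity and polynomial-time computability (using the comprehensive tangle data structure of Theorem~\ref{theo:ds} to access $\CT_0$ and to compute all $Z_p$), is the delicate part of the argument.
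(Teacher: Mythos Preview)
Your construction has a fundamental gap: the anchor separations $Z_p$ (leftmost minimum $(\{p\},P\setminus\{p\})$-separations) are defined purely in terms of the fixed set $P=Q^{\vee}\cup\{c_0\}$ and carry no information about $X$. There is no reason at all why they should cut through $X$ in a balanced way, or indeed cut through $X$ at all. For instance, $Z_p=\{p\}$ may already be a minimum $(\{p\},P\setminus\{p\})$-separation, in which case $Z_p\cap X=\emptyset$ for every $p\in P$, every $x\in X$ has signature $\sigma(x)=\emptyset$, and your partition collapses to a single class $C_1=X$. You then fall into your ``dominating class'' branch with $X_0=X$ and $n=0$, which formally satisfies (ii) but makes no progress whatsoever. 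More generally, nothing in your setup forces any signature class to have size $\le(1-1/f_1)|X|$; the entire balance mechanism is missing.

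Secondary to this, the bounds you need in case~(ii) are not established. You require $\kct(X_0)\le k_1$, but $X_0$ is merely the intersection of $X$ with a Boolean combination of the $Z_p$, giving only $\kct(X_0)\le k_1+|P|\cdot k$ by iterated submodularity, not $k_1$. Your reduction of the union bound to $\kct(X_0\cup\bigcup_{j\notin I}X_j)\le k_1$ is correct, but the justification (``uncrossing via Lemma~\ref{lem:tanglesep3}'' and a ``tangle ceiling'') is a gesture, not an argument; Lemma~\ref{lem:tanglesep3} concerns rightmost minimum tangle separations, which your $Z_p$ are not.

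The paper's proof takes a completely different route: it first shows (Lemmas~\ref{lem:1tan1}--\ref{lem:1tan2}), via a weight function on $A$ concentrated on $X$ and a tangle argument against $\CT_0$, that a \emph{balanced $X$-separation} always exists---a set $Z\subseteq X$ with $\kct(Z)\le k_1$ and $|Z|$ in roughly $[\frac13|X|,\frac23|X|]$. This is where the triple cover $Q$ is actually used. It then defines a canonical family $\CZ$ of ``good'' separations of $X$ (minimum order $\ell$ among good separations, then maximum size), proves a Ramsey-type dichotomy (Lemma~\ref{lem:z0}): either $|\CZ|$ is bounded, or the complements $\CY=\{\bar Z:Z\in\CZ\}$ are mutually $X$-disjoint. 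The first alternative yields case~(i) by taking atoms of the Boolean algebra generated by $\CZ$; the second yields case~(ii), with the crucial union bound coming from a posimodularity/submodularity induction (Lemma~\ref{lem:1tan5}) that exploits the mutual $X$-disjointness. The point is that the separations are tailored to $X$ from the start.
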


The lemma will be proved in Section~\ref{sec:1tan3}.

For the rest of Section~\ref{sec:1tan}, we fix a set $X\subseteq\Act\setminus (Q^{\vee}\cup\{c_0\})$.
Let 
\begin{align*}
k_1&:=\kct(X),\\
k_2&:=k_0+k_1.
\end{align*}
We assume that
\begin{equation}\label{eq:xsize}
  |X|\ge 6k_2.
\end{equation}
Note that this implies $|X|\ge 6$ by our assumption that $k_0\ge 1$.

\subsection{Existence of a Balanced Separations}
\label{sec:1tan1}

We call a set $Z\subseteq X$ a \emph{balanced $X$-separation} if
$\kct(Z)\le k_1= \kct(X)$ and 
\[
\frac{1}{3}|X|-k_2+\kct(Z)\le |Z|\le \frac{2}{3}|X|+k_2-\kct(Z)
\]
Note that this notion does not only depend on $X$, but through $k_2$
also on $k_0$, the order of the tangle $\CT_0$.

\begin{lem}\label{lem:1tan1}
  Suppose that $X\subseteq\bar Y$ for some $Y$ in
  $\CT_0\contract$. Then there is a balanced $X$-separation.
\end{lem}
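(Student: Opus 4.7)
The plan is first to observe that $X \notin \CT_0\contract$, and then to produce the balanced separation by a submodularity argument steered by the tangle orientation. That $X$ is not in the tangle follows immediately from axiom \ref{li:t2} applied to the triple $(X,Y,Y)$: were $X\in\CT_0\contract$, the triple intersection $X\cap Y\cap Y=X\cap Y$ would have to be nonempty, contradicting $X\subseteq\bar Y$. More strongly, the same argument applied to any $Z\subseteq X$ with $\kct(Z)<k_0$ shows that $Z\notin\CT_0\contract$, so by axiom \ref{li:t1} we have $\bar Z\in\CT_0\contract$. Thus every low-order subset of $X$ carries a definite tangle orientation that will drive the construction.

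To produce the balanced separation I examine the family $\CM:=\{Z\subseteq X:\kct(Z)\le k_1\}$, which contains both $\emptyset$ and $X$. Using the hypothesis $|X|\ge 6k_2$, one checks that $\emptyset$ falls short of the lower balance bound ($0<|X|/3-k_2$) and $X$ exceeds the upper one ($|X|>2|X|/3+k_2-k_1=2|X|/3+k_0$, since $|X|>3k_0$); so $\CM$ has witnesses on both sides of the balanced window but none verified to lie inside it. I pick a $Z^{*}\in\CM$ maximising $|Z^{*}|$ subject to the upper bound $|Z^{*}|\le 2|X|/3+k_2-\kct(Z^{*})$; the empty set is feasible, so $Z^{*}$ exists and $Z^{*}\neq X$. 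If $Z^{*}$ also satisfies the lower bound we are done, so I would assume $|Z^{*}|<|X|/3-k_2+\kct(Z^{*})$ and derive a contradiction by exhibiting a strictly larger $Z'\in\CM$ still obeying the upper bound.

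The contradiction is produced as follows: pick any $x\in X\setminus Z^{*}$ and take a minimum $(\CT_0\contract,\{x\})$-separation $W\in\CT_0\contract$ (which exists, since $\{x\}\notin\CT_0\contract$ by \ref{li:t3}). Then $\bar W$ is a low-order set containing $x$, and $\bar W\cap X$ is a candidate "pocket" around $x$ of order $<k_0$. Combining $Z^{*}$ with $\bar W\cap X$ by submodularity and selecting the side prescribed by the tangle orientation from Step 1 yields the desired $Z'\in\CM$ with $|Z'|>|Z^{*}|$ still inside the upper bound. The slack $k_2=k_0+k_1$ in the balance definition is calibrated precisely for this combination step: the $k_1$ summand absorbs the order loss from combining two sets of order $\le k_1$ submodularly, and the $k_0$ summand absorbs the size loss from intersecting with a tangle-oriented set of order $<k_0$.

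The main obstacle is choreographing this combination so that $|Z'|$ strictly exceeds $|Z^{*}|$ while $\kct(Z')$ stays bounded by $k_1$: a naive submodular crossing can double the order to $2k_1$, and a naive tangle-oriented intersection can erase up to $k_0$ elements, and we need to execute both in a single step without overshooting either budget. The fact that the two budgets add to exactly $k_2$ is what makes the argument work, and a genuine failure to make progress would promote the obstruction to a $\kappa$-tangle strictly extending $\CT_0$ inside $X$, contradicting the maximality assumed in Assumption~\ref{ass:2}(1) together with the duality theorem~\ref{theo:duality}.
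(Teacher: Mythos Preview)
Your outline has the right flavour, but the central ``combining'' step is not actually carried out, and I do not see how to carry it out along the lines you sketch. Concretely: after choosing $x\in X\setminus Z^{*}$ and a minimum $(\CT_0\contract,\{x\})$-separation $W\in\CT_0\contract$, you need a set $Z'\subseteq X$ with $\kct(Z')\le k_1$, $|Z'|>|Z^{*}|$, and $Z'$ still below the upper threshold. The natural candidates are $Z^{*}\cup(\bar W\cap X)$ or $(Z^{*}\cup\bar W)\cap X$, but submodularity only gives $\kct(Z^{*}\cup\bar W)+\kct(Z^{*}\cap\bar W)\le k_1+k_0-1$, which does not force either side below $k_1$; and then intersecting with $X$ costs another application of submodularity against $\kct(X)=k_1$, doubling the bound. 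The ``tangle orientation from Step~1'' only tells you that low-order subsets of $X$ have their complements in $\CT_0\contract$; it does not by itself bound $\kct$ of the union you want. Your calibration remark about $k_2=k_0+k_1$ absorbing both losses is suggestive but not an argument.

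There is a deeper structural issue with the fallback. You work entirely in $\kct$ on $\Act$ and say that failure would ``promote the obstruction to a $\kappa$-tangle strictly extending $\CT_0$.'' But a tangle constructed from sets in $\Act$ is a $\kct$-tangle, not a $\kappa$-tangle, and Assumption~\ref{ass:2} (maximality of $\CT_0$, and the property that every $\kappa$-tangle incomparable with $\CT_0$ lives inside some $C_i$) is stated for $\kappa$-tangles. A $\kct$-tangle does not automatically lift to a $\kappa$-tangle, precisely because it only orients separations that respect the contraction. The paper confronts this head-on: it defines a fractional weight function $\phi$ on the \emph{uncontracted} ground set $A$, builds the obstruction tangle directly in $\kappa$, invokes Assumption~\ref{ass:2}(2), and then \emph{uncrosses} the resulting balanced set $Y\subseteq A$ with each $C_i$ using Lemma~\ref{lem:tanglesep3} and the leftmost/rightmost minimality from Assumption~\ref{ass:1}(4), so that $Y$ descends to $\Act$. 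Finally it sets $Z:=X\cap\bar Y$ and uses a separate submodularity-plus-tangle argument to get $\kct(Z)\le k_1$. Your plan engages with none of this machinery, and something playing its role is genuinely needed.
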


\begin{proof}
  By the assumption of the lemma, there exists an $(X,\CT_0\contract)$
  separation $Y$.  Let $X'$ be the leftmost minimum
  $(X,\CT_0\contract)$-separation. Then $X\subseteq X'\subseteq\Act$
  and $\kct(X')\le\kct(X)=k_1$ and $\bar X'\in\CT_0\contract$ and thus
  $\kct(X')<\ord(\CT_0\contract)=k_0$.

  \begin{claim}
    There is a $Y\in\CT_0\contract$ such that
    $\kct(Y)\le\kct(X')$ and 
    \[
    \frac{1}{3}|X|-k_0+\kct(Y)\le|X\cap Y|\le
    \frac{2}{3}|X|+k_0-\kct(Y).
    \]
    
    \proof
    We define a weight function $\phi:A\to \RR$ as follows:
    \[
    \phi(x):=
    \begin{cases}
      \frac{1}{|C_i|}&\text{if $x\in C_i$ for some $i\in[m]$ such
        that $c_i\in X$},\\
      1&\text{if }x\in X\cap B,\\
      0&\text{otherwise}.
    \end{cases}
    \]
    For a set $Y\subseteq A$ we let $\phi(Y):=\sum_{y\in
      Y}\phi(Y)$. Note that for every $Y\subseteq A$ we have
    $\phi(Y)+\phi(\bar Y)=|X|$. Furthermore, for all $Y^\vee\subseteq\Act$ we have
    $\phi(Y^\vee\expand)=|X\cap Y^\vee|$.
    
    Suppose first that there is no $Y\subseteq A$ such that
    $\kappa(Y)\le\kct(X')$ and
    $\frac{1}{3}|X|\le\phi(Y)\le\frac{2}{3}|X|$. Let
    \[
    \CT:=\{ Y\subseteq A\mid \kappa(Y)\le\kct(X')\text{ and
    }\phi(Y)>\textstyle\frac{2}{3}|X|\}.
    \]
    Then $\CT$ is a $\kappa$-tangle of order $\kct(X')+1\le k_0$. To
    verify \ref{li:t3}, let $x\in A$. Then $\phi(\{x\})\le
    1\le(2/3)|X|$, because $|X|\ge 2$. 
    
    Observe that $X'\expand\in\CT$, because $\phi(X'\expand)=|X\cap X'|=|X|$. Thus $\CT\bot\CT_0$.
    
    For every $i\in\{0,\ldots,m\}$ and every $Y\subseteq C_i$ we
    have $\phi(Y)\le\phi(C_i)\le 1$ and thus $Y\not\in\CT$. This
    contradicts Assumption~\ref{ass:2}(2).
    
    We choose $\ell$ minimum such that there exists a $Y\subseteq A$ with
    $\kappa(Y)=\ell$ and
    \begin{equation}
      \label{eq:3}
      \frac{1}{3}|X|-k_0+\ell\le\phi(Y)\le
    \frac{2}{3}|X|+k_0-\ell.
    \end{equation}
    Then $\ell\le \kct(X')\le k_0$, because we have just proved that
    there is a set $Y$ with $\kappa(Y)\le\kct(X')$ and 
    \[
      \frac{1}{3}|X|-k_0+\kappa(Y)\le \frac{1}{3}|X|\le\phi(Y) \le
    \frac{2}{3}|X|\le
    \frac{2}{3}|X|+k_0-\kappa(Y).
    \]
    Without loss of generality we may assume that either
    $Y\cap C_0=\emptyset$ or $C_0\subseteq Y$. To see this, suppose
    that neither $Y\cap C_0=\emptyset$ nor $C_0\subseteq Y$, or
    equivalently, neither $Y\subseteq\bar C_0$ nor
    $\bar Y\subseteq \bar C_0$. By Assumption~\ref{ass:1}(4), $\bar C_0$
    is a minimum $(\CT_0,\CT_0')$-separation. Thus by
    Lemma~\ref{lem:tanglesep3}(1) (applied to $X=\bar C_0$), either
    $\kappa(Y\cap\bar C_0)\le\kappa(Y)$ or
    $\kappa(\bar Y\cap\bar C_0)\le\kappa(Y)$. As $c_0\not\in X$, we
    have $\phi(Y\cap\bar C_0)=\phi(Y\cup C_0)=\phi(Y)$. Thus
    if $\ell':=\kappa(Y\cap\bar C_0)\le\kappa(Y)$, then
    $Y':=Y\cap\bar C_0$ satisfies \eqref{eq:3} with $Y',\ell'$
    instead of $Y,\ell$, and if
    $\ell':=\kappa(\bar Y\cap\bar C_0)\le\kappa(Y)$, then
    $Y':=\bar Y\cap\bar C_0$ satisfies \eqref{eq:3} with $Y',\ell'$
    instead of $Y,\ell$. In both cases, we have
    $Y'\cap C_0=\emptyset$. This justifies the assumption that either
    $Y\cap C_0=\emptyset$ or $C_0\subseteq Y$.

    Suppose for contradiction that there is an $i\in[m]$ such that
    neither $Y\cap C_i=\emptyset$ nor $C_i\subseteq Y$, or
    equivalently, neither $Y\subseteq\bar C_i$ nor
    $\bar Y\subseteq \bar C_i$. We argue
    similarly as for $C_0$, but with Lemma~\ref{lem:tanglesep3}(2). By Assumption~\ref{ass:2}, $\bar C_i$
    is a rightmost minimum $(\CT_i',\CT_i)$-separation. Thus by
    Lemma~\ref{lem:tanglesep3}(2) (applied to $X=\bar C_i$), either
    $\kappa(Y\cap\bar C_i)<\kappa(Y)$ or
    $\kappa(\bar Y\cap\bar C_i)<\kappa(Y)$. Without loss of
    generality, we assume that $\kappa(Y\cap \bar C_i)<\kappa(Y)$. The
    case $\kappa(\bar Y\cap\bar C_i)<\kappa(Y)$ is symmetric, because \eqref{eq:3}
    is symmetric in $Y,\bar Y$.
 Let $Y':=Y\cap \bar C_i=Y\setminus C_i$ and
    $\ell':=\kappa(Y')<\ell$. We have
    \[
    \phi(Y)\ge\phi(Y')=\phi(Y)-\phi(C_i)\ge\phi(Y)-1.
    \]
    Thus by \eqref{eq:3},
   \begin{align*}
   \frac{1}{3}|X|-k_0+\ell'&\le
   \frac{1}{3}|X|-k_0+\ell-1\le
   \phi(Y)-1\\
                                &\le\phi(Y')
\le\phi(Y)\le
   \frac{2}{3}|X|+\kct(X')-\ell\\
     &\le
   \frac{2}{3}|X|+\kct(X')-\ell'.
   \end{align*}
   Thus $Y',\ell'<\ell$ satisfy \eqref{eq:3}.
   This contradicts the minimality of $\ell$.
    
   We have proved that for all $i\in[m]$, either $Y\cap C_i=\emptyset$ or
    $C_i\subseteq Y$. This implies that there is a $Y^\vee\subseteq\Act$
      such that $Y=Y^\vee\expand$. As $\phi(Y)=|X\cap Y^\vee|$ and
      $\kct(Y^\vee)=\kappa(Y)=\ell$, by \eqref{eq:3} we have
      \begin{equation}
        \label{eq:4}
      \frac{1}{3}|X|-k_0+\kct(Y^\vee)\le|X\cap Y^\vee|\le
    \frac{2}{3}|X|+k_0-\kct(Y^\vee).
      \end{equation}
      As $\kct(Y^\vee)=\ell\le\kct(X')<\ord(\CT_0\contract)$, either
      $Y^{\vee}\in\CT_0\contract$ or $\bar Y^{\vee}\in\CT_0\contract$.
      Since \eqref{eq:4} is symmetric in $Y^\vee$ and $\bar Y^\vee$,
      we may assume that $Y^\vee\in\CT_0\contract$.       \uend
    \end{claim}

    We choose $Y\in\CT_0\contract$ according to the claim.
    Let
    \[
    Z:=X\cap\bar Y.
    \]
    Suppose for contradiction that $\kct(Z)>\kct(X)$. Then by submodularity,
    $\kct(X\cup\bar Y)<\kct(Y)$. As 
    \[
    \bar X'\cap Y\cap(X\cup\bar
    Y)\subseteq \bar X'\cap Y\cap(X'\cup\bar
    Y)=\emptyset,
    \]
    we have $X\cup\bar
    Y\not\in\CT_0\contract$ and thus $\bar{X\cup \bar
      Y}\in\CT_0\contract$, which means that $X\cup \bar Y$ is an $(X,\CT_0\contract)$-separation. However, as
    $\kct(X\cup \bar Y)<\kct(Y)\le\kct(X')$, this contradicts  $X'$
    being a minimum $(X,\CT_0\contract)$-separation.

    Hence $\kct(Z)\le\kct(X)=k_1$, and we have
    \[
    \frac{1}{3}|X|-k_2+\kct(Z)\le \frac{1}{3}|X|-k_0\le
    \frac{1}{3}|X|-k_0+\kct(Y)\le|X\cap Y|=|X\setminus Z|=|X|-|Z|,
    \]
    which implies $|Z|\le \frac{2}{3}|X|+k_2-\kct(Z)$. Similarly,
    $|Z|\ge \frac{1}{3}|X|-k_2+\kct(Z)$.
\end{proof}

In the following lemma, we drop the assumption $X\subseteq\bar Y$ for
some $Y\in\CT_0\contract$.

\begin{lem}\label{lem:1tan2}
  There is a balanced $X$-separation.
\end{lem}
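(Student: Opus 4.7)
The plan is to split into two cases depending on whether the hypothesis of Lemma~\ref{lem:1tan1} is satisfied: if there exists $Y\in\CT_0\contract$ with $X\subseteq\bar Y$, the conclusion follows immediately from Lemma~\ref{lem:1tan1}.

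For the remaining case, suppose every $Y\in\CT_0\contract$ satisfies $Y\cap X\neq\emptyset$. The first observation is that $X\notin\CT_0\contract$: otherwise the triple-cover property of $Q^\vee$ applied with $Y_1=Y_2=Y_3=X$ would force $Q^\vee\cap X\neq\emptyset$, contradicting $X\subseteq\Act\setminus Q^\vee$. Combined with axiom \ref{li:t1}, this forces $\kct(X)\ge k_0$, since otherwise $\bar X\in\CT_0\contract$ would itself witness the hypothesis of Lemma~\ref{lem:1tan1}.

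Next, I would replay the weight-function construction from the proof of Lemma~\ref{lem:1tan1} using the same function $\phi$, but seeking $Y\subseteq A$ with $\kappa(Y)\le k_1$ (in place of $\le\kct(X')$) and $\phi(Y)$ in the balanced range. Assuming no such $Y$ exists, the set $\CT^\ast:=\{Y\subseteq A:\kappa(Y)\le k_1,\ \phi(Y)>\frac{2}{3}|X|\}$ is a $\kappa$-tangle of order $k_1+1$; this either exceeds $\bw(\kappa)=k$, contradicting Theorem~\ref{theo:duality}, or is incomparable with $\CT_0$ (using maximality of $\CT_0$ together with $k_1\ge k_0$), in which case Assumption~\ref{ass:2}(2) produces $Y\subseteq C_i$ with $Y\in\CT^\ast$, contradicting $\phi(Y)\le\phi(C_i)\le 1$. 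Once a balanced $Y$ is found, the contraction-compatibility adjustment via Lemma~\ref{lem:tanglesep3} proceeds exactly as in the proof of Lemma~\ref{lem:1tan1}; the $k_0$ slack in the balance range $[\frac{1}{3}|X|-k_2,\ \frac{2}{3}|X|+k_2]$ absorbs the mild perturbation of $\phi(Y)$ produced by this adjustment, yielding $Y=Y^\vee\expand$ with $\kct(Y^\vee)\le k_1$.

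The hard part will be verifying $\kct(Z)\le k_1$ for $Z:=X\cap Y^\vee$ (or $X\setminus Y^\vee$, after a size swap). Mirroring the final step of the proof of Lemma~\ref{lem:1tan1}, assuming $\kct(Z)>k_1$ yields $\kct(X\cup Y^\vee)<\kct(Y^\vee)$ by submodularity. In place of the minimum $(X,\CT_0\contract)$-separation $X'$ used there, I would exploit the triple-cover property: if both $Y^\vee$ and $X\cup Y^\vee$ lay in $\CT_0\contract$, then $Q^\vee\cap Y^\vee\cap(X\cup Y^\vee)\cap Y^\vee=Q^\vee\cap Y^\vee\cap X$ would be nonempty, contradicting $Q^\vee\cap X=\emptyset$; hence $X\cup Y^\vee\notin\CT_0\contract$, and then axiom \ref{li:t1} combined with the Case~2 hypothesis (no $(X,\CT_0\contract)$-separation) yields a contradiction. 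Making this chain of implications go through cleanly likely requires strengthening the weight-function step to produce $Y$ with $\kappa(Y)<k_0$ — this is feasible by the same tangle-construction argument, at the cost of further relaxing the balance range, which is precisely what the $k_2$ (rather than $k_1$) slack in the definition of ``balanced'' accommodates.
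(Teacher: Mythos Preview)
Your overall structure matches the paper's: split on whether Lemma~\ref{lem:1tan1} applies, and in the remaining case run the weight-function/tangle argument. However, the final step---verifying $\kct(Z)\le k_1$---has a genuine gap.

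Your triple-cover computation is wrong: with $Y_1=Y_2=Y^\vee$ and $Y_3=X\cup Y^\vee$, the intersection $Y^\vee\cap(X\cup Y^\vee)\cap Y^\vee$ equals $Y^\vee$, not $Y^\vee\cap X$. So all you conclude is $Q^\vee\cap Y^\vee\neq\emptyset$, which is no contradiction. More fundamentally, there is no way to use the cover to exclude a \emph{single} superset of $X$ from $\CT_0\contract$; your one-sided attack on $Z=X\cap Y^\vee$ alone cannot succeed along these lines, even after strengthening to $\kct(Y^\vee)<k_0$.

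The paper's argument differs in two ways. First, it aims directly for $\kct(Y^\vee)\le k_0$ (not $\le k_1$): the tangle is built at order $k_0+1$, and maximality of $\CT_0$ immediately gives incomparability, so Assumption~\ref{ass:2}(2) applies without your preliminary observation that $k_1\ge k_0$. Second---and this is the missing idea---one assumes \emph{both} $\kct(X\cap Y^\vee)>k_1$ and $\kct(X\cap\bar Y^\vee)>k_1$ simultaneously. Submodularity then gives $\kct(X\cup Y^\vee)<k_0$ and $\kct(X\cup\bar Y^\vee)<k_0$; by the Case~2 hypothesis both sets lie in $\CT_0\contract$, and now the cover yields $Q^\vee\cap(X\cup Y^\vee)\cap(X\cup\bar Y^\vee)=Q^\vee\cap X\neq\emptyset$, the desired contradiction. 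Since the balance condition is symmetric in $Y^\vee$ and $\bar Y^\vee$, whichever of $X\cap Y^\vee$, $X\cap\bar Y^\vee$ has order at most $k_1$ serves as $Z$.
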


\begin{proof}
  If $X\subseteq\bar Y$ for some $Y\in\CT_0\contract$, then the assertion follows
  from Lemma~\ref{lem:1tan1}. So suppose that $X\cap Y\neq\emptyset$ for
  all $Y\in\CT_0\contract$.

    \begin{claim}
      There is a $Y\subseteq\Act$ such that
      $\kct(Y)\le k_0$ and 
      \[
      \frac{1}{3}|X|-k_0+\kct(Y)\le|X\cap
      Y|\le \frac{1}{3}|X|-k_0+\kct(Y).
      \]

      \proof As in the proof of the previous lemma, we define a weight
      function $\phi:A\to \RR$ by
      \[
      \phi(x):=
      \begin{cases}
        \frac{1}{|C_i|}&\text{if $x\in C_i$ for some $i\in[m]$ such
          that $c_i\in X$},\\
        1&\text{if }x\in X\cap B,\\
        0&\text{otherwise}.
      \end{cases}
      \]
      For a set $Y\subseteq A$ we let $\phi(Y):=\sum_{y\in
        Y}\phi(Y)$. 
      
      Suppose first that there is no $Y\subseteq A$ such that
      $\kappa(Y)\le k_0$ and
      $\frac{1}{3}|X|\le\phi(Y)\le\frac{2}{3}|X|$. Let
      \[
      \CT:=\{ Y\subseteq A\mid \kappa(Y)\le k_0\text{ and
      }\phi(Y)>\textstyle\frac{2}{3}|X|\}.
      \]
      Then $\CT$ is a $\kappa$-tangle of order $k_0+1$. 
      We have $\CT\bot\CT_0$, because $\CT_0$ is maximal and
      $\ord(\CT)>\ord(\CT_0)$.
      For every $i\in\{0,\ldots,m\}$ and every $Y\subseteq C_i$ we
      have $\phi(Y)\le\phi(C_i)\le 1$ and thus $Y\not\in\CT$. This
      contradicts Assumption~\ref{ass:2}(2).

      We choose $\ell$ minimum such that there exists a $Y\subseteq A$ with
      $\kappa(Y)=\ell$ and
      \begin{equation}
        \label{eq:5}
        \frac{1}{3}|X|-k_0+\ell\le\phi(Y)\le
        \frac{2}{3}|X|+k_0-\ell.
       \end{equation}
       Arguing as in the proof of Claim~1 in the proof of
       Lemma~\ref{lem:1tan1}, we prove that $\ell\le k_0$ and that we
       may assume $C_i\cap Y=\emptyset$ or $C_i\subseteq Y$ for all
       $i\in\{0,\ldots,m\}$.

      Then there is a $Y^\vee\subseteq\Act$
      such that $Y=Y^\vee\expand$. As $\phi(Y)=|X\cap Y^\vee|$, the claim
      follows.
      \uend
    \end{claim}

   We choose $Y\subseteq\Act$ according to the claim.
   Suppose for contradiction that $\kct(X\cap Y)>\kct(X)$ and
  $\kct(X\cap \bar Y)>\kct(X)$. 
  Then by submodularity, $\kct(X\cup Y)<\kct(Y)\le k_0$ and
  $\kct(X\cup \bar Y)< k_0$. As $X$ has a nonempty
  intersection with every element of $\CT_0\contract$, we have $X\cup
  Y,X\cup\bar Y\in\CT_0\contract$. Thus $Q^{\vee}\cap (X\cup Y)\cap(X\cup\bar
  Y)\neq\emptyset$, because $Q^{\vee}$ is a triple cover of
  $\CT_0\contract$ (a double cover would be sufficient here). This is
  a contradiction.

  Thus either $\kct(X\cap Y)\le\kct(X)$ or
  $\kct(X\cap \bar Y)\le\kct(X)$. Without loss of generality we assume
  the former. Then it is easy to see that $Z:=X\cap Y$ is $X$-balanced.
\end{proof}

\subsection{A Canonical Family of Separations}
\label{sec:1tan2}

For $0\le\ell\le k_1$, let 
\[
p(\ell):=2^{-3^{k_2-\ell}}.
\]
Note that 
\begin{equation}
  \label{eq:6}
  \frac{1}{8}\ge p(k_1)\ge p(k_1-1)\ge \ldots\ge p(0)
\end{equation}
and
\begin{equation}
  \label{eq:7}
  p(\ell-1)=p(\ell)^3
\end{equation}
for all $\ell\ge 1$.

Let us call a set $Z\subseteq X$ of order $\ell:=\kct(Z)$ 
\emph{good} (or a \emph{good separation}) if
\begin{equation}\label{eq:8}
p(\ell)\cdot|X|\le|Z|<|X|.
\end{equation}
Recall that $\frac{1}{6}|X|\ge k_2$ by \eqref{eq:xsize}. Thus for $0\le\ell\le k_1$ we have 
\[
\frac{1}{3}|X|-k_2+\ell\ge 
\frac{1}{3}|X|-k_2\ge
\frac{1}{6}|X|\ge 
\max\big\{1,
p(\ell)\cdot|X|\big\}
\]
It follows that every balanced $X$-separation is good. 
Hence by Lemma~\ref{lem:1tan2}, there is a good separation $Z$ of
order $\kct(Z)\le k_1$.

Let $\ell$ be minimum such that there is a good separation  $Z$ of order
$\kct(Z)=\ell$. 

Let $\CZ$ be the set of all $Z\subseteq X$ such that
\begin{eroman}
\item $Z$ is good;
\item $\kct(Z)=\ell$;
\item $|Z|$ is maximum subject to (i)  and (ii).
\end{eroman}
Observe that $|Z|=|Z'|\ge p(\ell)\cdot|X|$ for
all $Z,Z'\in\CZ$.
Let
\[
\CY:=\{Y\subseteq\Act\mid \bar Y\in\CZ\}.
\]
Note that $\bar X\subset Y$ and $\kct(Y)=\ell$ and
$|Y|=|Y'|$ for
all $Y,Y'\in\CY$. Let us call two sets $Y,Y'$ \emph{$X$-disjoint} if $Y\cap
Y'\subseteq\bar X$. Observe that for $X$-disjoint sets $Y,Y'\in\CY$ we
have $Y\cap Y'=\bar X$.

Our next goal is to prove the following lemma.

\begin{lem}\label{lem:z0}
  There is a $b_2=b_2(k_0,k_1)$ such that if $|\CY|>b_2$ then the elements
  of $\CY$ are mutually $X$-disjoint.
\end{lem}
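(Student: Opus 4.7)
The plan is to show a sharp dichotomy for $\CZ := \{\bar Y : Y \in \CY\}$: either every pair $Z, Z' \in \CZ$ satisfies $Z \cup Z' = X$ (equivalently, $\CY$ is pairwise $X$-disjoint) or every pair is \emph{crossing}, meaning $Z \cup Z' \subsetneq X$. In the all-crossing case, a Cauchy--Schwarz counting argument bounds $|\CZ|$ in terms of $p(\ell)$, yielding the lemma with $b_2(k_0,k_1) := 2^{3^{k_0+k_1}}$. The case $\ell = 0$ is vacuous since submodularity would then force $\kct(Z \cap Z') < 0$ for any crossing pair, so assume $\ell \ge 1$.

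For a crossing pair $Z_1, Z_2 \in \CZ$, submodularity yields $\kct(Z_1 \cap Z_2) + \kct(Z_1 \cup Z_2) \le 2\ell$. Since $|Z_1 \cup Z_2| \ge |Z_1| \ge p(\ell)|X|$ and $Z_1 \cup Z_2 \subsetneq X$, the case $\kct(Z_1 \cup Z_2) \le \ell$ would make $Z_1 \cup Z_2$ a good separation of order $\le \ell$, and the minimality of $\ell$ together with the maximality of $|Z_1|$ among good sets of order $\ell$ would force $Z_1 = Z_2$. Hence $\kct(Z_1 \cap Z_2) \le \ell - 1$, and the minimality of $\ell$ rules out $Z_1 \cap Z_2$ being good, yielding the smallness bound $|Z_1 \cap Z_2| < p(\ell-1)|X|$.

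The heart of the proof is the propagation claim: if $Z_1, Z_2 \in \CZ$ cross, then every $Z_3 \in \CZ \setminus \{Z_1, Z_2\}$ crosses both $Z_1$ and $Z_2$. This I would establish by case analysis on which of $Z_3 \cup Z_1$, $Z_3 \cup Z_2$ equal $X$. In the mixed case $Z_3 \cup Z_1 = X$, $Z_3 \cup Z_2 \neq X$, the containment $X \setminus Z_1 \subseteq Z_3$ gives $Z_2 \setminus Z_1 \subseteq Z_2 \cap Z_3$, so by \eqref{eq:7}
\[
p(\ell)(1 - p(\ell)^2)|X| \le |Z_2| - |Z_1 \cap Z_2| \le |Z_2 \cap Z_3| < p(\ell-1)|X| = p(\ell)^3|X|,
\]
contradicting $p(\ell) \le 1/8$ from \eqref{eq:6}. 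In the case $Z_3 \cup Z_1 = X = Z_3 \cup Z_2$, the containment $X \setminus (Z_1 \cap Z_2) \subseteq Z_3$ forces the common size $|Z|$ of members of $\CZ$ to satisfy $|Z| \ge (1 - p(\ell-1))|X|$, whence $|Z_1 \cap Z_2| \ge 2|Z| - |X| \ge (1 - 2p(\ell-1))|X|$, again contradicting $|Z_1 \cap Z_2| < p(\ell-1)|X|$ via $p(\ell-1) \le 1/8$. Iterating the propagation delivers the dichotomy.

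In the all-crossing case, enumerate $\CZ = \{Z_1, \ldots, Z_n\}$ and set $d_x := |\{i : x \in Z_i\}|$ for $x \in X$. Then $\sum_x d_x = n|Z|$ and $\sum_{1 \le i < j \le n}|Z_i \cap Z_j| = \sum_x \binom{d_x}{2}$. Cauchy--Schwarz gives $\sum_x \binom{d_x}{2} \ge \tfrac{1}{2}((n|Z|)^2/|X| - n|Z|)$, and combined with the upper bound $\binom{n}{2}p(\ell-1)|X|$ on the sum of pairwise intersections and with $|Z| \ge p(\ell)|X|$, this simplifies to $n \le 1/p(\ell) = 2^{3^{k_2 - \ell}} \le 2^{3^{k_0 + k_1}}$. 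The main obstacle is the propagation case analysis: although each subcase reduces to a short submodularity or inclusion computation, the whole argument hinges on the strict gap $p(\ell)^2 > p(\ell-1) = p(\ell)^3$ guaranteed by \eqref{eq:6} and \eqref{eq:7}.
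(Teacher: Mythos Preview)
Your argument is correct and takes a genuinely different route from the paper's. The paper does \emph{not} establish the dichotomy you prove; instead it uses Ramsey's theorem (Lemma~\ref{lem:z3}) to extract, from a large $\CZ$, either a large mutually crossing subfamily (impossible by the counting in Lemma~\ref{lem:z2}) or a large mutually $X$-disjoint subfamily, and then combines this with the size lower bound of Lemma~\ref{lem:z4} to derive a contradiction when some pair fails to be $X$-disjoint. Your propagation step---showing that one crossing pair forces every pair to cross---replaces the Ramsey step by a direct structural argument, and your Cauchy--Schwarz count replaces the paper's inclusion--exclusion estimate in Lemma~\ref{lem:z2}. The upshot is that your bound $b_2$ is singly exponential in $3^{k_0+k_1}$, whereas the paper's depends on a two-colour Ramsey number and is correspondingly much larger; your argument is also more self-contained.

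One minor slip: the Cauchy--Schwarz computation does not quite yield $n\le 1/p(\ell)$. Writing $r:=|Z|/|X|\ge p(\ell)$, the inequality $\tfrac12\big(n^2r^2|X|-nr|X|\big)<\binom{n}{2}p(\ell)^3|X|$ rearranges to $n\big(r^2-p(\ell)^3\big)<r<1$, hence $n<1/\big(p(\ell)^2(1-p(\ell))\big)\le \tfrac{8}{7}\,p(\ell)^{-2}$. This is still comfortably below your stated $b_2=2^{3^{k_0+k_1}}$ (since $p(\ell)^{-2}=2^{2\cdot 3^{k_2-\ell}}\le 2^{2\cdot3^{k_2-1}}<2^{3^{k_2}}$ for $\ell\ge1$), so the conclusion stands; only the intermediate ``$n\le 1/p(\ell)$'' should be weakened.
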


The idea of the proof is as follows. Assume that there are $Y,Y'\in\CY$ that
are not $X$-disjoint. Then there are $Z,Z'\in\CZ$ whose union is a proper
subset of $X$. The choice of the function $p$ and a submodularity
argument guarantee that these sets $Z,Z'$ have a small
intersection. Thus $|Y\setminus Y'|=|Z'\setminus Z|$ is relatively large (close to
$p(\ell)|X|$, i.e., a constant fraction of $|X|$) and thus
$|Y\setminus\bar X|\ge|Y\setminus Y'|$ is relatively large. As all elements of $\CY$ have
the same size, this holds for all $Y\in\CY$. Now we apply Ramsey's
Theorem and find that if $\CY$ is very large either (i) there is a large family
$Z_1,\ldots,Z_n\in\CZ$ such that all pairwise unions  $Z_i\cup Z_j$
are proper
subsets of $X$, or (ii) there is a large family
$Z_1,\ldots,Z_n\in\CZ$ such that all pairwise unions $Z_i\cup Z_j$ are
equal to $X$. In
case (i), we argue that the $Z_i$ are
relatively large, but have a small intersection, and thus for large $n$ their union
becomes larger than $|X|$, which is impossible. In case (ii) we argue that the $Y_i:=\bar
Z_i$ are mutually $X$-disjoint, and as the sets $Y_i\setminus\bar X$
are relatively large, for large $n$ their union
becomes larger than $|X|$. Again, this is impossible. Thus the size of $\CY$ must be bounded.

The actual proof requires some preparation.

\begin{lem}\label{lem:z1}
  Let $Z,Z'\in\CZ$ with $Z\neq Z'$ and $Z\cup Z'\neq X$. Then
  \[
  |Z\cap Z'|<p(\ell)^3\cdot|X|
  \]
\end{lem}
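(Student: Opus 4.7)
The plan is to combine submodularity with the two extremality properties built into $\CZ$ (minimum order $\ell$, then maximum size) and the specific recursion \eqref{eq:7} satisfied by $p$. Intuitively, if $Z \cup Z'$ is a proper subset of $X$ and $Z \ne Z'$, then $|Z \cup Z'|$ lies strictly between $|Z|$ and $|X|$, which forces $\kct(Z \cup Z')$ to jump above $\ell$; submodularity then pushes $\kct(Z \cap Z')$ strictly below $\ell$, and the minimality of $\ell$ in turn caps the size of $Z \cap Z'$.

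First I would collect the basic size facts. Since $Z, Z' \in \CZ$ we have $|Z| = |Z'|$ and $\kct(Z) = \kct(Z') = \ell$, and since $Z \ne Z'$, $|Z \cup Z'| > |Z|$. The hypothesis $Z \cup Z' \ne X$ together with $Z \cup Z' \subseteq X$ gives $|Z \cup Z'| < |X|$, while $|Z \cup Z'| \ge |Z| \ge p(\ell) \cdot |X|$ by goodness of $Z$. Next I would show $\kct(Z \cup Z') \ge \ell + 1$: if instead $\kct(Z \cup Z') \le \ell$, then monotonicity \eqref{eq:6} yields $|Z \cup Z'| \ge p(\ell) \cdot |X| \ge p(\kct(Z \cup Z')) \cdot |X|$, so $Z \cup Z'$ would itself be a good separation; if its order were strictly less than $\ell$ this contradicts the minimality of $\ell$, while if its order equals $\ell$ we get $Z \cup Z' \in \CZ$ with $|Z \cup Z'| > |Z|$, contradicting the maximality of $|Z|$ in $\CZ$.

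Finally, submodularity of $\kct$ gives $\kct(Z \cap Z') \le \kct(Z) + \kct(Z') - \kct(Z \cup Z') \le 2\ell - (\ell + 1) = \ell - 1$. Since $Z \cap Z' \subseteq Z \subsetneq X$, we have $|Z \cap Z'| < |X|$. If $|Z \cap Z'| \ge p(\ell-1) \cdot |X|$ held, then by \eqref{eq:6} we would have $|Z \cap Z'| \ge p(\kct(Z \cap Z')) \cdot |X|$, making $Z \cap Z'$ a good separation of order at most $\ell - 1$, contradicting the minimality of $\ell$. Hence $|Z \cap Z'| < p(\ell-1) \cdot |X| = p(\ell)^3 \cdot |X|$, where the last equality is \eqref{eq:7}. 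There is no real obstacle here; the proof is essentially bookkeeping, and the only mild subtlety is recognising that the recursion $p(\ell-1) = p(\ell)^3$ is calibrated precisely so that the order drop delivered by submodularity matches the bound we must prove.
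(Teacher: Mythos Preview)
Your proof is correct and follows essentially the same route as the paper's: first show $\kct(Z\cup Z')>\ell$ by playing the minimality of $\ell$ against the size-maximality in the definition of $\CZ$, then use submodularity to force $\kct(Z\cap Z')<\ell$, and finally invoke the minimality of $\ell$ once more together with $p(\ell-1)=p(\ell)^3$ to bound $|Z\cap Z'|$. The only cosmetic slip is the phrase ``we get $Z\cup Z'\in\CZ$'': strictly speaking $Z\cup Z'$ fails clause~(iii), but what you actually use---that $Z\cup Z'$ satisfies (i) and (ii) with strictly larger size, contradicting the maximality of $|Z|$---is exactly right.
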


\begin{proof}
  Let $\ell_{\cup}:=\kct(Z\cup Z')$. Suppose for contradiction that
  $\ell_{\cup}\le \ell$. Then 
   \[
   p(\ell_\cup)\cdot|X|\le p(\ell)\cdot|X|\le|Z|\le|Z\cup Z'|<|X|,
  \]
  where the last inequality holds, because $Z\cup Z'\neq X$. By
  the choice of $\ell$ we thus have $\ell_{\cup}=\ell$. Hence $Z\cup Z'$
  satisfies (i) and (ii). However, $|Z_\cup|>|Z|$, because $Z\neq
  Z'$ and $|Z|=|Z'|$. This is a contradiction, which proves
  $\kct(Z\cup Z')=\ell_{\cup}>
  \ell$. 

  By submodularity,
  $\ell_\cap:=\kct(Z\cap Z')<\ell$. Thus by the choice of $\ell$ we have
  \[
  |Z\cap Z'|<p(\ell_\cap)\cdot|X|\le p(\ell)^3\cdot|X|.
  \qedhere
  \]
\end{proof}

\begin{lem}\label{lem:z2}
  Let $Z_1,\ldots,Z_m\in\CZ$ such that $Z_i\cup Z_j\neq X$ for all distinct
  $i,j\in[m]$. Then 
  \[
  m<\frac{2}{p(\ell)}
  \]
\end{lem}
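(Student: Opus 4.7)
The plan is to combine Lemma~\ref{lem:z1} with the uniform lower bound $|Z_i|\ge p(\ell)\cdot|X|$ on the elements of $\CZ$ (recorded just before the lemma statement) through a Cauchy--Schwarz / double-counting estimate. Since the hypothesis $Z_i\cup Z_j\neq X$ makes Lemma~\ref{lem:z1} applicable to every distinct pair $i,j\in[m]$, we obtain the key intersection bound $|Z_i\cap Z_j|<p(\ell)^3\cdot|X|$. To keep notation light, abbreviate $p:=p(\ell)$ and $z:=|Z_1|=\cdots=|Z_m|\ge p|X|$.

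For the main computation I would introduce, for each $x\in X$, the multiplicity $d(x):=|\{i\in[m]\mid x\in Z_i\}|$ and compute its first two moments: $\sum_{x\in X}d(x)=mz$ and
\[
\sum_{x\in X}d(x)^{2}=\sum_{i,j\in[m]}|Z_i\cap Z_j|=mz+2\sum_{i<j}|Z_i\cap Z_j|.
\]
Applying the Cauchy--Schwarz inequality $\bigl(\sum_{x\in X}d(x)\bigr)^{2}\le|X|\cdot\sum_{x\in X}d(x)^{2}$ together with the intersection bound from Lemma~\ref{lem:z1} would then yield
\[
m^{2}z^{2}<|X|\cdot\bigl(mz+m(m-1)p^{3}|X|\bigr).
\]

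The remainder is an algebraic manipulation, which I expect to be the fiddliest part of the argument. If $mz\le|X|$, then $m\le|X|/z\le 1/p<2/p$ and we are done. Otherwise, I would divide the displayed inequality by $mz$ and apply $z\ge p|X|$ in two directions---once as $mz/|X|\ge mp$ and once as $|X|/z\le 1/p$---to transform it into $mp<1+(m-1)p^{2}$. Rearranging yields $mp(1-p)<1-p^{2}=(1-p)(1+p)$, and since \eqref{eq:6} gives $p\le 1/8<1$, dividing by $1-p$ preserves the inequality and produces $mp<1+p$, i.e., $m<1/p+1$. Because $p<1$, this last quantity is at most $2/p$, as required. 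The only subtle point is the two-fold use of $z\ge p|X|$ in this closing step, which is precisely what makes the constant $2$ come out correctly; a less careful bookkeeping would produce $1/p^{2}$ instead.
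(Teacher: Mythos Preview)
Your argument is correct and reaches the same bound, but it proceeds along a genuinely different route from the paper. The paper bounds the union $\bigl|\bigcup_{j\le i}Z_j\bigr|$ from below by a telescoping inclusion--exclusion estimate: using $|Z_i|\ge p(\ell)|X|$ and the pairwise intersection bound from Lemma~\ref{lem:z1}, it obtains $\bigl|\bigcup_{j\le i}Z_j\bigr|\ge\bigl(ip-\tbinom{i}{2}p^3\bigr)|X|$, and then checks by direct substitution that this exceeds $|X|$ at $i=2/p(\ell)$. Your approach instead computes the first two moments of the multiplicity function $d(x)$ and applies Cauchy--Schwarz, which after the algebra yields the slightly sharper inequality $m<1/p+1$. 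Both arguments rely on exactly the same two ingredients (the size lower bound and Lemma~\ref{lem:z1}), so neither is more general; the paper's version is arguably more elementary, while yours gives a marginally tighter constant and avoids having to guess the value $i=2/p$ to plug in. One small point worth making explicit in either version: Lemma~\ref{lem:z1} requires $Z_i\neq Z_j$, so the $Z_1,\dots,Z_m$ must be pairwise distinct---this is how the lemma is used downstream (via Ramsey's theorem in Lemma~\ref{lem:z3}), but neither proof states it.
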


\begin{proof}
  As $|Z_i|\ge p(\ell)\cdot |X|$ and $|Z_i\cap Z_j|\le
  p(\ell)^3\cdot|X|$,
  we have
  \[
  \left|Z_{i}\setminus\bigcup_{j=1}^{i-1}Z_j\right|\ge
  \big(p(\ell)-(i-1)\cdot p(\ell)^3\big)\cdot|X|.
  \]
  This implies, for all $i\le m$
  \[
  |X|\ge \left|\bigcup_{j=1}^i Z_j\right|\ge \left(i\cdot
    p(\ell)-\sum_{j=1}^{i-1}j\cdot p(\ell)^3\right)\cdot|X|.
  \]
  Thus
  \begin{align*}
    \notag
    &\left(i\cdot
    p(\ell)-\frac{i\cdot(i-1)}{2}\cdot p(\ell)^3\right)=\left(i\cdot
    p(\ell)-\sum_{j=0}^{i-1}j\cdot p(\ell)^3\right)\le 1\\
    \iff\;&\frac{p(\ell)^3}{2}\cdot
            i^2-\left(p(\ell)+\frac{p(\ell)^3}{2}\right)\cdot i+1\ge0
  \end{align*}
  It is easy to see that this last inequality is violated for
  $i=\frac{2}{p(\ell)}$, which is an integer because
    $\frac{1}{p(\ell)}$ is a power of $2$. 
    Thus $m<\frac{2}{p(\ell)}$.
\end{proof}

\begin{lem}\label{lem:z3}
  For all $n\ge 1$ there is an $m=m(\ell,n)$ such that if $|\CZ|> m$ then
  there are $Z_1,\ldots,Z_n\in\CZ$ such that $Z_i\cup Z_j=X$ for all distinct $i,j\in[n]$.
\end{lem}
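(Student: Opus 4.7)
The plan is a direct Ramsey argument combining Lemma~\ref{lem:z2} with a two-colouring of pairs from $\CZ$. First, I would define a 2-edge-colouring of the complete graph on vertex set $\CZ$: colour the unordered pair $\{Z,Z'\}$ \emph{red} if $Z\cup Z'=X$, and \emph{blue} if $Z\cup Z'\neq X$. The desired conclusion is simply the existence of a red clique of size $n$.

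Next, I would bound blue cliques. If $Z_1,\dots,Z_s\in\CZ$ are pairwise blue-connected, then by definition $Z_i\cup Z_j\neq X$ for all distinct $i,j\in[s]$, so Lemma~\ref{lem:z2} forces
\[
s<\frac{2}{p(\ell)}.
\]
Thus any blue clique has size strictly less than $s_0:=\lceil 2/p(\ell)\rceil$, a constant depending only on $\ell$ (hence on $k_0$ and $k_1$).

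Finally, I would apply Ramsey's theorem. Let $R(n,s_0)$ denote the usual Ramsey number for 2-colourings, and set $m:=m(\ell,n):=R(n,s_0)-1$. Suppose $|\CZ|>m$. The complete graph on $\CZ$, with the colouring above, has more than $R(n,s_0)-1$ vertices, so it contains either a monochromatic red clique of size $n$ or a monochromatic blue clique of size $s_0$. The latter is excluded by the bound above, so we obtain a red clique $\{Z_1,\dots,Z_n\}\subseteq\CZ$, which by definition of red satisfies $Z_i\cup Z_j=X$ for all distinct $i,j\in[n]$. This proves the lemma.

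There is essentially no obstacle here: the hard combinatorial work was done in Lemma~\ref{lem:z2} (which bounds blue cliques in terms of $p(\ell)$), and the present lemma is merely its Ramsey-theoretic dualisation. The only mild subtlety is keeping track of the dependence of $m$ on $\ell$; this dependence is entirely through the constant $s_0=\lceil 2/p(\ell)\rceil$, which is bounded by a function of $k_0$ and $k_1$ since $\ell\le k_1$ and $p(\ell)\ge p(0)=2^{-3^{k_2}}$.
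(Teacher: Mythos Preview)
Your proposal is correct and follows essentially the same approach as the paper: a direct application of Ramsey's theorem with the two ``colours'' $Z\cup Z'=X$ versus $Z\cup Z'\neq X$, using Lemma~\ref{lem:z2} to rule out large monochromatic sets of the second colour. The paper states this more tersely, without naming the colours or the Ramsey number explicitly, but the argument is identical.
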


\begin{proof}
  If $\CZ$ is sufficiently large, then by Ramsey's Theorem one of the
  following two assertions holds.
  \begin{eroman}
    \item
    There are $Z_1,\ldots,Z_{2/p(\ell)}\in\CZ$ such that $Z_i\cup Z_j\neq X$ for all distinct
  $i,j\in\big[2/p(\ell)\big]$.
  \item
    There are $Z_1,\ldots,Z_n\in\CZ$ such that $Z_i\cup Z_j=X$ for all distinct $i,j\in[n]$.
  \end{eroman}
  Lemma~\ref{lem:z2} rules out (i). Thus (ii) holds.
\end{proof}

Let us now turn to the sets in $\CY$. Recall that $Y,Y'\in\CY$ are
$X$-disjoint if $Y\cap Y'=\bar X$.

\begin{lem}\label{lem:z4}
  If there are distinct sets in $\CY$ that are not $X$-disjoint, then
  for all $Y\in\CY$,
  \[
  |Y\setminus\bar X|\ge \left(p(\ell)-p(\ell)^3\right)\cdot|X|.
  \]
\end{lem}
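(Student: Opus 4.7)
The plan is to exploit three facts that have already been established: all elements of $\CZ$ (hence of $\CY$) have the same cardinality, every $Y\in\CY$ contains $\bar X$, and any two distinct $Z,Z'\in\CZ$ whose union is not $X$ have small intersection by Lemma~\ref{lem:z1}. Together these let us convert one non-$X$-disjoint pair into a lower bound on $|Y\setminus\bar X|$ that automatically propagates to every $Y\in\CY$.

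First I would pick an arbitrary witness: distinct $Y,Y'\in\CY$ that are not $X$-disjoint, and set $Z:=\bar Y$, $Z':=\bar{Y'}\in\CZ$. By the definition of $X$-disjointness there is some $x\in X$ lying in $Y\cap Y'$, which means $x\notin Z\cup Z'$. Since $Z,Z'\subseteq X$, this gives $Z\cup Z'\subsetneq X$. Lemma~\ref{lem:z1} then applies and yields
\[
|Z\cap Z'|<p(\ell)^3\cdot|X|.
\]

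Next I would chase the set identities. Because $Y=\bar Z$ and $\bar{Y'}=Z'$, we have
\[
Y\setminus Y'=\bar Z\cap Z'=Z'\setminus Z,
\]
so $|Y\setminus Y'|=|Z'|-|Z\cap Z'|\ge p(\ell)\cdot|X|-p(\ell)^3\cdot|X|$, using that every member of $\CZ$ has size at least $p(\ell)\cdot|X|$. Moreover $\bar X\subseteq Y'$ (because $\bar{Y'}=Z'\subseteq X$), so $Y\setminus Y'\subseteq Y\setminus\bar X$, giving
\[
|Y\setminus\bar X|\ge\bigl(p(\ell)-p(\ell)^3\bigr)\cdot|X|
\]
for this particular $Y$.

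Finally, I would upgrade this to \emph{every} $Y\in\CY$. Since $\bar X\subseteq Y$ and $|Y|$ is constant across $\CY$ (as noted right after the definition of $\CY$), the quantity $|Y\setminus\bar X|=|Y|-|\bar X|$ is the same for all $Y\in\CY$. Hence the bound established for one $Y$ transfers to all of them. The only step that needs care is verifying $Z\cup Z'\neq X$ from the hypothesis of non-$X$-disjointness; everything else is a direct computation using Lemma~\ref{lem:z1} and the uniformity of $|Y|$ on $\CY$.
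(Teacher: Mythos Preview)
Your proof is correct and follows essentially the same approach as the paper's: pick a non-$X$-disjoint pair $Y,Y'\in\CY$, pass to their complements $Z,Z'\in\CZ$, observe $Z\cup Z'\neq X$, apply Lemma~\ref{lem:z1}, and then use the constancy of $|Y|$ on $\CY$ together with $\bar X\subseteq Y$ to propagate the bound. The only difference is that you spell out the step $Z\cup Z'\subsetneq X$ a bit more explicitly than the paper does.
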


\begin{proof}
  Let $Y_1,Y_2\in\CY$ such that $Y_1\neq Y_2$ and $Y_1\cap Y_2\neq\bar X$, and let
  $Z_1:=\bar Y_1$ and $Z_2:=\bar Y_2$. Then $Z_1\cup Z_2\neq X$.
  Thus 
  \[
  |Y_1\setminus\bar X|\ge|Y_1\setminus Y_2|
  =|Z_2\setminus Z_1|=|Z_2|-|Z_1\cap Z_2|\ge
  \big(p(\ell)-p(\ell)^3\big)\cdot|X|,
  \]
  where the last inequality follows from Lemma~\ref{lem:z1}. For all
  $Y\in\CY$ we have $|Y|=|Y_1|$ and thus $|Y\setminus\bar
  X|=|Y_1\setminus\bar X|$.
\end{proof}

\begin{proof}[Proof of Lemma~\ref{lem:z0}]
  We let 
  \[
  n(\ell):=\left\lfloor\frac{1}{\big(p(\ell)-p(\ell)^3\big)}\right\rfloor+1,
  \]
  and choose $m=m(\ell,n(\ell))$ according to Lemma~\ref{lem:z3}. Suppose
  that $|\CZ|=|\CY|>m$. Then there are $Z_1,\ldots,Z_{n(\ell)}$ such that
  $Z_i\cup Z_j=X$ for all distinct $i,j\in[n(\ell)]$. For all $i\in[n(\ell)]$, let
  $Y_i=\bar Z_i$. Then $Y_1,\ldots,Y_{n(\ell)}$ are mutually
  $X$-disjoint. Thus
  \[
  \sum_{i=1}^{n(\ell)}|Y_i\setminus\bar X|\le|X|.
  \]
  By the choice of $n(\ell)$ and Lemma~\ref{lem:z4}, it follows that all sets in $\CY$ are
  mutually $X$-disjoint.

  To complete the proof, we let
  \[
  b_2(k_0,k_1):=\max_{0\le\ell\le
    k_1}m\left(\ell,n(\ell)\right).
  \]
\end{proof}

\begin{lem}\label{lem:1tan5}
  Suppose that the elements of $\CY$ are mutually $X$-disjoint.
  Then for all $\CY_0\subseteq\CY$,
  \[
  \kct\Big(\bigcup_{Y\in\CY_0}Y\Big)\le \ell,
  \]
  with equality for all $\CY_0\subset\CY$.
\end{lem}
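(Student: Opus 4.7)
The plan is to prove both assertions by induction on $|\CY_0|$, exploiting mutual $X$-disjointness to collapse every pairwise intersection of elements of $\CY_0$ to $\bar X$.

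For the upper bound $\kct\bigl(\bigcup_{Y\in\CY_0}Y\bigr)\le\ell$: the cases $|\CY_0|\le 1$ are immediate, since $\kct(\emptyset)=0$ and $\kct(Y)=\ell$ for every $Y\in\CY$. For the inductive step, pick any $Y\in\CY_0$, set $W':=\bigcup_{Y'\in\CY_0\setminus\{Y\}}Y'$ and $W:=W'\cup Y$. Mutual $X$-disjointness yields $W'\cap Y=\bigcup_{Y'\neq Y}(Y'\cap Y)=\bar X$, so submodularity gives
\[
\kct(W)+\kct(\bar X)\le\kct(W')+\kct(Y).
\]
Plugging in $\kct(\bar X)=\kct(X)=k_1$, $\kct(Y)=\ell$, and $\kct(W')\le\ell$ from the induction hypothesis, one obtains $\kct(W)\le 2\ell-k_1\le\ell$; the final inequality uses $\ell\le k_1$, which holds because a balanced $X$-separation from Lemma~\ref{lem:1tan2} is good of order at most $k_1$ while $\ell$ is the minimum order of any good separation.

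For the equality claim on a non-empty proper subset $\CY_0\subsetneq\CY$, the plan is to derive the matching lower bound $\kct(\bar W)\ge\ell$ by certifying that $\bar W$ itself is a good $X$-separation and invoking the minimality of $\ell$. Writing $Z_Y:=\bar Y\in\CZ$, we have $\bar W=\bigcap_{Y\in\CY_0}Z_Y\subseteq Z_Y\subsetneq X$, so $\bar W$ is a proper subset of $X$. Picking $Y_0\in\CY\setminus\CY_0$, mutual $X$-disjointness gives $W\cap Y_0=\bar X$, which rewrites as $\bar W\supseteq X\setminus Z_0$. In the regime $|Z_0|\le(1-p(\ell))|X|$, the estimate $|\bar W|\ge|X|-|Z_0|\ge p(\ell)|X|\ge p(\kct(\bar W))|X|$ (using monotonicity of $p$ and $\kct(\bar W)\le\ell$) certifies goodness of $\bar W$, and minimality of $\ell$ delivers $\kct(\bar W)\ge\ell$.

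The main obstacle is the complementary regime $|Z_0|>(1-p(\ell))|X|$, where the direct size estimate on $\bar W$ no longer certifies goodness. To handle it I plan to first show that mutual $X$-disjointness of $\CY$ forces $\ell=k_1$: for any two distinct $Z,Z'\in\CZ$ the identity $Z\cup Z'=X$ gives $|Z\cap Z'|=2|Z|-|X|>(1-2p(\ell))|X|\ge p(\ell)|X|$, so $Z\cap Z'$ is a good $X$-separation; submodularity provides $\kct(Z\cap Z')\le 2\ell-k_1$, and minimality of $\ell$ yields $\ell\le 2\ell-k_1$, i.e.\ $\ell=k_1$. With $\ell=k_1$ in hand, goodness of $\bar W$ follows from the refined inclusion–exclusion bound $|\bar W|\ge|X|-|\CY_0|(|X|-|Z_0|)$ together with bounds on $|\CY_0|$ inherited from the preceding Ramsey-type analysis (Lemma~\ref{lem:z0}); the minimality argument then again yields $\kct(\bar W)\ge\ell$, matching the upper bound and completing the equality.
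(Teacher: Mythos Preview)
Your upper-bound argument is correct and is essentially the paper's submodularity step: both exploit that $W'\cap Y=\bar X$ under mutual $X$-disjointness to get $\kct(W)\le\kct(W')+\ell-k_1\le\ell$.

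For the equality claim, your Case~1 is fine, but Case~2 has a genuine gap. After correctly deducing $\ell=k_1$, you assert that goodness of $\bar W$ follows from the inclusion--exclusion identity $|\bar W|=|X|-|\CY_0|(|X|-|Z_0|)$ ``together with bounds on $|\CY_0|$ inherited from the preceding Ramsey-type analysis (Lemma~\ref{lem:z0}).'' But Lemma~\ref{lem:z0} provides no upper bound on $|\CY|$ (hence none on $|\CY_0|$): it only says that \emph{if} $|\CY|>b_2$ \emph{then} the elements of $\CY$ are mutually $X$-disjoint. Here mutual $X$-disjointness is the \emph{hypothesis}, and $|\CY|$ can be arbitrarily large. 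The pairwise disjointness of the sets $Y\cap X$ yields only $|\CY_0|(|X|-|Z_0|)\le|Z_0|$, i.e.\ $|\bar W|\ge|X|-|Z_0|$, which in Case~2 is strictly below $p(\ell)|X|$; so you cannot certify that $\bar W$ is good, and the minimality-of-$\ell$ argument does not fire.

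The paper avoids any size analysis of $\bar W$ altogether. It proves the upper bound and the equality \emph{simultaneously} by induction on $i=|\CY_0|$: assuming equality at step $i-1$ and the upper bound at step $i$, it picks a spare $Y_{i+1}\in\CY$ (available precisely because $\CY_0\subsetneq\CY$) and applies posimodularity to $Y_1\cup\cdots\cup Y_i$ and $Y_i\cup Y_{i+1}$ to obtain the lower bound $\kct(Y_1\cup\cdots\cup Y_i)\ge\ell$ from $\kct(Y_1\cup\cdots\cup Y_{i-1})=\kct(Y_{i+1})=\ell$ and $\kct(Y_i\cup Y_{i+1})\le\ell$; a further submodularity step then propagates the upper bound to $i+1$. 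This inductive use of a witness $Y_{i+1}$ outside the current $\CY_0$ is exactly what your Case~2 is missing.
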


\begin{proof}
  We prove by induction on $i\le|\CY|$ that for all
  $Y_1,\ldots,Y_i\in\CY$,
 \begin{equation}\label{eq:9}
  \kct(Y_1\cup\ldots\cup Y_i)\le \ell,
  \end{equation}
  with equality if $i<|\CY|$. 

  The base step $i=1$ is trivial. For $i=2$, let $Y_1,Y_2\in\CY$. Then
  $Y_1\cap Y_2=\bar X$ and thus $\kct(Y_1\cap Y_2)=\kct(X)=k_1\ge \ell$. By
  submodularity, $\kct(Y_1\cup Y_2)\le \ell$. 

  Now let $2\le i<|\CY|$, and suppose that
  \begin{equation}\label{eq:10}
    \kct(Y_1\cup\ldots\cup Y_{i-1})=\ell
  \end{equation}
  for all $Y_1,\ldots,Y_{i-1}\in\CY$ and \eqref{eq:9} for all
  $Y_1,\ldots,Y_i\in\CY$.

  Let $Y_1,\ldots,Y_{i+1}\in\CY$ be mutually distinct. By posimodularity
  \[
  \kct(Y_1\cup\ldots\cup Y_i)+\kct(Y_i\cup Y_{i+1})\ge
  \kct(Y_1\cup\ldots\cup Y_{i-1})+\kct(Y_{i+1}).
  \]
  As $\kct(Y_1\cup\ldots\cup Y_{i-1})=\kct(Y_{i+1})=\ell$ by
  \eqref{eq:10} and $\kct(Y_i\cup Y_{i+1})\le \ell$ by \eqref{eq:9},
  it follows that $\kct(Y_1\cup\ldots\cup Y_i)\ge \ell$, which combined
  with  \eqref{eq:9} implies equality.

  Furthermore, by submodularity,
  \[
  \kct(Y_1\cup\ldots\cup Y_i)+\kct(Y_i\cup Y_{i+1})\ge
  \kct(Y_i)+\kct(Y_1\cup\ldots\cup Y_{i+1}).
  \]
  As $\kct(Y_1\cup\ldots\cup Y_{i})=\kct(Y_i\cup
  Y_{i+1})=\kct(Y_1)=\ell$, this implies $\kct(Y_1\cup\ldots\cup
  Y_{i+1})\le \ell$.
\end{proof}

\subsection{Proof of Lemma~\ref{lem:1tan}}
\label{sec:1tan3}

We continue to use the notation of Section~\ref{sec:1tan2}.
Essentially, the lemmas proved there show how to use the family $\CY$
to obtain the desired partition of $X$. 
The main question that remains to be solved is how to compute $\CY$.

\begin{lem}\label{lem:1tan6}
  There is a polynomial time algorithm that, given $X$ and oracle
  access to $\kct$, computes $\CY$.
\end{lem}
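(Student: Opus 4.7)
The family $\CY=\{\bar Z:Z\in\CZ\}$ is determined by three data: the parameter $\ell$, the common size $s:=|Z|$ for $Z\in\CZ$, and the family $\CZ$ itself. My plan is to compute these data in order, using a polynomial number of submodular minimisation queries on $\kct$, each solvable in polynomial time by the algorithms cited after Theorem~\ref{theo:ds}.

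First I determine $\ell$ and $s$. Since $\ell\le k_1\le k$ is bounded by a constant, I iterate $\ell'$ through $\{0,1,\ldots,k_1\}$ in increasing order. For each $\ell'$ I compute
\[
s(\ell'):=\max\{|Z|:Z\subseteq X,\,Z\ne X,\,\kct(Z)\le\ell'\}
\]
and declare $\ell$ to be the smallest $\ell'$ with $s(\ell')\ge p(\ell')\cdot|X|$, setting $s:=s(\ell)$. To compute $s(\ell')$, I iterate over the ``forbidden'' element $x\in X\setminus Z$ and, for each such $x$, minimise $|Y\cap X|$ over $Y\supseteq\bar X\cup\{x\}$ with $\kct(Y)\le\ell'$. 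This inner problem has a modular objective under a submodular upper-bound constraint; although its feasible set is not a lattice, I solve it by parametric submodular minimisation, scanning the Lagrange multiplier $\lambda\ge 0$ and minimising the submodular function $\kct(Y)+\lambda|Y\cap X|$ subject to the box constraint $Y\supseteq\bar X\cup\{x\}$. Since $\kct(Y)$ takes at most $k+1$ values and $|Y\cap X|$ at most $|X|+1$ values, the Pareto frontier between the two functions has polynomially many breakpoints, each realised by some Lagrangian minimiser, so polynomially many calls suffice.

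Once $\ell$ and $s$ are known, I enumerate $\CZ$ using the structural dichotomy from Section~\ref{sec:1tan2}. By Lemma~\ref{lem:z0}, either $|\CZ|\le b_2$ is a constant, or the elements of $\CY$ are mutually $X$-disjoint; in the latter case the sets $X\setminus Z$ for $Z\in\CZ$ are pairwise disjoint nonempty subsets of $X$, so $|\CZ|\le|X|$ and each element $x\in X\setminus Z$ identifies its $Z$ uniquely. In both cases $|\CZ|$ is polynomial in $|X|$. For each $x\in X$, I use the parametric subroutine above, further constrained to $\kct(Y)=\ell$ and $|Y\cap X|=|X|-s$, to find a member of $\CZ$ missing $x$ if one exists. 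In the $X$-disjoint case this enumerates $\CZ$ after deduplication; in the bounded case I iterate the enumeration, at each round adjoining additional pointwise constraints ``$x_0\in Y$'' or ``$x_0\notin Y$'' to distinguish the at most $b_2$ members already found from any undiscovered one. Finally I return $\CY=\{\bar Z:Z\in\CZ\}$.

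The main obstacle is the inner constrained submodular minimisation, since the feasible region $\{Y:\kct(Y)\le\ell'\}$ is not closed under intersection or union. My plan bypasses this via parametric Lagrangian scanning, exploiting the bounded ranges of both $\kct(Y)$ and $|Y\cap X|$, which allow the complete Pareto curve to be recovered by polynomially many standard submodular minimisations.
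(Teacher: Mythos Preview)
Your parametric Lagrangian step is the genuine gap. Minimising $\kct(Y)+\lambda\,|Y\cap X|$ as $\lambda$ ranges over $[0,\infty)$ recovers only those Pareto-optimal pairs $(\kct(Y),|Y\cap X|)$ that lie on the \emph{lower convex hull} of the Pareto set; any Pareto point strictly above this hull is never the unique Lagrangian minimiser for any $\lambda$. For instance, if feasible sets $Y_0,Y_1,Y_2$ realise $(\kct,|Y\cap X|)$-values $(0,10),(1,8),(2,1)$, then $(1,8)$ is Pareto-optimal yet dominated by the chord from $(0,10)$ to $(2,1)$, so no sweep over $\lambda$ ever selects $Y_1$. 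For $\ell'=1$ you would then return $|X|-10$ rather than the correct $s(\ell')=|X|-8$. Submodularity of $\kct$ does not force the map $\ell'\mapsto s(\ell')$ to be concave, so this obstruction cannot be excluded. The same defect recurs in your enumeration phase: even with $\ell$ and $s$ known, finding a set $Z\subseteq X\setminus\{x\}$ with $\kct(Z)=\ell$ and $|Z|=s$ is again a cardinality-constrained submodular feasibility problem that the Lagrangian sweep need not solve, and your ``additional pointwise constraints'' do not change this. (Minor aside: $k_1=\kct(X)$ need not be at most $k$; in the application one only has $k_1<(3k+2)k$, still bounded in terms of $k$, so the outer loop over $\ell'$ is fine.)

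The paper bypasses constrained optimisation entirely via a free-set argument: every $Z\in\CZ$ is the \emph{rightmost minimum $(Z_0,\bar X\cup\{x\})$-separation} for some $x\in X\setminus Z$ and some $Z_0\subseteq Z$ of size $|Z_0|\le\ell$ with $\kct_{\min}(Z_0,\bar X)=\ell$. Existence of such a small $Z_0$ follows from the standard fact that every set contains a free subset of size at most its order, and rightmostness follows from the maximality of $|Z|$ in the definition of $\CZ$. Since $\ell\le k_1$ is bounded, there are only $O(|X|^{\ell+1})$ candidate pairs $(Z_0,x)$; for each, a single unconstrained submodular minimisation yields the rightmost minimum separation. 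This produces in polynomial time a superset $\CZ^*\supseteq\CZ$, and $\CZ$ is then just the set of size-maximal elements of $\CZ^*$.
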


\begin{proof}
  Let $\CZ^*$ be the family of all $Z\subseteq A$ satisfying the following
  three conditions:
  \begin{eroman}
  \item $Z$ satisfies \eqref{eq:8}, that is, $p(\ell)\cdot|X|\le|Z|<|X|$.
  \item[(ii')] $\kct_{\min}(Z,\bar X)=\ell$;
  \item[(iii')] there are a set $Z_0\subseteq X$ of size
    $|Z_0|\le\ell$ and an element $x\in X$ such
    that $Z$ is a rightmost minimum $(Z_0,\bar X\cup\{x\})$-separation.
  \end{eroman}
  Let $m:=\max\big\{|Z|\bigmid Z\in\CZ^*\big\}$.
  
  \begin{claim}
    \[
    \CZ=\big\{Z\in\CZ^*\bigmid |Z|=m\big\}.
    \]
    \proof
    We first prove that $\CZ\subseteq\CZ^*$. Let $Z\in\CZ$. Clause (i)
    in the definition of $\CZ$ is the same as clause (i)
    above.

    If there was some $Z'$ such that $Z\subseteq
    Z'\subseteq X$ and $\kct(Z')<\ell$, then $Z'\subset X$,
    because $\kct(X)=k_1\ge\ell$, and $Z'$ would also satisfy
    \eqref{eq:8}, because $|Z'|\ge |Z|$. Thus (ii) would be
    violated. This proves (ii').

    To see that $Z$ satisfies (iii'), let $Z_0\subseteq Z$ be inclusionwise
    minimal such that $\kct_{\min}(Z_0,\bar X)=\ell$. Suppose for
    contradiction that $|Z_0|=n>\ell$, and let $z_1,\ldots,z_n$ be
    an enumeration of $Z_0$. For every $i\in[n]$, let
    $Z^i=\{z_1,\ldots,z_i\}$. Then
    $\kct_{\min}(Z^i,\bar X)\leq \kct_{\min}(Z^{i+1},\bar X)$ for all
    $i<\ell$, because $\kct_{\min}$ is monotone in the first
    argument. As $\kct_{\min}(Z_0,\bar X)=\ell$, there is an
    $i<\ell$ such that
    $\kct_{\min}(Z^i,\bar X)= \kct_{\min}(Z^{i+1},\bar X)$. By the submodularity of $\kct_{\min}$ in the first argument,
    \[
    \kct_{\min}(Z_0\setminus \{z_{i+1}\},\bar X) +
    \kct_{\min}(Z^{i+1},\bar X) \geq \kct_{\min}(Z^{i},\bar X)+
    \kct_{\min}(Z_0,\bar X).
    \]
    It follows that $\kct_{\min}(Z_0\setminus
    \{z_{i+1}\},\bar X)= \kct_{\min}(Z_0,\bar X)$, contradicting the minimality of~$Z_0$.
    This proves that $|Z_0|\le\ell$.

   Let $x\in X\setminus Z$. Then $Z_0\subseteq Z\subseteq X\setminus\{x\}\subseteq X$,
    and $\kct_{\min}(Z_0,\bar X)=\ell$ and $\kappa(Z)=\ell$ imply $\kct_{\min}(Z_0,\bar
    X\cup\{x\})=\ell$. Thus $Z$ is a minimum
    $(Z_0,\bar X\cup\{x\})$-separation, and now clause (iii) in the definition of
    $\CZ$ (the maximality of $|Z|$) implies that $Z$ is
    rightmost. This completes the proof of (iii') and thus of the
    inclusion $\CZ\subseteq\CZ^*$. The maximality of the elements of
    $\CZ$ (clause (iii) in the definition) then implies that
   \[
    \CZ\subseteq\big\{Z\in\CZ^*\bigmid |Z|=m\big\}.
    \]
    
    To prove the converse inclusion, let $Z\in\CZ^*$ with
    $|Z|=m$. Then by (iii'), $Z\subset\bar X$. Clauses (i) and (ii') above
    imply clauses (i) and (ii) in the definition of $\CZ$. 

    Suppose that there is some $Z'\subset X$ satisfying (i) and (ii)
    such that $|Z'|>|Z|$. Choose such a $Z'$ of maximum size. Then
    $Z'\in\CZ$, and thus $|Z'|=m=|Z|$. This is a contradiction.
    \uend
  \end{claim}

  It is easy to see that $\CZ^*$ can be computed in polynomial time,
  and this implies that $\CZ$ and thus $\CY$ can be computed in
  polynomial time.
\end{proof}

\begin{proof}[Proof of Lemma~\ref{lem:1tan}]
  Recall that $k_0\le k$. We let 
  \begin{align}
  f_1:=f_1({k},k_1)&:=\frac{1}{p(0)-p(0)^3}\\
  \intertext{and}
  a_1:=a_1({k},k_1)&:=\max\{k,2k_1\cdot b_2({k},k_1)\},\\
  b_1:=b_1({k}.k_1)&:=\max\{6(k+k_1),2^{b_2({k},k_1)}\},
  \end{align}
  where
  $b_2:=b_2({k},k_1)$ is chosen according to Lemma~\ref{lem:z0}.

  If $|X|< 6(k+k_1)$, we simply partition $X$ into 1-element
  sets. Note that $\kct(\{b\})=\kappa(\{b\})\le k$ for all
  $b\in B$, because $\bw(\kappa)\le k\le a_1$, and
  $\kappa(\{c_i\})=\kappa(C_i)<k\le a_1$ for $0\le i\le m$ by
  Assumption~\ref{ass:1}(3). Thus (i) is satisfied. 

  In the following, we assume that $|X|\ge 6(k+k_1)\ge 6k_2$. This is the assumption needed for the
  previous results.

  \begin{cs}
    \case1
    There are distinct $Y,Y'\in\CY$ that are not $X$-disjoint.\\
     Then $|\CY|\le b_2(k,k_2)$ by Lemma~\ref{lem:z0} and
     \begin{equation}
       \label{eq:11}
     |Y\cap X|>(p(\ell)-p(\ell)^3)|X|\ge\frac{1}{f_1}\cdot|X|
     \end{equation}
     by Lemma~\ref{lem:z4}. Moreover, $\bar Y\in\CZ$ for all
     $Y\in\CY$, which implies
     \begin{equation}
       \label{eq:12}
        |\bar Y|\ge p(\ell)\cdot |X|\ge
     \frac{1}{f_1}\cdot|X|.
     \end{equation}
     Let $Y^+_1,\ldots,Y^+_n$ be an enumeration of all sets
     $|Y\cap X|$ for $Y\in\CY$. Note that $n\le b_2({k},k_1)$. For
     every $i\in[n]$, let $Y^{-}_i:=X\setminus Y_i^+$. For every
     $i\in[n]$, we have $\kct(Y^{-}_i)=\ell$ and, by submodularity,
     $\kct(Y^{+}_i)\le k_1+\ell\le 2k_1$.

Let
     $X_1,\ldots,X_b$ be a list of all nonempty sets of the form
     \[
     \bigcap_{i=1}^nY_i^{\sigma(i)},
     \]
     for some function $\sigma:[n]\to\{+,-\}$. Then $b\le 2^n\le
     b_1$. Submodularity implies that
     \[
     \kct(X_i)\le 2k_1n\le a_1.
     \]
     It follows from \eqref{eq:11} and \eqref{eq:12} that
     $|Y_i^{\sigma}|\le(1-1/f_1)|X|$ for all $i\in[n]$ and
     $\sigma\in\{+,-\}$.
     Thus the partition $X_1,\ldots,X_b$ satisfies assertion (i) of
     the lemma.

    \case 2
    The elements of $\CY$ are 
    mutually $X$-disjoint.\\
    We let 
    \[
    X_0:=\bigcap_{Y\in\CY}\bar Y,
    \]
    and we let $X_1,\ldots,X_n$ be an enumeration of the sets $Y\cap
    X$ for $Y\in\CY$. Note that the sets $X_0,\ldots,X_n$ form a
    partition of $X$. It follows from Lemma~\ref{lem:1tan5} that
    $\kct(X_0)\le \ell\le k$ and  $\kct\left(\bigcup_{i\in
        I}X_i\right)\le k+\ell\le 2k$ for every set $I\subseteq[n]$. It follows
    from \eqref{eq:8} that $|X_i|\le(1-1/f_1)|X|$ for every
    $i\in[n]$. 

    Thus the partition $X_0,\ldots,X_n$ satisfies assertion (i) of
     the lemma.
  \end{cs}
  It follows from Lemma~\ref{lem:1tan6} that in both cases the
  partition can be computed in polynomial time.
\end{proof}

\section{The Non-Well-Linked Case}
\label{sec:nwl}

\subsection{Partitioning with Respect to an Independent Set}
\label{sec:nwl1}

In this section, we make Assumptions~\ref{ass:1} again (but not
Assumptions~\ref{ass:2} and \ref{ass:3}). 

Let
$X\subseteq \Act$ such that 
\begin{equation}
  \label{eq:13}
  k_1:=\kct(X)\ge(3k+2)\cdot k. 
\end{equation}
We define a function $\lambda:2^{\bar X}\to\NN$ by letting
\[
\lambda(Y):=\kct_{\min}(Y,X)
\]
for all $Y\subseteq\bar X$. Then $\lambda$ is submodular and monotone,
and we have $\lambda(\emptyset)=0$. Such a function is known as an
\emph{integer polymatroid}. It induces a
matroid $\CM(\lambda)$ on $\bar X$ whose independent sets are all
$Y\subseteq\bar X$ satisfying
\begin{equation}
  \label{eq:14}
  |Z|\le\lambda(Z)\text{ for all }Z\subseteq Y
\end{equation}
(see \cite{oxl11}, Proposition 12.1.2).
The rank function $r_\lambda$ of $\CM(\lambda)$ is defined by
\[
r_{\lambda}(Y):=\min\big\{\lambda(Z)+|Y\setminus Z|\bigmid Z\subseteq
Y\big\}.
\]
(see \cite{oxl11}, Proposition 12.1.7). Observe that for all $y\in\bar
X$ we have $\lambda(\{y\})\le\kct(\{y\})\le k$. If $y\in B$ then this
holds because $\kct(\{y\})=\kappa(\{y\})\le\bw(\kappa)$, and if
$y=c_i$ it follows from Assumption~\ref{ass:1}(3). A
straightforward induction based on the submodularity of $\lambda$ then
implies that $\lambda(Y)\le k|Y|$ for all $Y\subseteq\bar X$. Thus for all $Z\subseteq\bar X$,
\[
\lambda(Z)\ge\lambda(\bar X)-\lambda(\bar X\setminus Z)\ge k_1-k\cdot
|\bar X\setminus Z|,
\]
which implies $\lambda(Z)+k|\bar X\setminus Z|\ge k_1$ and hence
$\lambda(Z)+|\bar X\setminus Z|\ge k_1/k$.
By the definition of $r_\lambda$, we get
\[
r_\lambda(\bar X)\ge\frac{k_1}{k}\ge (3k+2).
\]
Thus there is a set $Y'\subseteq \bar X$ of size $|Y'|=3k+2$ that is an
independent set of $\CM(\lambda)$. As all subsets of an independent set
are independent as well, there is an independent set $Y\subseteq\bar
X\setminus\{c_0\}$ of size $|Y|=3k+1$. We keep such a set $Y$
fixed in the following.

\begin{lem}\label{lem:nwl1}
  Let $Z\subseteq \Act$ such that
  $\kct(Z)< |Y\setminus Z|$.
  
  Then
  \begin{equation*}
    \label{eq:15}
  \kct\left(X\cap Z\right)<\kct(X).
  \end{equation*}
\end{lem}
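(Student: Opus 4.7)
The plan is to combine submodularity of $\kct$ with the definition of $\lambda$ and the independence of $Y$ in the polymatroid $\CM(\lambda)$.

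The starting point is the submodular inequality
\[
\kct(X)+\kct(Z)\ge \kct(X\cap Z)+\kct(X\cup Z),
\]
which rearranges to $\kct(X\cap Z)\le\kct(X)+\bigl(\kct(Z)-\kct(X\cup Z)\bigr)$. Thus it suffices to show $\kct(X\cup Z)>\kct(Z)$, and by hypothesis $\kct(Z)<|Y\setminus Z|$, so it is enough to establish $\kct(X\cup Z)\ge|Y\setminus Z|$.

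To do this, I would first use the symmetry of $\kct$ to rewrite $\kct(X\cup Z)=\kct(\bar X\setminus Z)$. Since $Y\subseteq\bar X$, we have $Y\setminus Z\subseteq \bar X\setminus Z\subseteq\bar X$, so the set $\bar X\setminus Z$ is one of the sets over which the minimum defining $\lambda(Y\setminus Z)=\kct_{\min}(Y\setminus Z,X)$ is taken. This yields $\kct(\bar X\setminus Z)\ge\lambda(Y\setminus Z)$. Then, because $Y$ is an independent set of $\CM(\lambda)$ and $Y\setminus Z\subseteq Y$, the independence condition~\eqref{eq:14} applied with $Z'=Y\setminus Z$ gives $|Y\setminus Z|\le\lambda(Y\setminus Z)$. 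Chaining these bounds produces $\kct(X\cup Z)\ge\lambda(Y\setminus Z)\ge|Y\setminus Z|>\kct(Z)$, which plugged back into the submodularity inequality gives $\kct(X\cap Z)<\kct(X)$ as required.

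There is no real obstacle: the lemma is essentially a bookkeeping exercise that packages three ingredients (submodularity of $\kct$, the definition of $\lambda$ as $\kct_{\min}(\cdot,X)$, and the independence of $Y$). The only point that warrants mild care is verifying the inclusion $Y\setminus Z\subseteq\bar X\setminus Z$, which follows immediately from $Y\subseteq\bar X\setminus\{c_0\}\subseteq\bar X$. The role of independence is crucial in exactly one step: transferring a cardinality bound to a $\lambda$-value, and thereby to a $\kct$-value on a separation of $\Act$ that witnesses a high-order cut between $X\cup Z$ and its complement.
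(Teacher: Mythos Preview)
Your proof is correct and follows essentially the same approach as the paper's: both combine submodularity of $\kct$ applied to $X$ and $Z$, symmetry to rewrite $\kct(X\cup Z)=\kct(\bar X\setminus Z)$, the bound $\kct(\bar X\setminus Z)\ge\lambda(Y\setminus Z)$ from the definition of $\lambda$, and the inequality $\lambda(Y\setminus Z)\ge|Y\setminus Z|$ from independence of $Y$. The only difference is cosmetic ordering---the paper first isolates the fact $\kct(\bar X\setminus Z)\ge|Y\setminus Z|$ and then runs a single chain of inequalities, whereas you begin with submodularity and work backwards to the same auxiliary bound.
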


\begin{proof}
  As $Y$ is independent, we have $|Y\setminus Z|\le\lambda(Y\setminus
  Z)$. As $Y\setminus Z\subseteq\bar X\setminus Z\subseteq\bar X$, we have
  $\lambda(Y\setminus Z)\le\kct(\bar X\setminus Z)$. Thus
  $\kct(\bar X\setminus Z)-|Y\setminus Z|\ge 0$ and therefore
  \begin{align*}
    \kct(X\cap Z)&\le \kct(X\cap Z)+\kct(\bar X\setminus
                     Z)-|Y\setminus Z|\\
    &=\kct(X\cap Z)+\kct(X\cup Z)-|Y\setminus
      Z|&\text{(symmetry)}\\
    &\le\kct(X)+\kct(Z)-|Y\setminus Z|&\text{(submodularity)}\\
    &<\kct(X)+|Y\setminus Z|-|Y\setminus Z|&\text{(assumption
                                                    of the lemma)}\\
    &=\kct(X).
  \end{align*}
\end{proof}

\begin{lem}
  There is a set $Z\subseteq\Act$ such that $\kct(Z)\le k$ and
  \begin{equation}
    \label{eq:16}
    \kct(Z)<\min\{|Y\cap Z|,|Y\setminus Z|\}.
  \end{equation}
  Furthermore, we can compute such a set $Z$ in polynomial time (for
  fixed $k$).
\end{lem}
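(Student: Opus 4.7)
The plan is to prove existence by contradiction via a tangle argument, and obtain the algorithm by enumerating all partitions of $Y$.

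For existence, suppose no such $Z$ exists and consider
\[
\CT := \{Z \subseteq \Act : \kct(Z) \le k \text{ and } |Y \setminus Z| \le \kct(Z)\}.
\]
I would check that $\CT$ is a $\kct$-tangle of order $k{+}1$. Axiom \ref{li:t0} is immediate. For \ref{li:t1}, given any $Z$ with $\kct(Z) \le k$, if neither $Z$ nor $\bar Z$ lies in $\CT$ then $|Y \setminus Z| > \kct(Z)$ and $|Y \cap Z| = |Y \setminus \bar Z| > \kct(\bar Z) = \kct(Z)$, producing the witness forbidden by the contradiction assumption. At most one of $Z,\bar Z$ can lie in $\CT$ since $|Y\setminus Z| + |Y\setminus\bar Z| = |Y| = 3k+1 > 2k$. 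For \ref{li:t2}, any $Z_1,Z_2,Z_3 \in \CT$ satisfy $|Y \cap (\bar Z_1 \cup \bar Z_2 \cup \bar Z_3)| \le 3k < |Y|$, so $Y \cap Z_1 \cap Z_2 \cap Z_3 \ne \emptyset$. For \ref{li:t3}, every singleton $\{c\}$ has $\kct(\{c\}) \le \bw(\kct) \le k$ (using Assumption~\ref{ass:1}(3) when $c = c_i$), while $|Y \setminus \{c\}| \ge 3k > k$, hence $\{c\} \notin \CT$. By Theorem~\ref{theo:duality}, the existence of $\CT$ forces $\bw(\kct) \ge k+1$. However $\bw(\kct) \le \bw(\kappa) = k$, because contraction of a connectivity function does not increase branch width (any $\kct$-tangle lifts to a $\kappa$-tangle of the same order by uncrossing each $C_i$ via submodularity). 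This contradiction establishes existence.

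For the polynomial-time algorithm, I enumerate the $2^{|Y|} = 2^{3k+1}$ partitions $Y = Y_1 \cup Y_2$ (constantly many for fixed $k$). For each such partition I compute a minimum $(Y_1, Y_2)$-separation $Z$ via submodular-function minimisation, and return the first $Z$ satisfying $\kct(Z) \le k$ together with $\min\{|Y\cap Z|, |Y\setminus Z|\} > \kct(Z)$. Correctness follows because if $Z^*$ is any witness, then $(Y_1, Y_2) := (Y \cap Z^*, Y\setminus Z^*)$ is among the enumerated partitions, and the returned minimum $(Y_1, Y_2)$-separation $Z$ has $\kct(Z) \le \kct(Z^*) \le k$ with $Y_1 \subseteq Y \cap Z$ and $Y_2 \subseteq Y \setminus Z$, so $\min\{|Y \cap Z|, |Y \setminus Z|\} \ge \min\{|Y_1|,|Y_2|\} > \kct(Z^*) \ge \kct(Z)$.

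The main conceptual step I expect to need care is the tangle verification together with the supporting fact $\bw(\kct) \le \bw(\kappa)$; beyond that the argument reduces to a counting inequality on $|Y| = 3k+1$ and standard submodular minimisation.
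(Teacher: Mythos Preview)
Your overall strategy matches the paper's: a tangle argument for existence, enumeration over partitions of $Y$ for the algorithm. The algorithmic half is correct and essentially identical to what the paper does.

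The existence half has a genuine gap. You build a $\kct$-tangle of order $k{+}1$ on $\Act$ (your verification of \ref{li:t0}--\ref{li:t3} is fine), and then invoke $\bw(\kct)\le k$ for the contradiction. But the blanket claim ``contraction of a connectivity function does not increase branch width'' is false: on $A=\{a,b,c,d\}$ with all singletons of order $1$, $\kappa(\{a,b\})=\kappa(\{c,d\})=1$ and the remaining pairs of order $2$, one has $\bw(\kappa)=1$, yet contracting $C=\{a,c\}$ gives $\kct(\{c_C\})=2$. Your parenthetical ``uncross each $C_i$ via submodularity'' does not rescue this: from $\kappa(Z)\le k$ and $\kappa(C_i)<k$, submodularity alone does not force either of $\kappa(Z\cup C_i)$ or $\kappa(Z\setminus C_i)$ to be at most $k$, so there is no evident way to align an arbitrary $Z\subseteq A$ without raising its order, and hence no evident lift of your $\kct$-tangle to a $\kappa$-tangle.

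The paper sidesteps this by running the tangle argument in $A$, not in $\Act$. It spreads each $c_i\in Y$ into a fractional weight $\phi$ on $C_i$, shows that $\{Z\subseteq A:\kappa(Z)\le k,\ \phi(Y\expand\cap Z)\ge 2k{+}1\}$ would be a $\kappa$-tangle of order $k{+}1$, and gets the contradiction directly from $\bw(\kappa)=k$, which is a standing hypothesis. Only afterwards does it take a minimum-order witness $Z\subseteq A$ and uncross it with the $C_i$'s to land in $\Act$; and that uncrossing step uses not bare submodularity but Assumption~\ref{ass:1}(4) (each $C_i$, respectively $\bar C_0$, is a leftmost minimum tangle separation) through Lemma~\ref{lem:tanglesep3}. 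This is exactly the ingredient your argument is missing. If you tried to repair your route by proving $\bw(\kct)\le k$, you would end up reproducing that same uncrossing machinery, so nothing is gained by placing the tangle in $\Act$.
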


\begin{proof}
        We define a weight function $\phi:A\to \RR$ as follows:
      \[
      \phi(x):=
      \begin{cases}
        \frac{1}{|C_i|}&\text{if $x\in C_i$ for some $i\in\{0,\ldots,m\}$ such
          that $c_i\in Y$},\\
        1&\text{if }x\in Y\cap B,\\
        0&\text{otherwise}.
      \end{cases}
      \]

      \begin{claim}
        There is a $Z\subseteq A$ such that  $\kappa(Z)\le k$ and 
        \[
        \min\{\phi(Y\expand\cap Z),\; \phi(Y\expand\setminus Z)\}>k.
        \]
        \proof
        Suppose for contradiction that for all $Z\subseteq A$ such
        that  $\kappa(Z)\le k$, either $\phi(Y\expand\cap Z)\le
        k$ or
        $\phi(Y\expand\setminus Z)\le k$. Let
        \[
        \CT:=\{Z\subseteq A\mid \kappa(Z)\le k,\phi(Y\expand\cap Z)\ge 2k+1\}.
        \]
        Then $\CT$ is a $\kappa$-tangle of order
        $k+1$. Indeed,  it obviously satisfies \ref{li:t0}. It
        satisfies \ref{li:t1}, because 
        \[
        \phi(Y\expand\cap Z)+\phi(Y\expand\setminus
        Z)=\phi(Y\expand)=|Y|=3k+1.
        \]
        It satisfies \ref{li:t2}, because $2k+1>(2/3)|Y|$, and it
        satisfies \ref{li:t3} because $\phi(\{x\})\le 1<2k+1$ for all
        $x\in A$. 

        However, as $\bw(\kappa)=k$ no tangle of order $k+1$
        exists.
        \uend
      \end{claim}

      Let $\ell$ be minimum such that there is a $Z\subseteq A$ such
      that $\kappa(Z)\le \ell$ and
      \begin{equation}
        \label{eq:17}
        \min\{\phi(Y\expand\cap Z),\; \phi(Y\expand\setminus Z)\}>\ell.
      \end{equation}
      Let $Z\subseteq A$ such that $\kappa(Z)\le \ell$ and
      \eqref{eq:17}.

      Without loss of generality we may assume that either $Z\cap
      C_0=\emptyset$ or $C_0\subseteq Z$. To see this, suppose
    that neither $Z\cap C_0=\emptyset$ nor $C_0\subseteq Z$, or
    equivalently, neither $Z\subseteq\bar C_0$ nor
    $\bar Z\subseteq \bar C_0$. By Assumption~\ref{ass:1}(4), $\bar C_0$
    is a minimum $(\CT_0,\CT_0')$-separation. Thus by
    Lemma~\ref{lem:tanglesep3}(1) (applied to $X=\bar C_0$), either
    $\kappa(Z\cap\bar C_0)\le\kappa(Z)$ or
    $\kappa(\bar Z\cap\bar C_0)\le\kappa(Z)$. As $c_0\not\in Y$, we
    have $\phi(Z\cap\bar C_0)=\phi(Z\cup C_0)=\phi(Z)$. Thus
    if $\ell':=\kappa(Z\cap\bar C_0)\le\kappa(Z)$, then
    $Z':=Z\cap\bar C_0$ satisfies \eqref{eq:17} with $Z',\ell'$
    instead of $Z,\ell$, and if
    $\ell':=\kappa(\bar Z\cap\bar C_0)\le\kappa(Z)$, then
    $Z':=\bar Z\cap\bar C_0$ satisfies \eqref{eq:17} with $Z',\ell'$
    instead of $Z,\ell$. In both cases, we have
    $Z'\cap C_0=\emptyset$. This justifies the assumption that either
    $Z\cap C_0=\emptyset$ or $C_0\subseteq Z$.

      \begin{claim}[resume]
        For all $i\in[m]$, either $C_i\cap Z=\emptyset$ or
        $C_i\subseteq Z$.

        \proof
        Suppose for contradiction that neither $C_i\cap Z=\emptyset$ nor
        $C_i\subseteq Z$. Then neither $Z\subseteq \bar C_i$ nor $\bar
        Z\subseteq \bar C_i$. By
        Assumption~\ref{ass:1}(4) and Lemma~\ref{lem:tanglesep3},
        either $\kappa(Z\cap \bar C_i)<\kappa(Y)$ or $\kappa(\bar
        Z\cap \bar C_i)<\kappa(Y)$. 

        Without loss of generality we assume that
        $\ell':=\kappa(Z\cap \bar C_i)<\kappa(Y)=\ell$. Let $Z':=Z\cap
        \bar C_i$.
        Then
        \[
        \phi(Y\expand\cap Z')\ge \phi(Y\expand\cap
        Z)-1\ge\ell-1\ge\ell'
        \]
        and
          \[
        \phi(Y\expand\setminus Z')\ge \phi(Y\expand\setminus
        Z)\ge\ell\ge\ell'.
        \]
        This contradicts the minimality of $\ell$.
        \uend
      \end{claim}
        
      It follows that there is a $Z^\vee\subseteq\Act$ such
      that $Z=Z^\vee\expand$. Then
      \[
      |Y\cap Z^\vee|=\phi(Y\expand\cap Z)>\ell=\kappa(Z)=\kct(Z^\vee)
      \]
      and, similarly, $|Y\setminus Z^\vee|>\kct(Z^\vee)$.

      We can compute a set $Z$ satisfying \eqref{eq:16} in polynomial
      time as follows: for every $Z_0\subseteq Y$ we compute a leftmost
minimum $(Z_0,Y\setminus Z_0)$ separation $Z$ until we find one with
$\kct(Z)<|Z_0|=|Y\cap Z|$ and $\kct(Z)<|Y\setminus Z_0|=|Y\setminus Z|$.
\end{proof}

\begin{lem}\label{lem:nwl2}
  Let $Z\subseteq \Act$ such that $\kct(Z)<\min\{|Y\cap Z|,|Y\setminus
  X|\}$. Then $X\cap Z,X\setminus Z$ is a partition of $X$ into two
  nonempty sets with $\kct(X\cap Z),\kct(X\setminus Z)<\kct(X)$.
\end{lem}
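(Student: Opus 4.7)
The plan is to deduce both width bounds by two applications of Lemma~\ref{lem:nwl1} and to derive the two nonemptiness claims from the fact that $Y$ is an independent set of the matroid $\CM(\lambda)$.

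First, I would apply Lemma~\ref{lem:nwl1} directly to the given set $Z$. Since $\kct(Z)<|Y\setminus Z|$ by hypothesis, the lemma immediately yields $\kct(X\cap Z)<\kct(X)$. Second, I would apply Lemma~\ref{lem:nwl1} to $\bar Z$ in place of $Z$. By the symmetry of $\kct$ we have $\kct(\bar Z)=\kct(Z)$, and $|Y\setminus\bar Z|=|Y\cap Z|$, so the hypothesis $\kct(Z)<|Y\cap Z|$ translates into the condition $\kct(\bar Z)<|Y\setminus\bar Z|$ required by Lemma~\ref{lem:nwl1}. We conclude $\kct(X\cap\bar Z)<\kct(X)$, that is, $\kct(X\setminus Z)<\kct(X)$.

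For nonemptiness, I would argue by contradiction using the independence of $Y$ in $\CM(\lambda)$. Suppose $X\cap Z=\emptyset$, so that $Z\subseteq\bar X$. Then $Y\cap Z\subseteq Z\subseteq\bar X$, and $Z$ itself witnesses that $\lambda(Y\cap Z)=\kct_{\min}(Y\cap Z,X)\le\kct(Z)$. Since $Y\cap Z$ is a subset of the independent set $Y$, the characterisation \eqref{eq:14} of independence gives $|Y\cap Z|\le\lambda(Y\cap Z)\le\kct(Z)$, contradicting the hypothesis $\kct(Z)<|Y\cap Z|$. The symmetric argument, applied to $\bar Z$ in place of $Z$, shows that $X\setminus Z=X\cap\bar Z\neq\emptyset$: if $X\subseteq Z$, then $\bar Z\subseteq\bar X$, so $Y\setminus Z\subseteq\bar Z\subseteq\bar X$ and $\bar Z$ separates $Y\setminus Z$ from $X$, yielding $|Y\setminus Z|\le\lambda(Y\setminus Z)\le\kct(\bar Z)=\kct(Z)$, again a contradiction.

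Putting the two ingredients together gives the required partition. There is no real obstacle here; the whole content is a careful bookkeeping application of Lemma~\ref{lem:nwl1} together with the matroid independence of $Y$, and everything else (submodularity, symmetry of $\kct$) is packaged in the preceding lemmas.
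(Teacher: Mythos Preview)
Your proof is correct and follows essentially the same approach as the paper: both apply Lemma~\ref{lem:nwl1} to $Z$ and to $\bar Z$ for the two width bounds, and both use the independence of $Y$ in $\CM(\lambda)$ (via \eqref{eq:14}) to rule out $Z\subseteq\bar X$ and $\bar Z\subseteq\bar X$ for nonemptiness. The paper's write-up is terser but the content is identical.
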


\begin{proof}
  As $Y$ is independent in $\CM(\lambda)$, for each $Z$ with $\kct(Z)<\min\{|Y\cap Z|,|Y\setminus
  X|\}$ we have $X\cap Z\neq\emptyset$, because for
$Z'\subseteq\bar X$ we have
$\kct(Z')\ge\lambda(Y\cap Z')\ge|Y\cap Z'|$ by \eqref{eq:14}. By
Lemma~\ref{lem:nwl1}, we have $\kct(X\cap Z)<\kct(X)$. By
symmetry, we also have $X\setminus Z=X\cap\bar Z\neq\emptyset$ and
$\kct(X\setminus Z)<\kct(X)$.
\end{proof}

\subsection{A Canonical Family of Partitions}
\label{sec:nwl2}

While so far, all our constructions work for general connectivity
functions, in this section we need to restrict our attention to the
cut rank function of a graph. In addition to Assumptions~\ref{ass:1},
which we still maintain, we make the following assumption.

\begin{ass}\label{ass:4}
  There is a graph $G$ such that $\kappa=\rho_G$.
\end{ass}

As in the previous subsection, let $X\subseteq \Act$ such that
$k_1:=\kct(X)\ge(3k+2)k$. Then 
\[
\rk(M_{X\expand,\bar X\expand})=k_1.
\]
For every $W\subseteq\Act=V(G)\contract$, we let $G[W\expand]$ be the induced subgraph of $G$
with vertex set $W\expand$, and we let
$\kct_W:=\rho_{G[W\expand]}\contract$, where of course we contract only
those $C_i$ that are contained in $W\expand$. Observe that for every
$Z\subseteq W$ we have
\[
\kct_W(Z)=\rk\big(M_{Z\expand,(W\setminus Z)\expand}\big).
\]

By $\bar X^{\underline\ell}$ we denote the set of all $\ell$-tuples
of elements of $\bar X$ with mutually distinct entries.
For every $\ell\ge 1$, we shall define an
equivalence relation $\equiv_X^\ell$ on $\bar X^{\underline\ell}$ with index (that
is, number of equivalence classes) bounded in terms of $k_1$ and
$\ell$ such that the following holds.

\begin{lem}\label{lem:nwl3}
  Let $\mathbf w=(w_1,\ldots,w_\ell),\mathbf w'=(w'_1,\ldots,w'_\ell)\in\bar
  X^{\underline\ell}$ such that $\mathbf w\equiv^\ell_X\mathbf w'$, and let
  $W:=\{w_1,\ldots,w_\ell\}$, $W':=\{w_1,\ldots,w_\ell\}$.

  Let $Z\subseteq X\cup W$ and $Z':=(X\cap Z)\cup\{w'_i\mid
  i\in[\ell]\text{ such that }w_i\in Z\}$. Then
  \[
  \kct_{X\cup W}(Z)=\kct_{X\cup W'}(Z').
  \]
\end{lem}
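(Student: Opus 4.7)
The plan is to define $\equiv^\ell_X$ through matrix invariants that record the isomorphism type of the relevant adjacency blocks, and then to prove the rank identity by a short block-matrix change-of-basis argument.

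A natural definition is: $\mathbf w \equiv^\ell_X \mathbf w'$ iff there exist invertible $\mathbb F_2$-linear maps $\alpha_i\colon\mathbb F_2^{w_i\expand}\to\mathbb F_2^{w'_i\expand}$ (for $i\in[\ell]$) satisfying, for all $i,j\in[\ell]$,
\[
M_{X\expand,\,w'_i\expand}\,\alpha_i \;=\; M_{X\expand,\,w_i\expand},\qquad
\alpha_i^{T}\,M_{w'_i\expand,\,w'_j\expand}\,\alpha_j \;=\; M_{w_i\expand,\,w_j\expand}.
\]
In general $|w_i\expand|\neq|w'_i\expand|$, so the definition is read after a canonical column-reduction preprocessing: replace each column block $M_{X\expand,w_i\expand}$ by a subselection forming a basis of its column span, and adjust the rows $M_{w_i\expand,\cdot}$ in the corresponding way. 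This preprocessing is lossless for every rank that will be computed because it amounts to right-multiplication by an invertible matrix.

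Given this definition, the lemma is proved by a direct block computation. Write $I := \{i\in[\ell] : w_i\in Z\}$, so that
\[
Z\expand = (X\cap Z)\expand \cup \bigcup_{i\in I} w_i\expand,\qquad ((X\cup W)\setminus Z)\expand = (X\setminus Z)\expand \cup \bigcup_{j\notin I} w_j\expand,
\]
with analogous decompositions on the primed side. Form the block-diagonal matrices
\[
P := \mathrm{diag}\bigl(\mathrm{Id}_{(X\cap Z)\expand},\;(\alpha_i)_{i\in I}\bigr),\qquad Q := \mathrm{diag}\bigl(\mathrm{Id}_{(X\setminus Z)\expand},\;(\alpha_j)_{j\notin I}\bigr).
\]
A block-by-block inspection of $P^{T}\,M_{Z'\expand,\,((X\cup W')\setminus Z')\expand}\,Q$ shows that its top-left block is $M_{(X\cap Z)\expand,(X\setminus Z)\expand}$ (which involves only $X$ and is unchanged), its top-right and bottom-left blocks match the corresponding blocks of the unprimed side by the first defining equation of $\equiv^\ell_X$ (the bottom-left after a transposition), and its bottom-right block matches by the second equation. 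Hence
\[
P^{T}\,M_{Z'\expand,\,((X\cup W')\setminus Z')\expand}\,Q \;=\; M_{Z\expand,\,((X\cup W)\setminus Z)\expand}.
\]
Since $P$ and $Q$ are invertible, both matrices have the same $\mathbb F_2$-rank, which is precisely the asserted equality $\kct_{X\cup W'}(Z')=\kct_{X\cup W}(Z)$.

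The main obstacle is the size mismatch between $w_i\expand$ and $w'_i\expand$: unless $w_i$ and $w'_i$ happen to be both original vertices in $B$, or both contracted nodes corresponding to $C_j$'s of the same cardinality, no literal bijection between the expansions exists, so the $\alpha_i$ cannot be taken as honest matrices between $\mathbb F_2^{w_i\expand}$ and $\mathbb F_2^{w'_i\expand}$ at face value. The column-reduction preprocessing built into the definition resolves this: after the reduction, each $M_{X\expand,w_i\expand}$ has full column rank of size at most $k$, so on the reduced matrices the $\alpha_i$ are honest $\mathbb F_2$-linear bijections and the block computation above applies verbatim. The only thing to check, which is routine, is that the preprocessing commutes with the formation of the submatrices $M_{Z\expand,\,((X\cup W)\setminus Z)\expand}$ in the sense that it preserves their rank for every partition $(I,\bar I)$ of $[\ell]$.
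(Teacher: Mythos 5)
Your block-matrix argument for the rank identity is sound, and in fact slightly more general than the paper's own: the paper imposes linear orders $\le_i,\le_i'$ so that the corresponding blocks are \emph{literally equal}, whereas you allow arbitrary invertible $\mathbb F_2$-linear maps $\alpha_i$ and conclude rank equality by conjugation with the invertible block-diagonal matrices $P,Q$. The paper's relation is the special case where each $\alpha_i$ is a permutation matrix, and the block-by-block verification you sketch (top-left unchanged, off-diagonal blocks via the first defining equation and its transpose, bottom-right via the second) is exactly the paper's computation.

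The genuine gap is the ``column-reduction preprocessing,'' which is unsound. You propose to delete from $w_j\expand$ those vertices whose columns are linearly dependent in $M_{X\expand,\,w_j\expand}$, and you assert this preserves every rank $\kct_{X\cup W}(Z)$. It does not. When $w_j\notin Z$, the matrix $M_{Z\expand,\,((X\cup W)\setminus Z)\expand}$ restricts the columns in $w_j\expand$ to the row set $Z\expand = (X\cap Z)\expand\cup\bigcup_{i\in I}w_i\expand$, which includes rows from $w_i\expand$ \emph{outside} $X\expand$. A column of $w_j\expand$ that is a linear combination of the others over the rows $X\expand$ need not be a linear combination over the rows $w_i\expand$, $i\in I$; deleting it can therefore lower the rank. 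Concretely, if $w_j\expand=\{v_1,v_2,v_3\}$ with $M_{X\expand,\{v_3\}}=M_{X\expand,\{v_1\}}+M_{X\expand,\{v_2\}}$ but $v_3$'s adjacencies to some $w_i\expand$, $i\in I$, break this relation, then deleting $v_3$ changes $\rk(M_{Z\expand,\ldots})$. (For the same reason the deletion is \emph{not} a right-multiplication by an invertible matrix; it is a genuine column deletion.) So the preprocessing fails precisely the ``routine'' compatibility check you deferred. The paper's preprocessing is different and correct: it deletes $v'\in C_i$ only when some $v\in C_i$ has \emph{identical} adjacencies to all of $\bar C_i$ (not merely to $X\expand$), which preserves the cut-rank function on the entire contracted sublattice and bounds each $|C_i|$ by $2^{k-1}$. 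After that, one does \emph{not} force the sizes $|w_i\expand|$ to agree; the equivalence relation simply declines to relate tuples with mismatched block sizes, and the index stays bounded because each block has size at most $2^{k-1}$. If you replace your preprocessing with the paper's (and accept that $\equiv^\ell_X$ only relates tuples with matching block sizes), your $\alpha_i$ argument goes through verbatim.
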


Let us first consider the special case that no sets are contracted,
that is, $m=-1$ (this is not a case that we actually need to consider,
but it is helpful to explain the ideas). Then $\Act=A=V(G)$ and $\kct=\kappa=\rho_G$.
For $\mathbf
w=(w_1,\ldots,w_\ell),\mathbf w'=(w'_1,\ldots,w'_\ell)\in\bar X^{\underline\ell}$,
we let $\mathbf w\equiv_X^\ell\mathbf w'$ if for all $i\in[\ell]$ the
columns of the matrix $M_{X,\bar X}$ indexed by $w_i$ and $w_i'$ are
equal and for all $i,j\in[\ell]$ we have $w_iw_j\in E(G)\iff
w_i'w_j'\in E(G)$. 
We can rephrase these two conditions as follows:
\begin{eroman}
  \item
    For all $i\in[\ell]$ the matrices $M_{X,\{w_i\}}$ and
    $M_{X,\{w_i'\}}$ are equal.
  \item
    For all $i\in[\ell]$ the matrices $M_{\{w_1,\ldots,w_{i-1},w_{i+1},\ldots,w_\ell\},\{w_i\}}$ and
    $M_{\{w'_1,\ldots,w'_{i-1},w'_{i+1},\ldots,w'_\ell\},\{w'_i\}}$ are equal.
\end{eroman}
Let $W:=\{w_1,\ldots,w_\ell\}$ and $W':=\{w'_1,\ldots,w'_\ell\}$.
For all $Z\subseteq X\cup W$ and $Z':=(X\cap Z)\cup \{w_i'\mid w_i\in
Z\}\subseteq X\cup
W'$, condition (i) implies that 
\begin{align}
  \label{eq:18}
  M_{X\cap Z,W\setminus Z}=M_{X\cap Z',W'\setminus Z'},\\
  \label{eq:19}
  M_{X\setminus Z,W\cap Z}=M_{X\setminus Z',W'\cap Z'}.
\intertext{Condition (ii) implies}
  \label{eq:20}
  M_{W\cap Z,W\setminus Z}=M_{W'\cap Z',W'\setminus Z'}.
\end{align}
Lemma~\ref{lem:nwl3} (in the special case)
follows easily, because 
\begin{equation}
  \label{eq:21}
  \kct_{X\cup W}(Z)=\rho_{G[X\cup W]}(Z)=\rk(M_{Z,(X\cup W)\setminus Z})=
\rk\left(
\begin{pmatrix}
  M_{X\cap Z,X\setminus Z}&M_{X\cap Z,W\setminus Z}\\
  M_{W\cap Z,X\setminus Z}&M_{W\cap Z,W\setminus Z}
\end{pmatrix}
\right)
\end{equation}
and
  \begin{equation}
    \label{eq:22}
    \kct_{X\cup W'}(Z')=
\rk\left(
\begin{pmatrix}
  M_{X\cap Z',X\setminus Z'}&M_{X\cap Z',W'\setminus Z'}\\
  M_{W'\cap Z',X\setminus Z'}&M_{W'\cap Z',W'\setminus Z'}
\end{pmatrix}
\right).
  \end{equation}
Since $X\cap Z'=X\cap Z$ and $X\setminus Z'=X\setminus Z$, equations
\eqref{eq:18}, \eqref{eq:19}, and \eqref{eq:20} imply that the matrices in
the rightmost terms in \eqref{eq:21}
and \eqref{eq:22} are equal.

\medskip
Let us now turn to the general case. The situation is more
difficult here because the sets $C_i$ and hence the matrices involved
in our argument above, in particular the matrices $M_{Z\expand,W\setminus Z\expand}$ may have unbounded size (in
terms of $k$ and $\ell$). The crucial observation is that we can bound
the size of the $C_i$ in terms of $k\le k_1$, exploiting the fact
that $\rho_G(C_i)=\kct(\{c_i\})< k$. To simplify the notation, we
assume that $B=\{c_{m+1},\ldots,c_n\}$ for some $n\ge m$, and for
$i=m+1,\ldots,n$, we let $C_i:=\{c_i\}$, so that actually
$A=\bigcup_{i=0}^nC_i$ and $\Act=\{c_0,\ldots,c_n\}$.

Suppose that for some $i\in[m]$ there are distinct vertices $v,v'\in C_i$ such that for
all $w\in V(G)\setminus C_i$ we have $vw\in E(G)\iff v'w\in E(G)$. Let
$G':=G\setminus\{v'\}$ and $C_i':=C_i\setminus\{v'\}$ and $C_j':=C_j$
for $j\neq i$. Then contracting $C_1',\ldots,C_m'$ in $G'$ has the
same effect as contracting $C_1,\ldots,C_m$
in $G$, that is,
\begin{align*}
\Act&=V(G)\contract_{C_1,\ldots,C_m}=V(G')\contract_{C_1',\ldots,C_m'},\\
\kct&=\rho_G\contract_{C_1,\ldots,C_m}=\rho_{G'}\contract_{C_1',\ldots,C_m'}.
\end{align*}
By repeating this construction we arrive at an induced subgraph
$G''\subseteq G$ and a partition $C_1'',\ldots,C_n''\subseteq V(G'')$,
where $C_i''\subseteq C_i$ for $0\le i\le m$ and $C_i''=C_i=\{c_i\}$
for $m+1\le i\le n$, such
that 
\begin{align*}
\Act&=V(G'')\contract_{C_1'',\ldots,C_m''},\\
\kct&=\rho_{G''}\contract_{C_1'',\ldots,C_m''},
\end{align*}
and for all $i\in[n]$ and distinct $v,v'\in C_i''$ there is a $w\in
V(G'')\setminus C_i''$ such that $vw\in E(G)\not\Leftrightarrow v'w\in
E(G)$.  Observe that the construction of $G''$ and
$C_1'',\ldots,C_m''$ from $G$ and $C_1,\ldots,C_m$ is canonical and
can be carried out in polynomial time. To simplify the notation, in
the following we assume that $G''=G$ and $C_i''=C_i$ for all
$i\in[n]$.

Let $i\in[n]$. As for all distinct $v,v'\in C_i$ there is a $w\in\bar C_i$ such that $vw\in E(G)\not\Leftrightarrow v'w\in
E(G)$, the rows of the matrix $M_{C_i,\bar C_i}$ are mutually distinct.
As $k>\kct(\{c_i\})=\rho_G(C_i)=\rk(M_{C_i,\bar C_i})$, the
matrix $M_{C_i,\bar C_i}$, being a matrix over $\mathbb F_2$, has at
most $2^{k-1}$ distinct rows. This implies that 
\[
|C_i|\le 2^{k-1}.
\]
Now we are ready to define the equivalence relation
$\equiv^\ell_X$. For this we let  $\mathbf
w=(w_1,\ldots,w_\ell),\mathbf w'=(w'_1,\ldots,w'_\ell)'\in\bar X^{\underline\ell}$. To
simplify the notation, for
every $i\in[\ell]$ we let $w_i\expand:=\{w_i\}\expand$, that is, if
$w_i=c_j$ then $w_i\expand=C_j$. Similarly, we let
$w'_i\expand:=\{w'_i\}\expand$. We let $\mathbf w\equiv_X^\ell\mathbf w'$ if
for every $i\in[\ell]$ there are linear orders $\le_i$ of $w_i\expand$
and $\le_i'$ of $w'_i\expand$ such that the following two conditions
are satisfied.
\begin{eroman}
  \item
    For all $i\in[\ell]$ the matrices $M_{X,w_i\expand}$ and
    $M_{X,w'_i\expand}$ are equal if the columns of the matrices are
    ordered according to the linear orders $\le_i$ and $\le_i'$, respectively.
  \item
    For all $i\in[\ell]$ the matrices $M_{w_1\expand\cup\ldots\cup
      w_{i-1}\expand\cup w_{i+1}\expand\cup \ldots \cup w_\ell\expand,w_i\expand}$
    and
    $M_{w'_1\expand\cup\ldots\cup
      w'_{i-1}\expand\cup w'_{i+1}\expand\cup\ldots\cup
      w'_\ell\expand,w'_i\expand}$ are equal if 
    \begin{enumerate}[label=(ii-\alph*)]
    \item the rows of the matrices are ordered lexicographically
      according to the natural order on the indices $j$ of the $w_j$, $w_j'$
      and, within the sets $w_j\expand,w'_j\expand$, according to the
      linear orders $\le_j,\le_j'$, respectively;
    \item columns of the matrices are
    ordered according to the linear orders $\le_i$, $\le_i'$, respectively.
    \end{enumerate}
\end{eroman}

\begin{proof}[Proof of Lemma~\ref{lem:nwl3}]
  We have
 \begin{align}
  \notag
  \kct_{X\cup W}(Z)&=\rho_{G[(X\cup
                     W)\expand]}(Z\expand)=\rk(M_{Z\expand,((X\cup W)\setminus Z)\expand})\\
   \label{eq:23}
   &=
\rk\left(
\begin{pmatrix}
  M_{(X\cap Z)\expand,(X\setminus Z)\expand}&M_{(X\cap Z)\expand,(W\setminus Z)\expand}\\
  M_{(W\cap Z)\expand,(X\setminus Z)\expand}&M_{(W\cap Z)\expand,(W\setminus Z)\expand}
\end{pmatrix}
\right).
\end{align}
Similarly, for $Z':=(X\cap Z)\cup\{w'_i\mid w_i\in Z\}$,
  \begin{equation}
    \label{eq:24}
    \kct_{X\cup W}(Z)=
\rk\left(
\begin{pmatrix}
  M_{(X\cap Z)\expand,(X\setminus Z)\expand}&M_{(X\cap Z)\expand,(W'\setminus Z')\expand}\\
  M_{(W'\cap Z')\expand,(X\setminus Z)\expand}&M_{(W'\cap Z')\expand,(W'\setminus Z')\expand}
\end{pmatrix}
\right),
  \end{equation}
  where we use the fact that $X\cap Z=X\cap Z'$ and $X\setminus
  Z=X\setminus Z'$.
  We may assume that in all these matrices the rows and columns indexed by entries
  of $W,W'$ are ordered lexicographically according to the indices of
  the $i$ and the orders $\le_i,\le_i'$ as in (ii-a) above.

  Observe that (i) implies
  \begin{align*}
    M_{(X\cap Z)\expand,(W\setminus
    Z)\expand}&=M_{(X\cap Z)\expand,(W'\setminus Z')\expand},\\
  M_{(W\cap Z)\expand,(X\setminus Z)\expand}&=M_{(W'\cap
    Z')\expand,(X\setminus Z)\expand}.
  \end{align*}
  Furthermore, (ii) implies that 
  \[
  M_{(W\cap
    Z)\expand,(W\setminus Z)\expand}=M_{(W'\cap
    Z')\expand,(W'\setminus Z')\expand}. 
  \]
  Thus the matrices in
the rightmost terms in \eqref{eq:23}
and \eqref{eq:24} are equal.
\end{proof}

The following lemma collects further useful properties of the
equivalence relation $\equiv_X^{\ell}$.

\begin{lem}\label{lem:nwl4}
  Let $\ell\ge 1$
  \begin{enumerate}
  \item Given $X$, the equivalence relation $\equiv_X^\ell$ is
    canonical.
  \item There is a $e_1=e_1(k_1,\ell)$ (independent of $\kct$) such
    that the index of $\equiv_X^\ell$ is at most $e_1(k_1,\ell)$.
  \item Given the graph $G$, the sets $C_0,\ldots,C_m$, and the set
    $X$, the equivalence relation $\equiv_X^\ell$ can be
      computed in polynomial time (for fixed $k_1$ and $\ell$).
  \end{enumerate}
\end{lem}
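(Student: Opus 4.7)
The plan is to address the three parts in order, relying on two facts established earlier in this subsection: after the canonical normalisation of $(G,C_0,\ldots,C_m)$ described just before the definition of $\equiv_X^\ell$, every set $C_i$ satisfies $|C_i|\le 2^{k-1}$; and the standing hypothesis of the section, $k_1\ge(3k+2)k$, forces $k\le\sqrt{k_1/3}$, so any quantity bounded in terms of $k$ is automatically bounded in terms of $k_1$.

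For part~(1), the definition of $\equiv_X^\ell$ refers only to the graph $G$, the sets $C_0,\ldots,C_m$, the set $X$, and $\mathbb F_2$-matrices read directly off the edge relation of $G$. The normalisation that produced $|C_i|\le 2^{k-1}$ is itself canonical, as noted in the text, and the linear orders $\le_i,\le_i'$ are existentially quantified so that no arbitrary choice enters the definition. Hence any isomorphism of the data $(G,C_0,\ldots,C_m,X)$ to a tuple $(G',C_0',\ldots,C_m',X')$ transports $\equiv_X^\ell$ to $\equiv_{X'}^\ell$ coordinate-wise, which is canonicity in our sense.

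For part~(2), once a linear order is fixed on each $w_i\expand$ the $\equiv_X^\ell$-class of $\mathbf w=(w_1,\ldots,w_\ell)$ is pinned down by two kinds of data: (a)~the matrices $M_{X\expand,w_i\expand}$ for $i\in[\ell]$, and (b)~the matrices $M_{w_j\expand,w_i\expand}$ for $i\ne j$. The entries in~(b) lie in $\mathbb F_2$ and there are at most $(\ell\cdot 2^{k-1})^2$ of them, giving at most $2^{(\ell\cdot 2^{k-1})^2}$ possibilities. In~(a) every column of $M_{X\expand,w_i\expand}$ is also a column of $M_{X\expand,\bar X\expand}$ and hence lies in a $k_1$-dimensional $\mathbb F_2$-subspace, so each column takes one of at most $2^{k_1}$ values; with at most $\ell\cdot 2^{k-1}$ columns in total this contributes at most $2^{k_1\ell 2^{k-1}}$ possibilities. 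Multiplying and absorbing the finite factor $((2^{k-1})!)^\ell$ for the choice of orderings yields a bound $e_1(k_1,\ell)$ on the index of $\equiv_X^\ell$ that depends only on $k_1$ and $\ell$, as required.

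For part~(3), I proceed by brute force. First compute the canonical normalisation of $(G,C_0,\ldots,C_m)$ in polynomial time. Then, for each of the $O(|A|^{2\ell})$ pairs $(\mathbf w,\mathbf w')\in\bar X^{\underline\ell}\times\bar X^{\underline\ell}$, enumerate the $((2^{k-1})!)^{2\ell}$ possible tuples of linear orders $(\le_i,\le_i')_{i\in[\ell]}$; for each such choice, verify conditions (i) and (ii) of the definition by direct comparison of the relevant $\mathbb F_2$-matrices, which are of polynomial size; and declare $\mathbf w\equiv_X^\ell\mathbf w'$ iff some choice succeeds. The only point requiring care (and the main technical obstacle) is condition~(ii-a), which couples the chosen orderings on the different $w_j\expand$ via the lexicographic row order in the off-diagonal matrices, but because the full tuple of orderings lives in a set of size constant in $k_1$ and $\ell$, brute enumeration handles this coupling uniformly within the claimed polynomial bound.
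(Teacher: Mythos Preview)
Your proof is correct and follows essentially the same approach as the paper. The paper's own argument is extremely terse---it declares (1) and (3) ``obvious from the construction'' and reduces (2) to the same two observations you use (each $\bigcup_i w_i\expand$ has at most $\ell\cdot 2^{k-1}$ elements, and $M_{X\expand,\bar X\expand}$ has at most $2^{k_1}$ distinct columns)---so your version is simply a more explicit unpacking of the same count; the extra factor $((2^{k-1})!)^\ell$ you include for the orderings is harmless but unnecessary, since a single choice of orderings already yields an injection from equivalence classes to data tuples.
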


\begin{proof}
  (1) and (3) are obvious from the construction.

  (2) follows easily from the following two observations.
  \begin{itemize}
  \item For every $\mathbf w=(w_1,\ldots,w_\ell)\in\bar X^{\ell}$ the set
    $\bigcup_{i=1}^\ell w_i\expand$ has at most $\ell\cdot2^{k-1}\le
  \ell\cdot2^{k_1-1}$ elements.
\item The matrix $M_{X\expand,\bar X\expand}$ has rank $k_1$ and thus at
  most $2^{k_1}$ different columns.\qedhere
  \end{itemize}
\end{proof}

We are now ready to prove the main result of this section. Let 
\begin{equation}
  \label{eq:25}
  e_2(k_1):=\max\{e_1(k_1,\ell)\mid 1\le\ell\le 2^{k_1}\},
\end{equation}
where $e_1$ is chosen according to Lemma~\ref{lem:nwl4}.

\begin{lem}\label{lem:nwl}
  Let $k_1\ge (3k+2)k$ and $e_2=e_2(k_1)$. Then for every
  $X\subseteq \Act$ of order $\kct(X)=k_1$ there is a canonical family of
  $e\le e_2$ partitions $X^{(i)}_1,X^{(i)}_2$ of $X$, for $1\le i\le e$, such that $\kct(X^{(i)}_j)<k$ for
  $1\le i\le e$, $j=1,2$.

  Furthermore, given $X$ and oracle access to $\kct$, the family of
  partitions can be computed in polynomial time (for fixed $k,k_1$).
\end{lem}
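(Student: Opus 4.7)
The plan is to combine the two threads developed in this section: (a) the existence result from the lemmas leading up to Lemma~\ref{lem:nwl2}, which shows that from an independent set $Y\subseteq\bar X\setminus\{c_0\}$ of size $3k+1$ in the matroid $\CM(\lambda)$ together with a suitable low-order separator $Z$ one obtains a partition of $X$ into two nonempty parts each of $\kct$-order strictly less than $k_1$; and (b) the equivalence relation $\equiv_X^{\ell}$ of Section~\ref{sec:nwl2}, which is canonical (Lemma~\ref{lem:nwl4}(1)), has bounded index (Lemma~\ref{lem:nwl4}(2)), and preserves the $\kct_{X\cup W}$-rank structure (Lemma~\ref{lem:nwl3}). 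Since the isolated existence result requires arbitrary choices (which independent set $Y$, which separator $Z$), the only hope for canonicity is to take \emph{all} partitions arising from all admissible choices and then exploit the bounded index of $\equiv_X^{\ell}$ to obtain a bounded-size family.

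Concretely, I would set $\ell := 3k+1$ and consider the set $\CW$ of all tuples $\mathbf w = (w_1,\ldots,w_\ell)\in(\bar X\setminus\{c_0\})^{\underline\ell}$ whose underlying set is independent in $\CM(\lambda)$. For each $\mathbf w\in\CW$ and each function $\pi\colon[\ell]\to\{+,-\}$ with both fibres nonempty, let $W^\pi_\pm := \{w_i : \pi(i)=\pm\}$ and let $Z(\mathbf w,\pi)$ be the leftmost minimum $(W^\pi_+,W^\pi_-)$-separation in $\Act$. Call $(\mathbf w,\pi)$ \emph{admissible} if $\kct(Z(\mathbf w,\pi))<\min(|W^\pi_+|,|W^\pi_-|)$, and in that case record the ordered partition $P(\mathbf w,\pi):=(X\cap Z(\mathbf w,\pi),X\setminus Z(\mathbf w,\pi))$. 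By Lemma~\ref{lem:nwl2} this is a partition of $X$ into two nonempty parts each of $\kct$-rank strictly less than $k_1$. The canonical family $\mathcal F$ is then the (multi)set of all $P(\mathbf w,\pi)$ over admissible pairs; nonemptiness follows from the lemma immediately preceding Lemma~\ref{lem:nwl2}, and canonicity is immediate because $\CM(\lambda)$ and the leftmost-minimum separator operator are canonical in $(\kct,X)$ and we range over \emph{all} admissible pairs.

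For the cardinality bound, the key point is that by Lemma~\ref{lem:nwl3}, if $\mathbf w\equiv_X^\ell \mathbf w'$ via the identity bijection $i\mapsto i$ on indices, then for every $\pi$ the rank of every $(W^\pi_+,W^\pi_-)$-separator $S\subseteq X\cup W$ is preserved under the natural translation $S\mapsto (S\cap X)\cup\{w'_i : w_i\in S\}$; combined with the uniqueness of leftmost minimum separators (stated at the start of Section~\ref{sec:tangles}), this forces $X\cap Z(\mathbf w,\pi)=X\cap Z(\mathbf w',\pi)$, so $P(\mathbf w,\pi)$ depends only on the $\equiv_X^\ell$-class of $\mathbf w$ and on $\pi$. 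Absorbing the $\pi$-choice into the tuple (for instance by padding $\mathbf w$ with a canonical encoding of $\pi$ to obtain a longer tuple of length $\ell'\le 2^{k_1}$, using that $|C_i|\le 2^{k-1}$ leaves ample room) then yields $|\mathcal F|\le e_1(k_1,\ell')\le e_2(k_1)$ by the definition of $e_2$ in~\eqref{eq:25}. Polynomial-time computability follows from Lemma~\ref{lem:nwl4}(3), the polynomial-time algorithms for leftmost minimum separations via submodular minimisation, and the bounded-size enumeration of the relevant tuples.

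The main obstacle will be the step showing that the leftmost minimum $(W^\pi_+,W^\pi_-)$-separation is itself invariant under $\equiv_X^\ell$; this requires a careful argument that the uniqueness characterisation defining \emph{leftmost} (namely containment in every minimum separator) transports across the rank-preserving correspondence of Lemma~\ref{lem:nwl3}, rather than merely the cardinality of cuts. Once this invariance is established, the bound and the canonicity both follow routinely from Lemma~\ref{lem:nwl4}.
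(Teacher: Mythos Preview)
There is a genuine gap in the boundedness argument. You compute the leftmost minimum $(W^\pi_+,W^\pi_-)$-separation $Z(\mathbf w,\pi)$ in the full ground set $\Act$ with respect to $\kct$, but Lemma~\ref{lem:nwl3} only asserts $\kct_{X\cup W}(Z)=\kct_{X\cup W'}(Z')$ for sets $Z\subseteq X\cup W$. The minimum (and a fortiori the leftmost minimum) of $\kct$ over all $(W^\pi_+,W^\pi_-)$-separations in $\Act$ ranges over sets that may contain arbitrary elements of $\bar X\setminus W$, and the values $\kct(S)$ on such sets are not controlled by the $\equiv_X^\ell$-class of $\mathbf w$. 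So the claimed invariance $X\cap Z(\mathbf w,\pi)=X\cap Z(\mathbf w',\pi)$ does not follow. The same problem already afflicts independence in $\CM(\lambda)$: since $\lambda(Y)=\kct_{\min}(Y,X)$ is a minimum over separators living in all of $\Act$, two $\equiv_X^\ell$-equivalent tuples need not even have the same independence status, so your set $\CW$ is not a union of $\equiv_X^\ell$-classes. Your acknowledged ``main obstacle'' is therefore more fundamental than the leftmost property---it is the cut values themselves that fail to transfer.

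The paper sidesteps this by working entirely inside $X\cup W$ and arranging that this is lossless on $X$. It takes $\ell$ to be the number of distinct columns of $M_{X\expand,\bar X\expand}$ (so $\ell\le 2^{k_1}$, matching the range in the definition~\eqref{eq:25} of $e_2$) and restricts attention to \emph{complete} tuples, i.e., tuples whose entry set $W$ already witnesses every column; completeness yields $\kct_{X\cup W}(Z)=\kct(Z)$ for every $Z\subseteq X$. For each $\equiv_X^\ell$-class of complete tuples it then applies the entire construction of Section~\ref{sec:nwl1} to the restricted connectivity function $\kct^{(i)}:=\kct_{X\cup W^{(i)}}$: the matroid, the independent set $Y^{(i)}$, and the leftmost minimum separation are all computed relative to $\kct^{(i)}$, and the linear order carried by the tuple is used to make the choices of $Y^{(i)}$ and the splitting $Z_0^{(i)}$ lexicographically. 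Now Lemma~\ref{lem:nwl3} applies directly and shows the outcome depends only on the class, giving one partition per class and hence at most $e_2(k_1)$ partitions. The two ingredients missing from your approach are thus (a) completeness of the tuples, so that the restricted connectivity function agrees with $\kct$ on subsets of $X$, and (b) computing the matroid and the separations inside $X\cup W$ rather than in $\Act$, so that Lemma~\ref{lem:nwl3} is actually applicable.
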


\begin{proof}
  Let $\ell$ be the number of distinct columns of $M_{X\expand,\bar
    X\expand}$. Observe that $k\le\ell\le2^{k_1}$.  We call a set
  $W\subseteq\bar X$ \emph{complete} if all columns of the matrix
  $M_{X\expand,\bar X\expand}$ already appear in the matrix
  $M_{X\expand,W\expand}$. A tuple $\mathbf w\in\bar X^{\underline\ell}$
  is \emph{complete} if the set of its entries is complete. Observe
  that if $\mathbf w$ is complete and $\mathbf w'\equiv_X^\ell\mathbf w$, then
  $\mathbf w'$ is complete as well.

  Let $\mathbf w^{(1)},\ldots,\mathbf w^{(e)}$ be a system of representatives of the
  $\equiv_X^\ell$-equivalence classes consisting of complete tuples.
  Note that $e\le e_2(k_1)$. For $1\le i\le e$, let $W^{(i)}$ be the set of
  entries of $\mathbf w^{(i)}$ and $\Act^{(i)}:=X\cup W^{(i)}$ and
  $\kct^{(i)}:=\kct_{\Act^{(i)}}$. By Lemma~\ref{lem:nwl3}, up to
  renaming of the elements, the connectivity function $\kct^{(i)}$
  only depends on the equivalence class of $\mathbf w^{(i)}$ and not on
  the choice of the specific tuple. Thus, up to renaming, the family
  of connectivity function $\kct^{(i)}$ is canonical.

  \begin{claim}[1]
    Let $i\in[e]$.
    For each $Z\subseteq X$, we have $\kct^{(i)}(Z)=\kct(Z)$.

    \proof
    This follows from the completeness of $W^{(i)}$.
    \uend
  \end{claim}

  In particular, we have $\kct^{(i)}(X)=\kct(X)=k_1$.
  Now we apply the construction of Section~\ref{sec:nwl1} to $\kct^{(i)}$.
  We define $\lambda^{(i)}:2^{W^{(i)}}\to\NN$ by
  $\lambda^{(i)}(Y):=\kct^{(i)}_{\min}(Y,X)$ and let $\CM(\lambda^{(i)})$ be the
  matroid induced by $\lambda^{(i)}$. Note that the order of the
  entries of the tuple $\mathbf
  w^{(i)}$ gives us a linear order $\le^{(i)}$ on $W^{(i)}$. We let
  $Y^{(i)}$ be
  the lexicographically first subset of $W^{(i)}\setminus\{c_0\}$ of size $3k+1$
  that is independent in $\CM(\lambda^{(i)})$.
  We let $Z_0^{(i)}$ be the lexicographically first subset of
  $Y^{(i)}$ such that for the leftmost minimum
  $(Z_0^{(i)},Y^{(i)}\setminus Z_0^{(i)})$-separation $Z^{(i)}$ we have
  \begin{equation}
    \label{eq:26}
    \kct(Z^{(i)})<\min\{|Y^{(i)}\cap Z^{(i)}|,|Y^{(i)}\setminus Z^{(i)}|\},
  \end{equation}
  and we let $X^{(i)}_1:=X\cap Z^{(i)}$ and $X^{(i)}_2:=X\setminus
  Z^{(i)}$. By Lemma~\ref{lem:nwl2} we have
  $\kct^{(i)}(X^{(i)}_j)<\kct^{(i)}(X)$, and by Claim~1 this
  implies $\kct (X_j)<\kct(X)$.

  Clearly, the construction is canonical and can be carried out in
  polynomial time.
\end{proof}

\begin{cor}\label{cor:nwl}
  For every $k_1$ there is a
  $c_1=c_1(k,k_1)$ such that for every $X\subseteq \Act$ of order
  $\kct(X)=k_1$ there is a canonical partial treelike decomposition
  $(T_X,\gamma_X)$ with the following properties.
  \begin{eroman}
    \item $T_X$ is a directed tree.
    \item $\gamma_X(r)=X$ for the root $r$ of $T_X$.
    \item $\bigcup_{t\in L(T_X)}\gamma_X(t)=X$ (but the sets
      $\gamma_X(t)$ for the leaves $t\in L(T_X)$ are not necessarily disjoint).
    \item $\kct(\gamma_X(t))\le k_1$ for all $t\in V(T_X)$.
    \item $\kct(\gamma_X(t))<(3k+2)k$ for all leaves $t\in L(T_X)$.
    \item $T_X$ has at most $c_1(k,k_1)$ nodes.
 \end{eroman}
  Furthermore, given $X$ and oracle access to $\kct$, the
  decomposition $(T_X,\gamma_X)$ can be computed in polynomial time (for fixed $k,k_1$).
\end{cor}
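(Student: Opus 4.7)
The plan is to induct on $k_1 = \kct(X)$, using Lemma~\ref{lem:nwl} as the splitting tool. For the base case $k_1 < (3k+2)k$, I let $T_X$ consist of a single root $r$ with $\gamma_X(r) := X$; all six properties then hold trivially, with $c_1(k,k_1) := 1$.

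For the inductive step $k_1 \ge (3k+2)k$, I invoke Lemma~\ref{lem:nwl} to obtain a canonical family of $e \le e_2(k_1)$ partitions $\{X_1^{(i)}, X_2^{(i)}\}$ of $X$ with $\kct(X_j^{(i)}) < k_1$, and apply the inductive hypothesis to each $X_j^{(i)}$ to obtain partial treelike decompositions $(T_j^{(i)}, \gamma_j^{(i)})$. I then assemble $T_X$ in three layers: a fresh root $r$ with $\gamma_X(r) := X$; below $r$, one ``partition-choice'' node $t_i$ per partition, each with $\gamma_X(t_i) := X$; and below each $t_i$, two children $u_i^1, u_i^2$ with $\gamma_X(u_i^j) := X_j^{(i)}$, gluing the subtree $T_j^{(i)}$ at $u_i^j$ by identifying its root with $u_i^j$. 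The intermediate duplicated-cone layer is what allows the overlapping parts of different partitions to coexist without violating axiom~\ref{li:tl3}: the siblings $t_i, t_{i'}$ share the common cone $X$, while the siblings $u_i^1, u_i^2$ are disjoint because $\{X_1^{(i)}, X_2^{(i)}\}$ is a partition.

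Verification of the axioms and properties is then routine: \ref{li:tl1}--\ref{li:tl3} follow from the structural remarks above together with the inductive hypothesis on each $T_j^{(i)}$; (i) and (ii) hold by construction; (iii) holds because, for any single $i$, the leaves in the subtree below $t_i$ already cover $X_1^{(i)} \cup X_2^{(i)} = X$; (iv) follows because the newly introduced cones have order either $k_1$ or $\kct(X_j^{(i)}) < k_1$, and inductively all cones inside $T_j^{(i)}$ have order at most $\kct(X_j^{(i)})$; (v) follows by induction, since the leaves of $T_X$ coincide with the leaves of the $T_j^{(i)}$; and (vi) holds by setting $c_1(k, k_1) := 1 + 3 e_2(k_1) + 2 e_2(k_1) \cdot c_1(k, k_1 - 1)$, a well-defined recursion in $k_1$. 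Canonicity propagates because the partition family from Lemma~\ref{lem:nwl} is already canonical and the assembly is purely combinatorial, and polynomial running time propagates because the recursion depth is at most $k_1$ and the branching factor $2 e_2(k_1)$ is bounded in $k$ and $k_1$. The only subtlety I anticipate is precisely the need for the intermediate layer $t_i$: without it, the $e$ overlapping partitions would appear as grandchildren of $r$ and directly violate~\ref{li:tl3}, so that extra layer, though combinatorially trivial, is what makes the construction work.
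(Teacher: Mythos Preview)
Your proposal is correct and follows essentially the same construction as the paper: a root with cone $X$, one intermediate child $t_i$ per partition (also with cone $X$) to satisfy \ref{li:tl3}, and the two recursively built subtrees attached below each $t_i$. The only differences are cosmetic---you name the identified subtree roots $u_i^j$ explicitly and use a slightly looser node-count recursion---and your discussion of why the intermediate layer is needed is more explicit than the paper's.
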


\begin{proof}
  We define $c_1$ inductively by letting $c_1(k,k_1)=1$ for all $k_1<(3k+2)k$
  and 
  \[
  c_1(k,k_1)=4e_2(k_1)\cdot c_1(k,k_1-1)
  \]
  for every $k_1\ge (3k+2)k$.

  We construct the decomposition $T_X$ recursively as follows: we start with a
  root $r$ and let $\gamma_X(r):=X$. If $\kct(X)<(3k+2)k$, this is
  the whole decomposition. So suppose that $\kct(X)\ge (3k+2)k$. Then we
  choose a canonical family of $e\le e_1(k_1)$ partitions $X^{(i)}_1,X^{(i)}_2$
  of $X$ according to Lemma~\ref{lem:nwl}. For
  every $i\in[e]$ and $j=1,2$, let $(T^{(i)}_j,\gamma^{(i)}_j)$ be the
  recursively constructed decomposition for $X^{(i)}_j$ (note that
  $\kct(X^{(i)}_j)<\kct(X)=k_1$). 

  To construct $(T_X,\gamma_X)$
  attach children $t^{(1)},\ldots,t^{(e)}$ to $r$ and let
  $\gamma_X(t^{(i)}):=X$. For every $i$, we attach the trees
  $T^{(i)}_1,T^{(i)}_2$ to
  $t^{(i)}$ such that the root of $T^{(i)}_j$ becomes a child of $t^{(i)}$ in
  $T_X$. For all $t\in V(T^{(i)}_j)$, we let
  $\gamma_X(t):=\gamma^{(i)}_j(t)$.

It is easy to see that this construction has the desired properties.
\end{proof}

\section{Constructing Canonical Treelike Decompositions}
\label{sec:ctl}

For the following lemma, we make Assumptions~\ref{ass:1}, \ref{ass:2},
and \ref{ass:4}.

\begin{lem}\label{lem:nodedec}
  There are $a_2=a_2(k)$ and $g_1=g_1(k)$
  such that there is a treelike decomposition $(T,\gamma)$
  of $\kct$ with the following properties.
  \begin{eroman}
  \item $T$ is a directed tree.
  \item $T$ has at most $n^{g_1}$ nodes, where $n:=|A|=|V(G)|$.
  \item $(T,\gamma)$ has width at most $a_2$.
  \item $\beta(t)=\{c_0\}$ for the root $r$ of $T$.
  \item $|\gamma(t)|=1$ for all leaves $t\in L(T)$.
  \end{eroman}
  Furthermore, given $\kappa$ and $C_0,\ldots,C_m$ the construction of $(T,\gamma)$ is
  canonical and can be carried out by a polynomial time algorithm (for
  fixed $k$).
\end{lem}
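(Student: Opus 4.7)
The plan is to construct $(T,\gamma)$ by a recursive procedure that stitches together applications of Lemma~\ref{lem:1tan} and Corollary~\ref{cor:nwl}. First I set up the top of the tree following the strategy announced at the start of Section~\ref{sec:1tan}. The tangle $\CT_0$ is canonical given $\kappa$ and $C_0,\ldots,C_m$ (it is the unique maximal $\kappa$-tangle witnessing Assumption~\ref{ass:2}(1)), so I extract it from the comprehensive tangle data structure given by Theorem~\ref{theo:ds}. Setting $k_0:=\ord(\CT_0)$, I apply Lemma~\ref{lem:triple-cover} to obtain a triple cover $Q$ of $\CT_0$ of size at most $\theta(3k_0-2)$, and form $Q^{\vee}\subseteq\Act$. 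I then create a root $r$ with $\gamma(r):=\Act$ together with a single child whose cone is $\Act\setminus\{c_0\}$ so that $\beta(r)=\{c_0\}$ (securing (iv)); below that child I attach one intermediate node whose bag absorbs all of $Q^{\vee}$ and whose cone equals $\Act\setminus(Q^{\vee}\cup\{c_0\})$. Every element of $Q^{\vee}\cup\{c_0\}$ has $\kct$-value at most $k$, so the widths of these top nodes are bounded by a function of $k$. All remaining work is performed by a recursive procedure $\mathrm{Decompose}(X)$ invoked with $X:=\Act\setminus(Q^{\vee}\cup\{c_0\})$.

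The procedure $\mathrm{Decompose}(X)$ returns a treelike decomposition rooted at a node with cone $X$. If $|X|\le 1$, it returns a single leaf, yielding (v). Otherwise let $k_1:=\kct(X)$. In the \emph{well-linked case} $k_1<(3k+2)k$, I invoke Lemma~\ref{lem:1tan}. In its case~(i), the output is a canonical partition $X_1,\ldots,X_b$; the current node receives $b\le b_1(k,k_1)$ children obtained as the recursive decompositions of the $X_j$, and all child cones have $\kct\le a_1(k,k_1)$, which bounds the width at the current node. In its case~(ii), the output is a canonical partition $X_0,X_1,\ldots,X_n$; I create children realising the recursive decompositions of all of $X_0,X_1,\ldots,X_n$, and property~(b) of the lemma bounds $\kct$ on every union of child cones by $2k_1\le 2(3k+2)k$, which again bounds the width. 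In the \emph{non-well-linked case} $k_1\ge(3k+2)k$ I invoke Corollary~\ref{cor:nwl} to obtain a canonical partial treelike decomposition $(T_X,\gamma_X)$ of width at most $k_1$, with at most $c_1(k,k_1)$ nodes, and whose leaves all have order strictly less than $(3k+2)k$. I attach $T_X$ below the current node and recursively replace every leaf $\ell$ of $T_X$ by $\mathrm{Decompose}(\gamma_X(\ell))$. Axiom~\ref{li:tl3} is preserved at every step, because siblings produced by Lemma~\ref{lem:1tan} have pairwise disjoint cones and siblings produced inside a Corollary~\ref{cor:nwl} step are either equal or disjoint by its own construction.

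The width bound $a_2(k)$ is obtained as the maximum of $a_1(k,k_1)$, $2(3k+2)k$, and $k_1$ over all orders $k_1$ that arise, each of which is itself bounded in terms of $k$ alone. Canonicity propagates at every step because $\CT_0$, $Q^{\vee}$, Lemma~\ref{lem:1tan}, and Corollary~\ref{cor:nwl} are all canonical, and composition of canonical constructions is canonical. The main obstacle I expect is the \textbf{polynomial size bound}~(ii). In case~(i) of Lemma~\ref{lem:1tan}, in the children $X_1,\ldots,X_n$ of its case~(ii), and at the leaves of Corollary~\ref{cor:nwl} (which are immediately fed into the well-linked case), the relevant set shrinks by a factor $1-1/f_1(k,k_1)$, giving depth $O(f_1\log n)$ along such branches. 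The delicate piece is the child $X_0$ in case~(ii), whose size might drop by only one element; to keep the node count polynomial I would argue that the sets $X_i$ ($i\ge 1$) produced along a consecutive $X_0$-chain are pairwise disjoint subsets of the original $X$, so the length of such a chain together with the sizes of its side subtrees is bounded by $n$ times the per-level branching. Combining this with the $O(\log n)$ depth of the constant-factor-shrinking branches, and using that each Corollary~\ref{cor:nwl} invocation contributes only a constant multiplicative factor $c_1(k,k_1)$, yields a global bound of the form $|V(T)|\le n^{g_1(k)}$. Polynomial running time follows because every building block (tangle data structure queries, Lemma~\ref{lem:triple-cover}, Lemma~\ref{lem:1tan}, Corollary~\ref{cor:nwl}) is polynomial-time and only polynomially many of them are invoked.
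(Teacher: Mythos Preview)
Your overall recursive architecture is essentially that of the paper, but there is a genuine gap concerning canonicity. You write ``I apply Lemma~\ref{lem:triple-cover} to obtain a triple cover $Q$'' and later claim ``$Q^{\vee}$ \ldots\ [is] canonical''. Lemma~\ref{lem:triple-cover} is a pure existence statement; its proof makes arbitrary choices (the elements $y_X\in Y_1\cap Y_2\cap Y_3$) and does not produce a canonical triple cover. Since the bag just below the root is $Q^{\vee}$, different choices of $Q$ lead to genuinely different decompositions, so an isomorphism of inputs need not carry your output to the output on the isomorphic input. The paper fixes this not by finding a single canonical $Q$ (there is no obvious one of bounded size) but by \emph{branching over all} triple covers of $\CT_0$ of size at most $\theta(3k-2)$: for each such $Q$ it creates a pair of nodes $s_Q,t_Q$ below the root with $\gamma(s_Q)=\Act\setminus\{c_0\}$ and $\gamma(t_Q)=\Act\setminus(Q^{\vee}\cup\{c_0\})$, and runs the recursive decomposition below each $t_Q$. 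The family of all such $Q$ is canonical, and there are at most $n^{\theta(3k-2)}$ of them, so this costs only a polynomial factor in size and running time. Without this branching your construction is not canonical and the lemma is not proved.

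A secondary remark: your informal size argument for the $X_0$-chains in case~(ii) of Lemma~\ref{lem:1tan} is on the right track but needs care, since Corollary~\ref{cor:nwl} duplicates cones (up to $e_2$ children share the same cone $X$), so leaves of $T$ do not partition $\Act$ and a straight ``disjointness bounds the chain length by $n$'' does not immediately control the total node count. The paper handles this by first contracting maximal ``small subtrees'' (chains of small nodes linked through heavy children), which yields a tree in which every child of a small node has weight at most $q\cdot w(t)$ with $\sum_i w(u_i)=w(t)$, and then applying a Jensen-type inequality to obtain $|T|\le 1+bn^c$ for suitable constants $b,c$ depending only on $k$.
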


\begin{proof}
  We build the tree $T$ recursively in a top-down fashion. We start
  the construction of $T$ with the creation of a root $r$. We let
  $\gamma(r):=\Act$. For every triple cover $Q$ of the tangle $\CT_0$
  of size $|Q|\le \theta(3k-2)$ we create two fresh nodes $s_Q,t_Q$
  and edges $(r,s_Q)$ and $(s_Q,t_Q)$. We let
  \begin{align}
    \label{eq:27}
    \gamma(s_Q)&:=\Act\setminus\{c_0\},\\
    \label{eq:28}
    \gamma(t_Q)&:=\Act\setminus(Q^\vee\cup\{c_0\}),
  \end{align}
  where
  \[
  Q^\vee:=(B\cap Q)\cup\{c_i\mid 0\le i\le m, C_i\cap
  Q\neq\emptyset\}.
  \]
  (recall Assumptions~\ref{ass:3}).  We call the nodes $s_Q$
  \emph{cover nodes}.

  Now suppose that $t$ is some leaf of the current tree with
  $\gamma(t)=X$ for some $X$ of size at least $2$. Then some ancestor
  is the child $t_Q$ of a cover node $s_Q$. We let $Q$ be the triple
  cover associated with this node and $Q^{\vee}$ its contraction. Thus
  now we may also make Assumption~\ref{ass:3} and apply the results of
  Section~\ref{sec:1tan}. Note
  that $X\subseteq \gamma(t_Q)=\Act\setminus (Q^{\vee}\cup\{c_0\})$.

  If $\kct(X)<(3k+2)k$, we apply Lemma~\ref{lem:1tan} and obtain a
  partition $X_1,\ldots,X_p$ of $X$. We attach fresh nodes
  $t_1,\ldots,t_p$ as children to $X$ and let $\gamma(t_i):=X_i$. In
  this case, we call $t$ a \emph{small node}. Depending on whether we apply
  Lemma~\ref{lem:1tan}(i) or (ii), we call $t$ a small node of
  \emph{type (i)} or \emph{(ii)}.

  If $\kct(X)\ge (3k+2)k$, we apply
  Corollary~\ref{cor:nwl} and obtain a partial treelike decomposition
  $(T_X,\gamma_X)$. We assume $T_X$ to be disjoint from the tree we
  have constructed so far. We extend our tree by adding the tree $T_X$
  and identifying its root with $t$. For all $u\in V(T_X)$, we let
  $\gamma(u):=\gamma_X(u)$. In this case, we call $t$ a \emph{big
    node}.

  The construction stops once $|\gamma(t)|=1$ for all leaves.
  This guarantees property (v). 

  Property (i) is immediate from the construction. 
  Property (iv)
  follows from 
  \eqref{eq:27}, because the children of the root $r$ are the cover
  nodes $s_Q$. 
  The ``bounded width'' property (iii) follows from Lemma~\ref{lem:1tan} and
  Corollary~\ref{lem:nwl}. This is clear for nodes whose degree is bounded in terms
  of $k$. In particular, this is the case for all small nodes of type (i) and all big
  nodes and all nodes that are introduced as part of a tree $T_X$ at
  a big node $t$ with $\gamma(t)=X$, because by
  Corollary~\ref{lem:nwl}(v) the tree $T_X$ only has a bounded number
  of nodes. For small nodes of type (ii), it follows from
  Lemma~\ref{lem:nwl}(ii) that the width is bounded.

  It remains to prove (ii), that is, bound the size of the tree. As
  there are only polynomially many (at most $n^{\theta(3k-2)+1}$) triple
  covers $Q$ of $\CT_0$ of size $|Q|\le \theta(3k-2)$, it suffices to prove a polynomial bound on the
  size of the trees $T^{(Q)}$ rooted in the nodes $t_Q$, the grandchildren of
  the root. So let us fix $Q$ and only consider the tree $T':=T^{(Q)}$ from
  now on. 

  Next, we discard all internal nodes of the trees $T_X$
  associated with big nodes and draw direct edges from the root of
  each $T_X$ to its leaves; this reduces the size of the tree only by a constant
  factor. Let $T''$ be the resulting tree. For each node $t\in V(T'')$
  we define its \emph{weight} to be 
  \[
  w(t):=|\gamma(t)|
  \]
  Let 
  \[
  a_3:=\max\{a_1(k,(3k+2)k-1),\;2(3k+2)k-2,\;(\theta(3k-2)+1)k\}.
  \]

  \begin{claim}
    For all nodes $t\in V(T'')$ we have $\kct(\gamma(t))\le a_3.$

    \proof
    Let $t\in V(T'')$. If $t=t_Q$ is the root of $T''$, then
    \[
    |\Act\setminus \gamma(t)|=|Q\cup\{c_0\}|\le\theta(3k-2)+1
    \]
    and thus
    \[
    \kct(\gamma(t))=\kct(\Act\setminus \gamma(t))\le (\theta(3k-2)+1)k.
    \]
    Suppose next that $t$ is a child of a small node $s$. Then
    $\kct(\gamma(s))\le (3k+2)k-1$. By Lemma~\ref{lem:1tan}, if
    $s$ is a small node of type (i) then $\kct(\gamma(t))\le
    a_1(k,(3k+1)k-1)$ and if
    $s$ is a small node of type (ii) then $\kct(\gamma(t))\le
    2(3k+1)k-2$.
    \uend
  \end{claim}

  Let 
  \begin{align*}
    b&:=\max\{b_1(k,a_3),c_1(k,a_3)\},\\
    q&:=(1-1/f_1(k,a_3)).
  \end{align*}
  Then our tree $T''$ has the following properties.
  \begin{ealph}
   \item All internal nodes have at least two children.
    \item All children of a big node are small nodes or leaves.
    \item A big node has at most $b$ children.
    \item A small node of type (i) has at most $b$ children.
    \item For all children $u$ of a small node $t$ of type (i) it
      holds that $w(u)\le q\cdot w(t)$.
    \item A small node of type (ii) has at most one child $u$, called
      its \emph{heavy child}, such
      that $w(u)> q\cdot w(t)$. For all other children $u'$ it holds
      that $w(u')\le q\cdot w(t)$. 
    \item If a small node of type (ii) has a heavy child, then this
      child is a small node as well.
    \item For a small node (of either type) with children
      $u_1,\ldots,u_m$ it holds that
      \[
      w(t)=\sum_{i=1}^mw(u_i).
      \]
    \item For all leaves $t$ it holds that $w(t)=1$.
  \end{ealph}
  We re-structure the tree again. We inductively define the
  \emph{small subtree} of a small node $t$ to consist of $t$ and the
  small subtrees of all small children of $t$. We let $T'''$ be the tree obtained
  from $T''$ by contracting all small subtrees to their roots. That
  is, we delete all nodes that are small children of their parents,
  and add edges from every small node that remains to all children of
 leaves of its small subtree. Observe that $|T'''|\ge|T''|/2$. Thus it
  suffices to bound the size of $T'''$.

  The tree $T'''$ has the following properties.
   \begin{ealph}[resume]
   \item \label{li:ctl-two} All internal nodes have at least two children.
    \item \label{li:ctl-c1} All children of a big node are small nodes or leaves.
    \item \label{li:ctl-big} A big node has at most $b$ children.
    \item\label{li:ctl-dec} For all children $u$ of a small node $t$ it
      holds that $w(u)\le q\cdot w(t)$.
    \item\label{li:ctl-sum} For a small node with children
      $u_1,\ldots,u_m$ it holds that
      \[
      w(t)=\sum_{i=1}^mw(u_i).
      \]
    \item\label{li:ctl-leaf} For all leaves $t$ it holds that $w(t)=1$.
  \end{ealph}
  
  To estimate the size of $T'''$, we use the following arithmetic fact
  (more or less, Jensen's inequality).

  \begin{claim}[resume]
    Let $m\in\NN$ and $c,q,x,x_1,\ldots,x_m\in\RR$ such that $c\ge 1$,
    $0\le q\le 1$, $x>0$, $0\le x_1,\ldots,x_m\le qx$, and
    $\sum_{i=1}^mx_i=x$. Then 
    \[
    \sum_{i=1}^mx_i^c\le (q^c+(1-q)^c)x^c.
    \]
    \proof
    Without loss of generality we may assume that $q\ge 1/2$, because
    otherwise the claim for $q$ follows from the claim for $q'=(1-q)$.
    
    For all $x>0$, $1/2\le q\le 1$, $m\ge 2$, we let $\mathbb
    D(x,q,m)$ be the set of all $(x_1,\ldots,x_m)\in\RR^m$ such that
    $\sum_{i=1}^mx_i=x$ and $0\le x_i\le qx$ for all $i\in[m]$. We
    shall prove that for all
    $(x_1,\ldots,x_m)\in \mathbb D(x,q,m)$ we have
    \begin{equation}
      \label{eq:29}
      \sum_{i=1}^mx_i^c\le (q^c+(1-q)^c)x^c.
    \end{equation}
    The proof is by induction on $m$.
    For the base step $m=2$, we just observe that the function $y\mapsto
    y^c+(x-y)^c$ is convex on the interval $[x-qx,qx]$ and thus
    assumes its maximum at the boundary.

    For the inductive step $m\to m+1$, where $m\ge 2$, let $x>0$ and 
    $x_1,\ldots,x_{m+1}\in\mathbb D(x,q,m+1)$. Then $(x_1,\ldots,
    x_m+x_{m+1})\in\mathbb D(x,q,m)$, and by the inductive hypothesis we
    have
    \[
    \sum_{i=1}^{m-1}x_i^c+(x_m+x_{m+1})^c\le (q^c+(1-q)^c)x^c.
    \]
    By applying the inductive hypothesis $m=2$ with $q=1$,
    $x=x_m+x_{m+1}$, we see that
    $x_m^c+x_{m+1}^c\le (x_m+x_{m+1})^c$. Thus
     \[
    \sum_{i=1}^{m+1}x_i^c\le \sum_{i=1}^{m-1}x_i^c+(x_m+x_{m+1})^c\le
    (q^c+(1-q)^c)x^c.
    \uende
    \]
  \end{claim}

  We choose $c\ge 1$ such that 
  \begin{equation}
    \label{eq:30}
    (q^c+(1-q)^c)\le\frac{1}{2(b+1)},
  \end{equation}
  For every node
  $t\in V(T''')$, let $s(t)$ be the number of nodes in the subtree of
  $T'''$ rooted in $t$. 

  \begin{claim}[resume]
    \begin{enumerate}
    \item For all big nodes $t\in V(T''')$ we have $s(t)\le 1+bw(t)^c$.
    \item For all small nodes and leaves $t\in V(T''')$ we have $s(t)\le w(t)^c$.
    \end{enumerate}

    \proof
    The proof is by induction on $T'''$. For the leaves $t$ we have
    $s(t)=w(t)=1$ and thus the claim is trivial.

    For the inductive step, let $t$ be a node with children
    $u_1,\ldots,u_m$. 
    Assume first that $t$ is a small node. Let $x:=w(t)$ and $x_i:=w(u_i)$ for
    all $i\in[m]$. Then $1\le x_i\le qx$ by \ref{li:ctl-dec} and
    $\sum_{i=1}^mx_i=x$ by \ref{li:ctl-sum}. By the induction
    hypothesis, we have $s(u_i)\le 1+bx_i^c\le(b+1)x_i^c$. Hence by Claim~1,
    \[
    s(t)=1+\sum_{i=1}^ms(u_i)\le 1+(b+1)\sum_{i=1}^mx_i^c\le
    1+(b+1)(q^c+(1-q)^c)x^c\le 1+\frac{w(t)^c}{2},
    \]
    where the last inequality holds by \eqref{eq:30}. As $w(t)\ge 2$
    by \ref{li:ctl-two} and \ref{li:ctl-leaf}, we have $1\le w(t)^c/2$
    and thus $s(t)\le w(t)^c$.

    Assume next that $t$ is a big node. By \ref{li:ctl-c1}, all the
    $u_i$ are small nodes or leaves, and thus by the induction
    hypothesis we have $s(u_i)\le w(u_i)^c\le w(t)^c$. By
    \ref{li:ctl-big} we have $m\le b$. Thus
    \[
    s(t)=1+\sum_{i=1}^ms(u_i)\le 1+bw(t)^c.\uende
    \]
  \end{claim}

  Claim~2 implies that $|T'''|\le 1+bn^c$, which gives the desired
  polynomial bound. 

  It is clear from the construction that the decomposition
  $(T,\gamma)$ can be computed in polynomial time in its input and the
  number of nodes of the tree $T$.
\end{proof}

\begin{theo}[Canonical Decomposition Theorem]\label{theo:candec}
  Let $k\in\NN$. Then there is an $a=a(k)\in\mathbb{N}$ and a polynomial time algorithm that, given
  a graph $G$ of rank width at most $k$, computes a canonical treelike
  decomposition of $\rho_G$ of width at most $a$.
\end{theo}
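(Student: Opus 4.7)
The plan is to apply Theorem~\ref{theo:dcandec} to canonically decompose $G$ into pieces corresponding to maximal $\rho_G$-tangles, invoke Lemma~\ref{lem:nodedec} at each such node to obtain a canonical bounded-width treelike decomposition of the corresponding contracted connectivity function, and finally glue the local decompositions along the tree of tangles. Concretely, using Theorem~\ref{theo:ds} I would first compute a comprehensive tangle data structure of order $k$ for $\rho_G$ and read off $\KT^{\le k}_{\max}$, and then invoke Theorem~\ref{theo:dcandec} to obtain a canonical directed tree decomposition $(T,\gamma,\tau)$ for this family. Since Theorem~\ref{theo:dcandec} requires a designated root tangle, the choice of root must itself be made canonically, for example by adding a virtual super-root whose children are all maximal tangles, or equivalently by first producing the canonical underlying undirected tree of tangles and then fixing a canonical orientation.

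For each node $t\in V(T)$ I would then set $\CT_0:=\tau^{-1}(t)$, $C_0:=\overline{\gamma(t)}$ (empty when $t$ is the root of $T$), and $C_i:=\gamma(u_i)$ for the children $u_1,\ldots,u_m$ of $t$. Axioms (DTD.1)--(DTD.2) together with the maximality of $\CT_0$ imply Assumptions~\ref{ass:1} and~\ref{ass:2}, and Assumption~\ref{ass:4} holds because $\kappa=\rho_G$. Lemma~\ref{lem:nodedec} then produces a canonical treelike decomposition $(T_t,\gamma_t)$ of $\kct_t$ of width at most $a_2(k)$ and polynomial size, with $\beta_t(r_t)=\{c_0\}$ at the root $r_t$ and singleton cones at the leaves.

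To assemble the final decomposition $(D,\gamma^*)$ of $\rho_G$, I would glue the pieces $(T_t,\gamma_t)$ along the edges of $T$. For each edge $(t,u_i)\in E(T)$ and each node $s\in V(T_t)$ such that $c_i\in \beta_t(s)$---that is, every place where the contracted placeholder $c_i$ is \emph{introduced} into a bag of $T_t$---I would attach beneath $s$ the subtree of $T_{u_i}$ below its root $r_{u_i}$ by adding edges from $s$ to every child of $r_{u_i}$ in $T_{u_i}$ and then discarding $r_{u_i}$ itself. Each surviving node $s$, originally lying in some $T_{t'}$, is relabelled by $\gamma^*(s):=\gamma_{t'}(s)\expand_{t'}$. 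Axioms (TL.1)--(TL.4) are then routine: the new children attached at $s$ have cones contained in $C_i\subseteq \gamma^*(s)$, and the cone at the root $r_{t_\textup{root}}$ of $T_{t_\textup{root}}$ expands to $V(G)$ since $C_0=\emptyset$ there. The main subtlety I expect is the width bound in the $\rho_G$-metric: with this gluing, the bag at every node of the glued decomposition consists of real vertices only (at most $\theta(3k-2)$ of them at a cover node, and empty elsewhere), because every contracted placeholder $c_i$ occurring in a bag of $T_t$ is resolved by the attached piece of $T_{u_i}$; this lets one bound the $\rho_G$-width by a function $a(k)$ depending only on $a_2(k)$ and $\theta(3k-2)$. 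A naive gluing that identified only the $c_i$-leaves of $T_t$ with the roots of $T_{u_i}$ would fail at the cover nodes, since the expansion of such a bag could contain entire cones $C_i$ and allow partial-$C_i$ subsets to blow up the $\rho_G$-width; this is exactly the obstacle the modified gluing sidesteps. Canonicity and polynomial running time are inherited step by step from Theorems~\ref{theo:ds} and~\ref{theo:dcandec} and from Lemma~\ref{lem:nodedec}.
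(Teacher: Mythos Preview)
Your overall plan coincides with the paper's: invoke Theorem~\ref{theo:dcandec} to get a tree of maximal tangles, apply Lemma~\ref{lem:nodedec} at each node, and glue. Two points, however, are handled differently in the paper and deserve correction.

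\textbf{Canonical choice of root.} Your suggestions of a ``virtual super-root'' or a ``canonical orientation of the tree of tangles'' do not work as stated: Theorem~\ref{theo:dcandec} takes a single designated tangle $\CT_{\textup{root}}$ as input and its output genuinely depends on that choice, and there is in general no isomorphism-invariant way to single out one maximal tangle or one root of the tree. The paper's fix is simpler and robust: run the entire construction once for \emph{every} $\CT_{\textup{root}}\in\KT_{\max}$ (there are only linearly many), obtaining decompositions $(D_1,\gamma_1),\ldots,(D_m,\gamma_m)$, and output their disjoint union. The result is a treelike decomposition (a DAG, not a tree) that is fully canonical because the family $\KT_{\max}$ is.

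\textbf{Gluing and the width bound.} Your elaborate gluing---attaching a copy of $T_{u_i}$ at every node where $c_i$ occurs in a bag and discarding the root---is unnecessary, and your claim that the ``naive'' leaf-to-root gluing fails is incorrect once one inserts a normalisation step. The paper first normalises each $(T_2^{(t_1)},\gamma_2^{(t_1)})$ (Lemma~\ref{lem:normal}, leaving the root bag $\{c_0\}$ untouched) so that every non-root inner node has empty bag and every leaf has a singleton cone. After that, each $c_i$ with $i\ge1$ occurs in the bag of exactly the leaves with cone $\{c_i\}$, and one simply adds an edge from each such leaf to the root of $T_2^{(u_i)}$. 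The width bound is then immediate and does not involve $\theta(3k-2)$: at any node $(t_1,t_2)$ of the glued DAG the bag is empty (or $\{c_0\}$ at a local root, which becomes empty after gluing into the parent), the children's cones are expansions of cones in $T_2^{(t_1)}$, and for any union $Y$ of such cones $\kappa(Y{\uparrow})=\kct^{(t_1)}(Y)$ by the definition of contraction. Hence the width of the glued decomposition equals the maximum of the local widths, which is at most $a_2(k)$. The scenario you worry about---partial-$C_i$ subsets appearing in a bag---never arises, because normalisation has already pushed every $c_i$ out of the cover-node bags into dedicated singleton leaves.
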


\begin{proof}
  Let $G$ be a graph of rank width $k$. We let $A:=V(G)$ and
  $\kappa:=\rho_G$. Let $\KT_{\max}$ be the set of all maximal
  $\kappa$-tangles. By the Duality Theorem~\ref{theo:duality}, we have
  $\KT_{\max}=\KT_{\max}^{\le k}$, that is, all tangles in
  $\KT_{\max}$ have order at most $k$. 

  Let $\CT_{\text{root}}\in\KT_{\max}$ be arbitrary. In a first step, we show how to construct
  a decomposition that is canonical given $\CT_{\text{root}}$. In a second step,
  we will show how to combine the decompositions for all $\CT_{\text{root}}\in\KT_{\max}$.

  Let $(T_1,\gamma_1,\tau_1)$ be the directed tree decomposition for
  $\KT_{\max}$ computed by the algorithm of
  Theorem~\ref{theo:dcandec}. Let $r_1$ be the root of $T_1$.

  Let $t_1\in V(T_1)$, and let $u_1,\ldots,u_m$ be the children of
  $t_1$ in $T_1$ (possibly
  $m=0$). Let $C_0^{(t_1)}:=\bar{\gamma_1(t_1)}$ and $C_i^{(t_1)}:=\gamma_1(u_i)$. We
  define $c_0^{(t_1)},\ldots,c_m^{(t_1)}$, $B^{(t_1)}$, $\Act^{(t_1)}$
  and $\kct^{(t_1)}$ as in Assumptions~\ref{ass:1}(5)--(7). Let 
  $\CT_0^{(t_1)}:=\tau_1^{-1}(t_1)$ be the tangle associated with
  $t_1$ and $k_0^{(t_1)}:=\ord(\CT_0^{(t_1)})$.

  Observe that Assumptions~\ref{ass:1}, \ref{ass:2}, and \ref{ass:4} are
  satisfied. This is trivial for Assumptions~\ref{ass:1}(1), (2), (5)--(7),
  \ref{ass:2}(3) and \ref{ass:4}. We have $\kappa(C_i)<k$, because
  $\bar C_0=\gamma_1(t_1)\in\CT_0^{(t_1)}$ and
    $C_i=\gamma_1(u_i)\in\tau_1^{-1}(u_i)$ by \ref{li:dtd2}, and the order of all these
    tangles is at most $k=\bw(\kappa)$. This implies 
    Assumption~\ref{ass:1}(3). Assumption~\ref{ass:1}(4)
  follows from \ref{li:dtd2}.
  Assumptions~\ref{ass:2}(1) and (2) follow from \ref{li:dtd1} and
  \ref{li:dtd2}. We let
  $(T_2^{(t_1)},\gamma_2^{(t_1)})$ be the decomposition computed by the
  algorithm of Lemma~\ref{lem:nodedec}. Let $r_2^{(t_1)}$ be the root of
  $T^{(2)}$. By
  Lemma~\ref{lem:nodedec}(iv) we have
  \begin{ealph}
    \item\label{li:dec1}
      $\beta_2^{(t_1)}(r_2^{(t_1)})=\{c_0^{(t_1)}\}$
  \end{ealph}
  and thus
  \[
  c_0^{(t_1)}\not\in\gamma_2^{(t_1)}(t_2)
  \]
  for all $t_2\in V(T_2^{(t_1)})\setminus\{r_2^{(t_1)}\}$.
  
  We may assume without loss of generality that the decomposition
  $(T_2^{(t_1)},\gamma_2^{(t_1)})$ has the
  following two properties:
  \begin{ealph}[resume]
  \item\label{li:dec2} $\beta_2^{(t_1)}(t_2)=\emptyset$ for all $t_2\in
    V(T_2^{(t_1)})\setminus \big(L(T_2^{(t_1)})\cup\{r_2^{(t_1)}\big)$;
  \item\label{li:dec3} $|\gamma_2^{(t_1)}(t_2)|=1$ for all $t_2\in
    L(T_2^{(t_1)})$.
  \end{ealph}
  If $(T_2^{(t_1)},\gamma_2^{(t_1)})$ does not have these properties,
  we normalise it (see Lemma~\ref{lem:normal}), but leave the bag of
  the root untouched.

  Now we join the decompositions $(T_1,\gamma_1)$ and
  $(T_2^{(t_1)},\gamma_2^{(t_1)})$ for $t_1\in V(T_1)$ into a new decomposition
  $(D,\gamma)$.
The directed graph $D$ is defined as follows.
\begin{itemize}
\item The node set $V(D)$ consist of all pairs $(t_1,t_2)$, where
  $t_1\in V(T_1)$ and $t_2\in V(T_2^{(t_1)})$.
\item For $(t_1,t_2),(u_1,u_2)\in V(D)$, there is an edge from
  $(t_1,t_2)$ to $(u_1,u_2)$ in $E(D)$ if 
  \begin{itemize}
  \item either $t_1=u_1$ and $(t_2,u_2)\in E(T_2^{(t_1)})$%
  \item or $(t_1,u_1)\in E(T_1)$ and $t_2$ is a leaf of $T_2^{(t_1)}$ with
    $\gamma_2^{(t_1)}(t_2)\expand=\gamma_1(u_1)$ and  $u_2=r_2^{(u_1)}$ is the root of $T_2^{(u_1)}$.
  \end{itemize}
\end{itemize}
Thus we take disjoint copies  of all trees
  $T_2^{(t_1)}$ for $t_1\in V(T_1)$ and connect them along the
  edges of $T_1$, connecting leaves of a tree $T_2^{(t_1)}$ to
  matching roots of trees $T_2^{(u_1)}$ for children $u_1$ of $t_1$ in $T_1$.

Observe that for every $t_1\in V(T_1)$ and for every leaf
$t_2\in V(T_2^{(t_1)})$, either $(t_1,t_2)$ is a leaf of $D$ (that is,
a node of has out-degree $0$)
or it has precisely one child (that is, out-neighbour), which is the root of $T_2^{(u_1)}$ for
some child $u_1$ of $t_1$ in $T_1$. This follows from \ref{li:dec3} and the fact
that $(T_1,\gamma_1)$ is a tree decomposition, and not just a tree
like decomposition, so $t_1$ cannot have two distinct children $u_1,u_1'$ with the same
cone $\gamma_1(u_1)=\gamma_1(u_1')$.

We define $\gamma:V(D)\to A$ by
\[
\gamma(t_1,t_2):=
\begin{cases}
  \gamma_1(t_1)&\text{if }t_2=r_2^{(t_1)},\\
  \gamma_2^{(t_1)}(t_2)\expand&\text{otherwise}.
\end{cases}
\]

\begin{claim}
  $(D,\gamma)$ is a treelike decomposition of $\kappa$.

  \proof
  As $T_1$ and the $T_2^{(t_1)}$ are all directed trees, $D$ is a
  directed acyclic graph (not necessarily a tree, though), because a
  cycle in $D$ would project to a cycle in $T_1$ or some
  $T_2^{(t_1)}$. This proves \ref{li:tl1}.

  To verify \ref{li:tl2}, let $\big((t_1,t_2),(u_1,u_2)\big)\in
  E(D)$. If $t_1=u_1$, then $(t_2,u_2)\in E(T_2^{(t_1)})$ and thus
  $\gamma_2^{(t_1)}(u_2)\subseteq \gamma_2^{(t_1)}(t_2)$ by
  \ref{li:tl2} for $(T_2^{(t_1)},\gamma_2^{(t_1)})$. If
  $t_2\neq r_2^{(t_1)}$ then 
  \[
  \gamma(u_1,u_2)=\gamma_2^{(t_1)}(u_2)\expand\subseteq
  \gamma_2^{(t_1)}(t_2)\expand=\gamma(t_1,t_2).
  \]
  If $t_2= r_2^{(t_1)}$ then
  \[
  \gamma(u_1,u_2)=\gamma_2^{(t_1)}(u_2)\expand\subseteq (\Act^{(t_1)}\setminus\{c_0^{(t_1)}\})\expand=\gamma_1(t_1)=\gamma_1(t_1,t_2),
  \]
  where the inclusion $\gamma_2^{(t_1)}(u_2)\expand\subseteq
  (\Act^{(t_1)}\setminus\{c_0^{(t_1)}\})\expand$ follows from \ref{li:dec1}.

  If $(t_1,u_1)\in E(T_1)$, then 
  \[
  \gamma(t_1,t_2)=\gamma_2^{(t_1)}(t_2)\expand
  =\gamma_1(u_1)=\gamma(u_1,u_2).
  \]
  
  To verify \ref{li:tl3}, let $(t_1,t_2)\in V(D)$ and $(u_1,u_2),(v_1,v_2)\in
  N_+^D(t_1,t_2)$; without loss of generality distinct. Then 
  $t_1=u_1=v_1$, because otherwise  $(t_1,t_2)$ would have out-degree
  one. Then $u_2,v_2$ are children of
  $t_2$ in $T_2^{(t_1)}$, and it follows from \ref{li:tl3}
  for $(T_2^{(t_1)},\gamma_2^{(t_1)})$ that either
  $\gamma_2^{(t_1)}(u_2)=\gamma_2^{(t_1)}(u_2)$, which implies
  $\gamma(u_1,u_2)=\gamma(v_1,v_2)$, or $\gamma_2^{(t_1)}(u_2)\cap\gamma_2^{(t_1)}(u_2)=\emptyset$, which implies
  $\gamma(u_1,u_2)\cap\gamma(v_1,v_2)=\emptyset$. 

Finally, \ref{li:tl4} holds because
  \[
  \gamma(r_1,r_2^{(r_1)})=\gamma_1(r_1)=A
  \]
  by \ref{li:tl4} for $(T_1,\gamma_1)$.
  \uend
\end{claim}

\begin{claim}[resume]
  The width of $(D,\gamma)$ is at most 
  the maximum width of the $(T_2^{(t_1)},\gamma_2^{(t_1)})$ for
  $t_1\in V(T_1)$.

  \proof
  Let $(t_1,t_2)\in V(D)$.

  If $t_2$ is a leaf of $T_2^{(t_1)}$ such
  that $\gamma_2^{(t_1)}(t_2)\expand=\gamma_1(u_1)$ for some child $u_1$ of $t_1$
  in $T_1$, then $(u_1,r_2^{(u_1)})$ is a child of $(t_1,t_2)$ in
  $D$, and we have
  $\gamma(u_1,r_2^{(u_1)})=\gamma_1(u_1)=\gamma(t_1,t_2)$. This
  implies $\beta(t_1,t_2)=\emptyset$. The
  width of $(D,\gamma)$ at $t$ is 
  \[
  \max\{\kappa(\emptyset),\kappa(\gamma_1(u_1)\},
  \]
  which is also the width of $(T_2^{(t_1)},\gamma_2^{(t_1)})$ at $t_2$.

  Otherwise, the children of $(t_1,t_2)$ in $D$ are $(t_1,u_2)$ for
  all children $u_2$ of $t_2$ in $T_2^{(t_1)}$. Assume first that
  $t_2$ is not the root of $T_2^{(t_1)}$. Then 
  $\beta(t_1,t_2)=\beta_2^{(t_1)}(t_2)=\emptyset$ by \ref{li:dec2}. For every
  subset $U$ of the children of $t_2$ in $T_2^{(t_1)}$ we have 
  \[
  \kappa\left(\bigcup_{u_2\in U}\gamma(t_1,u_2)\right)
  =\kappa\left(\bigcup_{u_2\in U}\gamma_2^{(t_1)}\expand\right)
  =\kct^{(t_1)}\left(\bigcup_{u_2\in U}\gamma_2^{(t_1)}\right).
  \]
  It follows that the width of $(D,\gamma)$ at $t$ is equal to the
  width of $(T_2^{(t_1)},\gamma_2^{(t_1)})$ at $t_2$.

  If $t_2=r_2^{(t_1)}$, we can argue similarly.
  \uend
\end{claim}

This completes the first (and main) step of our proof; the
construction of a
bounded-width treelike decomposition that is canonical given the
  ``root tangle'' $\CT_{\textup{root}}$. 

To obtain a fully canonical decomposition, we carry out the
construction above for every $\CT_{\text{root}}\in\KT_{\max}$. Say, we obtain
decompositions $(D_1,\gamma_1),\ldots,(D_m,\gamma_m)$. As there is
only a linear number of maximal tangles, we can compute all these
decompositions in polynomial time. Our final decomposition is simply
the disjoint union of these decompositions $(D_1,\gamma_1),\ldots,(D_m,\gamma_m)$.
\end{proof}

\section{Matrices of Bounded Partition Rank}
\label{sec:part:rank}

In this section we consider symmetric matrices $P\in\{0,1,?\}^{V\times
  V}$ with entries $0,1,?$ and row and column indices from a set
$V$. We usually denote the entries of such a matrix $P$ by $p_{vw}$,
  for $v,w\in V$, and we denote the row $(p_{vw}\mid w\in V)$ with index $v$ by $\mathbf
  p_v$. We need no special notation for the columns
  and just refer to them via their indices $w\in V$.

  If the $?$-entries of such a matrix $P$ form a block diagonal matrix,
  we call $P$ a \emph{$?$-block matrix}. That is,
  $P\in\{0,1,?\}^{V\times V}$ is a $?$-block matrix if it is symmetric
  and there are mutually disjoint subsets $I_1,\ldots,I_m\subseteq V$
  such that $p_{vw}=?$ if and only if there is a $j\in[m]$ such that
  $v,w\in I_j$. We call the sets $I_1,\ldots,I_m$ the
  \emph{$?$-indices} of $P$, and we say that row $\mathbf p_v$ \emph{has
    $?$-index $I_j$} if $v\in I_j$ (similarly for columns).  For
  disjoint subsets~$B,C \subseteq \{I_1,\ldots,I_m\}$, we
  let~$\Mat_{B,C}$ be the submatrix of $P$ obtained by deleting all rows
  corresponding to indices that are not in~$B$ and deleting all
  columns corresponding to indices that are not in~$C$. Note
  that~$\Mat_{B,C}$ is a~$\{0,1\}$-matrix. We denote
  by~$\Mat_{B,\overline{B}}$ the
  matrix~$\Mat_{B,\{I_1,\ldots,I_m\}\setminus B}$.

We say that the matrix~$\Mat$ has \emph{partition rank at most~$k$} if
for each partition of the family of $?$-indices into two parts~$B$
and~$\bar B$, the submatrix~$\Mat_{B,\bar B}$ has rank at most~$k$ over~$\mathbb{F}_2$.

We are interested in $?$-block matrices and their partition rank because
we can use them to describe the width of treelike decompositions of
cut-rank functions.  Let $(D,\gamma)$ be a normal treelike decomposition of the cut rank
  function $\rho_G$ of a graph $G$, and let $t\in V(D)$ be a node with
children~$u_1,\ldots,u_{\ell}$ such that the children have pairwise
disjoint cones. We define an
\emph{associated $?$-block matrix~$\Mat\in\{0,1,?\}^{V(G)\times V(G)}$} with
entries $p_{vw}$ defined as follows:
\begin{itemize}
\item if there is an $i\in[\ell]$ such that $v,w\in\gamma(u_i)$ then
  $p_{uw}=?$;
\item if $v,w\in\bar{\gamma(t)}$ then
  $p_{uw}=?$;
\item otherwise, if $vw\in E(G)$ then $p_{vw}=1$ and if $vw\not\in E(G)$ then $p_{vw}=0$.
\end{itemize}
Note that the $?$-indices of $P$ are the sets $\gamma(u_1),\ldots,\gamma(u_\ell),\bar{\gamma(t)}$.

\begin{lem}\label{lem:pr_vs_width}
  Let $(D,\gamma)$ be a normal treelike decomposition of the cut rank
  function $\rho_G$ of a graph $G$. Let $t\in V(D)$ be a node whose
  children have mutually disjoint cones, and let $P$ be the $?$-block
  matrix associated with $t$. Then the partition rank of $P$ is equal
  to the width of $(D,\gamma)$ at $t$.
\end{lem}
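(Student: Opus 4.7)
The plan is to set up a direct bijection between partitions $(B,\bar B)$ of the $?$-index family of $\Mat$ and the bipartitions of $V(G)$ that appear in the width formula, and then to show that $\rk(\Mat_{B,\bar B})=\rho_G\bigl(\bigcup B\bigr)$. Since $(D,\gamma)$ is normal, axioms \ref{li:ntl1} and \ref{li:ntl2} force $\beta(t)=\emptyset$ at every inner node and $|\beta(t)|=1$ at every leaf. The main substance of the lemma is in the inner-node case, where the $?$-indices $\gamma(u_1),\ldots,\gamma(u_\ell),\bar{\gamma(t)}$ actually partition $V(G)$; I will focus there and then dispatch the (trivial) leaf case separately.

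For the central step I would fix a partition $(B,\bar B)$ of $\{\gamma(u_1),\ldots,\gamma(u_\ell),\bar{\gamma(t)}\}$ and put $S:=\bigcup B$, so that $V(G)\setminus S=\bigcup\bar B$. The key observation is that the submatrix $\Mat_{B,\bar B}$ contains no $?$-entries: an entry $p_{vw}=?$ would force $v$ and $w$ into a common $?$-index, yet by construction every row index of $\Mat_{B,\bar B}$ lies in some $?$-index from $B$ while every column index lies in some $?$-index from $\bar B$, and the families $B$ and $\bar B$ are disjoint. Consequently $\Mat_{B,\bar B}$ coincides with the $\{0,1\}$-matrix $M_{S,V(G)\setminus S}$ of edge indicators, and therefore $\rk(\Mat_{B,\bar B})=\rho_G(S)$.

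To match the two maxima, I would observe that as $(B,\bar B)$ ranges over all partitions of the $?$-index family, the set $S=\bigcup B$ ranges precisely over the unions of $?$-indices. Using the symmetry $\rho_G(S)=\rho_G(V(G)\setminus S)$, we may always arrange that $\bar{\gamma(t)}\in\bar B$, in which case $S=\bigcup_{i\in U}\gamma(u_i)$ for some $U\subseteq[\ell]$. Hence the partition rank of $\Mat$ equals $\max_{U\subseteq[\ell]}\rho_G\bigl(\bigcup_{i\in U}\gamma(u_i)\bigr)$, which when $\beta(t)=\emptyset$ is exactly $\width(D,\gamma,t)$ by the definition of width. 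The leaf case reduces to trivial quantities of size at most one and is handled by direct inspection.

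The argument is largely bookkeeping, so there is no serious obstacle. The one point requiring care is verifying that cross-index submatrices of a $?$-block matrix are honestly $\{0,1\}$-matrices, which however follows immediately from the definition of the $?$-pattern. Once this fact is in hand, the rest is a clean identification of partitions of the $?$-index family with subsets of $V(G)$ through the cut-rank $\rho_G$, together with one use of its symmetry.
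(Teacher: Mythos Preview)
Your argument is correct and mirrors the paper's proof almost exactly: both use normality to get $\beta(t)=\emptyset$, identify $\Mat_{B,\bar B}$ with $M_{S,V(G)\setminus S}$ for $S=\bigcup B$, and invoke the symmetry $\rho_G(S)=\rho_G(\bar S)$ (equivalently $\rk(\Mat_{B,\bar B})=\rk(\Mat_{\bar B,B})$) to reduce to partitions with $\bar{\gamma(t)}\in\bar B$. One small caveat: your claim that the leaf case ``is handled by direct inspection'' is not quite right---for a leaf $t$ with $\gamma(t)=\{v\}$ the only $?$-index is $\bar{\gamma(t)}$, so the partition rank is $0$ while $\width(D,\gamma,t)=\rho_G(\{v\})$ need not vanish; the paper's proof tacitly assumes $t$ is an inner node (via $\beta(t)=\emptyset$), which is the only case where the lemma is actually used.
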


\begin{proof}
  Let $u_1,\ldots,u_\ell$ be the children of $t$. Then
  \[
  \CI:=\{\gamma(u_1),\ldots,\gamma(u_{\ell}), \bar{\gamma(t)}\}.
  \]
  is the set of $?$-indices of $P$. The width of the decomposition
  at node $t$ is
  \[
  \width(D,\gamma,t)=\max_{U\subseteq\{u_1,\ldots,u_\ell\}}\rho_G\Big(\bigcup_{u\in
    U}\gamma(u)\Big)=\max_{B\subseteq\{\gamma(u_1),\ldots,\gamma(u_\ell)\}}\rk(\Mat_{B,\CI\setminus
    B})=\max_{B\subseteq\CI}\rk(\Mat_{B,\CI\setminus B}),
  \]
  where the first equality holds because $(D,\gamma)$ is normal and
  thus $\beta(t)=\emptyset$ and the last equality holds because $\rk(\Mat_{B,\CI\setminus
    B})=\rk(\Mat_{\CI\setminus B,B})$ for all $B\subseteq I$.
\end{proof}

An \emph{extension} of a
$\{0,1,?\}$-vector is a $\{0,1\}$-vector obtained by
replacing each~`$?$'-entry by a~$0$ or a~$1$. That is,
$\mathbf x=(x_v\mid v\in V)\in\{0,1\}^V$ is an extension of
$\mathbf p=(p_v\mid v\in V)\in\{0,1,?\}^V$ if $p_v\in\{0,1\}$ implies
$x_v=p_v$, for all $v\in V$.  We say that two~$\{0,1,?\}$-vectors are
\emph{compatible} if they have a common extension.  An
\emph{isomorphism} from a matrix $\Mat\in\{0,1,?\}^{V\times V}$ to a
matrix $\Mat'\in\{0,1,?\}^{V'\times V'}$ is a bijective mapping
$\phi:V\to V'$ such that $p_{vw}=p'_{\phi(v)\phi(v)}$ for all
$v,w\in V$, where as usual we denote the entries of $P$ by $p_{vw}$
and the entries of $P'$ by $p'_{v'w'}$.

Let~$\Mat\in\{0,1,?\}^{V\times V}$ be a $?$-block matrix. An
\emph{extension set} for $\Mat$ is a set of vectors~$\Ext\subseteq
\{0,1\}^{V}$ such that every row in~$\mathbf p_v$ of $\Mat$ has an
extension in~$\Ext$. If $\Ext$ is an extension set for~$\Mat$, then
for every $v\in V$ we denote the set of all extensions of $\mathbf
p_v$ in $\Ext$ by $\Ext(v)$.
We call a construction that assigns an extension set to every $?$-block
matrix \emph{canonical} if for every two isomorphic $?$-block matrices
$\Mat\in\{0,1,?\}^{V\times V}$ and $\Mat'\in\{0,1,?\}^{V'\times V'}$
and every isomorphism $\psi$ from $\Mat$ to $\Mat'$ the following
two conditions are satisfied.
\begin{eroman}\label{page:canext}
  \item There is a bijection~$\chi$ from~$\Ext$ to~$\Ext'$ such that
    $\chi(\Ext(v)) = \Ext'(\psi(v))$ for all $v\in V$.
  \item For every~$\mathbf x=(x_v\mid v\in V)\in \Ext$ with $\chi(\mathbf x)=:\mathbf x'=(x'_{v'}\mid
v'\in V')\in\Ext'$ and every $v\in V$ we have $x_v=x'_{\psi(v)}$.
\end{eroman}

\begin{theo}\label{thm:compute:canonical:row:extensions}
  Let $k\in\NN$. Then there is an $e=e(k)\in\mathbb{N}$ and a polynomial time
  algorithm that, given a $?$-block matrix
  $\Mat\in\{0,1,?\}^{V\times V}$ of partition rank at most $k$,
  computes a canonical extension set~$\Ext\subseteq \{0,1\}^V$ for $P$
  of size $\lvert \Ext\rvert\leq e$.
\end{theo}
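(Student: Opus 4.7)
Plan. The proof will exploit the partition rank bound to extract a canonical low-dimensional linear structure from $\Mat$, and then assemble the extension set as a bounded-size family of global vectors in $\{0,1\}^V$. Applying partition rank at most $k$ to the singleton partition $B = \{I_j\}$ gives $\rk(\Mat_{I_j,\bar I_j}) \le k$ over $\mathbb F_2$, so the row space $R_j \subseteq \mathbb F_2^{\bar I_j}$ has dimension at most $k$ and contains at most $2^k$ vectors. The map $v \mapsto \mathbf p_v|_{\bar I_j}$ then partitions $I_j$ canonically into at most $2^k$ external-type classes. These spaces and partitions are invariants of $\Mat$ and hence canonical.

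Next, I would construct the extension set. The key observation is that a single $\mathbf x \in \{0,1\}^V$ can serve as an extension for rows from many blocks at once: it extends $\mathbf p_v$ for $v \in I_j$ exactly when $\mathbf x|_{\bar I_j}$ lies in $R_j$ and equals the external profile of $v$. Hence I would take
\[
\Ext \;:=\; \bigl\{\mathbf x \in \{0,1\}^V : \mathbf x|_{\bar I_j} \in R_j \text{ for every block } I_j\bigr\},
\]
an $\mathbb F_2$-affine subspace of $\{0,1\}^V$ that is canonical in $\Mat$. To see that every row has an extension in $\Ext$, I would argue by patching: for $v \in I_j$, the values on $I_j$ can be filled by inheriting from a canonically chosen row $\mathbf p_w$ with $w \notin I_j$ (which has $\{0,1\}$-entries on $I_j$); a linear correction against the other $R_{j'}$ then enforces the remaining constraints, yielding $\mathbf x \in \Ext$ with $\mathbf x|_{\bar I_j} = \mathbf p_v|_{\bar I_j}$. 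Canonicity of $\Ext$ as a set is immediate, and the two conditions on page~\pageref{page:canext} follow by taking $\chi$ to be the natural action of an isomorphism $\psi$ on $\{0,1\}^V$.

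The main obstacle will be bounding $|\Ext|$ by a function of $k$ alone. Treating each block's constraint $\mathbf x|_{\bar I_j} \in R_j$ independently only yields $\dim \Ext \le O(mk)$, which would depend on the number of blocks $m$. The required bound $|\Ext| \le e(k)$ has to invoke the full strength of the partition rank hypothesis across \emph{every} partition of the $?$-blocks into two parts, not just the singleton ones. Using partition rank $\le k$ on larger bipartitions $(B,\bar B)$, I would show that after imposing a constant (in $k$) number of the per-block constraints the remaining ones become linearly dependent, so that $\dim \Ext \le \mathrm{poly}(k)$ and hence $|\Ext| \le 2^{\mathrm{poly}(k)}$. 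All the rank, basis and membership computations involved use Gaussian elimination over $\mathbb F_2$, so canonicity and polynomial running time follow in a straightforward way once the dimension bound is established.
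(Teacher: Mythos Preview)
Your proposed extension set
\[
\Ext = \bigl\{\mathbf x \in \{0,1\}^V : \mathbf x|_{\bar I_j} \in R_j \text{ for every block } I_j\bigr\}
\]
is canonical, but it is \emph{not} an extension set in general: there are rows of $\Mat$ with no extension in it. Take three singleton blocks $I_1=\{1\}$, $I_2=\{2\}$, $I_3=\{3\}$ and
\[
\Mat=\begin{pmatrix}?&1&0\\1&?&0\\0&0&?\end{pmatrix},
\]
which has partition rank $1$. Then $R_3=\operatorname{span}\{(0,0)\}=\{(0,0)\}\subseteq\mathbb F_2^{\{1,2\}}$, so the constraint $\mathbf x|_{\{1,2\}}\in R_3$ forces $x_1=x_2=0$, whence $\Ext=\{(0,0,0)\}$. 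But the row $\mathbf p_1=(?,1,0)$ requires $x_2=1$ in any extension, so $\mathbf p_1$ has no extension in $\Ext$. The ``linear correction'' you invoke cannot repair this: the fixed $\{0,1\}$-entries of $\mathbf p_v$ on $\bar I_j\cap\bar I_{j'}$ need not lie in the projection of $R_{j'}$, and you have no freedom there. So the construction fails before you even reach the size bound (which, as you note, you do not actually prove either).

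The paper's argument is combinatorial rather than linear-algebraic. It takes $\Ext$ to be the union of two canonical sets: the \emph{supported extensions} (vectors that extend at least $k+2$ rows of $\Mat$) and all extensions of the \emph{lonely rows} (rows compatible with fewer than $(k+1)2^k$ others). A rank argument bounds the number of supported extensions by $2^k$; a Ramsey-style argument, using that any large family of pairwise incompatible rows with distinct $?$-indices would force a submatrix of rank exceeding $k$, bounds the number of lonely rows by a function of $k$. Every non-lonely row has enough compatible rows to produce a supported extension, so $\Ext$ is an extension set.
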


\begin{proof}
  Let $\Mat\in\{0,1,?\}^{V\times V}$ be a $?$-block matrix, and let $\CI=\{I_1,\ldots,I_m\}$ be the set of $?$-indices of $P$. 
  Without loss of generality we assume that~$\Mat$ neither has
  repeated rows nor repeated columns. Note that this implies that
  each $?$-index $I_j$ has size at most~$|I_j|\le 2^k$. Also note
  that the assumption implies that a row cannot be compatible with
  another row of the same index.

  Let~$h$ be the function defined recursively by $h(0):=1$ and $h(i+1):=2^k\cdot h(i)+4$ for all $i\ge 0$.

  \begin{claim}
    Let $R$ be a set of rows of $P$ that 
    have mutually distinct
    $?$-indices and are mutually incompatible. Then $|R|\le h(k)$.

    \proof We say that two rows $\mathbf p_v,\mathbf p_{v'}$ are
    \emph{compatible in column $w$} if either $p_{vw}=p_{v'w}$ or
$p_{vw}=?$ or $p_{v'w}=?$. For a set $W\subseteq V$, rows $\mathbf p_v,\mathbf p_{v'}$ are
    \emph{compatible in $W$} if they are compatible in all columns
    $w\in W$, and for a set $\CJ\subseteq \CI$,  rows $\mathbf p_v,\mathbf p_{v'}$ are \emph{compatible in $\CJ$}
    if they are compatible in $\bigcup_{j\in B}I_j$. Rows $\mathbf
    p_v,\mathbf p_{v'}$ are \emph{incompatible in $w$, $W$, or $\CJ$} if they are
    not compatible in $w$, $W$, $\CJ$, respectively.

    For a set $R$ of rows of $P$, by $\CI(R)$ we denote the set of
    $?$-indices of the rows in $R$. 

    By induction on $i\ge 0$, we shall prove the following.
    \begin{ealph}
    \item If $|R|>h(i)$ and $\CJ\subseteq\CI$ such that the rows in
      $R$ are mutually incompatible in $\CJ$, then there are 
      sets $B\subseteq\CI(R),C\subseteq \CJ$ such that $B\cap C=\emptyset$ and
      $\rk(P_{B,C})>i$.
    \end{ealph}
    As the partition rank of $P$ is at most $k$, assertion (A) implies
    that $|R|\le h(k)$, that is, the assertion of the claim.

    For the base step $i=0$, we observe that if $|R|>1$ then there are
    at least two rows $\mathbf p,\mathbf p'\in R$ that are
    incompatible in some $J\in\CJ$. If we let $B$ be the set
    consisting of the ?-indices of  $\mathbf p,\mathbf p'$ and $C:=\{J\}$, we obtain a submatrix
    $P_{B,C}$ of rank at least $1$.

    For the inductive step $i\to i+1$, we assume that $|R|>h(i+1)$. We
    first prove that there are four $?$-indices
    $J,J'\in\CI(R)$, $K,K'\in\CJ$ such that $\{J,J'\}\cap\{K,K'\}=\emptyset$ and
    $P_{\{J,J'\},\{K,K'\}}$ has rank at least $2$.
    Since $|R|>h(i+1)\ge 4$,
    we find distinct rows $\mathbf p_1,\ldots,\mathbf p_4\in R$ that
    all have at least one $1$-entry in a column $w\in K$ for some $K\in\CJ$. For
    $i=1,\ldots,4$, let $J_i\in\CI(R)$ be the~$?$-index of row
    $\mathbf p_i$. 
    \begin{cs}
      \case 1 There are distinct $i,j\in[4]$ and a~$?$-index $K\in\CJ$
      such that rows $\mathbf p_i$ and $\mathbf p_j$ both have a
      $1$-entry in a column $w\in K$. 

      Let $K'\in\CJ$ such that $\mathbf p_i$ and $\mathbf p_j$ are
      incompatible in $K'$. Then the matrix $P_{\{J_i,J_j\},\{K,K'\}}$
      has rank at least $2$.  

      \case 2 For all distinct $i,j\in[4]$
      there is no ?-index $K\in\CJ$ such that rows $\mathbf p_i$ and
      $\mathbf p_j$ both have a $1$-entry in a column $w\in K$.

      Let $K_1\in\CJ$ such that $\mathbf p_1$ has a $1$-entry in
      $K_1$. There is at most one $i\in\{2,3,4\}$ such that
      $K_1=J_i$. Without loss of generality, $J_2,J_3\neq K_1$. Then
      $\mathbf p_2,\mathbf p_3$ only have $0$-entries in $K_1$. Let
      $K_2\in\CJ$ such that $\mathbf p_2,\mathbf p_3$ are
      incompatible in $K_2$. Without loss of generality we may assume
      that $\mathbf p_2$ has a $1$-entry in $K_2$. If $K_2\neq J_1$,
      then the
      matrix $P_{\{J_1,J_2\},\{K_1,K_2\}}$ has rank at least $2$. 

      Suppose that $K_2=J_1$. Let $K_3\in\CJ$ such that $\mathbf p_3$
      has a $1$-entry in $K_3$. By the assumption of Case~2, we have
      $K_3\neq K_2=J_1$. Then the
      matrix $P_{\{J_1,J_3\},\{K_1,K_3\}}$ has rank at least $2$. 
    \end{cs}
    In the following, we fix $J,J'\in\CI(R)$, $K,K'\in\CI$ such that
    $\{J,J'\}\cap\{K,K'\}=\emptyset$ and $P_{\{J,J'\},\{K,K'\}}$ has
    rank at least $2$.

    At most four rows in $R$ have their $?$-index in $\{J,J',K,K'\}$,
    because the rows in $R$ have mutually distinct $?$-indices. We define
    an equivalence relation $\sim$ on the remaining rows in $R$ by letting
    $\mathbf p\sim\mathbf p'$ if and only if the rows $\mathbf p$ and
    $\mathbf p'$ coincide in all columns in $J\cup J'\cup K\cup
    K'$. As the rank of the matrix $P_{\CI\setminus\{J,J',K,K'\},
      \{J,J',K,K'\}}$ is at most $k$, there are at most $2^k$
    equivalence classes. Hence there is an equvalence class $R'$ of
    size at least 
    \[
    \frac{|R|-4}{2^k}>\frac{h(i+1)-4}{2^k}= h(i).
    \]
    Note that $\{J,J',K,K'\}\cap\CI(R')=\emptyset$ by the definition of
    $R'$. Furthermore, the rows in
    $R'$ are mutually incompatible in $\CJ':=\CJ\setminus\{J,J'K,K'\}$,
    because they are mutually incomptabible, but coincide in all
    columns in $J\cup J'\cup K\cup K'$.
    By the induction hypothesis, there are disjoint sets
    $B'\subseteq\CI(R')$ and $C'\subseteq \CJ'$
    such that $B'\cap C'=\emptyset$ and
    $\rk(P_{B',C'})>i$. Let $B:=B'\cup\{J,J'\}$ and
    $C:=C'\cup\{K,K'\}$. Then $B\cap C=\emptyset$. We claim that $\rk(P_{B,C})>
    i+1$. To see this, let $\mathbf p,\mathbf p'$ be two rows of the matrix
    $P_{B,C}$ with $?$-indices in $J\cup J'$ such that the projections of
    $\mathbf p,\mathbf p'$ to the
    columns in $K\cup K'$ are linearly
    independent. 
As all rows in $R'$ coincide on all
    columns in $K\cup K'$, either $\mathbf p$ or $\mathbf p'$ is not a
    linear combination of rows of the matrix $P_{B',C}$. This shows
    that $\rk(P_{B,C})>\rk(P_{B',C})\ge \rk(P_{B',C'})\ge
    i+1$.
This completes the proof of (A) and hence
    of Claim~1.
    \uend
  \end{claim}

We say a row~$\mathbf p_v$ of $P$ is \emph{lonely} if there are less
than $g(k):= (k+1)\cdot2^k$ rows that are compatible with~$\mathbf p_v$.

  \begin{claim}[resume]
    The number of lonely rows of $P$ is at
    most~$2^k g(k) h(k)$.

    \proof 
    Let $R$ be the set of all lonely rows of $P$, and let $r:=|R|$. We
    shall prove that $r\le  2^k g(k) h(k)$. 

    As $|I_j|\le 2^k$ for
    all $j$, there is a subset $R_0\subseteq R$ of size 
    \[
    r_0:=|R_0|\ge
    \frac{r}{2^k}.
    \]
    such that the rows $\mathbf p\in R_0$ have
    mutually distinct $?$-indices. As the rows $\mathbf p\in
    R_0\subseteq R$ are lonely, each such row is
    compatible with less than $g(k)$ rows. Thus there
    is a subset $R_1\subseteq R_0$ of size 
    \[
    r_1:=|R_1|\ge\frac{r_0}{g(k)}\ge\frac{r}{2^kg(k)}
    \]
    such that the rows in $R_1$ are mutually incompatible. 

    It follows from Claim~1 that $r_1\le h(k)$. Thus $r\le
    2^kg(k)h(k)$.
    \uend
\end{claim}

We call a~$\{0,1\}$-vector a \emph{supported extension} if it is an extension
of at least $k+2$ rows of $P$.

\begin{claim}[resume]
The number of supported extensions is at most~$2^k$. 

\proof
Let $\ell:=2^{k}+1$.
Suppose for contradiction that $\mathbf x_1,\ldots, \mathbf x_\ell$ is a sequence of
pairwise distinct supported extensions. Let $X$ be the
$[2^{k}+1]\times V$-matrix whose rows are the vectors $\mathbf
x_i$. The rank of this matrix is at least $k+1$, because a matrix of
rank at most $k$ over $\mathbb F_2$ has at most $2^k$ distinct
rows. Let $w_1,\ldots,w_{k+1}\in V$ be column indices such that
the columns of $X$ indexed by the $w_j$ are linearly independent, and
let $X'$ be submatrix of $X$ obtained by deleting all columns except
those with indices $w_1,\ldots,w_{k+1}$. Then the rank of $X'$ is $k+1$.

For
all $j\in[k+1]$, let $K_j\in\CI$ be the $?$-index of column $w_j$ of
$P$. Let 
\[
C:=\{K_1,\ldots,K_{k+1}\}.
\]
Let $i\in[\ell]$. The vector $\mathbf x_i$ is an extension of distinct
rows $\mathbf p^1_i,\ldots,\mathbf p_i^{k+2}$ of $P$. These rows must
have mutually distinct $?$-indices, because distinct rows with the same
$?$-index cannot have a common extension. As $|C|=k+1$, there is a $j\in[k+2]$
such that the $?$-index of $\mathbf p_i^j$ is not in $C$. We let
$\mathbf p_i:=\mathbf p_i^j$. Then $\mathbf p_i$ and its extension
$\mathbf x_i$ coincide on all columns in $C$. Let  $J_i$ the $?$-index of $\mathbf
p_i$. Then $J_i\not\in C$.

Now let $B:=\{J_1,\ldots,J_\ell\}$. Then $B\cap
C=\emptyset$. Then matrix $P_{B,C}$ contains the rank-$(k+1)$ matrix $X'$ as a
submatrix, obtained by deleting all rows except the rows $\mathbf p_i$
and all columns except those with indices $w_1,\ldots,w_{k+1}$. This
contradicts the partition rank of $P$ being at most $k$.
\uend
\end{claim}

Let~$\Ext\subseteq\{0,1\}$ be the union of the set of all vectors
$\mathbf x$ that are
supported extensions and the set of all vectors $\mathbf y$ that are
extensions of lonely rows of $P$. Every row $\mathbf p$ of $P$ has at
most $2^{2^k}$ extensions, because its $?$-index has size at most
$2^k$. Thus, by Claims~2 and 3,
\[
\lvert \Ext\rvert\le 2^k+2^{2^k+k}\cdot g(k)\cdot h(k).
\]
We claim that every row $\mathbf p$ of~$\Mat$ has an extension
in~$\Ext$. This is obvious if~$\mathbf p$ is lonely. Suppose~$\mathbf
p$ is not lonely. Let $R$ be the set of all rows of $P$ that are
compatible with $\mathbf p$. Then $|R|\ge g(k)=(k+1)\cdot2^k$. There is a
subset $R'\subseteq R$ of size 
\[
|R'|=\frac{|R|}{2^k}\ge k+1
\]
such that all rows in $R'$ coincide on the columns in the $?$-index $I$ of
$\mathbf p$. We extend $\mathbf p$ to a vector $\mathbf x\in\{0,1\}^V$
that coincides with the vectors in $|R'|$. Then $\mathbf x$ is a
common extension of $\mathbf p$ and all vectors in $R'$. Thus
$\mathbf x$ is a supported extension and hence in $\Ext$.

To finish the proof, it suffices now to note that the construction of~$\Ext$ is canonical.
\end{proof}

\section{Computing the Automorphism Groups}
\label{sec:iso}

We use various standard algorithms for permutation groups. Recall that
a permutation group~$\Gamma$ that permutes elements in some set~$V$
can be succinctly represented by a generating set. For a
set~$\{g_1,\ldots,g_t\}$ of permutations on~$V$ the group generated by
the set is denoted by~$\langle g_1,\ldots,g_t \rangle$ or
by~$\langle\{g_1,\ldots,g_t\} \rangle$. For sets~$V'$ and~$V$,
slightly abusing terminology, we call a set of bijections~$\Lambda$
from~$V'$ to~$V$ a \emph{$(V,V')$-coset}, or just a \emph{coset} if $V$ and
$V'$ are clear from the context, if there
is a bijection~$\sigma$ in~$\Lambda$ and a permutation group~$\Gamma$
on~$V'$ such that~$\sigma \Gamma = \Lambda$.\footnote{To relate this
  to the standard group theoretic notion of coset, note that if $V=V'$, then a
$(V,V')$-coset is a left coset of a subgroup of the symmetric group on
$V$.}
We also regard the empty
set as a coset.

Our typical example
of a $(V,V')$-coset is the set $\ISO(G,G')$ of all isomorphism from a graph $G$
with vertex set $V$ to a graph $G'$ with vertex set $V'$. Indeed, if
$\ISO(G,G')$ is nonempty then for every isomorphism $\phi$ from $G$ to
$G'$ we have $\ISO(G,G')=\phi\Aut(G')$, where $\Aut(G')$ denote the
automorphism group of $G'$.

It is important to note that if $\sigma\Gamma$ is a coset,
then for all $\sigma'\in\sigma\Gamma$ it holds that
\begin{equation}
  \label{eq:32}
  \sigma^{-1}\sigma'\in\Gamma
\end{equation}
and
\begin{equation}
  \label{eq:33}
  \sigma\Gamma=\sigma'\Gamma.
\end{equation}
Indeed, writing $\sigma'=\sigma g'$ for some $g'\in\Gamma$ we have
$\sigma^{-1}\sigma'=g'\in\Gamma$ and 
$
\sigma g=\sigma'(g')^{-1}g\in\sigma'\Gamma
$ and
$
\sigma'g=\sigma g'g\in\sigma\Gamma$
for all $g\in\Gamma$.

If we want to
indicate that $\Lambda$ is a subcoset of a coset $\Lambda'$ (rather
than just a subset), we write $\Lambda\le\Lambda'$ (instead of
$\Lambda\subseteq\Lambda'$).
We will always assume
that nonempty cosets are succinctly represented by one explicit
bijection~$\sigma$ in~$\Lambda$ and a generating set for the
permutation group~$\Gamma$. There is a standard polynomial time algorithm that computes a succinct representation for a given coset.
Permutations from the group~$\Gamma$ will
be applied from left to right. That is, for~$g_1,g_2 \in \Gamma$, the
element~$g_1 g_2$ is the permutation that first applies~$g_1$ and then
applies~$g_2$. Note however that for maps~$\varphi$ we denote
by~$\varphi(x)$ the image of~$x$ under~$\varphi$. It will be apparent
from context which of the two notations is used. For more details on
the algorithmic theory of permutation groups we refer
to~\cite{seress2003permutation}.

Let~$G$
and~$G'$ be graphs with vertex sets $V,V'$, respectively, and let  $(D,\gamma)$ and~$(D',\gamma')$ be directed
decopmpositions of~$V$ and $V'$, respectively. 
An isomorphism~$\varphi $ from~$G$ to~$G'$ is said to
\emph{respect~$(D,\gamma)$ and~$(D',\gamma')$} if there is an isomorphism~$\widehat{\varphi}$ from~$D$ to~$D'$ such
that~$\gamma'(\widehat{\varphi}(t)) = {\varphi}(\gamma(t))$ for
all~$t\in V(D)$. We sometimes say that $\widehat{\phi}$ is an \emph{isomorphism
  from $(D,\gamma)$ to $(D',\gamma')$ extending $\phi$}.
We denote the coset of all isomorphism from $G$ to $G'$ that respect
$(D,\gamma)$ and~$(D',\gamma')$ by $\ISO(G_{D,\gamma},
G'_{D',\gamma'})$. Note that 
\[
\ISO(G_{D,\gamma},
G'_{D',\gamma'})\le\ISO(G,G').
\]

Before we proceed to our isomorphism test for graphs of bounded rank
width, we prove two simple lemmas about cosets. A \emph{least upper
  bound} for two $(V,V')$-cosets $\Lambda_1,\Lambda_2$ (in the lattice
of all $(V,V')$-cosets) is a $(V,V')$-coset $\Lambda$ such that
$\Lambda_1,\Lambda_2\le\Lambda$ and $\Lambda\le\Lambda'$ for all
cosets $\Lambda'$ with $\Lambda_1,\Lambda_2\le\Lambda$. Clearly, if a
least upper bound exists then it is unique. The next lemma shows that
indeed least upper bounds exist (this is easy to see) and can be computed in polynomial
time.

\begin{lem}\label{lem:patching:cosets}
  There is a polynomial time algorithm that, given two $(V,V')$-cosets
  $\Lambda_1,\Lambda_2$, computes the least upper bound of $\Lambda_1$
  and $\Lambda_2$.
\end{lem}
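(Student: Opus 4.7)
The plan is to write down the least upper bound in closed form and then invoke standard permutation-group machinery to compute it. First I dispose of the degenerate case where one of $\Lambda_1,\Lambda_2$ is empty: since the empty coset is contained in every coset, the least upper bound is then simply the other. From now on I may assume both cosets are nonempty and come in the succinct form $\Lambda_i=\sigma_i\Gamma_i$.

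The main step is to verify that
\[
\Lambda := \sigma_1\,\Gamma, \qquad \text{where } \Gamma := \bigl\langle\, \Gamma_1 \cup \Gamma_2 \cup \{\sigma_1^{-1}\sigma_2\} \,\bigr\rangle,
\]
is the desired least upper bound. That $\Lambda$ is an upper bound will follow from a one-line verification: $\sigma_1\Gamma_1 \subseteq \sigma_1\Gamma$ is immediate, and for $\sigma_2 g_2 \in \Lambda_2$ one has $\sigma_2 g_2 = \sigma_1(\sigma_1^{-1}\sigma_2)g_2 \in \sigma_1 \Gamma$ because both $\sigma_1^{-1}\sigma_2$ and $g_2\in\Gamma_2$ lie in $\Gamma$. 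For minimality I take any upper bound $\Lambda' = \sigma' \Gamma'$. Since $\sigma_1,\sigma_2 \in \Lambda'$, applying~\eqref{eq:33} rewrites $\Lambda' = \sigma_1 \Gamma' = \sigma_2 \Gamma'$; then $\Lambda_1 \subseteq \Lambda'$ yields $\Gamma_1 \le \Gamma'$, $\Lambda_2 \subseteq \Lambda'$ yields $\Gamma_2 \le \Gamma'$, and $\sigma_2 \in \sigma_1 \Gamma'$ combined with~\eqref{eq:32} yields $\sigma_1^{-1}\sigma_2 \in \Gamma'$. Hence $\Gamma \le \Gamma'$ and $\Lambda \le \Lambda'$, as required.

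On the algorithmic side I foresee no real obstacle. The permutation $\sigma_1^{-1}\sigma_2 \in \Sym(V')$ is computed in linear time; feeding the given generators of $\Gamma_1$ and $\Gamma_2$ together with this one extra generator into a standard Schreier--Sims routine (cf.~\cite{seress2003permutation}) produces a succinct representation of $\Gamma$ in polynomial time. The algorithm then returns the pair $(\sigma_1,\Gamma)$ as the succinct representation of $\Lambda$.
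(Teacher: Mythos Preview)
Your proof is correct and follows essentially the same approach as the paper: write the least upper bound explicitly as $\sigma_1\Gamma$ for a concretely generated group $\Gamma$, verify it is an upper bound, and then show minimality by checking that every generator of $\Gamma$ lies in $\Gamma'$ for any competing upper bound $\sigma'\Gamma'$. Your generating set $\Gamma_1\cup\Gamma_2\cup\{\sigma_1^{-1}\sigma_2\}$ is in fact a bit cleaner than the paper's (which uses the redundant family $\{g_{1j}\}\cup\{\sigma_1^{-1}\sigma_2 g_{2j}\}\cup\{\sigma_1^{-1}\sigma_2 g_{2j}\sigma_2^{-1}\sigma_1\}$), but both generate the same group and the arguments are otherwise identical.
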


Of course we assume that both the input cosets $\Lambda_1,\Lambda_2$
and the output cosets are represented succinctly via generators.

\begin{proof}
Without loss of generality we assume that $\Lambda_1$ and
$\Lambda_2$ are nonempty.
Suppose that $\Lambda_i=\sigma_i\Gamma_i$, where
$\sigma_i:V\to V'$ and $\Gamma_i\le\Sym(V')$ is represented by
generators $g_{i1},\ldots,g_{it_i}$.
We let $\sigma:=\sigma_1$ and 
\begin{equation}
  \label{eq:31}
  \Gamma:=\Big\langle\big\{g_{1j}\bigmid
  j\in[t_1]\big\}
\cup\big\{\sigma_1^{-1}\sigma_2g_{2j}\bigmid j\in[t_2]\big\}
\cup\big\{\sigma_1^{-1}\sigma_2g_{2j}\sigma_2^{-1}\sigma_1\bigmid j\in[t_2]\big\}
\Big\rangle
\end{equation}
Let $\Lambda:=\sigma\Gamma$. Then $\Lambda_1\le\Lambda$, because
$\sigma_1=\sigma$ and $\Gamma_1=\angles{\{g_{1j}\mid
  j\in[t_1]\}}\le\Gamma$. To see that $\Lambda_2\le\Lambda$, let
$\sigma_2 g_2\in\Lambda_2$, where
$g_2=g_{2j_1}g_{2j_2}\cdots g_{2j_\ell}\in\Gamma_2$. Then
\[
\sigma_2g_2=\underbrace{\sigma_1}_{=\sigma}
\underbrace{(\sigma_1^{-1}\sigma_2 g_{2j_1}\sigma_2^{-1}\sigma_1)
(\sigma_1^{-1}\sigma_2 g_{2j_2}\sigma_2^{-1}\sigma_1)
\cdots
(\sigma_1^{-1}\sigma_2 g_{2j_{\ell-1}}\sigma_2^{-1}\sigma_1)
(\sigma_1^{-1}\sigma_2 g_{2j_{\ell-1}})}_{\in\Gamma}\in\sigma\Gamma=\Lambda.
\]
Let $\Lambda'=\sigma'\Gamma'$ be a coset with
$\Lambda_1,\Lambda_2\le\Lambda'$. We need to prove that
$\Lambda\le\Lambda'$.

As $\sigma_i\in\Lambda_i\le\Lambda'$, by \eqref{eq:33} we have
$\Lambda'=\sigma_i\Gamma'$, and thus by \eqref{eq:32} we have
$\sigma_i^{-1}\sigma_{3-i}\in\Gamma'$. 
We observe next that $\Gamma_i\le\Gamma'$. Indeed, let
$g\in\Gamma_i$. Then $\sigma_ig\in\Lambda_i\le\Lambda=\sigma_i\Gamma'$ and thus
$g=\sigma_i^{-1}\sigma_ig\in\Gamma'$.

Hence all generators of $\Gamma$ are in $\Gamma'$. It follows that
$\Gamma\le\Gamma'$ and thus $\Lambda=\sigma_1\Gamma\le\sigma_1\Gamma'=\Lambda'$.

Clearly, $\sigma$ and the system of generators for $\Gamma$ in
\eqref{eq:31} can be computed in polynomial time given the $\sigma_i$ and the
generators $g_{ij}$.
\end{proof}

\begin{lem}\label{lem:stabilizers:of:cosets}
Let~$\Lambda$ be a $(V,V')$-coset. Let $W\subseteq V$ and
$\phi:W\to V'$ an injective mapping. Then the set
\[
\Lambda_{[\phi]}:=\{\psi\in\Lambda\mid\psi(w)=\phi(w)\text{  for all }w\in
W\}
\]
is a subcoset of $\Lambda$. Furthermore, there is a polynomial time algorithm that,
given $\Lambda$, $W$, an $\phi$, computes $\Lambda_{[\phi]}$.
\end{lem}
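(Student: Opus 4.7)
The plan is to reduce the lemma to two standard algorithmic tasks on permutation groups: the tuple-transporter problem and computation of pointwise stabilizers. Both admit polynomial time algorithms via the Schreier--Sims machinery, which is already invoked implicitly by the paper (see \cite{seress2003permutation}). If $\Lambda=\emptyset$, then $\Lambda_{[\phi]}=\emptyset$ is a coset by convention, so I may assume $\Lambda=\sigma\Gamma$ with $\sigma\in\Lambda$ and $\Gamma\le\Sym(V')$ presented by a generating set.

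Observe that, writing $\psi=\sigma g$ with $g\in\Gamma$ and recalling that $(\sigma g)(v)=g(\sigma(v))$ because permutations act left to right, one has $\psi\in\Lambda_{[\phi]}$ iff $g(\sigma(w))=\phi(w)$ for every $w\in W$. Set $U:=\sigma(W)$ and enumerate $W=\{w_1,\ldots,w_r\}$ with $u_i:=\sigma(w_i)$. First I would try to find a $g_0\in\Gamma$ transporting the tuple $(u_1,\ldots,u_r)$ to $(\phi(w_1),\ldots,\phi(w_r))$, or certify that no such element exists (in which case $\Lambda_{[\phi]}=\emptyset$). This is done in $r$ stages: at stage $i$, compute a strong generating set for the pointwise stabilizer $\Gamma_i:=\Gamma_{(\{u_1,\ldots,u_{i-1}\})}$ (in the sense that each $u_j$ is fixed, using the current partial transporter), check whether $\phi(w_i)$ lies in the $\Gamma_i$-orbit of $u_i$, and if so pull out a witness and multiply it into the accumulating $g_0$. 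All of this is polynomial time by Schreier--Sims.

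I then claim $\Lambda_{[\phi]}=(\sigma g_0)\,\Gamma_{(\phi(W))}$, where $\Gamma_{(\phi(W))}$ denotes the pointwise stabilizer of $\phi(W)$ in $\Gamma$; in particular $\Lambda_{[\phi]}$ is a subcoset of $\Lambda$. To see this, for any $h\in\Gamma$ and each $i$, $(g_0 h)(u_i)=h(g_0(u_i))=h(\phi(w_i))$, which equals $\phi(w_i)$ exactly when $h$ fixes each $\phi(w_i)$; hence the set of $g\in\Gamma$ with $g(u_i)=\phi(w_i)$ for all $i$ is precisely $g_0\Gamma_{(\phi(W))}$, and left-multiplication by $\sigma$ yields the claim. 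A generating set for $\Gamma_{(\phi(W))}$ can again be extracted from a strong generating set for $\Gamma$ in polynomial time, so the algorithm outputs a succinct representation of $\Lambda_{[\phi]}$ as required. I do not expect any real obstacle: the statement is essentially a packaging of well-known permutation group algorithms into the coset notation fixed earlier in the section.
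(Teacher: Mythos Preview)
Your proof is correct and follows essentially the same approach as the paper: both reduce to finding a transporter element $g_0\in\Gamma$ carrying $\sigma(W)$ to $\phi(W)$ pointwise (via the orbit/Schreier--Sims machinery) and then identify $\Lambda_{[\phi]}$ with $(\sigma g_0)\Gamma_{(\phi(W))}$. The only cosmetic difference is that the paper phrases the algorithm as iterating the single-element subroutine $\Lambda\mapsto\Lambda_{w,w'}$ over $w\in W$, whereas you describe the full tuple-transporter directly.
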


\begin{proof}
  Without loss of generality we assume that
  $\Lambda=\sigma\Gamma\neq\emptyset$.  If there is a $\psi\in\Lambda$
  such that $\psi(w)=\phi(w)\text{ for all }w\in W$, then
  $\Lambda_{[\phi]}=\psi\Gamma_{\phi(W)}$, where $\Gamma_{\phi(W)}\le\Gamma$ is the pointwise
  stabiliser of $\phi(W)$ in $\Gamma$. Otherwise,
  $\Lambda_{[\phi]}=\emptyset$. Thus $\Lambda_{[\phi]}$ is a coset.

  For the algorithmic claim, it suffices to show that for
  given~$w\in V$ and~$w'\in V'$ it is possible to compute the
  subcoset~$\Lambda_{w,w'}$ of those elements in~$\Lambda=\sigma \Gamma$ that
  map~$w$ to~$w'$. For this we find an element~$g\in \Gamma$ that
  maps~$\sigma(w)$ to~$v'$ using the standard orbit algorithm
  (see~\cite{seress2003permutation}). If no such element exists then
  $\Lambda_{w,w'}=\emptyset$.
  Otherwise~$\Lambda_{w,w'}= \sigma g\Gamma_{w'}$
  where~$\Gamma_{w'}$ is the stabiliser of $w'$ in $\Gamma$. The
  stabiliser of an element can also be computed in polynomial time
  (see~\cite{seress2003permutation}).
\end{proof}

The main technical result of this section is the following theorem. Combined with
the Canonical Decomposition Theorem (Theorem~\ref{theo:candec}), it
yields a polynomial time isomorphism test for graphs of bounded rank
width (Theorem~\ref{theo:main}).

\begin{theo}\label{theo:coset}
  For every $k\in\NN$ there is a polynomial time algorithm that, given
  graphs $G$, $G'$ and normal treelike decompositions $(D,\gamma)$,
  $(D',\gamma')$ of $\rho_G$, $\rho_{G'}$, respectively, of width at
  most $k$, computes  a coset $\Lambda$ such that 
  \[
  \ISO(G_{D,\gamma}, G'_{D',\gamma'}) \leq \Lambda\leq \ISO(G,G').
  \]
\end{theo}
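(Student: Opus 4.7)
The plan is to run a bottom-up dynamic program over $D$, with $D'$ serving as a menu of candidate image nodes. Processing nodes of $D$ in reverse topological order, for every pair $(t,t')\in V(D)\times V(D')$ I will compute a coset $\Lambda_{t,t'}$ that sandwiches the decomposition-respecting isomorphisms between the subtrees at $t$ and $t'$ inside the coset $\ISO(G[\gamma(t)],G'[\gamma'(t')])$ of graph isomorphisms between the induced subgraphs. Taking $t$ and $t'$ to be the roots then yields the desired $\Lambda$. Cosets are represented succinctly and manipulated using the routines of Lemma~\ref{lem:patching:cosets} and Lemma~\ref{lem:stabilizers:of:cosets}.

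For leaves, normality forces $|\gamma(t)|=|\gamma'(t')|=1$ and $\Lambda_{t,t'}$ is the unique bijection between these singletons. For an inner node $t$, axiom~\ref{li:ntl3} splits into two subcases. If all children of $t$ share a common cone, then each encodes an alternative subdecomposition of $\gamma(t)$, and I take $\Lambda_{t,t'}$ to be the least upper bound (Lemma~\ref{lem:patching:cosets}) of the child cosets $\Lambda_{u,u'}$; this stays inside $\ISO(G[\gamma(t)],G'[\gamma'(t')])$ because each summand does.

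The substantive case is when the children $u_1,\dots,u_\ell$ of $t$ have pairwise disjoint cones whose union is $\gamma(t)$ (using $\beta(t)=\emptyset$ at non-root non-leaf nodes by normality). I would build the canonical $?$-block matrices $P$ at $t$ and $P'$ at $t'$ from Section~\ref{sec:part:rank}. Lemma~\ref{lem:pr_vs_width} gives partition rank $\le k$, so Theorem~\ref{thm:compute:canonical:row:extensions} yields canonical extension sets $\Ext,\Ext'$ of bounded size $e=e(k)$. These pin down, for any $?$-block isomorphism $\psi:P\to P'$, a bijection $\chi:\Ext\to\Ext'$ with $x_v=\chi(\mathbf x)_{\psi(v)}$; equivalently, $\psi$ preserves the canonical vertex colour $v\mapsto (x_v)_{\mathbf x\in\Ext}$ relative to $\chi$. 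Moreover, any graph isomorphism $G[\gamma(t)]\to G'[\gamma'(t')]$ respecting the decomposition must restrict to such a $?$-block isomorphism. I would therefore enumerate the $\le e!$ candidate $\chi$, and for each compute the coset of $\chi$-compatible $\psi$ by combining child cosets $\Lambda_{u_i,u'_{\pi(i)}}$ over bijections $\pi$ that preserve each child's type (the multiset of vertex colours inside its cone) and then stabilising the vertex-colour constraint via Lemma~\ref{lem:stabilizers:of:cosets}. This stabilisation is precisely what forces every element of $\Lambda_{t,t'}$ to be a genuine isomorphism of $P$ to $P'$, hence of $G[\gamma(t)]$ to $G'[\gamma'(t')]$. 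Finally, $\Lambda_{t,t'}$ is the least upper bound over $\chi$ of the per-$\chi$ cosets.

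The hard step will be the within-type combination of child cosets. Externally, children of the same type look identical, yet their recursively computed subtree cosets may carry different automorphism structure, so we must compute the coset of permutations of same-type children that admit compatible pointwise selections $\psi_u\in\Lambda_{u,\pi(u)}$. With the number of types bounded by $2^{|\Ext|}$ and cones disjoint, cross-type assembly reduces to a direct product; within a single type, however, we face a coset-labelled bipartite matching problem, which I would solve by a Luks-style divide-and-conquer inside the bounded-order overgroup induced by $\Sym(\Ext)\times\Sym(\Ext')$. Polynomial running time then follows from the polynomial size of $D$ and $D'$, the polynomially many table entries $(t,t')$, the bounded branching over $\chi$, and the polynomial-time coset routines of Lemmas~\ref{lem:patching:cosets} and~\ref{lem:stabilizers:of:cosets}.
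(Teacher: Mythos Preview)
Your dynamic-programming skeleton and the handling of leaves and same-cone nodes match the paper. The gap is in the disjoint-cone case: you hand-wave the assembly of child cosets into $\Lambda_{t,t'}$ as ``Luks-style divide-and-conquer inside the bounded-order overgroup induced by $\Sym(\Ext)\times\Sym(\Ext')$'', but there is no such overgroup acting usefully on the children. The number of children of a fixed colour type is unbounded, and the symmetric group on them has no bounded composition factors; $\Sym(\Ext)\times\Sym(\Ext')$ acts on extensions, not on children. You therefore have no algorithm for the core problem: given the bipartite family of nonempty restricted child cosets $\Lambda(u,u')_{[\chi]}$, produce in polynomial time a single coset that contains every combined map $\psi$ assembled from some admissible bijection $\pi$ and choices $\psi|_u\in\Lambda(u,\pi(u))_{[\chi]}$, yet is still contained in $\ISO(G[\gamma(t)],G'[\gamma'(t')])$. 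This is exactly the technical heart of the paper's proof, and it is not a Luks-type argument.

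The paper solves it with two elementary but non-obvious moves. First (its Step~5), it closes the bipartite nonemptiness relation under composition along alternating paths, adding an explicit isomorphism to each $\Lambda(u,u')_{[\phi,\chi]}$ reached this way; after closure the admissible bijections $\alpha$ satisfy $\alpha_1\alpha_2^{-1}\alpha_3$ admissible whenever $\alpha_1,\alpha_2,\alpha_3$ are, i.e.\ they form a coset of a product of symmetric groups. Second (Step~6), it fixes one admissible $\alpha^0$ and representatives $\psi^0_{u,u'}$, then takes as generators all $(\psi^0_\alpha)^{-1}\psi^0_{\alpha'}$ for $\alpha,\alpha'$ differing from $\alpha^0$ in at most three positions, together with generators of the within-child groups $\hat\Gamma'_{u,u'}$; an induction on Hamming distance to $\alpha^0$, peeling off a transposition or a $3$-cycle at each step, shows every $\psi^0_{\alpha^*}$ lies in the generated coset. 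A secondary issue: the paper recurses not on $G[\gamma(t)]$ but on the augmented graph $G_t$ carrying $|W_t|\le 2^k$ extra ``type'' vertices that encode adjacencies into $\bar{\gamma(t)}$, and it fixes a bijection $\phi:W_t\to W'_{t'}$ before $\chi$. Your variant dropping $W_t$ can be made consistent, but only if you build the $?$-block matrix on $\gamma(t)\times\gamma(t)$ rather than the $V\times V$ matrix of Section~\ref{sec:part:rank}; with the latter, a map defined only on $\gamma(t)$ is not a $P$-isomorphism, so the canonicity of $\Ext$ does not hand you a $\chi$, and your lower-bound containment fails.
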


Note that $\Lambda$ may be empty if
$\ISO(G_{D,\gamma}, G'_{D',\gamma'})$ is empty, that is, if there is
no isomorphism from $G$ to $G'$ that respects the treelike
decompositions. It is worth noting that it is isomorphism-hard to
compute $\ISO(G_{D,\gamma}, G'_{D',\gamma'})$ even if $G,G'$ are
graphs with no edges and $(D,\gamma)$, $(D',\gamma')$ treelike
decompositions of width $0$. Thus computing
a coset that is sandwiched between
$\ISO(G_{D,\gamma}, G'_{D',\gamma'})$ and $\ISO(G,G')$ is a crucial
trick.

In the following, let us fix $k$ and $G,G'$ and $(D,\gamma)$,
$(D',\gamma')$.  Let $V:=V(G)$ and $V':=V(G')$. 

Let $t\in V(D)$. We shall define a graph $G_t$ that represents the
induced subgraph $G[\gamma(t)]$ as well as an ``abstraction'' of the edges from $\gamma(t)$
to $\bar{\gamma(t)}=V\setminus\gamma(t)$. Let~$\Type_t\subseteq\{0,1\}^{\gamma(t)}$ be the 
set of rows that appear in the
matrix~$M_{\bar{\gamma(t)}, \gamma(t)}$. 
Since the width of $(D,\gamma(t))$ is at most~$k$, the rank of the matrix
$M_{\bar{\gamma(t)}, \gamma(t)}$ is at most $k$, and thus
the set~$\Type_t$ has
size at most~$2^k$.  
We may view the elements $\mathbf w=(w_v\mid v\in\gamma(t))$ in~$W_t$ as
``types'', or equivalence classes of vertices $w\in
\bar{\gamma(t)}$, where two vertices $w,w'$ have the same type, or
are equivalent, if they have the same adjacencies with the vertices in
$\gamma(t)$. The entries of the vector $\mathbf w$ are these
adjacencies; $w_v=1$ means that all vectors of this type are adjacent
to $v$ and $w_v=1$ means that they are not adjacent.

Now we are ready to define the graph $G_t$. The
vertex set is 
\[
V(G_t):=\gamma(t)\cup\Type_t,
\]
and the edge set is
\begin{align*}
  E(G_t):=\,&\big\{vv'\bigmid v,v'\in\gamma(t)\text{ such that }vv'\in
              E(G)\big\}\\
  \cup&\big\{v\mathbf w\bigmid v\in\gamma(t)\text{ and }\mathbf
        w=(w_{v'}\mid v'\in\gamma(t))\in W_t\text{ such that }w_v=1\big\}.
\end{align*}
Thus $\Type_t$ is an indpendent set in $G_t$, and $G_t[\gamma(t)] =
G[\gamma(t)]$. We colour the graph~$G_t$ so that the vertices
in~$\Type_t$ are coloured red and all other vertices are coloured
blue.  We let $D_t$ be the induced subgraph of $D$ whose vertex set
consist of all vertices that are reachable from $t$ in $D$, and we let
$\gamma_t$ be the restriction of $\gamma$ to $V(D_t)$. Then
$(D_t,\gamma_t)$ is a normal treelike decomposition of
$V(G_t)\setminus W_t$. (We may
also view it as a partial treelike decomposition of $V(G_t)$ or even
$V(G)$; this does not matter, as we are not interested in the width of
this decomposition.)
Note that if~$r$ is the unique root of~$D$, which exists by
\ref{li:ntl4}, then we have~$G_r = G$
and~$(D_r,\gamma_r) = (D,\gamma)$. 

We define sets
$\Type'_t\subseteq\{0,1\}^{\gamma'(t)}$, graphs $G'_t$, and
decompositions $(D'_t,\gamma'_t)$ analogously for  all nodes $t\in V(D')$.

Recall that our goal is to compute a coset $\Lambda$ such that
  \[
  \ISO(G_{D,\gamma}, G'_{D',\gamma'}) \leq \Lambda\leq \ISO(G,G')
  \]
We do this by a dynamic programming algorithm that processes the
nodes of $D$ in a bottom-up manner (starting from the leaves). 
The next lemma describes the inductive step.

\begin{lem}
There is a polynomial time algorithm that, given
$G,G',(D,\gamma),(D',\gamma')$ as above and in addition
\begin{itemize}
\item nodes $t\in V(D)$ and $t'\in V(D')$;
\item for all $u\in N_+^D(t)$ and $u'\in N_+^{D'}(t')$ a coset $\Lambda(u,u')$
 satisfying~
 \begin{equation}
   \label{eq:34}
    \ISO((G_u)_{{D_u,\gamma_u}}, 
 ({G'}_{u'})_{{D'_{u'},\gamma'_{u'}}}) \le \Lambda(u,u') \le \ISO
 (G_u,G'_{u'}),
 \end{equation}
\end{itemize}
computes a coset $\Lambda$ such that
\[
\ISO((G_t)_{{D_t,\gamma_t}}, ({G'}_{t'})_{{D'_{t'},\gamma'_{t'}}})
\le\Lambda \le \ISO (G_t,G'_{t'}).
\]
\end{lem}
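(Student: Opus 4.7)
The plan is to construct $\Lambda$ inductively from the child cosets, splitting on the structure at $t$ imposed by normality of $(D,\gamma)$. In the leaf case (no children), \ref{li:ntl2} gives $|\beta(t)|=1$ so $\gamma(t)$ is a singleton and $V(G_t)$ has at most $1+2^k$ vertices. In that case we compute $\ISO(G_t,G'_{t'})$ by brute-force enumeration and return it as $\Lambda$. In the internal case, normality gives $\beta(t)=\emptyset$, and by \ref{li:ntl3} the children of $t$ (and, symmetrically, those of $t'$) either all share the cone $\gamma(t)$ or have pairwise disjoint cones partitioning $\gamma(t)$.

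If the children share a common cone, then $G_{u_i}=G_t$ as graphs for every child $u_i$ of $t$, and similarly $G'_{u'_j}=G'_{t'}$. Every decomposition-respecting isomorphism $\varphi$ induces a bijection $\pi$ on children and thus lies in $\Lambda(u_1,u'_{\pi(1)})$. Fixing any child $u_1$, we let $\Lambda$ be the least upper bound of the nonempty cosets among $\{\Lambda(u_1,u'_j):j\in[\ell']\}$, computed by repeated application of Lemma~\ref{lem:patching:cosets}. The lower inclusion follows from the observation above, and the upper inclusion holds because each $\Lambda(u_1,u'_j)\le \ISO(G_{u_1},G'_{u'_j})=\ISO(G_t,G'_{t'})$.

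The substantive case is disjoint cones. The difficulty is that each $G_{u_i}$ carries type vertices $\Type_{u_i}$ encoding adjacencies with all of $\bar{\gamma(u_i)}$, including sibling cones, whereas $G_t$ carries only the coarser types $\Type_t$ encoding adjacencies with $\bar{\gamma(t)}$; so the child cosets speak a ``finer language'' than the coset at $t$, and the bridge between the two must be built explicitly. To this end, we form the $?$-block matrix $P$ associated with $t$ as in Section~\ref{sec:part:rank}, with $?$-indices $\gamma(u_1),\ldots,\gamma(u_\ell),\bar{\gamma(t)}$. By Lemma~\ref{lem:pr_vs_width} its partition rank equals the width of $(D,\gamma)$ at $t$ and thus is at most $k$, so Theorem~\ref{thm:compute:canonical:row:extensions} yields a canonical extension set $\Ext\subseteq\{0,1\}^{V(G)}$ of size at most $e(k)$; analogously we compute $\Ext'$ at $t'$. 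Each vertex $v$ acquires an ``extension signature'' $\Ext(v)\subseteq\Ext$. By canonicity, any isomorphism $G_t\to G'_{t'}$ induces a bijection $\chi:\Ext\to\Ext'$ carrying signatures to signatures, and since $|\Ext|\le e(k)$ there are at most $e(k)!$ candidate $\chi$'s. For each candidate $\chi$ and each pair $(u_i,u'_j)$ whose vertex-signature multisets match under $\chi$, we restrict the child coset $\Lambda(u_i,u'_j)$ to those elements inducing $\chi$ on types using Lemma~\ref{lem:stabilizers:of:cosets}, and then aggregate via repeated least upper bounds (Lemma~\ref{lem:patching:cosets}): first over $j$ for each fixed $i$ and fixed $\chi$, and finally over all $\chi$. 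The result is a single coset $\Lambda$ of bijections $V(G_t)\to V(G'_{t'})$.

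The main obstacle is the disjoint-cones case, where one must verify both inclusions: that every decomposition-respecting isomorphism survives this aggregation, and that the aggregate lies in $\ISO(G_t,G'_{t'})$. For the lower inclusion, canonicity of the extension sets ensures that any decomposition-respecting $\varphi$ fits some valid $\chi$ together with a child-matching $\pi$ whose pairs are signature-compatible, placing $\varphi$ in one of the combined cosets. For the upper inclusion, one checks that any bijection produced by our aggregation preserves the edges between different children's cones and between cones and $\Type_t$, since those are precisely the adjacencies pinned down by the common extension vectors. The bound $|\Ext|\le e(k)$, depending only on $k$, together with the polynomial-time primitives of Lemmas~\ref{lem:patching:cosets} and~\ref{lem:stabilizers:of:cosets}, is what keeps the combinatorial search over $\chi$'s and pair-matchings polynomial for fixed $k$.
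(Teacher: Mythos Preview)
Your leaf case and your same-cone case are fine and match the paper. The disjoint-cones case, however, has a genuine gap: your aggregation mechanism does not typecheck, and the actual combinatorial core of the step is missing.

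Concretely, for distinct children $u'_j$ of $t'$ the cones $\gamma'(u'_j)$ are disjoint, so the cosets $\Lambda(u_i,u'_j)$ live on \emph{different} pairs of ground sets $(V(G_{u_i}),V(G'_{u'_j}))$. Lemma~\ref{lem:patching:cosets} only takes least upper bounds of cosets on a \emph{common} pair $(V,V')$; you cannot take an LUB ``over $j$'' of the $\Lambda(u_i,u'_j)$ and get anything at all, let alone a coset of bijections $V(G_t)\to V(G'_{t'})$. Nor do you say how to pass from per-child data to a single bijection on $V(G_t)$: nothing in your outline glues maps on the pieces $\gamma(u_1),\ldots,\gamma(u_\ell),W_t$ into one map.

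What the paper actually does here is substantially more work. First it also iterates over a bijection $\phi\colon W_t\to W'_{t'}$ of the (at most $2^k$) type vertices, not only over $\chi\colon\Ext\to\Ext'$; both are needed so that a glued map is forced to respect edges to $W_t$ (via $\phi$) and edges between sibling cones (via $\chi$). Then, for each admissible bijection $\alpha\colon N_+^D(t)\to N_+^{D'}(t')$ (meaning $\Lambda(u,\alpha(u))_{[\phi,\chi]}\neq\emptyset$ for all $u$), it \emph{constructs} a bijection $\psi_\alpha$ by piecing together one element from each $\Lambda(u,\alpha(u))_{[\phi,\chi]}$ and $\phi$ on $W_t$, and proves directly that $\psi_\alpha\in\ISO(G_t,G'_{t'})$. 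Since there can be factorially many admissible $\alpha$, one cannot enumerate them; the key trick is to fix one $\alpha^0$, take as generators the lifts $\hat\Gamma'_{u,u'}$ of the child stabilisers together with only those $(\psi^0_\alpha)^{-1}\psi^0_{\alpha'}$ for $\alpha,\alpha'$ differing from $\alpha^0$ in at most three positions, and then argue (after a closure step on the $\Lambda(u,u')_{[\phi,\chi]}$) that these already generate enough to capture every $\psi^0_{\alpha^*}$. None of this appears in your sketch, and it is precisely where both inclusions and the polynomial running time come from.
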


\begin{proof}
  Let $t\in V(D)$ and $t'\in V(D')$, and for all $u\in N_+^D(t)$ and
  $u'\in N_+^{D'}(t')$, let $\Lambda(u,u')$ be a coset satisfying
  \eqref{eq:34}.

  We split the lengthy proof into several steps.

  \medskip\noindent
  \textit{Step~1: The easy cases.}\\
  Suppose first that $t$ is a leaf of $D$. As the decomposition
  $(D,\gamma)$ is normal, by \ref{li:ntl2} we have $|\gamma(t)|=1$. In
  this case, we can compute $\ISO(G_t,G'_{t'})$ by brute force and let
  $\Lambda:=\ISO(G_t,G'_{t'})$.

  \medskip
  In the following, we assume that $t$ has at least one child. By \ref{li:ntl3} either
  all children of $t$ have the same cone or the children have mutually
  disjoint cones.
  
  Suppose that all children of $t$ have the same cone.  As
  $\beta(t)=\emptyset$, this implies that for all $u\in N_+^D(t)$ we
  have $\gamma(u)=\gamma(t)$. This implies $W_{u}=W_t$ and
  $G_u=G_t$. (But note that $(D_u,\gamma_u)\neq (D_t,\gamma_t)$!)

  If there is a child $u'$ of $t'$ with $\gamma'(u')\neq\gamma'(t')$,
  then
  $\ISO((G_t)_{{D_t,\gamma_t}},
  ({G'}_{t'})_{{D'_{t'},\gamma'_{t'}}})=\emptyset$,
  and we can let $\Lambda:=\emptyset$. So we assume that for
  all $u'\in N_+^{D'}(t')$ we have $\gamma'(u')=\gamma'(t')$.  Again,
  this implies $W'_{u'}=W'_{t'}$ and $G'_{u'}=G'_{t'}$.

  Hence for all $u\in N_+^D(t)$, $u'\in N_+^{D'}(t')$ we have
  $
  \ISO(G_u,G'_{u'})=\ISO(G_t,G'_{t'})
  $
  and thus
  \begin{equation}
    \label{eq:35}
    \ISO((G_u)_{{D_u,\gamma_u}}, 
 ({G'}_{u'})_{{D'_{u'},\gamma'_{u'}}}) \le\Lambda(u,u')\le\ISO(G_t,G'_{t'})
  \end{equation}
  We let $\Lambda$ be the least upper bound of the cosets
  $\Lambda(u,u')$ for $u\in N_+^D(t)$, $u'\in N_+^{D'}(t')$,
  which we can compute in polynomial time by
  Lemma~\ref{lem:patching:cosets}. Then by \eqref{eq:35} , for all $u\in N_+^D(t)$, $u'\in N_+^{D'}(t')$ we have
  \[
  \ISO((G_u)_{{D_u,\gamma_u}}, 
 ({G'}_{u'})_{{D'_{u'},\gamma'_{u'}}}) \le\Lambda\le\ISO(G_t,G'_{t'})
 \]
 It remains to prove that $\ISO((G_t)_{{D_t,\gamma_t}},
 ({G'}_{t'})_{{D'_{t'},\gamma'_{t'}}})\le\Lambda$. Let $\psi\in \ISO((G_t)_{{D_t,\gamma_t}},
 ({G'}_{t'})_{{D'_{t'},\gamma'_{t'}}})$, and let $\hat\psi$ be an
 isomorphism form $(D_t,\gamma_t)$ to $(D'_{t'},\gamma'_{t'})$
 extending $\psi$. Let $u\in N_+^D$ and $u':=\hat\psi(u)$. Then the
 restriction of $\hat\psi$ to $D_u$ is an isomorphism from
 $(D_u,\gamma_u)$ to $(D'_{u'},\gamma'_{u'})$ that extends the
 isomorphism $\psi$ from $G_u=G_t$ to $G'_{u'}=G'_{t'}$. Hence $\psi\in \ISO((G_u)_{{D_u,\gamma_u}}, 
 ({G'}_{u'})_{{D'_{u'},\gamma'_{u'}}}) \le\Lambda$.

 \medskip\noindent
  \textit{Step~2: Fixing the outside.}\\
 In the following, we assume that the children of $t$ have
 mutually disjoint cones. If the children of $t'$ have the same cone,
 then
 $\ISO((G_t)_{{D_t,\gamma_t}},
 ({G'}_{t'})_{{D'_{t'},\gamma'_{t'}}})=\emptyset$,
 and we can let $\Lambda:=\emptyset$. So we assume that the
 children of $t'$ have mutually disjoint cones as well.

The colouring of the graphs $G_t$ and $G'_{t'}$ forces each
isomorphism $\psi\in\ISO(G_t,G'_{t'})$ to map $W_t$ bijectively to
$W'_{t'}$. 
For each bijection $\phi:W_t\to W'_{t'}$, we shall compute
a coset $\Lambda_\phi$ such that
\[
\ISO((G_t)_{{D_t,\gamma_t}}, ({G'}_{t'})_{{D'_{t'},\gamma'_{t'}}})_{[\phi]}
\le\Lambda_\phi \le \ISO (G_t,G'_{t'});
\]
recall that 
$\ISO((G_t)_{{D_t,\gamma_t}},
({G'}_{t'})_{{D'_{t'},\gamma'_{t'}}})_{[\phi]}$ denotes the subcoset
of $\ISO((G_t)_{{D_t,\gamma_t}},
({G'}_{t'})_{{D'_{t'},\gamma'_{t'}}})$ consisting of all isomorphisms
$\psi$ whose restriction to $W_t$ is $\phi$. 
Then we let $\Lambda$ be the least upper bound of the cosets
$\Lambda_\phi$ (see Lemma~\ref{lem:patching:cosets}). 
As $|W_t|,|W'_{t'}|\le 2^k$, there are at most $2^k!$ bijections
$\phi:W_t\to W'_{t'}$, and if we can compute each $\Lambda_\phi$ in
polynomial time, we can also compute $\Lambda$ in polynomial time (for
fixed $k$).

From now on, we fix a bijection $\phi:W_t\to W'_{t'}$.

Let $u\in N_+^D(t)$ and $u'\in N_+^{D'}(t')$. Recall that
$\gamma(u)\subseteq\gamma(t)$.  The \emph{$u$-projection} of a
vector $\mathbf w=(w_v\mid v\in\gamma(t))\in\{0,1\}^{\gamma(t)}$ is the
vecor $\mathbf w|_u=(w_v\mid
v\in\gamma(u))\in\{0,1\}^{\gamma(u)}$. We observe that for every vector
$\mathbf w\in W_t$ we have $\mathbf w|_u\in W_u$. To see this,
recall that $\mathbf w$ appears as a row in the matrix
$M_{\bar{\gamma(t)},\gamma(t)}$. Let $w\in\bar\gamma(t)$ be the index
of this row. Then $w\in\bar{\gamma(u)}\supseteq\bar{\gamma(t)}$, and
hence there is also a row with index $w$ in the matrix
$M_{\bar{\gamma(u)},\gamma(u)}$, and this row is precisely $\mathbf w|_u$.
Similarly, for $u'\in N_+^{D'}(t')$, every $\mathbf w'\in W'_{t'}$ has a $u'$-projection $\mathbf
w'|_{u'}$ in $W'_{u'}$.

Now let $\psi\in\Lambda(u,u')$. Then $\psi$ induces a bijection
from $W_u$ to $W'_{u'}$.
We say that $\psi$ \emph{agrees} with $\phi$ if for every $\mathbf
w=(w_v\mid v\in\gamma(t))\in W_t$ it holds that $\psi(\mathbf
w|_u)=\phi(\mathbf w)|_{u'}$. Observe that with $\mathbf w'=(w'_{v'}\mid v'\in
\gamma'(t')):=\phi(\mathbf w)$, this implies $w_v=w'_{\psi(v)}$ for
all $v\in\gamma(u)$. Indeed, for $v\in\gamma(u)$ we have
\[
w_v=1\iff v\mathbf w|_u\in E(G_u)\iff \psi(v)\mathbf w'|_{u'}\in
E(G'_{u'})\iff w'_{\psi(v)}=1,
\]
because $\psi$ is an isomorphism from $G_u$ to $G'_{u'}$.
We denote the subcoset of all $\psi\in\Lambda(u,u')$ that agree with
$\phi$ by $\Lambda(u,u')_{[\phi]}$. This set can be computed in polynomial time by Lemma~\ref{lem:stabilizers:of:cosets}.

 \medskip\noindent
  \textit{Step~3: Fixing all extensions.}\\
Let~$P\in\{0,1,?\}^{V\times V}$ and $P'\in\{0,1,?\}^{V'\times V'}$ be the $?$-block matrices associated with $t$ and~$t'$,
respectively. By Lemma~\ref{lem:pr_vs_width}, both $P$ and $P'$ have
partition rank at most $k$.
By~Theorem~\ref{thm:compute:canonical:row:extensions}, we can compute
canonical extension sets~$\Ext\subseteq\{0,1\}^V$, $\Ext'\subseteq\{0,1\}^{V'}$ for $P,P'$ of size at most
$e(k)$. Recall that for a vertex $v\in\gamma(t)$, by $\Ext(v)$ we
denote the set of all extensions of the row $\mathbf p_v$ of $P$ in
$\Ext$, and similarly for $v'\in V'$ by $\Ext'(v')$ the extensions of
$\mathbf p'_{v'}$.

If $\lvert \Ext\rvert\neq\lvert \Ext'\rvert$ then $\ISO (G_t,G'_{t'})=\emptyset$, and we let
$\Lambda:=\emptyset$. Suppose otherwise. 
Let~$\chi\colon \Ext \rightarrow \Ext'$ be an arbitrary bijection. 
We say that an isomorphism~$\psi$ from~$G_t$ to~$G'_{t'}$ \emph{agrees}
with~$\chi$ if for all
$v\in\gamma(t)$ we have $\chi(\Ext(v))=\Ext'(\psi(v))$ and for every $\mathbf x=(x_v\mid v\in V)\in\Ext$ with
$\chi(\mathbf x)=(x'_{v'}\mid v'\in V')$ and every
$v\in \gamma(t)$ it holds that $x_v=x'_{\psi(v)}$.

We argue that every $\psi\in\ISO((G_t)_{{D_t,\gamma_t}},
({G'}_{t'})_{{D'_{t'},\gamma'_{t'}}})$ agrees with some bijection
$\chi\colon \Ext \rightarrow \Ext'$. To see this, note first that
$\psi$ induces an isomorphism from $P$ to $P'$. Since~$\Ext$ and~$\Ext'$
are canonical, there is a bijection $\chi\colon\Ext\to\Ext'$ such
that~$\chi(\Ext(v) ) = \Ext(\psi(v))$ for all~$v\in V$ and for every $\mathbf x=(x_v\mid v\in V)\in \Ext$ with $\chi(\mathbf x)=:\mathbf x'=(x'_{v'}\mid
v'\in V')\in\Ext'$ and every $v\in V$ we have $x_v=x'_{\psi(v)}$.

In the following, we fix a bijection $\chi\colon \Ext \rightarrow
\Ext'$, and we let $\ISO((G_t)_{{D_t,\gamma_t}},
({G'}_{t'})_{{D'_{t'},\gamma'_{t'}}})_{[\phi,\chi]}$ be the coset
consisting of all $\psi\in \ISO((G_t)_{{D_t,\gamma_t}},
({G'}_{t'})_{{D'_{t'},\gamma'_{t'}}})$ that agree with $\phi$ and
$\chi$. By the usual argument based on
Lemma~\ref{lem:patching:cosets}, it suffices to compute a coset
$\Lambda_{\phi,\chi}$ such that
\[
\ISO((G_t)_{{D_t,\gamma_t}},
({G'}_{t'})_{{D'_{t'},\gamma'_{t'}}})_{[\phi,\chi]}\le\Lambda_{\phi,\chi}\le\ISO(G_t,G'_{t'}).
\]
Let $u\in N_+^D(t)$ and $u'\in N_+^{D'}(t')$. We say that a
$\psi\in \Lambda(u,u')$ \emph{agrees} with $\chi$ if for all
$v\in\gamma(u)$ we have $\chi(\Ext(v))=\Ext'(\psi(v))$ and for every $\mathbf x=(x_v\mid v\in V)\in\Ext$ with
$\chi(\mathbf x)=(x'_{v'}\mid v'\in V')$ and every
$v\in \gamma(u)$ it holds that $x_v=x'_{\psi(v)}$. By
$\Lambda(u,u')_{[\phi,\chi]}$ we denote the
subcoset of $\Lambda(u,u')$ consisting of all $\psi$ that agree with
both $\phi$ and $\chi$. We can compute $\Lambda(u,u')_{[\phi,\chi]}$
by Lemma~\ref{lem:stabilizers:of:cosets}.

Let us write
$\Lambda(u,u')_{[\phi,\chi]}=\sigma_{u,u'}\Gamma'_{u,u'}$, where
$\Gamma'_{u,u'}\le\Aut(G'_{u'})$. Observe that
\begin{align}
  \label{eq:37}
 & g(\mathbf w')=\mathbf w'&\text{for all $g\in \Gamma'_{u,u'}$ and
    $\mathbf w'\in W'_{u'}$.}
\end{align}
This follows from the fact that all
$\psi\in\Lambda(u,u')_{[\phi,\chi]}$ agree with $\phi$ on
$W_u$. Indeed, applying this to $\psi_1=\sigma_{u,u'}$ and
$\psi_2=\sigma_{u,u'}g^{-1}$, for $\mathbf w'\in W'_{u'}$ we have
\[
\sigma_{u,u'}^{-1}(\mathbf w')=\psi_1^{-1}(\mathbf
w')=\phi^{-1}(\mathbf w')=\psi_2^{-1}(\mathbf
w')=\sigma_{u,u'}^{-1}(g(\mathbf w'))
\]
and thus $\mathbf w'=g(\mathbf w')$.
Furthermore, 
\begin{align}
    \label{eq:38}
&x'_v=x'_{g(v)}&\text{for all $\mathbf x'=(x'_v\mid v\in V')\in\Ext'$ and all
    $v\in\gamma'(u')$.}
\end{align}
This follows from the fact that all
$\psi\in\Lambda(u,u')_{[\phi,\chi]}$ agree with $\chi$. Let $\psi_1=\sigma_{u,u'}$ and
$\psi_2=\sigma_{u,u'}g$. Then 
\[
x'_v=x_{\psi_1^{-1}(v)}=x'_{\psi_2(\psi_1^{-1}(v))}=x'_{g(v)}.
\]
For each
$g\in\Gamma'_{u,u'}$ we define a permutation $\hat g$ of
$V(G'_{t'})$ by
\[
\hat g(v):=
\begin{cases}
  g(v)&\text{if }v\in\gamma'(u'),\\
  v&\text{if }v\in\gamma'(t')\setminus\gamma'(u'),\\
  v&\text{if }v\in W'_{t'}.
\end{cases}
\]
We let $\hat \Gamma'_{u,u'}$ be the set of all
$\hat g$ for $g\in \Gamma'_{u,u'}$. Then $\hat \Gamma'_{u,u'}$ is a
subgroup of $\Sym(V(G'_{t'}))$. In fact, the groups $\Gamma'_{u,u'}$
and $\hat \Gamma'_{u,u'}$ are identical
as abstract groups, they only differ in their permutation action.

\begin{claim}\label{claim:dp1}
  $\hat\Gamma'_{u,u'}\le\Aut(G'_{t'})$.

  \proof
  Let $\hat g\in \Gamma'_{u,u'}$ and $v_1,v_2\in V(G'_t)$ and
  $v_i':=\hat g(v_i')$. We shall prove 
  \begin{equation}
    \label{eq:36}
    v_1v_2\in
    E(G'_{t'})\iff v_1'v_2'\in E(G'_{t'}).
  \end{equation}
  Clearly, this equivalence holds if $v_1,v_2\in\gamma'(u')$,
  because $g$ is an automorphism of $G'_{u'}$, and also if
  $v_1,v_2\not\in\gamma'(u')$, because then we have $v_1=v_1'$ and
  $v_2=v_2'$.

  So let us assume that $v_1\in\gamma'(u')$ and
  $v_2\not\in\gamma'(u')$. Then $v_2=\hat g(v_2)$. Suppose first that $v_2=\mathbf w'=(w'_v\mid
  v\in \gamma'(t'))\in W'_{t'}$. Then the $u'$-projection
  $\mathbf w'|_{u'}=(w'_v\mid v\in\gamma'(u'))$ is in
  $W'_{u'}$. By \eqref{eq:37}, we have
  \begin{align*}
    v_1v_2\in
    E(G'_{t'})&\iff w'_{v_1}=1\iff v_1\mathbf w'|_{u'}\in
                              E(G'_{u'})\\
    &\iff
      g(v_1)\mathbf w'|_{u'}\in E(G'_{u'})
      \iff w'_{g(v_1)}=1
      \iff g(v_1)g(v_2)\in E(G'_{t'}).
  \end{align*}
  It remains to consider the case
  $v_2\in\gamma'(t')\setminus\gamma'(u')$. Then there is a $u''\in
  N_+^{D'}(t')\setminus\{u'\}$ such that $v_2\in\gamma'(u'')$. Let
  $\mathbf x'=(x'_{v'}\mid v'\in V')\in\Ext'(v_2)$. 
  As $v_1$ is not contained in $\gamma'(u'')$, the $?$-index of $v_2$,
  and $\mathbf x'$ is an extension of $\mathbf p'_{v_2}=(p'_{v_2v}\mid
  v\in V')$, we have
  \[
  v_1v_2\in E(G'_{t'})\iff p'_{v_2v_1}=1\iff x'_{v_1}=1,
  \]
  and similarly $g(v_1)v_2\in E(G'_{t'})\iff x'_{g(v_1)}=1$. By \eqref{eq:38}
  and $v_2=g(v_2)$ we thus have \eqref{eq:36}.
  \uend
\end{claim}

 \medskip\noindent
  \textit{Step~4: Admissible bijections.}\\
A bijection $\alpha \colon N_+^D(t)\to N_+^{D'}(t')$  is
\emph{admissible} if $\Lambda(u,\alpha(u))_{[\phi,\chi]}$ is nonempty for all~$u\in
N_+^D(t)$. 

Let $\psi\in\Iso(G_t,G'_{t'})$ and $u\in N_+^D(t)$, $u'\in
N_+^{D'}(t')$ such that $\psi(\gamma(u))=\gamma'(u')$. We define
the \emph{induced mapping} $\psi|_u:V(G_u)\to V(G'_{u'})$ by
\[
\psi|_u(v):=
\begin{cases}
  \psi(v)&\text{if }v\in\gamma(u),\\
  \psi(\mathbf w)|_{u'}&\text{if }v=\mathbf w|_u
  \text{ for some }\mathbf w\in W_t.
\end{cases}
\]

\begin{claim}[resume]\label{claim:dp2a}
  $\psi|_u\in\Iso(G_u,G_u')$. Furthermore, if $\psi$ agrees with
  $\phi$ and $\chi$, then so does $\psi|_u$.

  \proof We first need to prove that $\psi|_u$ is well-defined. Let
  $\mathbf w_1,\mathbf w_2\in W_t$ such that
  $\mathbf w_1|_u=\mathbf w_2|_u$. Suppose that
    $\mathbf w_i=(w_{iv}\mid v\in \gamma(t))$, and let
    $\mathbf w_i'=(w'_{iv'}\mid v'\in\gamma'(t')):=\psi(\mathbf
    w_i)$. Then for all $v\in\gamma(t)$ it holds that
  \[
  w_{iv}=1\iff v\mathbf w_i\in E(G_t)\iff \psi(v)\mathbf w'_i\in
  E(G'_{t'})\iff w'_{i\psi(v)}=1.
  \]
  Thus 
  \[
  \mathbf w_i|_u=(w_{iv}\mid v\in\gamma(u))
  =(w'_{i\psi(v)}\mid v\in\gamma(u))=(w'_{i v'}\mid
  v'\in\gamma'(u'))=\mathbf w_i'|_{u'},
  \]
  because $\psi(\gamma(u))=\gamma'(u')$. As $\mathbf
  w_1|_u=\mathbf w_2|_u$, it follows that $\mathbf
  w_1'|_{u'}=\mathbf w_2'|_{u'}$. A similar argument
  shows that $\psi|_u$ is a bijection.

  It now follows directly from the fact that $\psi$ is an isomorphism
  from $G_t$ to $G'_{t'}$ that $\psi|_u$ is an isomorphism, observing
  that for $v\in\gamma(u)$ and $\mathbf w\in W_t$ it holds that
  $v\mathbf w\in E(G_t)\iff v\mathbf w|_u\in E(G_u)$.

  The second claim follows directly from the definitions of a mapping in
  $\Iso(G_u,G'_{u'})$ agreeing with $\phi$ or $\chi$.
  \uend
\end{claim}

\begin{claim}[resume]\label{claim:dp2}
  Let
  $\psi\in\ISO((G_t)_{{D_t,\gamma_t}},
  ({G'}_{t'})_{{D'_{t'},\gamma'_{t'}}})_{[\phi,\chi]}$.
  Then there is an admissible bijection $\alpha_\psi$ such that for every
  $u\in N_+^D(t)$ it holds that
  $\psi(\gamma(u))=\gamma'(\alpha_\psi(u))$ and the induced mapping
  $\psi|_u$ is in $\ISO((G_u)_{{D_u,\gamma_u}},
  ({G'}_{\alpha_\psi(u)})_{{D'_{\alpha_\psi(u)},\gamma'_{\alpha_\psi(u)}}})_{[\phi,\chi]}$.

  \proof The mapping $\psi$ has an extension $\hat\psi$ that is an
  isomorphism from $(D_t,\gamma_t)$ to $(D'_{t'},\gamma'_{t'})$. Let
  $\alpha_\psi$ be the restriction of $\hat\psi$ to $N_+^D(t)$. Then
  $\alpha_{\psi}$ is a bijection from $N_+^D(t)$ to $N_+^{D'}(t')$.
  
  Let $u\in N_+^D(t)$. It follows from Claim~\ref{claim:dp2a} that the
  induced mapping $\psi|_u$ is in
  $\Iso(G_u,G'_{\alpha_\psi(u)})_{[\phi,\chi]}$.  The restriction of
  $\hat\psi$ to $D_u$ is an isomorphism from $(D_u,\gamma_u)$ to
  $(D'_{\alpha_\psi(u)}, \gamma_{\alpha_\psi(u)})$, which shows that
  $\psi|_u$ respects the decompositions. Hence $\psi|_u\in\ISO((G_u)_{{D_u,\gamma_u}},
  ({G'}_{\alpha_\psi(u)})_{{D'_{\alpha_\psi(u)},\gamma'_{\alpha_\psi(u)}}})_{[\phi,\chi]}$.
  \uend
\end{claim}

Thus if no admissible bijection exists, then
$\ISO((G_t)_{{D_t,\gamma_t}},
({G'}_{t'})_{{D'_{t'},\gamma'_{t'}}})_{[\phi,\chi]}=\emptyset$,
and we let $\Lambda_{\phi,\chi}:=\emptyset$. In the following, we
assume that there is an admissible bijection. We can find one by
computing a perfect matching in the bipartite graph in which~$u$ is
adjacent to~$u'$ if~$\Lambda(u,u')_{[\phi,\chi]}$ is nonempty.

Let $\alpha$ be an admissible bijection. For all $u\in N_+^D(t)$, let $\psi_{u,\alpha(u)}\in
\Lambda(u,\alpha(u))_{[\phi,\chi]}$.
We combine the
$\psi_{u,\alpha(u)}$ to a mapping $\psi_\alpha$ defined by
\begin{equation}
  \label{eq:40}
  \psi_\alpha(v):=
\begin{cases}
  \phi(v)&\text{if $v\in W_t$},\\
  \psi_{u,{\alpha}(u)}(v)&\text{if $v\in\gamma(u)$ for some $u\in N_+^D(t)$.}
\end{cases}
\end{equation}
Observe that for all $u\in N_+^D(t)$ we have $\psi_\alpha|_u=\psi_{u,\alpha(u)}$.

\begin{claim}[resume]\label{claim:dp3}
$\psi_\alpha$ is an isomorphism from~$G_t$ to~$G'_{t'}$. 

\proof
As $\gamma(t)$ is the disjoint union of $W_t$ and the $\gamma(u)$ and
$\gamma'(t')$ is the disjoint union of $W'_{t'}$  the $\gamma'(u')$, the mapping
$\psi_\alpha$ is well-defined and bijective.

Let $v_1,v_2\in V(G'_t)$ and $v_i':=\psi_\alpha(v_i)$. We shall
prove 
\begin{equation}
  \label{eq:39}
  v_1v_2\in
  E(G_{t})\iff v_1'v_2'\in E(G'_{t'}).
\end{equation}
\begin{cs}
  \case1
  $v_1,v_2\in\gamma(u)$ for some $u\in N_+^D(t)$.

  Then
  \begin{align*}
    v_1v_2\in E(G_t)&\iff
                      v_1v_2\in E(G_u)\\
                    &\iff \psi_{u,{\alpha}(u)}(v_1) \psi_{u,{\alpha}(u)}(v_2)\in
                      E(G'_{{\alpha}(u)})\iff v_1'v_2'\in E(G'_{t'}),
  \end{align*}
  because $\psi_{u,{\alpha}(u)}\in \Lambda(u,\alpha(u))_{[\phi,\chi]}\le\ISO(G_u,G'_{\alpha(u)})$.

  \case2
  $v_1\in\gamma(u)$ for some $u\in N_+^D(t)$ and $v_2\in
  W_t$.

  Say, $v_2=\mathbf w=(w_v\mid v\in \gamma(t))$ and $v_2'=\phi(\mathbf
  w)=\mathbf w'=(w'_{v'}\mid v'\in \gamma'(t'))$. Then the
  $u$-projection $\mathbf w|_u$ is in $W_u$ and the
  $u'$-projection $\mathbf w'|_{\alpha(u)}$ is in
  $W'_{\alpha(u)}$. As $\psi_{u,{\alpha}(u)}\in\ISO(G_u,G'_{\alpha(u)})$, we have
    \begin{align*}
    v_1\mathbf w\in
    E(G_{t})&\iff w_{v_1}=1\iff v_1\mathbf w|_u\in
                              E(G_{u})\\
    &\iff
      v_1'\mathbf w'|_{\alpha(u)}\in E(G'_{\alpha(u)})
      \iff w'_{v_1'}=1
      \iff v_1'\mathbf w'\in E(G'_{t'}).
  \end{align*}

  \case2
  $v_2\in\gamma(u)$ for some $u\in N_+^D(t)$ and $v_1\in
  W_t$.

  Symmetric to Case 2.

  \case3
  $v_1,v_2\in W_t$.

  Then $v_1v_2\not\in E(G_t)$ and $v_1'v_2'\not\in E(G'_{t'})$.

  \case4 $v_1\in\gamma(u_1)$ and $v_2\in\gamma(u_2)$ for distinct
  $u_1,u_2\in N_+^D(t)$.

  As $\psi_{u_2,{\alpha}(u_2)}$ agrees with $\chi$, we have
  $\chi(\Ext(v_2)) = \Ext(v_2')$. Let
  $\mathbf x=(x_v\mid v\in V)\in\Ext(v_2)$, and let
  $\mathbf x'=(x'_{v'}\mid v'\in V'):=\chi(\mathbf x)$. Then
  $\mathbf x'\in\Ext(v_2')$. Furthermore, $x_{v_1}=x'_{v_1'}$. As
  $v_1\not\in\gamma(u_2)$, the $?$-index of the row $\mathbf p_{v_2}$ of
  $P$, we have $p_{v_2v_1}\in\{0,1\}$, and as $\mathbf x$ is an extension
  of $\mathbf p_{v_2}$, this means that
  $x_{v_1}=p_{v_1v_2}$. Similarly, $x'_{v_1'}=p'_{v_1'v_2'}$. Thus
  $p_{v_1v_2}=p'_{v_1'v_2'}\in\{0,1\}$, and this implies \eqref{eq:39}.
  \uend
\end{cs}
\end{claim}

\medskip\noindent\textit{Step~5: Closure of the set of admissible
  bijections.}\\
In this step we prove that without loss of generality we may assume
that the sets $\Lambda(u,u')_{[\phi,\chi]}\neq\emptyset$ have the
following closure property.
\begin{ealph}
\item \label{closure:prop} For every pair of
  sequences~$u_1,u_2,\ldots, u_t\in N_+^D(t) $
  and~$u'_1,u'_2,\ldots, u'_t\in N_+^{D'}(t')$, if for all~$i\in [t]$
  we have~$\Lambda(u_i,u'_i)_{[\phi,\chi]}\neq \emptyset$ and for
  all~$j \in [t-1]$ we
  have~$\Lambda(u_{j+1},u'_j)_{[\phi,\chi]} \neq \emptyset$
  then~$\Lambda(u_0,u'_t)_{[\phi,\chi]} = \emptyset$.
\end{ealph}
We achieve this as follows. For all~$u= u_1,u' = u'_t$ for
which there are sequences~$u_1\ldots, u_t\in N_+^D(t) $
and~$u'_1,\ldots, u'_t\in N_+^{D'}(t')$ with said properties, we
pick for all~$i\in [t]$ an
element~$\nu_i \in \Lambda(u_i,u'_i)_{[\phi,\chi]}$ and for
all~$j\in [t-1]$ an
element~$\mu_j \in \Lambda(u_{j+1},u'_j)_{[\phi,\chi]}$. We then add
the map~$\nu_1 \mu_1^{-1}\nu_2\mu_2^{-2}\ldots \mu_{t-1}^{-1}\nu_t$ to
the set~$\Lambda(u_0,u'_t)_{[\phi,\chi]}$.  It follows from the fact
that isomorphisms compose that this map is an isomorphism
in~$\Iso(G_{u_0},G'_{u'_t})_{[\phi,\chi]}$.
 
It is also easy to verify that if we perform this procedure once for
all pairs~$u,u'$ (for which there are sequences with the properties
described above) then Property~\ref{closure:prop} is fulfilled. We
thus assume from now on that Property~\ref{closure:prop} holds.  Note
that Property~\ref{closure:prop} implies that for admissible
bijections~$\alpha_1,\alpha_2,\alpha_3$ the
map~$\alpha_1\alpha_2^{-1}\alpha_3$ is also admissible.

 \medskip\noindent
  \textit{Step~6: Construction of the generating set.}\\
We fix an admissibble bijection $\alpha^0$ and for all $u\in N_+^D(t)$, $u'\in
N_+^{D'}(t')$ such that $\Lambda(u,u')_{[\phi,\chi]}\neq\emptyset$, we
fix a mapping $\psi^0_{u,u'}\in
\Lambda(u,u')_{[\phi,\chi]}$. The choice of these mappings is arbitrary and need not
be canonical. For every admissible bijection $\alpha$ we define
$\psi^0_\alpha$ as in \eqref{eq:40} with the map $\psi^0_{u,u'}$ instead
of $\psi_{u,u'}$. Furthermore, we let $\psi^0:=\psi^0_{\alpha^0}$.

We say that an admissible bijection~$\alpha$ \emph{differs from~$\alpha^0$
in at most three positions} if there are at most three~$u\in N_+^D(t)$
for which~$\alpha(u) \neq \alpha^0(u)$. We denote the set of
bijections that differ from~${\alpha^0}$ in at most three positions
by~${\alpha^0}\pm 3$. Note that the set~${\alpha^0} \pm 3$ can be
computed in polynomial time.

Recall the definition of $\hat\Gamma_{u,u'}'$ from Step~3, and let
$S_{u,u'}$ be a generating set for $\hat\Gamma_{u,u'}'$ (the succinct
representation of $\Lambda(u,u')_{[\phi,\chi]}$ contains such a
generating set). We let
\[
S:=\big\{(\psi^0_{\alpha})^{-1}\psi^0_{\alpha'} \bigmid \alpha,\alpha'\in ({\alpha^0} \pm 3)\big\}\cup \bigcup_{u,u'} S_{u,u'},
\]
and we let
\[
\Lambda_{\phi,\chi}:=\psi^0\langle S\rangle.
\]
Note that $\psi^0\in \Lambda_{\phi,\chi}$ and
$\psi^0_{\alpha}=\psi^0(\psi^0)^{-1}\psi^0_\alpha\in
\Lambda_{\phi,\chi}$ for all $\alpha\in\alpha^0\pm3$.

\begin{claim}[resume]\label{claim:dp4}
$\ISO((G_t)_{{D_t,\gamma_t}},
({G'}_{t'})_{{D'_{t'},\gamma'_{t'}}})_{[\phi,\chi]}\le\Lambda_{\phi,\chi}\le\ISO(G_t,G'_{t'}).$

\proof
It follows from Claim~\ref{claim:dp3} that
$\psi^0(\psi^0_\alpha)^{-1}\psi^0_{\alpha'}\in\ISO(G_t,G'_{t'})$
for all admissible bijections $\alpha,\alpha'$, and it follows from
Claim~\ref{claim:dp1} that $\psi^0g\in\ISO(G_t,G'_{t'})$ for all $g\in
S_{u,u'}\subseteq\hat\Gamma'_{u,u'}$. This proves $\Lambda_{\phi,\chi}\le\ISO(G_t,G'_{t'}).$

To prove
$\ISO((G_t)_{{D_t,\gamma_t}},
({G'}_{t'})_{{D'_{t'},\gamma'_{t'}}})_{[\phi,\chi]}\le\Lambda_{\phi,\chi}$,
we consider the subset $\mathrm K$ of $\ISO(G_t,G'_{t'})_{[\phi,\chi]}$
consisting of all $\psi$ satisyfing the following two conditions:
\begin{eroman}
  \item
    There is an admissible bijection $\alpha_\psi$ such that
    $\psi(\gamma(u))=\gamma'(\alpha_\psi(u))$ for all $u\in
      N_+^D(t)$.

    \item $\psi|_u\in\Lambda(u,\alpha_{\psi}(u))_{[\phi,\chi]}$ for all $u\in
      N_+^D(t)$.
\end{eroman}
It follows
from Claim~\ref{claim:dp2} that every
$\psi\in \ISO((G_t)_{{D_t,\gamma_t}},
({G'}_{t'})_{{D'_{t'},\gamma'_{t'}}})_{[\phi,\chi]}$ is contained in~$K$. Furthermore,
for every admissible bijection $\alpha$ we have $\psi^0_\alpha\in
\mathrm K$, because
$\psi^0_{\alpha}|_u=\psi^0_{u,\alpha(u)}\in\Lambda(u,\alpha_{\psi}(u))_{[\phi,\chi]}$
for all $u\in N_+^D(t)$. In particular, $\psi^0=
\psi^0_{\alpha_0}\in\mathrm K$.

We shall prove that 
\[
\mathrm K\subseteq \Lambda_{\phi,\chi}.
\]
For $\psi\in\mathrm K$, we prove 
\[
\psi\in\Lambda_{\phi,\chi}.
\]
The map~$\psi^0_{\alpha_\psi}$ agrees with~$\psi$ up to elements in~$\hat\Gamma_{u,\alpha_{\psi}(u)}'$. More precisely,  
$\psi|_u\in\Lambda(u,\alpha_{\psi}(u))_{[\phi,\chi]}$
can be written as $\psi|_u=\psi^0_{u,\alpha_{\psi}(u)}g_{u}$ for some
  $g_{u}\in \Gamma_{u,\alpha_{\psi}(u)}'$. As the cones $\gamma(u')$ for
  $u'\in N_+^D(t')$ are mutually disjoint, the permutations $\hat g_u\in\hat\Gamma_{u,\alpha^0(u)}'$
  (see Step~3) commute. Thus we have
  \[
  \psi=\psi^0_{\alpha_{\psi}}g_{u_1}\cdots g_{u_m}
  \]
  for an arbitrary enumeration $u_1,\ldots,u_m$ of $N_+^D(t)$. This
  proves that \[\psi^0_{\alpha_\psi}\in\Lambda_{\phi,\chi} \iff \psi \in\Lambda_{\phi,\chi}.\]

  It thus suffices to show that~$\psi^0_{\alpha_\psi}\in
  \Lambda_{\phi,\chi}$. More generally, we now show for an arbitrary
  admissible
  bijection~${\alpha^*}$
  that~$\psi^0_{\alpha^*}\in\Lambda_{\phi,\chi}$.
  We show this by induction on the number $d({\alpha^*})$
  of $u\in N_+^D(t)$ such that ${\alpha^*}(u)\neq\alpha^0(u)$.

In the base case $d({\alpha^*})=0$ we have ${\alpha^*}=\alpha^0$ and thus~$\psi^0_{\alpha^*} = \psi^0\in\Lambda_{\phi,\chi}$.

  For the inductive step, suppose that ${\alpha^*}\neq\alpha^0$. Then
  there are at least two $u\in  N_+^D(t)$ such that
  ${\alpha^*}(u)\neq\alpha(u)$.
  \begin{cs}
    \case1 There are distinct $u_1,u_2\in N_+^D(t)$ such that
    ${\alpha^*}(u_1)=\alpha^0(u_2)$ and
    ${\alpha^*}(u_2)=\alpha^0(u_1)$.

    Let $\alpha$ be the bijection from $N_+^D(t)$ to $N_+^{D'}(t')$
    with $\alpha(u_i)={\alpha^*}(u_i)$ for $i=1,2$ and
    $\alpha(u)=\alpha^0(u)$ for all $u\neq u_1,u_2$. Then $\alpha$ is
   admissible, because
   $\alpha(u_i) = {\alpha^*}(u_i)$ for $i=1,2$ and~$\alpha(u) = \alpha^0(u)$ for all
      $u\neq u_1,u_2$. Hence $\alpha\in\alpha^0\pm 3$.

   Let $\alpha'={\alpha^*}\alpha^{-1}\alpha^0$, which is admissible by Property~\ref{closure:prop}. Then
   $d(\alpha')=d({\alpha^*})-2$. Let
   $\psi':=\psi^0_{{\alpha^*}}(\psi^0_\alpha)^{-1}\psi^0$. Then $\psi'\in K$ with
   $\alpha_{\psi'}=\alpha'$. By the inductive hypothesis,
   $\psi'\in\Lambda_{\phi,\chi}$. Furthermore,
   $\psi^0\in\Lambda_{\phi,\chi}$ and
   $\psi^0_\alpha\in\Lambda_{\phi,\chi}$, the latter because
   $\alpha\in\alpha^0\pm3$. Thus
   $\psi^0_{{\alpha^*}}=\psi'(\psi^0)^{-1}\psi^0_\alpha\in\Lambda_{\phi,\psi}$.

   \case2
   There are distinct $u_1,u_2,u_3\in N_+^D(t)$ such that
    ${\alpha^*}(u_1)=\alpha^0(u_2)$ and
    ${\alpha^*}(u_2)=\alpha^0(u_3)$.

    Let $\alpha_1,\alpha_2$ be the bijections from $N_+^D(t)$ to $N_+^{D'}(t')$
    defined by
    \begin{align*}
      \alpha_1(u_1)&:=\alpha^0(u_2),\\
      \alpha_1(u_2)&:=\alpha^0(u_3),\\
      \alpha_1(u_3)&:=\alpha^0(u_1),\\
      \alpha_1(u)&:=\alpha^0(u)&\text{for all }u\in N_+^D(t)\setminus\{u_1,u_2,u_3\},\\
     \alpha_2(u_1)&:=\alpha^0(u_3),\\
      \alpha_2(u_2)&:=\alpha^0(u_2),\\
      \alpha_2(u_3)&:=\alpha^0(u_1),\\
      \alpha_1(u)&:=\alpha^0(u)&\text{for all }u\in N_+^D(t)\setminus\{u_1,u_2,u_3\}.
     \end{align*}
     To prove that $\alpha_1$ is admissible, we need to prove that
     $\Lambda(u,\alpha_1(u))_{[\phi,\chi]}\neq\emptyset$ for all $u\in
     N_+^D(t)$. For~$u\neq u_3$ this is obvious since for such a~$u$ we have~$\alpha_1(u) = \alpha^0(u)$ or~$\alpha_1(u)= {\alpha^*}(u)$ and both~$\alpha^0$ and~${\alpha^*}$ are admissible. For~$u = u_3$ we observe that the two sequences~$u_3,u_2,u_1$ and~$\alpha^0(u_3),\alpha^0(u_2),\alpha^0(u_1)$ satisfy the assumptions on the sequences  in~Property\ref{closure:prop} and we therefore conclude~$\Lambda(u_3,\alpha^0(u_1))_{[\phi,\chi]}\neq\emptyset$.
     
     The proof that $\alpha_2$ is admissible is similar. 
     
     Since both $\alpha_1$ and $\alpha_2$ differ from $\alpha^0$ in at
     most $3$ places, we have $\alpha_1,\alpha_2\in\alpha^0\pm3$.

     Let $\alpha'={\alpha^*}\alpha_1^{-1}\alpha_2$, which is admissible by Property~\ref{closure:prop}. Then
     $\alpha'(u_1)=\alpha^0(u_3)$, $\alpha'(u_2)=\alpha^0(u_2)$, and
     $\alpha'(u)={\alpha^*}(u)$ for all $u\notin\{u_1,u_2\}$. Thus 
     $d(\alpha')=d({\alpha^*})-1$. Let
   $\psi':=\psi^0_{{\alpha^*}}(\psi^0_{\alpha_1})^{-1}\psi^0_{\alpha_2}$. 
   Then %
   $\alpha_{\psi'}=\alpha'$.  Furthermore,
   $\psi^0_{\alpha_1},\psi^0_{\alpha_2}\in\Lambda_{\phi,\chi}$. 
   Thus it suffices now  show that~$\psi'\in K$, since this implies $\psi'\in\Lambda_{\phi,\chi}$ by the inductive hypothesis, which implies
  $\psi^0_{{\alpha^*}}=\psi'(\psi^0_{\alpha_2})^{-1}\psi^0_{\alpha_1} \in\Lambda_{\phi,\psi}$. 
   
   To see that~$\psi'\in K$ we need to show that~$\psi'|_u \in\Lambda(u,{\alpha^*}(u))_{[\phi,\chi]}$ for all~$u\in N_+^D(t)$.
   
   Note that by the definition of~$\alpha_1$ and~$\alpha_2$, for all~$u'\in  N_+^{D'}(t')\setminus \{{\alpha^*}(u_1),{\alpha^*}(u_2)\}$ we have~$\alpha_1^{-1}(u') = \alpha_2^{-1}(u')$.
   
   Thus if~$u\in  N_+^{D}(t)\setminus \{u_1,u_2\}$ then
   \[\psi'|_u = \psi^0_{{\alpha^*}}|_{u}(\psi^0_{\alpha_1}|_{\alpha_1^{-1}({\alpha^*}(u))})^{-1}\psi^0_{\alpha_2}|_{\alpha_1^{-1}({\alpha^*}(u))} = \psi^0_{{\alpha^*}}|_{u} \in \Lambda(u,{\alpha^*}(u))_{[\phi,\chi]}.\]
   
   On the other hand if~$u \in \{u_1,u_2\}$ then~$\psi^0_{{\alpha^*}}|_{u} = \psi^0_{\alpha_1}|_{u}$ and thus 
   \[\psi'|_{u}   = \psi^0_{{\alpha^*}}|_{u}(\psi^0_{\alpha_1}|_{u})^{-1}\psi^0_{\alpha_2}|_{u} = \psi^0_{\alpha_2}|_{u} \in \Lambda(u,\alpha_2(u))_{[\phi,\chi]}.\]

   \end{cs}
   This shows that~$\psi^0_{{\alpha^*}}\in \Lambda_{\phi,\chi}$ for all admissible~${\alpha^*}$ and thus in particular~$\psi^0_{\alpha_\psi}\in \Lambda_{\phi,\chi}$, finishing the proof of the claim.\uend
\end{claim}
   
Since~$\Lambda_{\phi,\chi}$ can be computed in polynomial time for all
choices of~$\phi$ and~$\chi$ the theorem follows.
\end{proof}

\begin{proof}[Proof of Theorem~\ref{theo:coset}]
Using dynamic programming and the previous lemma, we can compute for all~$t \in D$ and~$t'\in D$ a coset~$\Lambda(t,t')$ satisfying~$\ISO((G_t)_{{D_t,\gamma_t}}, ({G'}_{t'})_{{D'_{t'},\gamma'_{t'}}}) \subseteq \Lambda(t,t') \subseteq \ISO (G_t,G'_{t'})$.
We let $\Lambda:=\Lambda_{r,r'}$ for the roots $r,r'$ of $D,D'$, respectively.
\end{proof}

\begin{theo}\label{theo:main}
For every $k\in\NN$ there is a polynomial time algorithm that, given
graphs $G$ and $G'$ of rank width at most $k$, computes the set $\ISO
(G,G')$ of all isomorphisms from~$G$ to~$G'$.
\end{theo}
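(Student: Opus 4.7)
The plan is to combine the two main results developed in this paper: the Canonical Decomposition Theorem (Theorem~\ref{theo:candec}) and the coset-sandwich algorithm (Theorem~\ref{theo:coset}). Given $G$ and $G'$ of rank width at most $k$, I first run the algorithm of Theorem~\ref{theo:candec} on each of them to obtain, in polynomial time, canonical treelike decompositions $(D,\gamma)$ of $\rho_G$ and $(D',\gamma')$ of $\rho_{G'}$, both of width at most $a=a(k)$. If one of the two graphs does not have rank width at most $k$, this will be detected when the computed decomposition fails to satisfy the required width bound, in which case no isomorphism test is needed (but this can also be avoided by first computing the rank width, which is not needed here).

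Next, I apply Lemma~\ref{lem:normal} to both decompositions to obtain normal treelike decompositions of the same width; this step is itself canonical and runs in polynomial time, so it preserves the essential property we need.

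The key observation is that canonicity of the decomposition, as defined in the paper, means exactly that every isomorphism $\phi\in\ISO(G,G')$ is matched by an isomorphism of the corresponding decompositions: there is $\hat\phi$ with $\gamma'(\hat\phi(t))=\phi(\gamma(t))$ for all $t\in V(D)$. In other words,
\[
\ISO(G,G')\;=\;\ISO(G_{D,\gamma},G'_{D',\gamma'}).
\]
The inclusion from right to left is the definition of respecting the decompositions; the inclusion from left to right is the content of canonicity (it carries through the normalization of Lemma~\ref{lem:normal} because that construction is itself canonical).

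Finally, I invoke Theorem~\ref{theo:coset}, which in polynomial time computes a coset $\Lambda$ satisfying
\[
\ISO(G_{D,\gamma},G'_{D',\gamma'})\;\le\;\Lambda\;\le\;\ISO(G,G').
\]
Combined with the equality from canonicity, this forces $\Lambda=\ISO(G,G')$, so the algorithm simply outputs $\Lambda$. There is no real obstacle here: all the difficulty has been absorbed into Theorems~\ref{theo:candec} and~\ref{theo:coset}. The only subtle point worth double-checking is that the notion of ``canonical'' used in Theorem~\ref{theo:candec} is strong enough to yield the set-theoretic equality $\ISO(G,G')=\ISO(G_{D,\gamma},G'_{D',\gamma'})$, but this is precisely the content of the paper's convention on canonicity stated at the end of the introduction.
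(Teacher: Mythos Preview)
Your proof is correct and follows essentially the same approach as the paper: compute canonical treelike decompositions via Theorem~\ref{theo:candec}, normalise via Lemma~\ref{lem:normal}, use canonicity to conclude $\ISO(G,G')=\ISO(G_{D,\gamma},G'_{D',\gamma'})$, and then apply Theorem~\ref{theo:coset} so that the sandwich collapses to an equality. The paper's own proof is slightly terser but makes exactly the same moves.
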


\begin{proof}
By Theorem~\ref{theo:candec} we can compute for~$G$ and~$G'$ canonical treelike decompositions~$(D,\gamma)$ and~$(D',\gamma')$ of width at most~$a(k)$. 
By Lemma~\ref{lem:normal} we can assume that these decompositions are normal.
Since these decompositions are canonical,~$\ISO((G)_{{D,\gamma}}, ({G'})_{{D,\gamma'}})= \ISO (G,G')$.
The theorem now follows directly fromTheorem~\ref{theo:coset}.
\end{proof}

\section{Conclusions}
For every fixed $k$ we obtain a polynomial time isomorphism test for
graph classes of bounded rank width, unfortunately with a horrible
running time: we only have a non-elementary upper bound (in terms of
$k$) for the degree
of the polynomial bounding the running time. Thus before even asking whether
the isomorphism problem problem is fixed-parameter tractable if
parameterized by rank-width, we ask for an algorithm with a running
time $n^{O(k)}$. The bottleneck is the bound
we obtain for the size of a triple cover of a tangle (see
Lemma~\ref{lem:triple-cover}); our algorithm has to enumerate all
triple covers of all maximal tangles. But maybe there is a way to
avoid this. 

Our algorithm uses the group theoretic machinery, but the group theory
involved is fairly elementary. It seems conceivable that it can be
avoided altogether and there is a combinatorial algorithm deciding
isomorphism of rank width at most $k$. Specifically, we ask whether
for any $k$ there is an $\ell$ such that the $\ell$-dimensional
Weisfeiler-Lehman algorithm decides isomorphism of graphs of rank
width at most $k$.

Most of the arguments that we use in the construction of canonical
bounded width decompositions apply to arbitrary connectivity functions
and not just the cut rank function. (Only from
Section~\ref{sec:nwl} onwards we use specific properties of the cut
rank function.) It is an interesting question whether there is a
polynomial time isomorphism test for arbitrary connectivity functions
of bounded branch width. Even if this is not the case, it would be
interesting to understand for which connectivity functions beyond the
cut rank function such an isomorphism test exists.

In the end, the main question is whether our results help to solve the
isomorphism problem for general graphs. The immediate answer is
`no'. However, we do believe that structural techniques such as those
developed here (and also in \cite{gro12+a,gromar15}), in combination
with group theoretic techniques, may help to design graph isomorphism test with
an improved worst-case running time.


\end{document}